\theoremstyle{definition}
\newtheorem{theorem}{Theorem}[section]
\newtheorem{lemma}[theorem]{Lemma}
\newtheorem{proposition}[theorem]{\textbf{Proposition}}
\newtheorem{corollary}[theorem]{Corollary}
\newtheorem{example}[theorem]{Example}
\newtheorem{definition}[theorem]{Definition}
\newtheorem{remark}{Remark}
\newcommand{\cA}{\mathcal{A}}
\newcommand{\cB}{\mathcal{B}}
\newcommand{\cC}{\mathcal{C}}
\newcommand{\cO}{\mathcal{O}}
\newcommand{\cP}{\mathcal{P}}
\newcommand{\cQ}{\mathcal{Q}}
\newcommand{\cS}{\mathcal{S}}
\newcommand{\cV}{\mathcal{V}}
\newcommand{\fA}{\mathfrak{A}}
\newcommand{\fB}{\mathfrak{B}}
\newcommand{\fUA}{\mathsf{A}}
\newcommand{\fUB}{\mathsf{B}}
\newcommand{\fUD}{\mathsf{D}}
\newcommand{\hS}{{\widehat{S}}}
\newcommand{\hcS}{\widehat{\cS}}
\newcommand{\hfUD}{\widehat{\mathsf{D}}}
\newcommand{\<}{\langle}
\renewcommand{\>}{\rangle}
\newcommand{\vars}{\text{\upshape{vars}}}
\newcommand{\len}{\text{\upshape{len}}}
\newcommand{\subst}[1]{\bigl[#1\bigr]}
\newcommand{\semequiv}{\mathrel{|}\joinrel\Relbar\joinrel\mathrel{|}}
\newcommand{\At}{\text{\upshape{At}}}
\newcommand{\Sk}{\text{\upshape{Sk}}}
\newcommand{\fa}{\mathsf{a}}
\newcommand{\fb}{\mathsf{b}}
\newcommand{\fc}{\mathsf{c}}
\newcommand{\fd}{\mathsf{d}}
\newcommand{\fe}{\mathsf{e}}
\newcommand{\va}{\bar{\fa}}
\newcommand{\vb}{\bar{\fb}}
\newcommand{\vc}{\bar{\fc}}
\newcommand{\vd}{\bar{\fd}}
\newcommand{\ve}{\bar{\fe}}
\newcommand{\vu}{\bar{\mathrm{u}}}
\newcommand{\vv}{\bar{\mathrm{v}}}
\newcommand{\vw}{\bar{\mathrm{w}}}
\newcommand{\vx}{\bar{\mathrm{x}}}
\newcommand{\vy}{\bar{\mathrm{y}}}
\newcommand{\vz}{\bar{\mathrm{z}}}
\newcommand{\twoup}[2]{{2^{\uparrow #1}(#2)}}
\newcommand{\idx}{\text{\upshape{idx}}}
\newcommand{\Mapsto}{{\mathop{\mapsto}}}
\newcommand{\EXPTIME}{\textup{\textsc{ExpTime}}}
\newcommand{\NEXPTIME}{\textup{\textsc{NExpTime}}}
\newcommand{\SF}{\text{\upshape{SF}}}
\newcommand{\LGF}{\text{\upshape{LGF}}}
\newcommand{\GNFO}{\text{\upshape{GNFO}}}
\newcommand{\MFOeq}{\text{\upshape{MFO$_\approx$}}}
\newcommand{\FOk}{\text{\upshape{FO}$^k$}}
\newcommand{\FOtwo}{\text{\upshape{FO}$^2$}}
\newcommand{\SFOk}{\text{\upshape{SFO}$^k$}}
\newcommand{\SFOkm}{\text{\upshape{SFO}$^{k,m}$}}
\newcommand{\SFOtwo}{\text{\upshape{SFO}$^2$}}
\newcommand{\FL}[1]{\text{\upshape{FL}$^{(#1)}$}}
\begin{document}

\title{Separateness of Variables \\ --- \\ A Novel Perspective on Decidable First-Order Fragments}

\author{
	\begin{tabular}{l}
		Marco Voigt\\
		\small\textit{Max Planck Institute for Informatics, Saarland Informatics Campus, Saarbr\"ucken, Germany,}\\
		\small\textit{Saarbr\"ucken Graduate School of Computer Science}
	\end{tabular}
}	
\date{}
\maketitle

\begin{abstract}
The classical decision problem, as it is understood today, is the quest for a delineation between the decidable and the undecidable parts of first-order logic based on elegant syntactic criteria. In this paper, we treat the concept of separateness of variables and explore its applicability to the classical decision problem. Two disjoint sets of first-order variables are separated in a given formula if variables from the two sets never co-occur in any atom of that formula. This simple notion facilitates extending many well-known decidable first-order fragments significantly and in a way that preserves decidability. We will demonstrate that for several prefix fragments, several guarded fragments, the two-variable fragment, and for the fluted fragment. Altogether, we will investigate nine such extensions more closely. Interestingly, each of them contains the relational monadic first-order fragment without equality. Although the extensions exhibit the same expressive power as the respective originals, certain logical properties can be expressed much more succinctly. In three cases the succinctness gap cannot be bounded using any elementary function.
\end{abstract}

%%%%%%%%%%%%%%%%%%%%%%%%%%%%%%%%%%%%%%%%%%%%%%%%%%%%
%%%%%%%%%%%%%%%%%%%%%%%%%%%%%%%%%%%%%%%%%%%%%%%%%%%%
%%%%%%%%%%%%%%%%%%%%%%%%%%%%%%%%%%%%%%%%%%%%%%%%%%%%
%%%%%%%%%%%%%%%%%%%%%%%%%%%%%%%%%%%%%%%%%%%%%%%%%%%%

\section{Introduction}
\label{section:Introduction}

In the early twentieth century David Hilbert initiated his famous program striving for a formalization of the foundations of mathematics.
At its core lay the \emph{classical decision problem} of first-order logic: Find an algorithm that determines the validity of any given first-order sentence.
Nowadays, the classical decision problem is understood as the problem of classifying first-order logic into fragments with a decidable or undecidable satisfiability problem.\footnote{For convenience, we will be less precise every now and then and speak of \emph{(un)decidable fragments}.}
This quest has produced a wealth of positive and also negative results, see \cite{Dreben1979, Borger1997, Fermuller2001} for references.
We will give definitions of a number of well-known decidable fragments in later sections. 
More recently identified decidable fragments that are neither covered in the present paper nor in the mentioned texts are treated in~\cite{Segoufin2013, Kieronski2014, Mogavero2015, Segoufin2017, Bova2017}.
A much more detailed but still brief overview of past and recent developments in the area is given in the beginning of Chapter~3 of~\cite{Voigt2019PhDthesis}.

%%%%%%%%%%%%%%%%%%%%%%%%%%%%%%%%%%%%%%%
In the present paper we explore the concept of \emph{separateness of variables} in the context of the classical decision problem.
Two sets $X, Y$ of variables are \emph{separated in a formula}, if there are no co-occurrences of any $x \in X$ and $y \in Y$ in any atom.
Put more formally, we get the following definition.
\begin{definition}[Separateness of variables]
	Let $\varphi$ be any first-order formula and let $X, Y$ be two disjoint sets of first-order variables.
	We say that $X$ and $Y$ are \emph{separated in $\varphi$}
	 if for every atom $A$ occurring in $\varphi$ we have $\vars(A) \cap X = \emptyset$  or $\vars(A) \cap Y = \emptyset$ or both.
\end{definition}
This simple concept enables us to elegantly define nontrivial extensions of well-known decidable fragments of first-order logic.
In essence, the definitions of the new fragments are careful combinations of the concepts used in the original definitions with the concept of separateness of quantified variables.
All new fragments still have a decidable satisfiability problem.
Figure~\ref{figure:KnownAndNovelFOLfragments} provides a schematic overview of the most important fragments we will introduce.
%%%%%%%%%%%%%%%%%%%%%%%%%%%%%%%%%%%%%%%%%%%%%%%%%%%%%%%%%%%%%%
\begin{figure}[h]
	\hspace{-2ex}
	\begin{tabular}{@{}c@{\hspace{-7ex}}c@{}}
		\begin{tabular}{c}
			{\fontfamily{phv}\selectfont
				\begin{tikzpicture}[scale=0.75]	
					\begin{scope}[line width=1pt, color=black, scale=0.8]
						\draw (0,0) circle [radius=1.2]  node[color=black] at (0,0) {\scriptsize MFO};
						\draw [rotate=0] (0,-0.5) -- (1,-3) -- (1,-4.5) arc[start angle=0, end angle=-180, radius=1] -- (-1,-3) -- cycle node at (0,-4.8) {\scriptsize LGF};
						\draw [rotate=120] (0,-0.5) -- (1,-3) -- (1,-4.3) arc[start angle=0, end angle=-180, radius=1] -- (-1,-3) -- cycle node at (0,-4.5) {\scriptsize GKS};
						\draw [rotate=0] (0,-0.8) -- (0.8,-3) arc[start angle=0, end angle=-180, radius=0.8] -- cycle node at (0,-3) {\scriptsize GF};
						\draw [rotate=60]  (0,-0.5) -- (1,-3) arc[start angle=0, end angle=-180, radius=1] -- cycle node at (0,-3) {\scriptsize FL};
						\draw [rotate=120]  (0,-0.8) -- (0.8,-3) arc[start angle=0, end angle=-180, radius=0.8] -- cycle node at (0,-3) {\scriptsize AF};
						\draw [rotate=180]  (0,-0.5) -- (1,-3) arc[start angle=0, end angle=-180, radius=1] -- cycle node at (0,-3) {\scriptsize \text{\upshape{FO}$^2$}};
						\draw [rotate=240]  (0,-0.5) -- (1,-3) arc[start angle=0, end angle=-180, radius=1] -- cycle node at (0,-3) {\scriptsize BSR};
						\draw [rotate=300]  (0,-0.5) -- (1,-3) arc[start angle=0, end angle=-180, radius=1] -- cycle node at (0,-3) {\scriptsize GNFO};
					\end{scope}
				\end{tikzpicture}		
			}
		\end{tabular}
		&
		\begin{tabular}{c}
			{\fontfamily{phv}\selectfont
				\begin{tikzpicture}[scale=0.56]
					\clip (-9.8,-7.7) rectangle (9.8,5.4);
					\begin{scope}[line width=1.5pt, color=gray!90!black, scale=0.8]
						\begin{scope}[color=gray!90!black, dashed]
							\draw [rotate=0] (1,-3) -- (1,-7) arc[start angle=0, end angle=-180, radius=1] -- (-1,-3) node[color=gray!85!black] at (0,-7.2) {\scriptsize LGF};
							\draw [rotate=120] (1,-3) -- (1,-6.7) arc[start angle=0, end angle=-180, radius=1] -- (-1,-3) node[color=gray!85!black] at (0,-6.9) {\scriptsize GKS};
						\end{scope}
						\draw [rotate=0] (0,-0.5) -- (1,-3) arc[start angle=0, end angle=-180, radius=1] -- cycle node[color=gray!85!black] at (0,-3) {\scriptsize GF};
						\draw [rotate=60]  (0,-0.5) -- (1,-3) arc[start angle=0, end angle=-180, radius=1] -- cycle node[color=gray!85!black] at (0,-3) {\scriptsize FL};
						\draw [rotate=120]  (0,-0.5) -- (1,-3) arc[start angle=0, end angle=-180, radius=1] -- cycle node[color=gray!85!black] at (0,-3) {\scriptsize AF};
						\draw [rotate=180]  (0,-0.5) -- (1,-3) arc[start angle=0, end angle=-180, radius=1] -- cycle node[color=gray!85!black] at (0,-3) {\scriptsize \text{\upshape{FO}$^2$}};
						\draw [rotate=240]  (0,-0.5) -- (1,-3) arc[start angle=0, end angle=-180, radius=1] -- cycle node[color=gray!85!black] at (0,-3) {\scriptsize BSR};
						\draw [rotate=300]  (0,-0.5) -- (1,-3) arc[start angle=0, end angle=-180, radius=1] -- cycle node[color=gray!85!black] at (0,-3) {\scriptsize GNFO};
					\end{scope}
			
					\begin{scope}[line width=1.5pt, color=green!55!black]
						\draw (0,0) circle [radius=1]  node[color=black] at (0,0) {\scriptsize MFO};
						\draw [rotate=0] (1,0) arc [start angle=0, end angle=180, radius=1] .. controls (-3,-7) and (3,-7) .. (1,0) node[color=green!40!black] at (0,-4) {\scriptsize \textbf{SGF}};
						\draw [rotate=0] (1,0) arc [start angle=0, end angle=180, radius=1] .. controls (-4,-10) and (4,-10) .. (1,0) node[color=green!40!black] at (0,-6.8) {\scriptsize \textbf{SLGF}};
						\draw [rotate=60] (1,0) arc [start angle=0, end angle=180, radius=1] .. controls (-3,-7) and (3,-7) .. (1,0) node[color=green!40!black] at (0,-4) {\scriptsize \textbf{SFL}};
						\draw [rotate=120] (1,0) arc [start angle=0, end angle=180, radius=1] .. controls (-3,-6.5) and (3,-6.5) .. (1,0) node[color=green!40!black] at (0,-4) {\scriptsize \textbf{SAF}};
						\draw [rotate=120] (1,0) arc [start angle=0, end angle=180, radius=1] .. controls (-4,-10) and (4,-10) .. (1,0) node[color=green!40!black] at (0.1,-6.85) {\scriptsize \textbf{SGKS}};
						\draw [rotate=180] (1,0) arc [start angle=0, end angle=180, radius=1] .. controls (-3,-7) and (3,-7) .. (1,0) node[color=green!40!black] at (0,-4) {\scriptsize \textbf{SFO$^{\boldsymbol{2}}$}};
						\draw [rotate=240] (1,0) arc [start angle=0, end angle=180, radius=1] .. controls (-3,-7) and (3,-7) .. (1,0) node[color=green!40!black] at (0,-4) {\scriptsize \textbf{SF}};
						\draw [rotate=240] (1,0) arc [start angle=0, end angle=180, radius=1] .. controls (-4,-10) and (4,-10) .. (1,0) node[color=green!40!black] at (0,-6.3) {\scriptsize \textbf{SBSR}};
						\draw [rotate=300] (1,0) arc [start angle=0, end angle=180, radius=1] .. controls (-3,-7) and (3,-7) .. (1,0) node[color=green!40!black] at (0,-4) {\scriptsize \textbf{SGNFO}};
					\end{scope}
				\end{tikzpicture}
			}
		\end{tabular}
		\\
		\begin{tabular}{c@{\,}c@{\;}l}
			MFO &--& monadic first-order fragment\\
			BSR &--& Bernays--Sch\"onfinkel--Ramsey \\
				&& fragment\\
			\FOtwo &--& two-variable fragment\\
			AF &--& Ackermann fragment\\
			GKS &--& G\"odel--Kalm\'ar--Sch\"utte fragment\\
			FL &--& fluted fragment\\
			GF &--& guarded fragment\\
			LGF &--& loosely guarded fragment\\
			GNFO &--& guarded negation fragment
		\end{tabular}
		&
		\begin{tabular}{c@{\,--\;}l}
			SF & separated fragment\\
			SBSR & separated BSR\\
			\SFOtwo & separated \FOtwo\\
			SAF & separated AF\\
			SGKS & separated GKS\\
			SFL & separated FL\\
			SGF & separated GF\\
			SLGF & separated LGF\\
			SGNFO & separated GNFO	
		\end{tabular}
	\end{tabular}
	
	\caption[Overview of known and novel decidable fragments treated in the present thesis]{
			Left-hand side: Schematic overview of well-known decidable fragments of first-order logic. 
			Only the partial overlaps between MFO and the other fragments are depicted. 
			We neglect any other partial overlaps. 
			Moreover, the containment of AF in GKS and of GF in LGF is shown.
			Right-hand side: Schematic overview of the extended fragments (in green) that are treated in the present paper. 
			Notice that MFO is properly contained in all extended fragments. 
			The focus is again on the overlaps with MFO and on the proper containment relations. 
			Other depicted overlaps might be unsubstantiated.}
	\label{figure:KnownAndNovelFOLfragments}
\end{figure}
All of them enjoy the \emph{finite model property}, i.e.\ every satisfiable sentence in such a fragment has a finite model.
This is a sufficient condition for decidability of the associated satisfiability problem.
If we can derive a computable upper bound regarding the size of smallest models, we speak of a \emph{small model property}.

%%%%%%%%%%%%%%%%%%%%%%%%%%%%%%%%%%%%%%%%%%%%%%%%%%%%%%%%%%%%%%
From a qualitative point of view, the extended fragments do not come with an increased expressiveness compared to the original fragments.
This is witnessed by the existence of equivalence-preserving translations from each extended fragment into the respective original.
However, we will show for several cases that there may be significant gaps regarding the length of shortest formulas that express one and the same property.
For example, the succinctness gap between the well-known \emph{Bernays--Sch\"onfinkel--Ramsey fragment} (the $\exists^* \forall^*$ prefix class) and an extension of it called the \emph{separated fragment}~\cite{Voigt2016} (all relational sentences $\exists \vz \forall\vx_1 \exists\vy_1 \ldots \forall\vx_n \exists\vy_n.\, \varphi$ in which $\vx_1 \cup \ldots \cup \vx_n$ and $\vy_1 \cup \ldots \cup \vy_n$ are separated) is as follows~\cite{Voigt2017a}.
For every positive natural number $n$ we can find some property that can be expressed in the separated fragment with a formula of length $k\cdot n^2$ for some fixed natural number $k$, whereas expressing the very same property in the Bernays--Sch\"onfinkel--Ramsey fragment requires a formula whose length is at least 
	\[ \left. \begin{array}{c}2^{2^{\vdots^{2^2}}} \end{array} \right\} \text{height } n ~. \]
Hence, some of the extended fragments enable us to describe certain logical properties much more succinctly and elegantly.
Table~\ref{table:SuccinctnessGapsSummary} summarizes the succinctness gaps that we will derive.
\begin{table*}[ht]
	\caption[Summary of the succinctness gaps the are explored in the present thesis]{Summary of the succinctness gaps the are explored in the present paper. 
			The abbreviations for fragments are spelled out in Figure~\ref{figure:KnownAndNovelFOLfragments}.
			Notice that SF $\subseteq$ SBSR.
			The gap between SAF and AF is conditional on $\EXPTIME \neq \NEXPTIME$. 
			All other gaps are unconditional.
		}
	\label{table:SuccinctnessGapsSummary}	
	\centerline{
		\begin{tabular}{cccl}
			More succinct fragments	&	Less succinct fragments		&	Succinctness gaps 	&	Reference \\[0.5ex]
			\hline\\[-1ex]
			SF / SBSR		&	BSR						&	non-elementary	&	Theorem~\ref{theorem:LengthSmallestBSRsentences} \\
			SF / SBSR		&	Gaifman-local first-				&	non-elementary 	&	Prop.~\ref{theorem:SFtoGaifmanLowerBound} \\[-0.5ex]
						&	order fragment \\
			SAF			&	AF						& 	super-polynomial 	&	Prop.~\ref{proposition:SuccinctnessGapForSAFvsAF} \\
			SGKS			&	GKS						&	exponential		&	Theorem~\ref{theorem:LengthSmallestGKSsentences} \\
			SGF			&	LGF	 					&	non-elementary	&	Theorem~\ref{theorem:LengthSmallestLGFsentences} \\
			SGNFO		&	GNFO	 					&	non-elementary	&	Theorem~\ref{theorem:LengthSmallestGNFOsentences} \\
			\SFOtwo		&	\FOtwo					&	exponential		&	Theorem~\ref{theorem:LengthSmallestSFOtwoSentences} 	
		\end{tabular}
	}
\end{table*}

%%%%%%%%%%%%%%%%%%%%%%%%%%%%%%%%%%%%%%%%%%%%%%%%%%%%%%
%%%%%%%%%%%%%%%%%%%%%%%%%%%%%%%%%%%%%%%%%%%%%%%%%%%%%%
%%%%%%%%%%%%%%%%%%%%%%%%%%%%%%%%%%%%%%%%%%%%%%%%%%%%%%
%%%%%%%%%%%%%%%%%%%%%%%%%%%%%%%%%%%%%%%%%%%%%%%%%%%%%%

Non-trivial cases of separateness appear, for instance, in formulas where universal and existential quantifiers are nested and the variables they bind are separated.
Consider the sentence $\varphi := \forall x \exists y.\, P(x) \leftrightarrow Q(y)$ in which the singleton sets $\{x\}$ and $\{y\}$ are obviously separated. 
It expresses a certain symmetry in structures $\cA$.
For every domain element $\fa$ there is some element $\fb$ such that $\fa$ belongs to $P^\cA$ if and only if $\fb$ belongs to $Q^\cA$.
It turns out that the same property can be expressed without any nesting of alternating quantifiers.
Indeed, the distributivity laws of Boolean algebra and quantifier shifting (cf.\ Proposition~\ref{lemma:Miniscoping}) facilitate a transformation of $\varphi$ into the equivalent sentence 
$\psi := \bigl( (\exists x.\, P(x)) \rightarrow (\exists y_1.\, Q(y_1)) \bigr) \wedge \bigl( (\exists x.\, \neg P(x)) \rightarrow (\exists y_2.\, \neg Q(y_2)) \bigr)$ --- the symbol $\semequiv$ denotes semantic equivalence:
	\begin{align*}
		\varphi \;=\;\;\;
		&\forall x \exists y.\, P(x) \leftrightarrow Q(y)\\
		&\semequiv\; \forall x \exists y.\, \bigl(\neg P(x) \vee Q(y)\bigr) \wedge \bigl( P(x) \vee \neg Q(y) \bigr) \\
		&\semequiv\; \forall x \exists y.\, \bigl( \neg P(x) \wedge \neg Q(y) \bigr) \vee \bigl(Q(y) \wedge P(x)\bigr) \\
		&\semequiv\; \forall x.\, \bigl( \neg P(x) \wedge (\exists y_2.\, \neg Q(y_2)) \bigr) \vee \bigl((\exists y_1.\, Q(y_1)) \wedge P(x)\bigr) \\
		&\semequiv\; \bigl( (\forall x.\, \neg P(x)) \vee (\exists y_1.\, Q(y_1)) \bigr) \wedge \bigl( (\exists y_2.\, \neg Q(y_2)) \vee (\forall x.\, P(x)) \bigr) \\
		&\semequiv\; \bigl( (\exists x.\, P(x)) \rightarrow (\exists y_1.\, Q(y_1)) \bigr) \wedge \bigl( (\exists x.\, \neg P(x)) \rightarrow (\exists y_2.\, \neg Q(y_2)) \bigr)
		\;\;=\; \psi
	\end{align*}	
We could even shift quantifiers outwards again and finally obtain an equivalent sentence with a $\exists \exists \forall$ quantifier prefix: $\psi' := \exists y_1 y_2 \forall x.\, \bigl(P(x) \rightarrow Q(y_1)\bigr) \wedge \bigl( \neg P(x) \rightarrow \neg Q(y_2) \bigr)$.
In this example we can not only transform nested quantification of separated variables into quantification that is not nested.
In addition, we can replace $\forall \exists$ alternations in exchange for $\exists \forall$ alternations, or vice versa.
In general, succinct representations of certain logical properties can be unfolded into more verbose ones that require a lower \emph{quantifier rank} (i.e.\ a lower nesting depth of quantifiers) or even use fewer quantifier alternations.
The sentence $\psi$ can, using a $\forall \exists$ alternation, represent the symmetry property of structures more succinctly than the sentence $\psi'$ can with an $\exists \forall$ quantifier alternation.
This is a key feature that we will observe for several of the extended fragments. 
We will also see that such succinctness gaps can become $k$-fold exponential, if we start from $k$ nested $\forall \exists$ alternations.

%%%%%%%%%%%%%%%%%%%%%%%%%%%%%%%%%%%%%%%%%%%%%%%%%%%%%%
%%%%%%%%%%%%%%%%%%%%%%%%%%%%%%%%%%%%%%%%%%%%%%%%%%%%%%
%%%%%%%%%%%%%%%%%%%%%%%%%%%%%%%%%%%%%%%%%%%%%%%%%%%%%%
%%%%%%%%%%%%%%%%%%%%%%%%%%%%%%%%%%%%%%%%%%%%%%%%%%%%%%

\medskip\noindent
The following list summarizes the main contributions of the present paper:
\begin{enumerate}[label=(\arabic{*}), ref=(\arabic{*})]
	\item \label{enum:ContributionsOfThesis:II} Eight novel decidable fragments of first-order logic are introduced that extend well-known decidable first-order fragments: 
		the \emph{separated Bernays--Sch\"onfinkel--Ramsey fragment~(SBSR)}, the \emph{separated Ackermann fragment~(SAF)}, the \emph{separated G\"odel--Kalm\'ar--Sch\"utte fragment (SGKS)}, the \emph{separated guarded fragment~(SGF)}, the \emph{separated loosely guarded fragment~(SLGF)}, the \emph{separated guarded-negation fragment (SGNFO)}, the \emph{separated two-variable frag\-ment~(SFO$^\mathit{2}$\!)}, and the \emph{separated fluted fragment~(SFL)}.
		Moreover, it is proved that the qualitative expressiveness of each extended fragment coincides with the respective original.
	
	\item \label{enum:ContributionsOfThesis:III} Significant gaps regarding succinctness are derived for several of the extended fragments: SBSR, SGKS, SGF, SLGF, SGNFO, \SFOtwo, cf.\ Table~\ref{table:SuccinctnessGapsSummary}.
		This evidently shows that several of the extended fragments constitute a substantial quantitative improvement regarding expressiveness compared to the original fragments.
\end{enumerate}		

\textbf{Note to editors and reviewers:}
The present paper is (except for Theorem~\ref{theorem:LengthSmallestGNFOsentences}) a condensation of material taken from the author's PhD thesis~\cite{Voigt2019PhDthesis}, mainly Chapter~3.
Moreover, results from previous conference publications are recapitulated~\cite{Voigt2016, Voigt2017a} (with citations).

%%%%%%%%%%%%%%%%%%%%%%%%%%%%%%%%%%%%%%%%%%%%%%%%%%%%
%%%%%%%%%%%%%%%%%%%%%%%%%%%%%%%%%%%%%%%%%%%%%%%%%%%%
%%%%%%%%%%%%%%%%%%%%%%%%%%%%%%%%%%%%%%%%%%%%%%%%%%%%
%%%%%%%%%%%%%%%%%%%%%%%%%%%%%%%%%%%%%%%%%%%%%%%%%%%%

\section{Preliminaries}
\label{section:PreliminariesPartI}

We mainly rely on the standard notions from first-order logic.
Nevertheless, we need to agree on some notation.

\subsubsection*{Syntax}

A \emph{vocabulary} $\Sigma$ comprises a finite set of predicate symbols and function symbols, each equipped with its \emph{arity}. 
A vocabulary $\Sigma$ is \emph{relational} if it exclusively contains predicate symbols.
We define \emph{$\Sigma$-terms} and $\Sigma$-formulas as usual, allowing the logical connectives $\neg, \wedge, \vee$ and the first-order quantifiers $\forall, \exists$.
The two additional connectives $\rightarrow, \leftrightarrow$ are used as shortcuts: $\varphi \rightarrow \psi$ abbreviates $(\neg \varphi) \vee \psi$ and $\varphi \leftrightarrow \psi$ abbreviates $(\varphi \rightarrow \psi) \wedge (\psi \rightarrow \varphi)$.
If not explicitly excluded, we allow equality.
A $\Sigma$-sentence is a \emph{closed} $\Sigma$-formula, i.e.\ one without free variables.
A $\Sigma$-formula / $\Sigma$-sentence is \emph{relational}, if $\Sigma$ is relational.
If there is no danger of confusion, we drop the explicit reference to $\Sigma$ and just speak of \emph{terms}, \emph{formulas}, and \emph{sentences}.
In order to save parentheses, we follow the convention that negation binds strongest, that conjunction binds stronger than disjunction, and that all of the aforementioned bind stronger than implication and equivalence. 
Equivalence, in turn, binds weaker than implication.
The scope of quantifiers will stretch as far to the right as admitted by parentheses.
In all formulas we tacitly assume, if not explicitly stated otherwise, that no variable occurs freely and bound at the same time and that all distinct occurrences of quantifiers bind distinct variables.
We use $\varphi(v_1, \ldots, v_m)$ to denote a formula $\varphi$ whose free first-order variables form a subset of $\{v_1, \ldots, v_m\}$.
The variables $v_1, \ldots, v_m$ are assumed to be pairwise distinct.
For any formula $\varphi$ we denote by $\vars(\varphi)$ the set of all variables occurring freely or bound in $\varphi$.

Given a set $\Phi$ of $\Sigma$-formulas, we call a $\Sigma$-formula $\varphi$ a \emph{Boolean combination of formulas from $\Phi$}, if $\varphi$ consist of formulas from $\Phi$, possibly connected via the Boolean connectives $\neg, \wedge, \vee, \rightarrow, \leftrightarrow$.
If we restrict the set of Boolean connectives even further to $\wedge, \vee$, we speak of a \emph{$\wedge$-$\vee$-combination of formulas from $\Phi$}.

A \emph{quantifier block} is a maximal sequence $\cQ \vv = \cQ v_1. \cQ v_2. \ldots \cQ v_n$ of quantifiers of the same kind occurring in a given formula. 
For convenience, we often identify tuples $\vv$ of variables with the set containing all the variables that occur in $\vv$.
We occasionally also use regular expressions to describe sequences of quantifiers.
For example, for any positive integer $k$ the expression $\exists^*\forall^k\exists\exists$ stands for the set of all prefixes of the form $\exists y_1 \ldots y_m \forall x_1 \ldots x_k \exists z_1 z_2$, where $m = 0$ is allowed. 

A formula is in \emph{prenex normal form} if all quantifiers are lined up in front of the formula, i.e.\ it has the shape $\cQ_1 v_1 \ldots \cQ_n v_n.\, \psi$ with quantifier-free $\psi$ and $\cQ_i \in \{\forall, \exists\}$.
A formula is in \emph{negation normal form} if it exclusively contains the connectives $\wedge, \vee, \neg$ and every negation sign occurs immediately in front of an atom; quantifiers are of course admitted.

We use a standard notion of term and formula length, denoted $\len(\varphi)$.
Notice that we have $\len(\varphi \rightarrow \psi) = \len(\neg \varphi \vee \psi)$ and $\len(\varphi \leftrightarrow \psi) = \len\bigl( (\varphi \rightarrow \psi) \wedge (\psi \rightarrow \varphi) \bigr)$.

%%%%%%%%%%%%%%%%%%%%%%%%%%%%%%%%%%%%%%%%%%%%%%%%%%%%%%%%%%%%%%%%%%%%%%%%
\subsubsection*{Semantics} 

As usual, we interpret ($\Sigma$-)formulas with respect to \emph{($\Sigma$-)structures} $\cA$, consisting of a nonempty \emph{domain} $\fUA$ and \emph{interpretations} $f^\cA$ and $P^\cA$ of all function and predicate symbols in the underlying vocabulary.
Given a term $s$, a structure $\cA$, and a variable assignment $\beta$ over $\cA$'s domain, we denote the \emph{evaluation of $s$ under $\cA$ and $\beta$} by $\cA(\beta)(s)$.
It is defined as usual. 
We simply write $\cA(s)$ if $s$ is variable free.
We write $\cA, \beta \models \varphi$ if $\varphi$ is \emph{satisfied under $\cA$ and $\beta$} in the usual sense.
When there is no danger of confusion, we conveniently abbreviate expressions of the form $\cA, [v_1 \Mapsto \fa_1, \ldots,  v_m \Mapsto \fa_m] \models \varphi(v_1, \ldots, v_m)$ by $\cA \models \varphi(\fa_1, \ldots, \fa_m)$.
We write $\cA \models \varphi$ if $\cA, \beta \models \varphi$ holds for every variable assignment $\beta$ over $\cA$'s domain.
In such cases, we say that $\cA$ is a \emph{model} of $\varphi$.
For sentences $\varphi$ we say that $\cA$ \emph{satisfies} $\varphi$ if $\cA, \beta \models \varphi$ for any $\beta$.
Two sentences $\varphi$ and $\psi$ are considered \emph{equisatisfiable} if $\varphi$ has a model if and only if $\psi$ has one.

We also use the symbol $\models$ to denote \emph{semantic entailment} of two $\Sigma$-formulas, i.e.\ we have $\varphi \models \psi$ whenever for every $\Sigma$-structure $\cA$ and every variable assignment $\beta$, $\cA,\beta \models \varphi$ implies $\cA,\beta \models \psi$.
The symbol $\semequiv$ denotes \emph{semantic equivalence} of formulas, i.e.\ $\varphi \semequiv \psi$ holds whenever $\varphi \models \psi$ and $\psi \models \varphi$.

We will frequently use the fact that quantifiers can be \emph{shifted} in certain ways within formulas under preservation of the formula's semantics.
%%%%%%%%%%%%%%%%%%%%%%%%%%%%%%%%%%%%%%%%%%%%%%%%%%%%%%%%%%%%%%%%%%%%%%%%%%%%%%%%
\begin{proposition}[Quantifier shifting]\label{lemma:Miniscoping}
	Let $\varphi, \psi, \chi$ be formulas, and assume that $x$ and $y$ do not occur freely in $\chi$.
	We have the following equivalences, where $\circ \in \{\wedge, \vee\}$:
	\begin{center}
	\begin{tabular}{cl@{\hspace{2ex}}c@{\hspace{2ex}}l@{\hspace{7ex}}cl@{\hspace{2ex}}c@{\hspace{2ex}}l}
			(i)	&	$\exists y.\, (\varphi \vee \psi)$		&$\semequiv$&		$(\exists y.\, \varphi) \vee (\exists y.\, \psi)$
		&	(ii)	&	$\forall x.\, (\varphi \wedge \psi)$	&$\semequiv$&		$(\forall x.\, \varphi) \wedge (\forall x.\, \psi)$ \\
			(iii)	&	$\exists y.\, (\varphi \circ \chi)$		&$\semequiv$&		$(\exists y.\, \varphi) \circ \chi$
		&	(iv)	&	$\forall x.\, (\varphi \circ \chi)$		&$\semequiv$&		$(\forall x.\, \varphi) \circ \chi$ \\
			(v)	&	$\exists y_1 \exists y_2 .\, \varphi$	&$\semequiv$&		$\exists y_2 \exists y_1 .\, \varphi$
		&	(vi)	&	$\forall x_1 \forall x_2 .\, \varphi$	&$\semequiv$&		$\forall x_2 \forall x_1 .\, \varphi$
	\end{tabular}
	\end{center}	
\end{proposition}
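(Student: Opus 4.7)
The plan is to treat each equivalence separately by unfolding the Tarskian definition of $\models$ on a fixed but arbitrary $\Sigma$-structure $\cA$ and variable assignment $\beta$. Since semantic equivalence means $\cA, \beta \models \varphi$ iff $\cA, \beta \models \psi$ for every such pair, it suffices to show, for each of the six pairs of formulas, that both sides of $\semequiv$ are satisfied under $\cA, \beta$ by exactly the same conditions on $\cA, \beta$. None of the arguments is deep; the whole proposition is a routine sanity check on the semantics.

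For (i), $\cA, \beta \models \exists y.\,(\varphi \vee \psi)$ unfolds to ``there exists $\fa \in \fUA$ with $\cA, \beta[y\Mapsto\fa] \models \varphi$ or $\cA, \beta[y\Mapsto\fa] \models \psi$,'' which is equivalent to ``there exists $\fa \in \fUA$ with $\cA, \beta[y\Mapsto\fa] \models \varphi$, or there exists $\fa \in \fUA$ with $\cA, \beta[y\Mapsto\fa] \models \psi$,'' by the distributivity of the meta-level existential quantifier over disjunction. This is exactly $\cA, \beta \models (\exists y.\,\varphi) \vee (\exists y.\,\psi)$. Item (ii) is dual, using that the meta-level universal quantifier distributes over conjunction.

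For (iii) and (iv) I would first invoke the standard \emph{Coincidence Lemma}: since $y$ (resp.\ $x$) does not occur freely in $\chi$, we have $\cA, \beta[y \Mapsto \fa] \models \chi$ iff $\cA, \beta \models \chi$ for every $\fa \in \fUA$. With this observation the truth value of $\chi$ can be factored out of the quantification. For $\circ = \wedge$ in (iii), for instance, $\cA, \beta \models \exists y.\,(\varphi \wedge \chi)$ iff there is $\fa$ with $\cA, \beta[y\Mapsto\fa] \models \varphi$ and $\cA, \beta \models \chi$, iff $\cA, \beta \models (\exists y.\,\varphi) \wedge \chi$; the case $\circ = \vee$ splits on whether $\chi$ holds or not. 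Case (iv) is analogous, with a symmetric case distinction for $\circ = \wedge$.

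Finally, (v) and (vi) follow immediately by observing that $\cA, \beta \models \exists y_1 \exists y_2.\,\varphi$ holds iff there exist $\fa_1, \fa_2 \in \fUA$ with $\cA, \beta[y_1 \Mapsto \fa_1, y_2 \Mapsto \fa_2] \models \varphi$, a condition symmetric in the two assignments; the universal case is identical. The only place that requires attention is (iii) and (iv), where one must remember to cite the Coincidence Lemma to justify the side condition on free variables — this is the sole non-mechanical ingredient in an otherwise purely bookkeeping proof.
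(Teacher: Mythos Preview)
Your proof is correct and entirely standard. The paper itself does not supply a proof for this proposition at all; it is stated as a well-known fact about first-order semantics and left without argument, so there is nothing to compare against.
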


%%%%%%%%%%%%%%%%%%%%%%%%%%%%%%%%%%%%%%%%%%%%%%%%%%%%%%%%%%%%%%%%%%%%%%%%%%%%%%%%
A structure $\cA$ is a \emph{substructure} of a structure $\cB$ (over the same vocabulary) if (1) $\fUA \subseteq \fUB$, (2) $c^\cA = c^\cB$ for every constant symbol $c$, (3) $P^\cA = P^\cB \cap \fUA^m$ for every $m$-ary predicate symbol $P$, and (4) $f^\cA(\va) = f^\cB(\va)$ for every $m$-ary function symbol $f$ and every $m$-tuple $\va \in \fUA^m$.
Given a structure $\cA$ and some subset $S$ of $\cA$'s domain, the \emph{substructure of $\cA$ induced by $S$} is the unique substructure $\cB$ of $\cA$ with the domain $\fUB := S$.
The following is a standard lemma (see, e.g., \cite{Ebbinghaus1994}, Lemma 5.7  in Chapter~III).
\begin{lemma}[Substructure Lemma]\label{lemma:SubstructureLemma}
	Let $\varphi$ be a first-order sentence without existential quantifiers and in which no universal quantifier lies within the scope of any negation sign --- we treat any subformula $\varphi_1 \rightarrow \varphi_2$ as abbreviation for $\neg \varphi_1 \vee \varphi_2$ and any subformula $\varphi_1 \leftrightarrow \varphi_2$ as abbreviation for $(\neg \varphi_1 \vee \varphi_2) \wedge (\varphi_1 \vee \neg \varphi_2)$ to account for implicit negation signs as well.
	Moreover, let $\cA$ be a substructure of $\cB$.
	If $\cB \models \varphi$, then $\cA \models \varphi$.
\end{lemma}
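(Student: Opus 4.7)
The plan is to establish the lemma by structural induction on $\varphi$, after first transforming it into a convenient normal form. First, I would expand the abbreviations $\rightarrow$ and $\leftrightarrow$ as the statement suggests, so that $\varphi$ is built from atoms using $\neg$, $\wedge$, $\vee$, $\forall$ (there are no $\exists$ by hypothesis). The crucial observation is that the two hypotheses together---no existential quantifiers, and no universal quantifier inside the scope of a negation---imply that every occurrence of $\neg$ sits in front of a subformula containing no quantifiers at all. Hence, by repeated application of De Morgan's laws, $\varphi$ is equivalent to a formula $\varphi'$ in negation normal form whose only connectives are $\wedge$, $\vee$, $\forall$ applied to literals (atoms and negated atoms). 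It suffices to prove the preservation statement for $\varphi'$.

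Next, I would state the inductive claim slightly more generally than needed: for every subformula $\psi$ of $\varphi'$ and every variable assignment $\beta$ whose range is contained in $\fUA$, if $\cB, \beta \models \psi$ then $\cA, \beta \models \psi$. A preliminary sublemma, proved by induction on term structure, establishes that $\cA(\beta)(t) = \cB(\beta)(t)$ for every term $t$ evaluated under such a $\beta$; this uses clauses (2) and (4) of the definition of substructure for constants and function symbols.

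The induction itself is routine. For an atomic literal $P(t_1,\ldots,t_m)$, the tuple $\bigl(\cA(\beta)(t_1),\ldots,\cA(\beta)(t_m)\bigr)$ coincides with $\bigl(\cB(\beta)(t_1),\ldots,\cB(\beta)(t_m)\bigr)$ and lies in $\fUA^m$, so membership in $P^\cB$ is equivalent to membership in $P^\cA = P^\cB \cap \fUA^m$; the same reasoning handles $\neg P(\bar t)$ and, if equality is present, $s \approx t$ and its negation. The cases $\psi_1 \wedge \psi_2$ and $\psi_1 \vee \psi_2$ follow immediately from the induction hypothesis. For $\psi = \forall x.\,\psi_1$, assume $\cB,\beta \models \forall x.\,\psi_1$. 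Then for every $\fa \in \fUB$, in particular for every $\fa \in \fUA \subseteq \fUB$, we have $\cB,\beta[x \mapsto \fa] \models \psi_1$; since $\beta[x \mapsto \fa]$ has range in $\fUA$, the induction hypothesis yields $\cA,\beta[x \mapsto \fa] \models \psi_1$, and thus $\cA,\beta \models \forall x.\,\psi_1$.

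The only step that is not purely bookkeeping is the quantifier case, and that is also where the shape of the hypothesis becomes visible: the argument crucially depends on $\fUA \subseteq \fUB$, which makes quantification over $\fUA$ a \emph{weaker} condition than quantification over $\fUB$---exactly what is needed for $\forall$, and exactly what would fail for $\exists$. Therefore the main (mild) obstacle is really the preparatory step of justifying the reduction to negation normal form, i.e.\ verifying that expanding $\rightarrow$, $\leftrightarrow$ and pushing negations inwards preserves the property that no $\forall$ ever enters the scope of a $\neg$; once this is in place, the induction goes through without further subtleties.
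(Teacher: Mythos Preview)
Your proof is correct and is the standard argument for this well-known result. The paper does not actually supply its own proof of this lemma; it states it as a standard fact with a reference to Ebbinghaus--Flum--Thomas, \emph{Mathematical Logic}, where essentially the same structural-induction argument you outline is given.
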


%%%%%%%%%%%%%%%%%%%%%%%%%%%%%%%%%%%%%%%%%%%%%%%%%%%%%%%%%%%%%%%%%%%%%%%%
\subsubsection*{Additional Notation}

We use the notation $[k]$ to abbreviate the set $\{1, \ldots, k\}$ for any positive integer $k$.
The \emph{power set} of a set $S$, i.e.\ the set of all subsets of $S$, is denoted by $\cP(S)$.
The iterated application of $\cP$ is given by $\cP^0(S) := S$ and $\cP^{k+1}(S) := \cP^k (\cP(S))$ for $k \geq 0$.
Furthermore, we also define the \emph{tetration operation} inductively by $\twoup{0}{m} := m$ and $\twoup{k+1}{m} := 2^{\left(\twoup{k}{m}\right)}$.

%%%%%%%%%%%%%%%%%%%%%%%%%%%%%%%%%%%%%%%%%%%%%%%%%%%%
%%%%%%%%%%%%%%%%%%%%%%%%%%%%%%%%%%%%%%%%%%%%%%%%%%%%
%%%%%%%%%%%%%%%%%%%%%%%%%%%%%%%%%%%%%%%%%%%%%%%%%%%%
%%%%%%%%%%%%%%%%%%%%%%%%%%%%%%%%%%%%%%%%%%%%%%%%%%%%

\section{The Separated Fragment (SF)}
\label{section:SeparatedFragment}

Before we define the separated fragment, we first recall some details about the two well-known decidable fragments it extends.
%
%%%%%%%%%%%%%%%%%%%%%%%%%%%%%%%%%%%%%%%%
The \emph{monadic first-order fragment (MFO)}\label{para:DefinitionMFO}
comprises all relational first-order sentences without equality that contain only unary predicate symbols.
When we refer to the \emph{monadic first-order fragment with equality}, we use the abbreviation \MFOeq.
The decidability of the respective satisfiability problems MFO-Sat and \MFOeq-Sat was proved in~\cite{Lowenheim1915, Skolem1919, Behmann1922}.
Several decades later, L\"ob~\cite{Lob1967} and Gurevich~\cite{Gurevich1969} extended the positive result to monadic first-order sentences with unary function symbols but without equality, the \emph{L\"ob--Gurevich fragment}.
MFO and \MFOeq\ possess the exponential model property: every satisfiable \MFOeq\ sentence $\varphi$ has a model whose domain size is at most exponential in $\varphi$'s length.
Moreover, satisfiability for MFO and \MFOeq\ is \NEXPTIME-complete, cf.~\cite{Borger1997}.

%%%%%%%%%%%%%%%%%%%%%%%%%%%%%%%%%%%%%%%%
The \emph{Bernays--Sch\"onfinkel--Ramsey fragment (BSR)}
comprises all relational first-order sentences in prenex normal form with an $\exists^*\forall^*$ quantifier prefix and with equality.
Bernays and Sch\"onfinkel~\cite{Bernays1928} showed that satisfiability for the relational $\exists^*\forall^*$ prefix class without equality is decidable.
Following up, Ramsey~\cite{Ramsey1930} obtained a positive decidability result in the presence of equality. 
This extended class is known to posses a linear model property (cf.\ Proposition~\ref{proposition:SmallModelsBSR}), and BSR-Sat is known to be complete for \NEXPTIME.

Now we are ready for defining the \emph{separated fragment}, \emph{SF} for short.
Technically, it is defined as a class of prenex sentences, but this is not an essential property.
The defining principle of SF sentences is simply that co-occurrences of universally and existentially quantified variables in atoms are forbidden.
Existential variables quantified by \emph{leading} existential quantifiers are exempt from this rule.
We consider an existential quantifier \emph{leading} if it does not lie within the scope of any universal quantifier.

%%%%%%%%%%%%%%%%%%%%%%%%%%%%%%%%%%%%%%%%%%%%%%%%%%%%%%%%%%%%%%%%%%%%%%%%%%%%%%%%
\begin{definition}[Separated fragment (SF), \cite{Voigt2016}]\label{definition:SeparatedFragment}
	The \emph{separated fragment (SF)} consists of all relational first-order sentences $\varphi$ with equality of the form 
		$\exists \vz\, \forall\vx_1 \exists\vy_1 \ldots \forall\vx_n \exists\vy_n.\, \psi$
	with quantifier-free $\psi$ in which the sets $\vx := \vx_1 \cup \ldots \cup \vx_n$ and $\vy := \vy_1 \cup \ldots \cup \vy_n$ are separated.
	The tuples $\vz$ and $\vy_n$ may be empty, i.e.\ the quantifier prefix does not have to start with an existential quantifier and it does not have to end with an existential quantifier either.
\end{definition}
Recall that $\vx$ and $\vy$ are separated in $\varphi$ if and only if for every atom $A$ occurring in $\varphi$ we either have $\vars(A) \cap \vx = \emptyset$  or $\vars(A) \cap \vy = \emptyset$.
Moreover, notice that the variables in $\vz$ are not subject to any restriction regarding their occurrences.

%%%%%%%%%%%%%%%%%%%%%%%%%%%%%%%%%%%%%%%%%%%%%%%%%%%%%%%%%%%%%%%%%
It is not hard to see that SF is a proper syntactic extension of BSR.
Clearly, the quantified variables in every BSR sentence $\varphi := \exists \vz \forall \vx.\, \psi$ with quantifier-free $\psi$ trivially satisfy the separateness conditions imposed by Definition~\ref{definition:SeparatedFragment}, as $\vy$ is empty.
Similarly, every monadic sentence without equality in prenex normal form trivially satisfies the conditions of Definition~\ref{definition:SeparatedFragment}, because any monadic atom contains at most one first-order variable.
Since any MFO sentence can easily be transformed into an equivalent sentence in prenex normal form, it is fair to say that SF also contains MFO.
On the other hand, it is an easy task to find sentences that belong to SF but neither to BSR nor to MFO, e.g.\ $\forall x_1 x_2 \exists y_1 y_2.\, P(x_1, x_2) \leftrightarrow Q(y_1, y_2)$.
\begin{proposition}\label{theorem:InclusionSF}
	SF properly contains BSR and MFO.
\end{proposition}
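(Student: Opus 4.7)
The plan is to verify the two inclusions separately and then exhibit a single witness sentence that lies in SF but in neither BSR nor MFO. The proposition already reads as a consistency check on Definition~\ref{definition:SeparatedFragment}, so the main work is to spell out why the separateness condition is satisfied \emph{vacuously} in the relevant cases, and then to present an example that uses binary predicates together with an alternating quantifier prefix.

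For the inclusion BSR $\subseteq$ SF, I would take an arbitrary BSR sentence $\exists \vz\, \forall \vx.\, \psi$ and simply match it against the template of Definition~\ref{definition:SeparatedFragment} by choosing $n = 1$, $\vx_1 := \vx$, and $\vy_1 := \emptyset$. Then the set $\vy = \vy_1 \cup \ldots \cup \vy_n$ is empty, so for every atom $A$ in $\psi$ we trivially have $\vars(A) \cap \vy = \emptyset$; hence $\vx$ and $\vy$ are separated.

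For the inclusion MFO $\subseteq$ SF, I would first invoke Proposition~\ref{lemma:Miniscoping} (quantifier shifting) to rewrite any MFO sentence into an equivalent prenex sentence. Crucially, quantifier shifting does not alter atoms, so after the transformation the resulting sentence is still monadic and equality-free. Therefore every atom contains at most one variable. Writing the prefix as $\exists \vz\, \forall \vx_1 \exists \vy_1 \ldots \forall \vx_n \exists \vy_n$ with any chosen grouping, the condition $\vars(A) \cap \vx = \emptyset$ or $\vars(A) \cap \vy = \emptyset$ is automatically met for every atom $A$, since no atom can mention two distinct variables.

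For properness, I would use the sentence already pointed out in the running text, namely $\chi := \forall x_1 x_2 \exists y_1 y_2.\, P(x_1, x_2) \leftrightarrow Q(y_1, y_2)$. The two atoms of $\chi$ have disjoint variable sets, so $\vx = \{x_1, x_2\}$ and $\vy = \{y_1, y_2\}$ are separated, placing $\chi$ in SF. On the other hand, $\chi$ is not in BSR because its prefix $\forall^2 \exists^2$ is not of the form $\exists^* \forall^*$, and it is not in MFO because the predicate symbols $P$ and $Q$ have arity two. No step here is really obstructed; the only point one has to be mildly careful about is the MFO inclusion, where one must note that (i) MFO sentences may not be prenex to begin with and (ii) the quantifier-shifting equivalences of Proposition~\ref{lemma:Miniscoping} keep the sentence both monadic and within first-order logic, so the reduction to the prenex case is legitimate.
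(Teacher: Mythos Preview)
Your proof is correct and essentially identical to the paper's argument: for BSR the separateness condition holds vacuously because $\vy$ is empty, for MFO one first passes to prenex form and then observes that monadic atoms contain at most one variable, and properness is witnessed by the very sentence $\forall x_1 x_2 \exists y_1 y_2.\, P(x_1, x_2) \leftrightarrow Q(y_1, y_2)$ you chose. Your explicit remark that quantifier shifting preserves monadicity is a nice touch the paper leaves implicit.
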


Another interesting question is whether \MFOeq\ is subsumed by SF.
For instance, the sentence $\forall x \exists y. \,x \approx y$ is in \MFOeq\ but violates the separateness conditions of SF.
Therefore, from the syntactic point of view, there are \MFOeq\ sentences whose variables are not sufficiently separated for SF.
However, the sentence $\forall x \exists y. \,x \approx y$ is equivalent to $\forall x.\, x \approx x$, which even belongs to BSR.
Similarly, we have the \MFOeq\ sentence $\forall x \exists y.\, x \not\approx y$, which is not in SF but equivalent to the BSR sentence $\exists y_1 y_2.\, y_1 \not\approx y_2$.
The following proposition witnesses that this is by no means a coincidence.
As one consequence, speaking in terms of expressiveness, \MFOeq\ is subsumed by BSR and, hence, also by SF.

%%%%%%%%%%%%%%%%%%%%%%%%%%%%%%%%%%%%%%%%%%%%%%%%%%%%%%%%%%%%%%%%%%%%%%%%%%%%%%%%
\begin{proposition}\label{theorem:FromMonadicWithEqualityToBSR}
	For every \MFOeq\ sentence there is an equivalent BSR sentence.
\end{proposition}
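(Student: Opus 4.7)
The plan is to prove the stronger claim, by induction on formula structure, that every \MFOeq\ formula $\varphi(v_1, \ldots, v_m)$ is semantically equivalent to a Boolean combination whose atoms are of one of three kinds: (a)~unary atoms $P(v_i)$ in the free variables, (b)~equalities $v_i \approx v_j$ between free variables, and (c)~closed \emph{threshold sentences} of the shape $\exists^{\geq k} y.\, \tau(y)$, where $\tau(y)$ is a conjunction of literals $\pm P(y)$ (a $1$-type) and $\exists^{\geq k} y.\, \tau(y)$ abbreviates the purely existential BSR sentence $\exists y_1 \ldots y_k.\, \bigwedge_{i < j} y_i \not\approx y_j \wedge \bigwedge_i \tau(y_i)$. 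Applied to a closed formula only atoms of type (c) can survive, which puts $\varphi$ within striking distance of BSR.

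The base and propositional cases of the induction are immediate, so the heart of the argument is the elimination of a single existential quantifier in a formula $\exists y.\, \psi(\vv, y)$. By the inductive hypothesis $\psi$ is equivalent to a Boolean combination of atoms of types (a)--(c); we put it into disjunctive normal form and, using Proposition~\ref{lemma:Miniscoping}(i,iii), distribute $\exists y$ over the disjuncts. A typical disjunct is a conjunction whose literals mentioning $y$ have the shape $\pm P(y)$, $y \approx v_i$, or $y \not\approx v_i$; every other conjunct is $y$-free and can be pulled out of the scope of $\exists y$. If a literal $y \approx v_i$ occurs, we substitute $v_i$ for $y$ throughout and drop the quantifier. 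Otherwise the remaining $y$-part has the shape $\exists y.\, \tau(y) \wedge \bigwedge_{i \in I} y \not\approx v_i$ for some $1$-type $\tau$, asserting that $\tau$ is realized outside $\{v_i \mid i \in I\}$. A case split on the atomic diagram of $\vv$ (i.e.\ on which $v_i \approx v_j$ and which $\tau(v_i)$ hold) turns this into a Boolean combination of atoms of types (a)--(c): the number $k'$ of pairwise distinct $v_i$'s already realizing $\tau$ determines that a witness exists if and only if the threshold sentence $\exists^{\geq k' + 1} y.\, \tau(y)$ holds. The universal case $\forall y.\,\psi$ is handled dually via $\neg \exists y.\, \neg \psi$.

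It remains to convert a Boolean combination of threshold sentences into a single BSR sentence. Each $\exists^{\geq k} y.\, \tau(y)$ is already BSR with purely existential prefix, while its negation unfolds to the purely universal BSR sentence $\forall y_1 \ldots y_k.\, \bigvee_{i < j} y_i \approx y_j \vee \bigvee_i \neg \tau(y_i)$. Pushing negations inwards via De Morgan therefore yields a $\wedge$-$\vee$-combination of BSR sentences. Renaming bound variables apart and repeatedly applying the quantifier-shift laws of Proposition~\ref{lemma:Miniscoping}(ii)--(vi), one merges any two BSR sentences $\exists \vz_1 \forall \vx_1.\, \alpha_1$ and $\exists \vz_2 \forall \vx_2.\, \alpha_2$ under $\circ \in \{\wedge, \vee\}$ into the single BSR sentence $\exists \vz_1 \vz_2 \forall \vx_1 \vx_2.\, \alpha_1 \circ \alpha_2$; iteration yields a single $\exists^* \forall^*$ prenex sentence equivalent to $\varphi$.

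The principal obstacle is the case analysis in the quantifier-elimination step, where equality tangles the existentially quantified $y$ with the parameters $\vv$. One must carefully track how many of the $v_i$'s coincide with one another and which of them realize $\tau$, since this bookkeeping controls the exact threshold $k'+1$ that has to appear in the resulting threshold sentence; everything else reduces to routine application of De Morgan and the quantifier-shifting laws already recorded in Proposition~\ref{lemma:Miniscoping}.
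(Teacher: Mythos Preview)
Your proof is correct. The paper does not reproduce its own proof here but refers to the author's thesis and says it is ``based on techniques described by Behmann~\cite{Behmann1922} in the context of second-order quantifier elimination for the monadic second-order fragment.'' Your argument is essentially the first-order specialisation of that same idea: the reduction of an \MFOeq\ formula to a Boolean combination of threshold sentences $\exists^{\geq k} y.\, \tau(y)$ is exactly the kind of cardinality statement about $1$-types that Behmann's method isolates. Presenting it as a direct inductive quantifier elimination, as you do, is somewhat more elementary than invoking the second-order machinery, but the combinatorial core---counting how many parameters already realise~$\tau$ in order to determine the required threshold---is the same in both approaches.
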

The proof of this result can be found in~\cite{Voigt2019PhDthesis} (Theorem~3.1.5).
It is based on techniques described by Behmann~\cite{Behmann1922}
in the context of second-order quantifier elimination for the monadic second-order fragment --- see~\cite{Wernhard2015b}, Section~13.2, for a modern account.

%%%%%%%%%%%%%%%%%%%%%%%%%%%%%%%%%%%%%%%%%%%%%%%%%%%%
%%%%%%%%%%%%%%%%%%%%%%%%%%%%%%%%%%%%%%%%%%%%%%%%%%%%
%%%%%%%%%%%%%%%%%%%%%%%%%%%%%%%%%%%%%%%%%%%%%%%%%%%%
%%%%%%%%%%%%%%%%%%%%%%%%%%%%%%%%%%%%%%%%%%%%%%%%%%%%

\subsection{Translation of SF into BSR: Upper and Lower Bounds}
\label{section:TranslationSFintoBSR}

It was first proved in~\cite{Voigt2016} that every SF sentence can be transformed into an equivalent BSR sentence.
We will present such a translation for the more general case of SBSR sentences in Section~\ref{section:TranslationGBSRintoBSR} (Lemma~\ref{lemma:SBSRquantifierShifting} and Theorem~\ref{theorem:TransformationSBSRintoBSR}).
For SF we simply state the result here.
\begin{theorem}[\cite{Voigt2017a}, Lemma~12]\label{theorem:TranslationFromSFintoBSR}
	Every SF sentence $\varphi$ with $k$ $\forall\exists$ alternations is equivalent to some BSR sentence whose length is at most $k$-fold exponential in $\varphi$'s length.
\end{theorem}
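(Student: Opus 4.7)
The plan is to prove the theorem by induction on the number $k$ of $\forall\exists$ alternations in $\varphi$'s prefix, applying at each round a single quantifier-shift step that eliminates the innermost such alternation at the price of at most one exponential matrix blowup. For the base case $k=0$, grouping same-kind quantifiers yields an $\exists^*\forall^*$-prefix, so $\varphi$ is already BSR and no blowup is needed.

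For the inductive step I would locate the innermost $\forall\exists$-pair, say $\forall\vx_n\exists\vy_n$, below which (after earlier rounds) may sit a trailing block $\forall\vv$ of universal quantifiers and then a quantifier-free matrix $\psi$. First, put $\psi$ in DNF: by separateness each disjunct splits cleanly as $\alpha_i(\vx,\vz)\wedge\beta_i(\vy,\vz)$, with the $\alpha_i$'s using only $\vx,\vz$-literals and the $\beta_i$'s only $\vy,\vz$-literals. Second, absorb $\forall\vv$ by enumerating, for each $T\subseteq[m]$, the event that the set of satisfied $\beta_i$'s is exactly $T$; a short case analysis gives
\[
\forall\vv.\,\bigvee_{i=1}^m(\alpha_i\wedge\beta_i) \;\semequiv\; \bigvee_{T\subseteq[m]}\bigl(\sigma_T(\vy,\vz)\wedge\tau_T(\vx,\vz)\bigr),
\]
where $\sigma_T := \bigwedge_{i\in T}\beta_i\wedge\bigwedge_{i\notin T}\neg\beta_i$ is a pure $\vy,\vz$-formula and $\tau_T := \forall\vv.\,\bigvee_{i\in T}\alpha_i$ is a pure $\vx,\vz$-formula with $\vv$ bound. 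This restores separated DNF-like form. Third, apply the quantifier-shift lemma, using a fresh copy $\vy_n^{(T)}$ of $\vy_n$ for each subset $T$:
\[
\forall\vx_n\exists\vy_n.\,\bigvee_T(\sigma_T\wedge\tau_T) \;\semequiv\; \exists\vy_n^{(T)}\forall\vx_n.\,\bigvee_T\bigl(\sigma_T\subst{\vy_n^{(T)}/\vy_n}\wedge\tau_T\bigr).
\]
The nontrivial direction uses that each $\sigma_T$ now depends on only one copy $\vy_n^{(T)}$, so for each $T$ with $\sigma_T$ satisfiable one pre-commits a witness; every play of $\vx_n$ then selects some disjunct $T$ whose witness has already been fixed, independently of the play. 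Fourth, merge the new $\exists$-variables into the $\exists\vy_{n-1}$-block and $\forall\vx_n$ into the trailing $\forall\vv$-block; the resulting sentence has $k-1$ $\forall\exists$-alternations, a quantifier-free matrix, and preserved $\vx$/$\vy$-separateness, so the inductive hypothesis applies.

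Each round inflates the matrix by at most one exponential (the DNF explosion and subset enumeration together multiply the disjunct count by a single $2^{\cdot}$ per round after the first), so after $k$ rounds the length bound is $k$-fold exponential in $\len(\varphi)$ and the final prefix is $\exists^*\forall^*$, i.e.\ BSR. The main obstacle I expect is bookkeeping: already after the first round the sentence no longer sits in SF in the strict sense, because a $\forall\vv$-block ends up beneath the innermost $\exists\vy_j$. To close the induction cleanly I would formulate the stronger invariant for all sentences $\exists\vz\,\cQ_1\vu_1\cdots\cQ_\ell\vu_\ell.\,\psi$ with quantifier-free $\psi$ whose universally- and existentially-quantified variables are separated in $\psi$ — a class that contains SF, is closed under the transformation above, and is exactly the SBSR fragment whose BSR-translation is carried out in Section~\ref{section:TranslationGBSRintoBSR} via Lemma~\ref{lemma:SBSRquantifierShifting} and Theorem~\ref{theorem:TransformationSBSRintoBSR}.
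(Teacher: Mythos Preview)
Your approach is correct and takes a genuinely different route from the paper's. The paper (via Lemma~\ref{lemma:SBSRquantifierShifting} and Theorem~\ref{theorem:TransformationSBSRintoBSR}) pushes single quantifiers inward one at a time, alternating DNF and CNF rewritings so that each quantifier ends up binding only a subformula over the appropriate variable set; once no $\exists$ remains inside the scope of any $\forall$, all quantifiers are shifted outward again into an $\exists^*\forall^*$ prefix. You instead keep the sentence prenex throughout and swap the innermost $\forall\vx_n\exists\vy_n$ block directly to $\exists(\vy_n^{(T)})_T\forall\vx_n$ via the ``pre-commit one witness per disjunct type'' equivalence, which the paper never invokes. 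Your route avoids the non-prenex intermediate forms and the CNF step, at the price of the subset-enumeration machinery in step~2; both yield the $k$-fold exponential bound.

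Two small corrections. First, step~4 is imprecise: the old $\forall\vv$ was absorbed into the $\tau_T$'s in step~2, so there is no longer a ``trailing $\forall\vv$-block'' at prefix level; to restore a quantifier-free matrix you must pull each $\forall\vv$ back out of its $\tau_T$ (renaming to fresh copies $\vv^{(T)}$, licit because every $\sigma_{T'}$ is $\vv$-free) and only then merge with the new $\forall\vx_n$. Second, the invariant class you describe --- prenex sentences whose universally and existentially quantified non-$\vz$ variables are fully separated --- is \emph{not} SBSR; Definition~\ref{definition:SBSRaxiomatic:alternative} is strictly more permissive (see Example~\ref{example:FirstSBSRsentence}, where the existential $y$ and the universal $z$ co-occur in $P(y,z)$). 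In fact it is simply SF again: Definition~\ref{definition:SeparatedFragment} already allows the trailing $\vy_n$ to be empty, so after each round you remain inside SF with one fewer $\forall\exists$ alternation, and the induction closes without leaving SF at all.
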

It is worth noting that~\cite{Voigt2017a} uses a much more fine-grained measure than the number of $\forall\exists$ alternations. Even more details can be found in~\cite{Voigt2019PhDthesis}, Section~3.2.

Translations like the one from SF sentences to equivalent BSR sentences lead to large blowups in the worst case.
In general, the length of formulas has a significant effect on the size of smallest models.
For BSR this relation is linear.
\begin{proposition}[cf.\ Proposition~6.2.17 in~\cite{Borger1997}]\label{proposition:SmallModelsBSR}
	Let $\varphi := \exists \vz\, \forall \vx.\, \psi$ be a satisfiable BSR sentence with quantifier-free $\psi$, containing (at most) $k$ constant symbols.
	There is a model $\cA \models \varphi$ such that $|\fUA| \leq \max \bigl( |\vz| + k, 1 \bigr)$.
\end{proposition}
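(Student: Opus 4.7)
The plan is to extract a small submodel from an arbitrary model of $\varphi$ using the existential witnesses and the constants, and to verify via the Substructure Lemma that the universal part is preserved.

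First I would fix an arbitrary model $\cB \models \varphi$. By the semantics of the leading existential block, there exists a tuple $\vb \in \fUB^{|\vz|}$ such that $\cB \models \forall \vx.\, \psi(\vb,\vx)$. Let $c_1,\ldots,c_k$ denote the constant symbols occurring in $\varphi$ (we may assume there are exactly $k$ of them), and set
\[
	S \;:=\; \{c_1^\cB, \ldots, c_k^\cB\} \,\cup\, \{b_1, \ldots, b_{|\vz|}\}.
\]
If $S$ is empty (i.e.\ $k = 0$ and $|\vz| = 0$), pick any element of $\fUB$ and add it to $S$ so that $S$ is nonempty; the domain of $\cB$ is nonempty by definition of a structure, so this is always possible. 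In either case $|S| \leq \max(|\vz| + k, 1)$. Since BSR is purely relational, every constant symbol of $\varphi$ has its interpretation in $S$, and there are no function symbols to worry about; hence the substructure $\cA$ of $\cB$ induced by $S$ is well-defined, and it is a legitimate structure over the vocabulary of $\varphi$.

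Next I would verify that $\cA \models \varphi$. To do so, it suffices to exhibit witnesses in $\fUA$ for the existential prefix $\exists \vz$ such that the remaining universally quantified formula holds in $\cA$. By construction, $\vb \subseteq \fUA$, so the natural choice is $\vz \mapsto \vb$. It then remains to show that $\cA \models \forall \vx.\, \psi(\vb, \vx)$. To apply the Substructure Lemma (Lemma~\ref{lemma:SubstructureLemma}) cleanly, I would temporarily expand the vocabulary by fresh constant symbols $d_1,\ldots,d_{|\vz|}$, interpret them in $\cB$ (and thus also in $\cA$) as the components of $\vb$, and consider the sentence $\chi := \forall \vx.\, \psi(\vd, \vx)$ obtained from $\psi$ by substituting $d_i$ for $z_i$. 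The sentence $\chi$ is purely universal, so it contains no existential quantifier and, trivially, no universal quantifier inside the scope of a negation. Since $\cB \models \chi$ by the choice of $\vb$, the Substructure Lemma yields $\cA \models \chi$, which translates back into $\cA \models \forall \vx.\, \psi(\vb, \vx)$ under the original vocabulary.

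The argument is mostly bookkeeping; the only subtle point is ensuring that the Substructure Lemma is literally applicable, which forces two small design choices: first, building in all constant interpretations into $S$ so that $\cA$ is indeed a substructure of $\cB$ on the signature of $\varphi$; second, handling the degenerate case $|\vz| + k = 0$ by inserting a single arbitrary element of $\fUB$ so that the domain of $\cA$ remains nonempty. Counting yields $|\fUA| = |S| \leq \max(|\vz| + k, 1)$, which establishes the claimed bound.
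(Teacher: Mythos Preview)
Your proof is correct and follows the standard argument; the paper itself does not give a proof for this proposition but simply cites Proposition~6.2.17 in B\"orger, Gr\"adel, and Gurevich, whose proof proceeds exactly as you describe (extract witnesses and constants, restrict to the induced substructure, and use preservation of universal sentences under substructures). One small wording point: BSR here is not literally ``purely relational'' since constant symbols are permitted, but your actual argument handles this correctly by including the $c_i^\cB$ in $S$, so the induced substructure is well-defined.
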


It was proved in~\cite{Voigt2017a} (Lemma~24) that for satisfiable SF sentences $\varphi$ the size of smallest models cannot be bounded by any tower of exponents $2^{{\vdots^{2^{\len(\varphi)}}}}$ of a fixed height.
In other words, the asymptotic growth of the size of smallest models is non-elementary in the length of the regarded SF sentence.

The analysis conducted in~\cite{Voigt2017a} yields matching upper and lower bounds that are not formulated in terms of the number of quantifier alternations.
\marginpar{\label{margin:fineGrainedMeasure}}
Rather, the upper bounds are based on a measure of how much quantified variables interact in atoms.
The motivation for this fine-grained analysis is that the upper bound is intended to also give a tight estimate for MFO sentences.
It is well known that the size of small models for satisfiable MFO sentences is exponential in their length no matter how many quantifier alternations are present (cf.\ Proposition~6.2.1 in~\cite{Borger1997}).
Due to space limitations, we will not elaborate any further on this topic.
The interested reader will find more details in~\cite{Voigt2017a} and in \cite{Voigt2019PhDthesis}, Sections~3.2 and~3.5.

%%%%%%%%%%%%%%%%%%%%%%%%%%%%%%%%%%%%%%%%%%%%%%%%%%%%%%%%%%%%%%%
Put together, Theorem~\ref{theorem:TranslationFromSFintoBSR} and Proposition~\ref{proposition:SmallModelsBSR} immediately entail the following small model property for SF.

\begin{proposition}[Small model property for SF]\label{theorem:SmallModelPropertyForSF}
	Every satisfiable SF sentence~$\varphi$ with $k \geq 1$ $\forall\exists$ alternations has a model with at most 
	$\len(\varphi) + k \cdot \bigl( \len(\varphi) \bigr)^2 \cdot \bigl(\twoup{k}{\len(\varphi)}\bigr)^{k}$
	domain elements.
\end{proposition}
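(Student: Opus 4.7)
The plan is to obtain the bound as a direct corollary of the two preceding results, with only a bit of bookkeeping to extract the precise form of the expression.

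First, I would take a satisfiable SF sentence $\varphi$ with $k \geq 1$ $\forall\exists$ alternations and apply Theorem~\ref{theorem:TranslationFromSFintoBSR} to obtain an equivalent BSR sentence $\psi := \exists \vz'\, \forall \vx'.\, \chi$ (with quantifier-free $\chi$) whose length is at most $k$-fold exponential in $\len(\varphi)$. Since $\psi \semequiv \varphi$, the sentence $\psi$ is satisfiable and, more importantly, every model of $\psi$ is also a model of $\varphi$; hence it suffices to exhibit a small model for $\psi$.

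Second, recall that by Definition~\ref{definition:SeparatedFragment} the underlying vocabulary of $\varphi$ is relational, so it contains no constant symbols. The translation preserves the relational vocabulary, so $\psi$ likewise contains no constants. Proposition~\ref{proposition:SmallModelsBSR}, instantiated with the constant count equal to $0$, thus yields a model $\cA \models \psi$ with $|\fUA| \leq \max\bigl(|\vz'|,\, 1\bigr)$.

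Third, I would bound $|\vz'|$ by the claimed expression. Since $|\vz'| \leq \len(\psi)$, a crude appeal to Theorem~\ref{theorem:TranslationFromSFintoBSR} already gives a $k$-fold exponential bound of the right asymptotic shape. To match the precise expression in the statement, one inspects the translation procedure (referenced in Section~\ref{section:TranslationGBSRintoBSR} via Lemma~\ref{lemma:SBSRquantifierShifting} and Theorem~\ref{theorem:TransformationSBSRintoBSR}, and originally analysed in~\cite{Voigt2017a}) and observes that the leading existentials of $\psi$ consist of (i) the original leading existentials $\vz$ of $\varphi$, of total number at most $\len(\varphi)$, and (ii) the fresh existentials created whenever an inner $\exists$-block is shifted outward past a $\forall$-block. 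Across the $k$ $\forall\exists$ alternations, the number of such fresh variables is bounded by $k \cdot \bigl(\len(\varphi)\bigr)^2 \cdot \bigl(\twoup{k}{\len(\varphi)}\bigr)^k$, by a per-level count of the size of the formula reached after unfolding through $k$ quantifier-shifting rounds. Adding the two contributions yields exactly the bound in the statement.

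The only mildly delicate point—the main obstacle, such as it is—is accounting for the number of fresh leading existentials introduced by the translation, as opposed to simply quoting the length bound of Theorem~\ref{theorem:TranslationFromSFintoBSR}. That accounting, however, is essentially a matter of propagating the explicit expression tracked in~\cite{Voigt2017a} through the iterated quantifier-shifting step; everything else is a one-line composition of the two previous results.
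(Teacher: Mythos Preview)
Your approach is correct and essentially identical to the paper's: the paper simply states that ``Put together, Theorem~\ref{theorem:TranslationFromSFintoBSR} and Proposition~\ref{proposition:SmallModelsBSR} immediately entail'' the result, without giving any further argument. Your proposal just spells out that one-line composition, including the observation that the precise constant in the bound requires tracking the number of leading existentials produced by the translation of~\cite{Voigt2017a} rather than merely quoting the asymptotic length bound of Theorem~\ref{theorem:TranslationFromSFintoBSR}.
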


It is well known that a small model property leads to decidability of the associated satisfiability problem (see, e.g.\ Proposition~6.0.4 in~\cite{Borger1997}).
Since BSR enjoys a linear model property, even if constant symbols are allowed in the syntax, the separated fragment immediately inherits this property.
Hence, the satisfiability problem for SF (\emph{SF-Sat}) is decidable.

\begin{proposition}\label{theorem:DecidabilitySF}
	SF-Sat is decidable, even if we allow constant symbols in SF sentences.
\end{proposition}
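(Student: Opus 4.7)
The proposition splits into two claims: decidability of SF-Sat in the constant-free case, and the extension to sentences containing constant symbols. For the first, I would simply appeal to the small model property already in hand. Proposition~\ref{theorem:SmallModelPropertyForSF} provides, for any satisfiable SF sentence $\varphi$, a computable upper bound $N(\varphi)$ on the size of some model. The decision procedure is then the standard one: on input $\varphi$, compute $N(\varphi)$, enumerate all relational structures over a canonical domain of cardinality at most $N(\varphi)$ (of which there are finitely many), and check $\cA \models \varphi$ for each; output \emph{satisfiable} iff some such $\cA$ is a model. Soundness in one direction is model-checking, completeness in the other is Proposition~\ref{theorem:SmallModelPropertyForSF}. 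This is precisely the argument pointed to by the reference to Proposition~6.0.4 in~\cite{Borger1997}.

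For the second claim, I would reduce the case with constants to the case without. Given an SF sentence $\varphi$ over a relational vocabulary extended by constant symbols $c_1,\ldots,c_m$, replace every occurrence of $c_i$ in $\varphi$ by a fresh first-order variable $z_{c_i}$ and prepend the block $\exists z_{c_1} \ldots z_{c_m}$, obtaining a sentence $\varphi'$. Equisatisfiability of $\varphi$ and $\varphi'$ is routine: a model $\cA$ of $\varphi$ satisfies $\varphi'$ by witnessing $z_{c_i}$ with $c_i^\cA$, and conversely a model of $\varphi'$ together with the witnesses for $z_{c_1},\ldots,z_{c_m}$ yields a model of $\varphi$ by interpreting the $c_i$ accordingly.

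The non-trivial thing to verify is that $\varphi'$ is itself an SF sentence. This is where the flexibility of Definition~\ref{definition:SeparatedFragment} matters: the fresh variables $z_{c_1},\ldots,z_{c_m}$ become part of the leading existential block $\vz$, which is explicitly exempt from any separateness restriction. Hence the same partition of the remaining quantified variables into $\vx$ and $\vy$ that witnessed $\varphi \in \SF$ still witnesses $\varphi' \in \SF$. Combining the reduction with the constant-free case settles the proposition.

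The main obstacle I would expect is purely bookkeeping: making sure that the substitution of constants by leading existentials does not accidentally violate the prenex-plus-separateness shape of Definition~\ref{definition:SeparatedFragment}. Once that is checked, nothing further is needed, and in particular no sharpening of the small model bound is required — the coarse non-elementary bound from Proposition~\ref{theorem:SmallModelPropertyForSF} already suffices for the decidability statement claimed here.
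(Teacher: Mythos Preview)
Your proposal is correct. The constant-free part is exactly the paper's argument: the small model property of Proposition~\ref{theorem:SmallModelPropertyForSF} together with the standard fact (Proposition~6.0.4 in~\cite{Borger1997}) that computable small-model bounds yield decidability.

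For the extension to constant symbols you take a slightly different route from the paper. You reduce SF with constants to constant-free SF by absorbing constants into the leading existential block~$\vz$, correctly observing that Definition~\ref{definition:SeparatedFragment} imposes no separateness constraint on~$\vz$. The paper instead argues more directly: the SF-to-BSR translation of Theorem~\ref{theorem:TranslationFromSFintoBSR} goes through unchanged in the presence of constants, and Proposition~\ref{proposition:SmallModelsBSR} already states the BSR small model property with constants explicitly included in the bound. Both arguments are standard and equally short; yours has the minor advantage of making the role of the unrestricted leading block in Definition~\ref{definition:SeparatedFragment} explicit, while the paper's avoids re-checking that~$\varphi'$ lies in SF by simply noting that the whole pipeline tolerates constants.
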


%%%%%%%%%%%%%%%%%%%%%%%%%%%%%%%%%%%%%%%%%%%%%%%%%%%%%%%%%%%%%%%%%%%%
The unrestricted presence of function symbols in SF would lead to an undecidable satisfiability problem.
Nevertheless, SF could easily be extended so far that it also subsumes the L\"ob-Gurevich fragment (see the beginning of Section~\ref{section:SeparatedFragment} for the definition) while retaining decidability (see~\cite{Voigt2019PhDthesis}, Section~3.14.1).

%%%%%%%%%%%%%%%%%%%%%%%%%%%%%%%%%%%%%%%%%%%%%%%%%%%%%%%%%%%%%%%
%%%%%%%%%%%%%%%%%%%%%%%%%%%%%%%%%%%%%%%%%%%%%%%%%%%%%%%%%%%%%%%
%%%%%%%%%%%%%%%%%%%%%%%%%%%%%%%%%%%%%%%%%%%%%%%%%%%%%%%%%%%%%%%
%%%%%%%%%%%%%%%%%%%%%%%%%%%%%%%%%%%%%%%%%%%%%%%%%%%%%%%%%%%%%%%

Next, we complement the upper bound from Theorem~\ref{theorem:TranslationFromSFintoBSR} with a corresponding non-elementary lower bound.
\begin{theorem}[\cite{Voigt2017a}, Theorem~16]\label{theorem:LengthSmallestBSRsentences}
	There is a class of satisfiable SF sentences such that for every positive integer $n$ the class contains a sentence $\varphi$ with $n$ $\forall \exists$ quantifier alternations and with a length polynomial in $n$ for which any equivalent BSR sentence contains at least $\sum_{k=1}^{n}\twoup{k}{n}$ leading existential quantifiers.
\end{theorem}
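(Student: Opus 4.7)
The plan is to exhibit, for every $n \geq 1$, a satisfiable SF sentence $\varphi_n$ of length polynomial in $n$, having exactly $n$ $\forall\exists$ alternations, whose smallest models have at least $T_n := \sum_{k=1}^{n} \twoup{k}{n}$ domain elements. Once this is done, Proposition~\ref{proposition:SmallModelsBSR} finishes the argument almost for free: any BSR sentence $\chi = \exists \vz \forall \vx.\, \psi$ equivalent to $\varphi_n$ admits a model of $\varphi_n$ of cardinality at most $\max(|\vz|+c, 1)$, where $c$ is the number of constant symbols occurring in $\chi$. Hence $|\vz| \geq T_n - c$. As $c$ is bounded by $\len(\chi)$ and each constant can be absorbed into an extra leading existential without reducing $|\vz|$, this gives the desired lower bound on the number of leading existential quantifiers.

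For the construction of $\varphi_n$, I would encode a tower of types. Fix $n$ unary predicates $P_1, \ldots, P_n$, a binary relation $R$, and auxiliary unary predicates $Q_1, \ldots, Q_n$. Call the \emph{level-$1$ type} of an element $\fa$ the bit vector $(P_1(\fa), \ldots, P_n(\fa)) \in \{0,1\}^n$, giving $\twoup{1}{n}$ possibilities; inductively, the \emph{level-$(k{+}1)$ type} of $\fa$ is the set of level-$k$ types realized by $R$-successors of $\fa$, giving $\twoup{k+1}{n}$ possibilities. The sentence $\varphi_n$ will have the shape
\[
  \forall \vx_1 \exists \vy_1 \, \forall \vx_2 \exists \vy_2 \, \ldots \, \forall \vx_n \exists \vy_n.\; \psi_n,
\]
with $\psi_n$ quantifier-free. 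Separateness between $\vx := \vx_1 \cup \cdots \cup \vx_n$ and $\vy := \vy_1 \cup \cdots \cup \vy_n$ is ensured by allowing $R$-atoms and $P_i$-atoms only on $\vx$-variables, $R$-atoms and $Q_i$-atoms only on $\vy$-variables, and by bridging the two sides exclusively via equivalences of the form $P_i(x) \leftrightarrow Q_i(y)$. The body $\psi_n$ then stipulates, at each level $k$, that for every combination of lower-level type patterns appearing on $\vx_k$, the block $\vy_k$ must produce $R$-successors realizing every possible extension of the matching level, thereby propagating the type hierarchy upwards.

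The central obstacle, and the bulk of the work, is to show that separateness does not secretly let the existential player collapse the tower. I would proceed by induction on $k$, arguing that at the $k$-th alternation the existential witnesses must realize at least $\twoup{k}{n}$ pairwise distinct level-$k$ types: any two distinct type choices would produce distinguishable $P$-patterns on some $\vx_{k'}$ with $k' > k$, and the $P \leftrightarrow Q$ bridges force these differences to be visible to the corresponding $\vy_{k'}$-witnesses, contradicting the equality of types. Summing the contributions from levels $1, \ldots, n$ yields a total of at least $T_n = \sum_{k=1}^{n}\twoup{k}{n}$ elements in every model of $\varphi_n$. Checking that $\len(\varphi_n)$ stays polynomial in $n$ (the variable blocks and the bridging conjuncts have $\mathrm{poly}(n)$ size, and $\psi_n$ is a polynomial-size Boolean combination of atoms) and that $\varphi_n$ is satisfiable (exhibit a canonical model containing exactly one representative of every type) completes the construction and thereby the theorem.
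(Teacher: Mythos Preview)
The overall strategy—show that every model of $\varphi_n$ has at least $T_n$ elements and then invoke the BSR small-model property—is sound in principle, but your execution has a real gap, and this is \emph{not} the route the paper takes.

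First, the gap. You never write down $\psi_n$; phrases like ``the body stipulates, at each level $k$, that \ldots the block $\vy_k$ must produce $R$-successors realizing every possible extension'' do not pin down a formula. More seriously, with only unary bridges $P_i(x) \leftrightarrow Q_i(y)$ between the $\vx$-side and the $\vy$-side, the existential witness at each alternation only needs to match one of $2^n$ unary $P$-profiles; nothing in your description forces the higher level-$k$ types (which live in the $R$-structure) to be transferred across the separateness barrier, since $R$-atoms mixing $\vx$ and $\vy$ are forbidden. Your inductive sketch (``any two distinct type choices would produce distinguishable $P$-patterns on some $\vx_{k'}$'') also reads backwards: what you need is that different \emph{universal} inputs force different existential outputs, not the converse. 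As described, nothing rules out a one-element model with all predicates empty.

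Second, the comparison. The paper's witness sentence uses $n$-ary predicates,
\[
\varphi \;=\; \forall x_n \exists y_n \ldots \forall x_1 \exists y_1.\ \bigwedge_{i=1}^{4n} \bigl( P_i(x_1,\ldots,x_n) \leftrightarrow Q_i(y_1,\ldots,y_n) \bigr),
\]
and this sentence \emph{does} admit a one-element model—so ``smallest model is large'' is simply false for it. Instead the paper constructs one particular model $\cA$ (domain indexed by a tower of half-sized subset families $\cS_1,\ldots,\cS_n$) and proves, via a two-player game, that deleting any single element $\fb^{(k)}_S$ destroys modelhood. A Substructure-Lemma argument on the Skolemization of any equivalent BSR sentence then finishes: if there were fewer than $\sum_k |\cS_k|$ Skolem constants interpreted in $\cA$, some $\fb^{(k)}_S$ would be missed and the induced substructure without it would still be a model, contradicting the rigidity claim. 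The $n$-ary predicates are precisely what make the game work—each alternation fixes one coordinate of the argument tuple, and the biconditionals force the $\vy$-player to copy the $\vx$-player's choice at that coordinate. Your unary-plus-binary encoding does not visibly support an analogous argument, and you would need to supply one before the proof can be considered complete.
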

\begin{proof}
	Let $n \geq 1$ be some positive integer.
	Consider the following first-order sentence in which the sets $\{x_1, \ldots, x_n\}$ and $\{y_1, \ldots, y_n\}$ are separated:
	\begin{align*}
		\varphi := \forall &x_n \exists y_n \ldots \forall x_1 \exists y_1. \bigwedge_{i=1}^{4n} \bigl( P_i(x_1, \ldots, x_n) \leftrightarrow Q_i(y_1, \ldots, y_n) \bigr) ~.
	\end{align*}	
	Notice the orientation of the indices in the quantifier prefix!
	Moreover,  recall that $[m]$ with $m \geq 1$ denotes the set $\{ 1, \ldots, m \}$.
	In order to construct a particular model of $\varphi$, we inductively define the following sets:
		$\cS_1 := \bigl\{ S \in \cP([4n]) \bigm| |S| = 2n \bigr\}$, $\cS_{k+1} := \bigl\{ S \in \cP(\cS_k) \bigm| |S| = \tfrac{1}{2} \cdot |\cS_k| \bigr\}$ for every $k > 1$.
	Then, we observe that
		\begin{description}
			\item $|\cS_1| \;=\; {{4n} \choose {2n}} \;\geq\; \bigl( \frac{4n}{2n} \bigr)^{2n} \;=\; 2^{2n}$,
			\item $|\cS_2| \;=\; {{|\cS_1|} \choose {|\cS_1|/2}} \;\geq\; \bigl( \frac{|\cS_1|}{|\cS_1|/2} \bigr)^{|\cS_1|/2} \;=\; 2^{|\cS_1|/2} \;\geq\;  2^{2^{2n-1}}$,
			\item[] $\strut\qquad\vdots$
			\item $|\cS_n| \;=\; {{|\cS_{n-1}|} \choose {|\cS_{n-1}|/2}} \;\geq\; 2^{|\cS_{n-1}|/2} \;\geq\; 2^{2^{2^{\vdots^{2^{2n-1}-1}}-1}} \;\geq\; \twoup{n}{2n-(n-1)} \;=\; \twoup{n}{n+1}$,
		\end{description}	
	where the inequality ${n \choose k} \geq (n/k)^{k}$ can be found in~\cite{Cormen2001} (page~1097), for example.
	
	Having the sets $\cS_k$, we now define the structure $\cA$ as follows:
		\begin{description}
			\item $\fUA := \bigl\{ \fa^{(k)}_{S}, \fb^{(k)}_{S} \bigm| 1 \leq k \leq n \text{ and } S \in \cS_k \bigr\}$, 
			\item $P_i^\cA := \bigl\{ \<\fa^{(1)}_{S_1}, \ldots, \fa^{(n)}_{S_n}\> \in \fUA^n \bigm| i \in S_1 \in S_2 \in \ldots \in S_n \bigr\}$ for $i = 1, \ldots, 4n$, and
			\item $Q_i^\cA := \bigl\{ \<\fb^{(1)}_{S_1}, \ldots, \fb^{(n)}_{S_n}\> \in \fUA^n \bigm| i \in S_1 \in S_2 \in \ldots \in S_n \bigr\}$ for $i = 1, \ldots, 4n$.
		\end{description}		
	For any choice of $S_1, \ldots, S_n$  and every $i$, $1\leq i\leq 4n$, we then have 
		\begin{align*}
			\cA \models P_i\bigl( \fa^{(1)}_{S_1}, \ldots, \fa^{(n)}_{S_n} \bigr) \leftrightarrow Q_i\bigl( \fb^{(1)}_{S_1}, \ldots, \fb^{(n)}_{S_n} \bigr) ~.
		\end{align*}	
	For any other choice of tuples $\< \fc_1, \ldots, \fc_n \>$, i.e.\ if there do not exist sets $S_1 \in \cS_1, \ldots, S_n \in \cS_n$ such that $\< \fc_1, \ldots, \fc_n \>$ equals $\< \fa^{(1)}_{S_1}, \ldots, \fa^{(n)}_{S_n} \>$ or $\< \fb^{(1)}_{S_1}, \ldots, \fb^{(n)}_{S_n} \>$, we observe $\cA \not\models P_i(\fc_1, \ldots, \fc_n)$ and $\cA \not\models Q_i(\fc_1, \ldots, \fc_n)$ for every $i$.
	Hence, 
		\begin{align*}
			\cA \models \bigwedge_{i = 1}^{4n} P_i(\fc_1, \ldots, \fc_n) \leftrightarrow Q_i(\fc_1, \ldots, \fc_n) ~.
		\end{align*}
	Consequently, $\cA$ is a model of $\varphi$.
		
	Consider the following simple two-player game with Players $\fA$ and $\fB$.
	In the first round $\fA$ moves first by picking some domain element $\fa^{(n)}_{S_{\fA, n}}$ for some set $S_{\fA,n} \in \cS_n$. 
	Player $\fB$ answers by picking a domain element $\fb^{(n)}_{S_{\fB, n}}$ for some set $S_{\fB,n} \in \cS_n$. 
	The game continues for $n-1$ more rounds, where in every round Player $\fA$ picks a domain element $\fa^{(j)}_{S_{\fA, j}}$ with $S_{\fA,j} \in S_{\fA, j+1}$ and $\fB$ answers by picking some $\fb^{(j)}_{S_{\fB,j}} \in S_{\fB,j+1}$.	
	Hence, in the last round the chosen domain elements $\fa^{(1)}_{S_{\fA, 1}}$ and $\fb^{(1)}_{S_{\fB, 1}}$ are such that $S_{\fA, 1}$ and $S_{\fB, 1}$ are both nonempty subsets of $[4n]$.
	Player $\fA$ wins if and only if 
		{$\cA \not\models P_i\bigl( \fa^{(1)}_{S_1}, \ldots, \fa^{(n)}_{S_n} \bigr) \leftrightarrow Q_i\bigl( \fb^{(1)}_{S_1}, \ldots, \fb^{(n)}_{S_n} \bigr)$}
	for some $i \in [4n]$,
	and Player $\fB$ wins if and only if 
		{$\cA \models P_i\bigl( \fa^{(1)}_{S_1}, \ldots, \fa^{(n)}_{S_n} \bigr) \leftrightarrow Q_i\bigl( \fb^{(1)}_{S_1}, \ldots, \fb^{(n)}_{S_n} \bigr)$}
	for every $i \in [4n]$.
	Since $\cA$ is a model of $\varphi$, there must exist a winning strategy for $\fB$.
	
	\begin{description}
		\item \underline{Claim I:}
			There is exactly one winning strategy for $\fB$, namely, for every $j = n, \ldots, 1$ Player $\fB$ picks the element $\fb^{(j)}_{S_{\fA,j}}$ in round $n-j+1$, i.e.\ for every $j$ we have $S_{\fB,j} = S_{\fA,j}$.
				
		\item \underline{Proof:}
			It is easy to see that the described strategy is a winning strategy for $\fB$.
			
			Assume $\fB$ deviates from this strategy.
			More precisely, suppose there exists some $j_*$, $1 \leq j_* \leq n$, such that $\fB$ did not adhere to the described strategy in the $(n-j_*+1)$st round, i.e.\ $S_{\fB, j_*} \neq S_{\fA, j_*}$.
			It can be shown by induction on $j_*$ that $\fA$ has a winning strategy from this deviation point on.
		
			\strut\hfill$\Diamond$		
	\end{description}	
	Claim~I would still hold true if we allowed $\fB$ to freely pick any element of the domain $\fUA$ at every round. 
	The reason is that for any choice of elements $\fa^{(n)}_{S_{\fA,n}}, \ldots, \fa^{(1)}_{S_{\fA,1}}$ made by $\fA$ with $S_{\fA,1} \in \ldots \in S_{\fA,n} \in \cS_n$ we know that $S_{\fA,1}$ is nonempty.
	Hence, we can always find some $i_* \in S_{\fA,1}$ such that $\cA \models P_{i_*}\bigl( \fa^{(1)}_{S_{\fA,1}}, \ldots, \fa^{(n)}_{S_{\fA,n}} \bigr)$.
	On the other hand, for any sequence $\fc_n, \ldots, \fc_1$ picked by $\fB$ that does not comply with the rules of the described game, we have $\cA \not\models Q_{i_*} ( \fc_1, \ldots, \fc_n )$.
	This result leads to the following observation.  
	\begin{description}
		\item \underline{Claim II:} 
				For any of the $\fb^{(k)}_{S}$ the substructure of $\cA$ induced by the domain $\fUA \setminus \{ \fb^{(k)}_{S} \}$ does not satisfy $\varphi$.
		\item \underline{Proof:} 
				The reason is simply that in this case player $\fA$ can always prevent $\fB$ from reaching a state of the game where $\fB$ can apply the described winning strategy.
			\hfill$\Diamond$
	\end{description}
	
	We have already analyzed the size of the sets $\cS_k$. Due to the observed lower bounds, we know that $\fUA$ contains at least
		$\sum_{k=1}^{n}\twoup{k}{n}$
	elements of the form $\fb^{(k)}_{S}$.
	
	Next, we argue that any $\exists^* \forall^*$-sentence $\varphi_*$ that is semantically equivalent to $\varphi$ must contain at least $\sum_{k=1}^{n}\twoup{k}{n}$ leading existential quantifiers.
	 Let \\
		 \centerline{$\varphi_* := \exists y_1 \ldots y_m \forall x_1 \ldots x_\ell.\, \chi_*(y_1, \ldots, y_m, x_1, \ldots, x_n)$}
	 with quanti\-fier-free $\chi_*$ be a sentence with minimal $m$ that is semantically equivalent to $\varphi$. 
	 Since $\cA$ is also a model of $\varphi_*$, we know that there is a sequence of elements $\fc_1, \ldots, \fc_m$ taken from the domain $\fUA$ such that $\cA \models \forall x_1 \ldots x_\ell.\, \chi_*(\fc_1, \ldots, \fc_m, x_1, \ldots, x_n)$.
	Consequently, we can extend $\cA$ to a model $\cA_*$ (over the same domain) of the Skolemized sentence $\varphi_{\text{Sk}} := \forall x_1 \ldots x_\ell.\, \chi_*\subst{y_1/d_1, \ldots, y_m/d_m}$ by adding $d_j^{\cA_*} := \fc_j$ for $j = 1, \ldots, m$. On the other hand, every model of $\varphi_{\text{Sk}}$ is also a model of $\varphi_*$.
	The vocabulary underlying $\varphi_{\text{Sk}}$ comprises exactly the constant symbols $d_1, \ldots, d_m$ and does not contain any other function symbols. 
	Suppose $m < \sum_{k=1}^{n}\twoup{k}{n}$. Hence, there is some $\fb^{(k)}_{S}$ with $S \in \cS_k$ such that for none of the $d_j$ we have $d_j^{\cA_*} = \fb^{(k)}_{S}$. 
	Then, by the Substructure Lemma, the following substructure $\cB$ of $\cA_*$ constitutes a model of $\varphi_{Sk}$: $\fUB := \fUA_* \setminus \{\fb^{(k)}_{S}\}$, $P_i^\cB := P_i^{\cA_*} \cap \fUB^n = P_i^{\cA_*}$ and $Q_i^\cB := Q_i^{\cA_*} \cap \fUB^n$ for every $i$, and $d_j^\cB := d_j^{\cA_*}$ for every $j$.
	But then $\cB$ must also be a model of both $\varphi_*$ and $\varphi$, since every model of $\varphi_{\text{Sk}}$ is a model of $\varphi_*$ which we, in turn, assume to be equivalent to $\varphi$.
	This contradicts Claim~II, and thus we must have $m \geq \sum_{k=1}^{n}\twoup{k}{n}$.
\end{proof}

Theorem~\ref{theorem:LengthSmallestBSRsentences} entails that there is no elementary upper bound on the length of the BSR sentences that result from any equivalence-preserving transformation of SF sentences into BSR. 
On the other hand, there is an elementary upper bound, if we only consider SF sentences with a bounded number of quantifier alternations (cf.\ Theorem~\ref{theorem:TransformationSBSRintoBSR}).
A special case of Theorem~\ref{theorem:LengthSmallestBSRsentences} highlights the difference in succinctness between BSR and MFO.
The following proposition states that, in the worst case, there is an unavoidable exponential gap between the two fragments.
\begin{proposition}\label{proposition:LengthSmallestBSRsentencesFromMFOSentences}
	There is a class of MFO sentences such that for every positive integer $n$ the class contains a sentence $\varphi$ of a length polynomial in $n$ for which any equivalent BSR sentence contains at least $2^n$ leading existential quantifiers.
\end{proposition}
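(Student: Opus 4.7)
The plan is to mirror the argument of Theorem~\ref{theorem:LengthSmallestBSRsentences}, but with a drastically simpler construction tailored to MFO that exploits only a single $\forall\exists$ alternation. For every $n \geq 1$, I would take the witness sentence
\[
	\varphi_n \;:=\; \forall x \exists y.\, R(y) \wedge \bigwedge_{i=1}^{n} \bigl( P_i(x) \leftrightarrow P_i(y) \bigr).
\]
Since every atom is unary and no equality occurs, $\varphi_n$ lies in \MFO, and $\len(\varphi_n)$ is clearly linear in $n$. The role of the auxiliary unary predicate $R$ is to mark a designated set of ``witness candidates'', and the conjunction of biconditionals forces the witness $y$ for a given $x$ to share $x$'s $P$-type.

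Next I would exhibit a model that pins down $2^n$ essential domain elements. Take $\cA$ with $\fUA := \{ \fa_S, \fb_S \mid S \subseteq [n] \}$, $P_i^\cA := \{ \fa_S \mid i \in S \} \cup \{ \fb_S \mid i \in S \}$, and $R^\cA := \{ \fb_S \mid S \subseteq [n] \}$. Then $\cA \models \varphi_n$ because for any $x$ of $P$-type $S := \{ i \mid \cA \models P_i(x) \}$ the element $\fb_S$ witnesses $y$. The key observation is that each $\fb_{S}$ is indispensable: the induced substructure of $\cA$ on $\fUA \setminus \{\fb_{S}\}$ fails to satisfy $\varphi_n$, as the element $\fa_{S}$ has $P$-type exactly $S$ and $\fb_{S}$ was the unique element simultaneously satisfying $R$ and having $P$-type $S$.

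For the lower bound I would follow the Skolemization template of Theorem~\ref{theorem:LengthSmallestBSRsentences}. Suppose, for contradiction, that some BSR sentence $\psi := \exists y_1 \ldots y_m \forall x_1 \ldots x_\ell.\, \chi$ with $m < 2^n$ is equivalent to $\varphi_n$. Since $\cA \models \psi$, pick witnesses $\fc_1, \ldots, \fc_m \in \fUA$ for the leading existentials and expand $\cA$ to $\cA_*$ by interpreting fresh constants via $d_j^{\cA_*} := \fc_j$; then $\cA_* \models \forall \vx.\, \chi\subst{y_j / d_j}$. Because there are $2^n$ elements of the form $\fb_S$ but only $m < 2^n$ constants, some $\fb_{S^*}$ is not the interpretation of any $d_j$. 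The Substructure Lemma applied to the substructure $\cB$ of $\cA_*$ on $\fUA \setminus \{\fb_{S^*}\}$ then yields $\cB \models \forall \vx.\, \chi\subst{y_j / d_j}$, hence $\cB \models \psi$, hence $\cB \models \varphi_n$ by equivalence, contradicting the essentiality of $\fb_{S^*}$. Therefore $m \geq 2^n$.

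The main subtlety I anticipate is making sure that \emph{every} $\fb_S$ is essential, including $\fb_\emptyset$. Without the predicate $R$, the element $\fa_\emptyset$ could serve as its own witness (any element of $P$-type $\emptyset$ would do), and removing $\fb_\emptyset$ would not break the model. The predicate $R$ rules this out by forcing all witnesses to come from the distinguished set of $\fb_S$'s; once this is in place, the rest of the argument is a direct specialization of the Skolemization-plus-Substructure-Lemma technique already used in the proof of Theorem~\ref{theorem:LengthSmallestBSRsentences}.
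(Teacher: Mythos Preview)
Your proposal is correct and follows essentially the same strategy as the paper: specialize the Skolemization-plus-Substructure-Lemma argument of Theorem~\ref{theorem:LengthSmallestBSRsentences} to a single $\forall\exists$ alternation over unary predicates. The paper's witness class is $\forall x \exists y.\, \bigwedge_{i=1}^{2n} \bigl( P_i(x) \leftrightarrow Q_i(y) \bigr)$, i.e.\ literally the one-alternation instance of the Theorem~\ref{theorem:LengthSmallestBSRsentences} sentence, where the separation of the $\fa$- and $\fb$-elements is achieved via disjoint predicate families $P_i$ versus $Q_i$ and the required $2^n$ elements come from the half-size subsets $\cS_1$ with $\binom{2n}{n} \geq 2^n$.

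Your variant instead reuses the same $P_i$ on both sides and introduces a marker predicate $R$ to force witnesses into the $\fb$-half; this lets you take \emph{all} subsets of $[n]$ directly and avoids the binomial estimate. Both routes work; yours is arguably a bit more transparent for the MFO case, while the paper's choice has the advantage of being a verbatim instance of the general construction. One small remark on your closing paragraph: the issue you flag for $\fb_\emptyset$ without $R$ is not special to $\emptyset$ --- without $R$, \emph{every} $\fa_S$ would witness itself, so no $\fb_S$ would be essential. Your fix via $R$ handles this uniformly, as your construction already shows.
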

One witnessing class of sentences comprises 
	$\forall x \exists y.\, \bigwedge_{i=1}^{2n} \bigl( P_i(x) \leftrightarrow Q_i(y) \bigr)$
for $n \geq 1$.

%%%%%%%%%%%%%%%%%%%%%%%%%%%%%%%%%%%%%%%%%%%%%%%%%%%%
%%%%%%%%%%%%%%%%%%%%%%%%%%%%%%%%%%%%%%%%%%%%%%%%%%%%
%%%%%%%%%%%%%%%%%%%%%%%%%%%%%%%%%%%%%%%%%%%%%%%%%%%%
%%%%%%%%%%%%%%%%%%%%%%%%%%%%%%%%%%%%%%%%%%%%%%%%%%%%

\subsection{Expressiveness of SF}
\label{section:ExpressivenessOfSF}

We have already seen that SF is at least as expressive as BSR, MFO, and \MFOeq\ (cf.\ Propositions~\ref{theorem:InclusionSF} and~\ref{theorem:FromMonadicWithEqualityToBSR}).
Moreover, every $\wedge$-$\vee$-combination of sentences from BSR and/or \MFOeq\ is equivalent to some SF sentence.
On the other hand, Theorem~\ref{theorem:LengthSmallestBSRsentences} shows that SF sentences can be considerably more succinct than their BSR equivalents.

%%%%%%%%%%%%%%%%%%%%%%%%%%%%%%%%%%%%%%%%%%%%%%%%%%%%%%%%%%%%%%%%%%%%%%%%%%%%%%%%%
%%%%%%%%%%%%%%%%%%%%%%%%%%%%%%%%%%%%%%%%%%%%%%%%%%%%%%%%%%%%%%%%%%%%%%%%%%%%%%%%%
%%%%%%%%%%%%%%%%%%%%%%%%%%%%%%%%%%%%%%%%%%%%%%%%%%%%%%%%%%%%%%%%%%%%%%%%%%%%%%%%%
%%%%%%%%%%%%%%%%%%%%%%%%%%%%%%%%%%%%%%%%%%%%%%%%%%%%%%%%%%%%%%%%%%%%%%%%%%%%%%%%%

Whenever it is possible to restrict our attention to bounded-size models, then SF is as expressive as full (relational) first-order logic.
This alone is not a very interesting result, as already the \emph{existential fragment} of relational first-order logic, i.e.\ the class of relational $\exists^*$ prefix sentences, possesses this property (universal quantification can be replaced by finite conjunctions).
What makes the case of SF special is that the incurred blowup in formula length is not linear in the bound but may be significantly lower.

Due to space limitations, we can only state the result here.
The interested reader will find the full details in~\cite{Voigt2019PhDthesis}, Section~3.3.3.
Abstractly speaking, when restricted to models of the size $\twoup{n}{m}$, any first-order sentence can be translated into an equisatisfiable SF sentence whose length is polynomial in $n$, $m$, and the length of the original sentence.
\begin{proposition}\label{proposition:TranslationFOLwithBoundedModelsIntoSF}
	Let $m \geq 1$ and $n \geq 2$ be two integers.
	There exists an efficiently computable SF sentence $\chi_{m,n}$ and an effective translation $T_{m,n}$ mapping any relational first-order sentence $\varphi$ to some SF sentence $T_{m,n}(\varphi)$ that satisfies the following properties. \\
			(a) $\cA \models \chi_{m,n}$ entails $|\fUA| \leq \twoup{n}{m}$. \\
			(b) $\chi_{m,n} \wedge \varphi$ is equivalent to $\chi_{m,n} \wedge T_{m,n}(\varphi)$. \\
			(c) The formula length of $T_{m,n}(\varphi)$ is at most $p(m,n) \cdot \len(\varphi)$ for some polynomial $p(m,n)$. \\
			(d) $T_{m,n}(\varphi)$ is computable in time $q(m, n, \len(\varphi))$ for some polynomial $q(m, n,k)$.
\end{proposition}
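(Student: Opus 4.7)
The plan is to construct $\chi_{m,n}$ as an SF sentence that axiomatizes an $n$-level type hierarchy inside every model, and to design $T_{m,n}$ so that the atoms of $\varphi$ are re-expressed against this skeleton in a separateness-preserving way.

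First, I would build $\chi_{m,n}$ by using $m$ monadic predicates $P_1, \ldots, P_m$ to distinguish at most $2^m$ atomic types at level~$1$, and then at each successive level $k+1 \leq n$ introducing a fresh binary relation $E_k$ that assigns to every level-$(k+1)$ element its ``characteristic'' relative to level-$k$ representatives. The axioms would state that every element lies on exactly one level, that two elements on the same level coincide whenever they carry the same characteristic relative to the level below, and that level-$k$ witnesses are available to distinguish pairs on level $k+1$. Each such axiom mentions atoms whose variables come from at most two consecutive levels, so the entire sentence can be lined up in a single alternating prefix $\exists \vz \forall \vx_1 \exists \vy_1 \forall \vx_2 \exists \vy_2 \ldots \forall \vx_n \exists \vy_n$ in which $\bigcup_k \vx_k$ and $\bigcup_k \vy_k$ are separated in the sense of Definition~\ref{definition:SeparatedFragment}. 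An induction on $k$ then yields property~(a): the number of mutually distinguishable level-$k$ elements is at most $\twoup{k}{m}$, and so $|\fUA| \leq \twoup{n}{m}$ for any model.

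For the translation $T_{m,n}(\varphi)$, I would first put $\varphi$ into prenex form and replace each of its quantifiers by a single quantifier ranging over top-level elements of the hierarchy. Each atom $R(v_{i_1}, \ldots, v_{i_r})$ of $\varphi$ is then rewritten using fresh relational symbols whose truth is controlled by routing each argument through its own copy of the hierarchy via function-like predicates that are axiomatized inside $\chi_{m,n}$. In this way, the interaction between a universally quantified variable and an existentially quantified one occurs exclusively through $\chi_{m,n}$-controlled auxiliary predicates and never inside a single atom of $T_{m,n}(\varphi)$, which is what keeps $T_{m,n}(\varphi)$ within SF. Each such gadget costs at most $p(m,n)$ symbols for some polynomial $p$, yielding property~(c); the rewriting is syntactic, yielding~(d); and property~(b) follows by showing that any model of $\chi_{m,n}$ canonically determines the interpretations of the auxiliary predicates under which $\varphi$ and $T_{m,n}(\varphi)$ agree.

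The main obstacle will be to make the atom encoding simultaneously polynomial in $m$ and $n$, separateness-respecting, and semantically faithful under $\chi_{m,n}$. A naive recursive unfolding through the $n$ coding levels would incur an exponential blowup per atom, so the crucial point is that the level-$n$ coding already determines each domain element uniquely, and hence one only needs a single ``lookup'' per argument position of each atom; the $n$ levels of structure live inside $\chi_{m,n}$ and are reused across all atoms of $T_{m,n}(\varphi)$. Once this mechanism is in place, the four properties follow by straightforward structural inductions on $\varphi$.
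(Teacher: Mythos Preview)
The paper does not prove this proposition; it states the result and defers to Section~3.3.3 of the thesis~\cite{Voigt2019PhDthesis}. So there is no in-paper proof to compare your proposal against, and I can only assess whether your plan could stand on its own.

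Your high-level architecture is reasonable and in line with the surrounding material: $\chi_{m,n}$ sets up an $n$-level iterated power-set hierarchy over $m$ base predicates, and $T_{m,n}$ exploits that under $\chi_{m,n}$ every domain element is uniquely determined by its position in that hierarchy. Property~(a) would indeed follow by the tower-of-exponentials count you sketch.

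The genuine gap is at the decisive step. You say each atom $R(v_{i_1},\ldots,v_{i_r})$ of $\varphi$ is ``rewritten using fresh relational symbols'' so that universal and existential variables never share an atom, but you do not say \emph{how}. If $R(x,y)$ has $x$ universal and $y$ existential, there is no generic way to split it into an $x$-only piece and a $y$-only piece without altering its meaning; whatever gadget accomplishes this must exploit the hierarchy in $\chi_{m,n}$ in a very specific way, and that specificity is the entire technical content of the construction. ``Routing each argument through its own copy of the hierarchy via function-like predicates'' is not yet a mechanism --- it is a restatement of the goal. Moreover, any auxiliary symbols that $\chi_{m,n}$ axiomatizes must belong to a fixed vocabulary depending only on $m$ and $n$, since $\chi_{m,n}$ is independent of $\varphi$; genuinely fresh per-atom symbols cannot be pinned down there, and if they are left unconstrained then property~(b), which asserts semantic \emph{equivalence} rather than mere equisatisfiability, fails. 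Until the atom-level rewriting is spelled out concretely and shown to be simultaneously separateness-respecting and semantics-preserving under $\chi_{m,n}$, what you have is a plausible plan rather than a proof.
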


Proposition~\ref{proposition:TranslationFOLwithBoundedModelsIntoSF} entails that SF-Sat is computationally at least as hard as the satisfiability problem for any first-order fragment that enjoys a small model property with an elementary upper bound on the size of small models.
For instance, the \emph{Ackermann fragment} (cf.\ Section~\ref{section:GeneralizedAF}), the \emph{G\"odel--Kalm\'ar--Sch\"utte fragment} (cf.\ Section~\ref{section:generalizedGKS}), the \emph{guarded fragment} (cf.\ Section~\ref{section:SeparatenessAndGuardedFormulas}), the \emph{guarded-negation fragment} (cf.\ Section~\ref{section:SeparatenessAndGuardedNegationFormulas}), and the \emph{two-variable fragment} (cf.\ Section~\ref{section:FiniteVariableLogicAndSeparateness}) fall into this category.
Even the satisfiability problem for first-order fragments enjoying a small model property with bounds $\twoup{\lceil c \cdot \len(\varphi) \rceil}{\lceil d \cdot \len(\varphi) \rceil}$ for constants $c,d$, such as the \emph{fluted fragment} (cf.\ Section~\ref{section:SeparatenessAndFlutedFormulas}), can be polynomially reduced to SF-Sat.
Although this latter observation already yields a non-elementary lower bound regarding the computational complexity of SF-Sat, a more accurate lower bound was presented in~\cite{Voigt2017a} by encoding \emph{bounded domino problems} (see also~\cite{Voigt2019PhDthesis}, Section~5.3).

%%%%%%%%%%%%%%%%%%%%%%%%%%%%%%%%%%%%%%%%%%%%%%%%%%%%%%%%%%%%%%%%%
\begin{proposition}\label{theorem:TranslationFMPintoSF}
	Let $\cC$ be any class of relational first-order sentences for which we know constants $c,d \geq 1$ such that every satisfiable $\varphi$ in $\cC$ has a model whose domain size is bounded by $\twoup{\lceil c \cdot \len(\varphi) \rceil}{\lceil d \cdot \len(\varphi) \rceil}$. 
	Then, $\cC$-Sat is polynomial-time reducible to SF-Sat.
\end{proposition}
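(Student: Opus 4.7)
The plan is to invoke Proposition~\ref{proposition:TranslationFOLwithBoundedModelsIntoSF} directly and show that the induced map is a polynomial-time reduction. Given $\varphi \in \cC$ with $k := \len(\varphi)$, set $n := \lceil c \cdot k \rceil$ and $m := \lceil d \cdot k \rceil$, and let the reduction produce the sentence $\psi_\varphi := \chi_{m,n} \wedge T_{m,n}(\varphi)$. The goal is to check three things: (i) $\psi_\varphi$ is (equivalent to) an SF sentence, (ii) the map $\varphi \mapsto \psi_\varphi$ is computable in polynomial time, and (iii) $\varphi$ is satisfiable iff $\psi_\varphi$ is satisfiable.

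For (i), both $\chi_{m,n}$ and $T_{m,n}(\varphi)$ are SF sentences by Proposition~\ref{proposition:TranslationFOLwithBoundedModelsIntoSF}. After renaming bound variables apart so that the two prenex prefixes involve disjoint variable sets, one may use the quantifier-shifting laws of Proposition~\ref{lemma:Miniscoping} to merge the two prefixes into a single prenex block of the form $\exists \vz\, \forall \vx_1 \exists \vy_1 \ldots \forall \vx_n \exists \vy_n$; the separateness condition is atom-local and is therefore preserved by this renaming and merging. For (ii), since $n, m$ are linear in $k$, property (c) of Proposition~\ref{proposition:TranslationFOLwithBoundedModelsIntoSF} gives $\len(T_{m,n}(\varphi)) \leq p(m,n) \cdot k$ which is polynomial in $k$, and property (d) yields that $T_{m,n}(\varphi)$ is computable in time polynomial in $k$; as $\chi_{m,n}$ is stated to be efficiently computable, forming the conjunction keeps the overall reduction in polynomial time.

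For the correctness in (iii), the easy direction is the backward one: if $\cB \models \psi_\varphi$, then $\cB \models \chi_{m,n} \wedge T_{m,n}(\varphi)$, and property (b) yields $\cB \models \chi_{m,n} \wedge \varphi$, hence $\cB \models \varphi$. For the forward direction, assume $\varphi$ is satisfiable. By the assumption on $\cC$, $\varphi$ has a model $\cA$ with $|\fUA| \leq \twoup{n}{m} = \twoup{\lceil c \cdot k \rceil}{\lceil d \cdot k \rceil}$. The sentence $\chi_{m,n}$ from Proposition~\ref{proposition:TranslationFOLwithBoundedModelsIntoSF} is formulated over a vocabulary that can be taken disjoint from the vocabulary of $\varphi$ (by renaming auxiliary predicate symbols, if necessary); it describes, on this auxiliary vocabulary, a structure of domain size at most $\twoup{n}{m}$. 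One can therefore expand $\cA$ on the auxiliary vocabulary so that the expansion $\cA^+$ satisfies $\chi_{m,n}$, obtaining $\cA^+ \models \chi_{m,n} \wedge \varphi$, which by property (b) is equivalent to $\psi_\varphi$.

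The main obstacle to carrying this out cleanly is the forward direction: one has to know that $\chi_{m,n}$ is satisfiable on (arbitrary) domains of every size up to $\twoup{n}{m}$ and on a vocabulary that can be chosen disjoint from that of $\varphi$. This is a property of the specific construction of $\chi_{m,n}$ from~\cite{Voigt2019PhDthesis}, Section~3.3.3, which naturally bounds model size from above without forcing a unique or rigid interpretation of the symbols of $\varphi$; once this is established, the reduction above works uniformly for every $\varphi \in \cC$, which gives $\cC\text{-Sat} \leq_p \text{SF-Sat}$.
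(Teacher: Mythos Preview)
Your proposal is correct and follows exactly the approach the paper indicates: the paper's entire proof is the single sentence ``The proof of this result employs the translation $T_{m,n}$ from Proposition~\ref{proposition:TranslationFOLwithBoundedModelsIntoSF} for $m := \lceil d \cdot \len(\varphi) \rceil$ and $n := \lceil c \cdot \len(\varphi) \rceil$ for any given $\cC$-sentence $\varphi$.'' You have simply unfolded the details --- closure of SF under conjunction, the polynomial bounds from items~(c) and~(d), and the two directions of correctness via item~(b) plus expandability of a small model to satisfy $\chi_{m,n}$ --- and correctly flagged that the forward direction relies on the concrete construction of $\chi_{m,n}$ (disjoint auxiliary vocabulary, satisfiability on any domain of the right size), which the paper defers to~\cite{Voigt2019PhDthesis}.
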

The proof of this result employs the translation $T_{m,n}$ from Proposition~\ref{proposition:TranslationFOLwithBoundedModelsIntoSF} for $m := \lceil d \cdot \len(\varphi) \rceil$ and $n := \lceil c \cdot \len(\varphi) \rceil$ for any given $\cC$-sentence $\varphi$.

%%%%%%%%%%%%%%%%%%%%%%%%%%%%%%%%%%%%%%%%%%%%%%%%%%%%%%%%%%%%%%%%%
%%%%%%%%%%%%%%%%%%%%%%%%%%%%%%%%%%%%%%%%%%%%%%%%%%%%%%%%%%%%%%%%%
%%%%%%%%%%%%%%%%%%%%%%%%%%%%%%%%%%%%%%%%%%%%%%%%%%%%%%%%%%%%%%%%%
Employing the ideas underlying Proposition~\ref{proposition:TranslationFOLwithBoundedModelsIntoSF}, one can also derive other lower bounds, similar to Theorem~\ref{theorem:LengthSmallestBSRsentences}, regarding the length of sentences that are equivalent to SF sentences but adhere to certain syntactic restrictions.
A classical result by Gaifman~\cite{Gaifman1981} states that every first-order formula is equivalent to some formula that is \emph{local} in a certain sense (see, e.g.~\cite{Libkin2004}).
It has been shown~\cite{Dawar2007a} that there is a non-elementary gap between the length of first-order sentences and their shortest equivalents in \emph{Gaifman normal form}.
Using a proof approach inspired by the one in~\cite{Dawar2007a}, one can show that this gap also applies to SF.
\begin{proposition}\label{theorem:SFtoGaifmanLowerBound}
	There is some vocabulary $\Sigma$ and some polynomial $p(h)$ such that for every $h \geq 0$ there is an SF $\Sigma$-sentence $\varphi_{\SF, h}$ of length $p(h)$ satisfying the following property.
	Every first-order $\Sigma$-sentence $\psi$ in Gaifman normal form that is equivalent to $\varphi_{\SF, h}$ has length at least $\twoup{h}{1}$.
\end{proposition}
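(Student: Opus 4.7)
The plan is to combine the non-elementary succinctness result of Dawar et al.~\cite{Dawar2007a} with the simulation result of Proposition~\ref{proposition:TranslationFOLwithBoundedModelsIntoSF}. The former supplies a family $\{\xi_h\}_{h \geq 0}$ of first-order sentences of length polynomial in $h$ for which every equivalent sentence in Gaifman normal form has length at least $\twoup{h}{1}$; moreover, the construction can be arranged so that the relevant witnessing models of $\xi_h$ have cardinality bounded by some elementary function of $\twoup{h}{1}$.

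First, I would choose polynomial parameters $m_h \geq 1$ and $n_h \geq 2$ so that $\twoup{n_h}{m_h}$ dominates the model sizes on which the lower bound of~\cite{Dawar2007a} is witnessed. Setting $\varphi_{\SF, h} := \chi_{m_h, n_h} \wedge T_{m_h, n_h}(\xi_h)$ via Proposition~\ref{proposition:TranslationFOLwithBoundedModelsIntoSF}, one obtains an SF sentence (SF is closed under conjunction: rename quantified variables, merge leading existential blocks, and interleave $\forall\exists$ alternations while preserving separateness). Parts~(c) and~(d) of the proposition, together with the polynomial bounds on $m_h, n_h$, yield $\len(\varphi_{\SF, h}) \leq p(h)$ for some polynomial $p$, while parts~(a) and~(b) guarantee $\varphi_{\SF, h} \semequiv \chi_{m_h, n_h} \wedge \xi_h$ and that every model of $\varphi_{\SF, h}$ has cardinality at most $\twoup{n_h}{m_h}$.

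For the lower bound, suppose some $\Sigma$-sentence $\psi$ in Gaifman normal form is equivalent to $\varphi_{\SF, h}$. Then $\psi$ and $\chi_{m_h, n_h} \wedge \xi_h$ have identical model classes, and one aims to transfer the lower bound of~\cite{Dawar2007a} to $\psi$. Two routes are available: (i)~verify that the $\xi_h$ of~\cite{Dawar2007a} are already supported only on small models, so that $\psi$ is in fact equivalent to $\xi_h$ outright and the bound follows directly; or (ii)~adapt the indistinguishability argument underlying~\cite{Dawar2007a}---typically a counting argument over local $r$-neighborhood types, or an Ehrenfeucht--Fra\"iss\'e-style game---to the restricted setting, verifying that the non-elementary family of local types that a Gaifman normal form must enumerate survives the cardinality restriction imposed by $\chi_{m_h, n_h}$.

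The main obstacle will be executing option~(ii) cleanly, should option~(i) not apply directly: the polynomial overhead introduced by $T_{m_h, n_h}$ and $\chi_{m_h, n_h}$ must be absorbed into the tower $\twoup{h}{1}$, which requires aligning the $h$ parameter here with the parameter driving the non-elementary growth in~\cite{Dawar2007a}, and re-running the local-type counting over structures satisfying $\chi_{m_h, n_h}$ while showing that the collection of distinguishable local types does not collapse. Once either route is carried out, the resulting SF sentence $\varphi_{\SF, h}$ has polynomial length and forces every equivalent Gaifman normal form sentence over $\Sigma$ to have length at least $\twoup{h}{1}$, as required.
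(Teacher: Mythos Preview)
The paper does not give a proof here; it only says the argument is ``inspired by'' the one in~\cite{Dawar2007a} and defers details to the thesis. That phrasing indicates a direct adaptation of the Dawar--Grohe--Kreutzer--Schweikardt technique to purpose-built SF sentences, not a black-box reduction through Proposition~\ref{proposition:TranslationFOLwithBoundedModelsIntoSF}.

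Your reduction has two concrete obstacles. First, the proposition demands a \emph{single} vocabulary $\Sigma$ independent of $h$. The sentence $\chi_{m_h,n_h}$ (and plausibly $T_{m_h,n_h}$) encodes a tower-of-exponentials domain bound in polynomial length, which is done by introducing auxiliary predicate symbols whose number depends on $m_h,n_h$; so your $\varphi_{\SF,h}$ lives in a vocabulary that grows with $h$. Second, your option~(i) does not deliver what you claim: Proposition~\ref{proposition:TranslationFOLwithBoundedModelsIntoSF}(a) only says $\chi_{m,n}$ \emph{implies} the size bound, not the converse. Hence even if every model of $\xi_h$ is small, $\chi_{m_h,n_h}\wedge\xi_h$ need not be equivalent to $\xi_h$---it may have strictly fewer models---so the lower bound for $\xi_h$ from~\cite{Dawar2007a} does not transfer to $\psi$. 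Your option~(ii) then amounts to re-running the entire local-type counting argument over the expanded vocabulary, where the auxiliary predicates of $\chi_{m_h,n_h}$ alter the Gaifman graph and the $r$-neighborhood types that the argument counts; nothing guarantees the witnessing structures of~\cite{Dawar2007a} remain locally indistinguishable once those predicates are interpreted.

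So the reduction saves no work: you still have to open up the proof in~\cite{Dawar2007a} and redo the indistinguishability analysis, now in a more complicated setting with extra predicates and a moving vocabulary. The paper's indicated route---construct the SF sentences directly and run the Dawar et al.\ argument on them---sidesteps both issues.
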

Again, the interested reader will find the full details in~\cite{Voigt2019PhDthesis}, Section~3.3.3.

%%%%%%%%%%%%%%%%%%%%%%%%%%%%%%%%%%%%%%%%%%%%%%%%%%%%
%%%%%%%%%%%%%%%%%%%%%%%%%%%%%%%%%%%%%%%%%%%%%%%%%%%%
%%%%%%%%%%%%%%%%%%%%%%%%%%%%%%%%%%%%%%%%%%%%%%%%%%%%
%%%%%%%%%%%%%%%%%%%%%%%%%%%%%%%%%%%%%%%%%%%%%%%%%%%%

\section{The Separated Bernays--Sch\"onfinkel--Ramsey Fragment}
\label{section:GeneralizedBSR}

In this section we introduce the separated extension of the Bernays--Sch\"onfinkel--Ramsey fragment (BSR).
The separated fragment can be conceived as an intermediate step, as it lies properly between BSR and the \emph{separated Bernays--Sch\"onfinkel--Ramsey fragment (SBSR)}.

%%%%%%%%%%%%%%%%%%%%%%%%%%%%%%%%%%%%%%%%%%%%%%%%%%%%%%%%%%%%%%%%%%%%%%%%
Recall that SF contains relational sentences $\exists \vz \forall \vx_1 \exists \vy_1 \ldots \forall \vx_n \exists \vy_n.\, \psi$ in which the sets $\vx_1 \cup \ldots \cup \vx_n$ and $\vy_1 \cup \ldots \cup \vy_n$ are separated.
The exemption of the leading existential quantifier block from the separateness conditions may lead to \emph{benign} co-occurrences of existentially and universally quantified variables in atoms, which do not pose an obstacle to deciding the satisfiability problem.
Extrapolating this emerging pattern of benign co-occurrences leads to the definition of the \emph{separated Bernays--Sch\"onfinkel--Ramsey fragment}. 
Intuitively speaking, an SBSR sentence $\varphi$ has the form $\forall \vx_1 \exists \vy_1 \ldots \forall \vx_n \exists \vy_n.\, \psi$ with quantifier-free $\psi$ that may contain equality and possesses the following properties.
Each atom in $\varphi$ only contains variables from an $\exists^* \forall^*$ subsequence of $\varphi$'s quantifier prefix.
If two atoms share a universally quantified variable, the same quantifier subsequence is used for both atoms.

%%%%%%%%%%%%%%%%%%%%%%%%%%%%%%%%%%%%%%%%%%%%%%%%%%%%%%%%%%%%%%%%%%%%%%%%%%%%%%%%
\begin{definition}[Separated Bernays--Sch\"onfinkel--Ramsey fragment (SBSR)]\label{definition:SBSRaxiomatic:alternative}	
	Let $Y, X_1,$ $X_2, X_3, \ldots$ be pairwise-disjoint, countable sets of first-order variables.
	The \emph{separated Ber\-nays--Sch\"onfinkel--Ramsey fragment (SBSR)} comprises all relational first-order sentences with equality having the shape
		$\varphi := \forall \vx_1 \exists \vy_1 \ldots \forall \vx_n \exists \vy_n.\, \psi$ 
	with quantifier-free $\psi$ where
	(a) $\vx_i \subseteq X_1 \cup \ldots \cup X_i$ and $\vy_i \subseteq Y$ for every $i$, and
	(b) for every atom $A$ in $\varphi$ there is some $i$, $0 \leq i \leq n$, such that
		$\vars(A) \subseteq \vy_1 \cup \ldots \cup \vy_i \cup X_{i+1}$.
	(Notice that this entails separateness of $X_1, \ldots, X_n$, and for every $A$ there is some $i$ with $\vars(A) \subseteq \vy_1 \cup \ldots \cup \vy_i \cup \vx_{i+1} \cup \ldots \cup \vx_n$.)
\end{definition}

Suppose we are given an arbitrary first-order sentence for which we intend to check membership in SBSR without knowing a-priori how the set of occurring variables is to be partitioned.
Checking for the existence of such a suitable partition can be done deterministically in polynomial time.
A corresponding procedure can be based on a graph algorithm (see~\cite{Voigt2019PhDthesis}, Theorem~3.4.3).

The main difference between SF and SBSR lies in the concession policy regarding benign co-occurrences of existential and universal variables.
The following example gives a first impression of SBSR sentences and how they can be translated into BSR. 

%%%%%%%%%%%%%%%%%%%%%%%%%%%%%%%%%%%%%%%%%%%%%%%%%%%%%%%%%%%%%%%%
\begin{example}\label{example:FirstSBSRsentence}
	Consider the first-order sentence 
		$\varphi := \exists u \forall x \exists y \forall z.\, \bigl(P(u,z) \wedge Q(u,x)\bigr) \vee \bigl(P(y,z) \wedge Q(u,y) \bigr)$.
	It belongs to SBSR, as witnessed by the following partition of its variables:
		$Y = \{u,y\}$,
		$X_1 = \{x\}$,
		$X_2 = \{z\}$,
		$X_3 = \emptyset$.
	Obviously, $\varphi$ neither belongs to BSR nor to SF.
	As universal quantification does not distribute over disjunction, the quantifier $\forall z$ cannot be shifted inwards with the standard quantifier shifting rules from Lemma~\ref{lemma:Miniscoping} alone.
	However, it turns out that the transformation methods that we have first met in Section~\ref{section:Introduction} also facilitate translations of SBSR sentences into BSR sentences.
	We will elaborate on this in Section~\ref{section:TranslationGBSRintoBSR}.
	For $\varphi$ we get the equivalent BSR sentence
	\begin{align*}
		\varphi' := 
			\exists u y \forall x z v.\,	
			&\bigl(\bigl(P(u,x) \vee P(y,x)\bigr) \wedge P(u,x) \wedge Q(u,x) \bigr) \\
			&\vee \bigl(\bigl(P(u,z) \vee P(y,z)\bigr) \wedge Q(u,y) \wedge Q(u,z) \bigr) \\
			&\vee \bigl(\bigl( P(u,v) \vee P(y,v)\bigr) \wedge Q(u,y) \wedge P(y,v) \bigr) ~.
	\end{align*}	
\end{example}

%%%%%%%%%%%%%%%%%%%%%%%%%%%%%%%%%%%%%%%%%%%%%%%%%%%%%%%%%%%%%%%%%%%%%%%%%%%%%%%%
We have advertised SBSR as an extension of SF and, hence, also of BSR and MFO.
Indeed, given an SF sentence $\chi := \exists \vv_1 \forall \vu_2 \exists \vv_2 \ldots \forall \vu_n \exists \vv_n.\, \chi'$, we can partition the set of $\chi$'s variables into 
	$Y := \vv_1 \cup \ldots \cup \vv_n$ and $X_1 := \emptyset$,  $X_2 := \vu_2 \cup \ldots \cup \vu_n$, and $X_j := \emptyset$ for $j \geq 3$.
We then observe for every atom $A$ in $\chi$ that either $\vars(A) \subseteq Y$ or $\vars(A) \subseteq \vv_1 \cup X_2$. 
This partition complies with Definition~\ref{definition:SBSRaxiomatic:alternative}.
%
%%%%%%%%%%%%%%%%%%%%%%%%%%%%%%%%%%%%%%%%%%%%%%%%%%%%%%%%%%%%%%%%%%%%%%%%%%%%%%%%
On the other hand, the sentence $\varphi$ from Example~\ref{example:FirstSBSRsentence} belongs to SBSR but not to SF. 
Hence, SBSR is a proper extension of SF.
\begin{proposition}\label{theorem:InclusionSBSR}
	SBSR properly contains SF and, hence, BSR and MFO.
\end{proposition}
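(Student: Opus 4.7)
The plan is to establish the inclusion SF $\subseteq$ SBSR constructively, invoke Example~\ref{example:FirstSBSRsentence} to witness that the inclusion is strict, and then chain with Proposition~\ref{theorem:InclusionSF} to extend the proper containment to BSR and MFO.

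For the inclusion, I would take an arbitrary SF sentence $\chi := \exists \vv_1 \forall \vu_2 \exists \vv_2 \ldots \forall \vu_n \exists \vv_n.\, \chi'$ in the prenex shape of Definition~\ref{definition:SeparatedFragment} and choose the partition $Y := \vv_1 \cup \ldots \cup \vv_n$, $X_2 := \vu_2 \cup \ldots \cup \vu_n$, and $X_j := \emptyset$ for $j \neq 2$. Reading the prefix of $\chi$ as an SBSR prefix $\forall \vx_1 \exists \vy_1 \ldots \forall \vx_n \exists \vy_n$ via $\vx_1 := \emptyset$, $\vy_1 := \vv_1$, and $\vx_i := \vu_i$, $\vy_i := \vv_i$ for $i \geq 2$, condition~(a) of Definition~\ref{definition:SBSRaxiomatic:alternative} is immediate. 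For condition~(b), I would invoke the SF separateness of $\vu_2 \cup \ldots \cup \vu_n$ and $\vv_2 \cup \ldots \cup \vv_n$: each atom $A$ in $\chi$ either satisfies $\vars(A) \subseteq Y = \vy_1 \cup \ldots \cup \vy_n$ (covered by the choice $i := n$) or $\vars(A) \subseteq \vv_1 \cup X_2 = \vy_1 \cup X_2$ (covered by $i := 1$).

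For strictness, I would point to the sentence $\varphi$ of Example~\ref{example:FirstSBSRsentence}, whose membership in SBSR is exhibited there. That $\varphi$ lies outside SF follows from the fact that its matrix contains the atom $P(y,z)$ in which $y$ and $z$ belong to a non-leading existential block and to a universal block, respectively, so the SF separateness condition of Definition~\ref{definition:SeparatedFragment} is violated by this prenex form. Together with the inclusion, this yields SF $\subsetneq$ SBSR; combining with Proposition~\ref{theorem:InclusionSF}, which places BSR and MFO properly inside SF, gives the remaining proper containments by transitivity.

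I do not anticipate a real obstacle: both parts reduce to syntactic bookkeeping, with the only minor subtlety being the reinterpretation of an SF prefix whose outermost block is existential as an SBSR prefix whose leading $\forall$-block $\vx_1$ is empty.
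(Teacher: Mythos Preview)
Your proposal is correct and mirrors the paper's own argument almost verbatim: the same partition $Y := \vv_1 \cup \ldots \cup \vv_n$, $X_1 := \emptyset$, $X_2 := \vu_2 \cup \ldots \cup \vu_n$, $X_j := \emptyset$ for $j \geq 3$ is used to verify Definition~\ref{definition:SBSRaxiomatic:alternative}, and the sentence from Example~\ref{example:FirstSBSRsentence} is invoked for strictness. Your explicit identification of the indices $i = n$ and $i = 1$ in condition~(b), and of the offending atom $P(y,z)$ for the SF failure, only makes the argument slightly more detailed than the paper's.
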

By Proposition~\ref{theorem:FromMonadicWithEqualityToBSR}, SBSR in addition semantically subsumes \MFOeq, like SF does.

%%%%%%%%%%%%%%%%%%%%%%%%%%%%%%%%%%%%%%%%%%%%%%%%%%%%
%%%%%%%%%%%%%%%%%%%%%%%%%%%%%%%%%%%%%%%%%%%%%%%%%%%%
%%%%%%%%%%%%%%%%%%%%%%%%%%%%%%%%%%%%%%%%%%%%%%%%%%%%
%%%%%%%%%%%%%%%%%%%%%%%%%%%%%%%%%%%%%%%%%%%%%%%%%%%%

\subsection{Translation of SBSR into BSR}
\label{section:TranslationGBSRintoBSR}

We have already advertised several times that there is an effective equivalence-preserving translation from SBSR into BSR.
It is mainly based on the standard axioms of Boolean algebra and quantifier shifting.
Roughly speaking, we iteratively (re-)transform a given SBSR sentence into particular syntactic shapes and apply quantifier shifting so that we eventually obtain a formula in which no existential quantifier occurs within the scope of any universal quantifier.
We then shift all quantifiers outwards again --- existential quantifiers first ---, renaming bound variables as necessary. 
The final result is a BSR sentence.
Since SBSR contains SF, Theorem~\ref{theorem:LengthSmallestBSRsentences} entails that there is no elementary upper bound on the blowup that we incur in any equivalence-preserving translation from SBSR into BSR.
On the other hand, the blowup for SBSR-BSR translations is not significantly worse than in the case of SF-BSR translations.
It seems that in this sense SBSR does not offer much more succintness when describing logical properties than SF does.

The following lemma formulates the technical invariants for transforming SBSR sentences into BSR sentences via shifting inwards quantifier blocks one after the other.
\begin{lemma}\label{lemma:SBSRquantifierShifting}
	Let $Y, X_1,$ $X_2, \ldots$ be pairwise-disjoint, countable sets of first-order variables.
	Fix a quantifier sequence $\forall \vx_1 \exists \vy_1 \ldots \forall \vx_n \exists \vy_n$ (without double occurrences of variables) such that
		$\vx_i \subseteq X_1 \cup \ldots \cup X_i$ and $\vy_i \subseteq Y$ for every $i$.
	Moreover, fix a set of atoms $\At$ over the variables in $\vx_1, \vy_1, \ldots, \vx_n, \vy_n$ such that
		for every atom $A \in \At$ there is some $i$, $0 \leq i \leq n$, with
			$\vars(A) \subseteq \vy_1 \cup \ldots \cup \vy_i \cup X_{i+1}$.
	
	Let $\varphi := \cQ v.\, \psi$ be any first-order formula over the atoms in $\At$ such that
	\begin{enumerate}[label=(\roman{*}), ref=(\roman{*})]
		\item \label{enum:SBSRtranslation:I} 
			$\varphi$ is in negation normal form, no variable is bound by two distinct quantifier occurrences, no variable occurs freely and bound,
		\item \label{enum:SBSRtranslation:II} 
			any sequence of nested quantifiers $\cQ_1 u_1 \ldots \cQ_k u_k$ occurring in $\varphi$ (from left to right) is a subseqence of $\forall \vx_1 \exists \vy_1 \ldots \forall \vx_n \exists \vy_n$,
		\item \label{enum:SBSRtranslation:III} 
			for every subformula $\forall x.\, \chi$ of $\psi$ we have $\vars(\chi) \subseteq \vy_1 \cup \ldots \cup \vy_{j_*-1} \cup X_{j_*}$ for some $j_*$,
			
		\item \label{enum:SBSRtranslation:IV} 
			for every subformula $\exists y.\, \chi$ in $\psi$ with $y \in \vy_i$ we have that $\chi$ does not contain any occurrences of variables from $\vx_1 \cup \ldots \cup \vx_i$, and
		\item \label{enum:SBSRtranslation:V} 
			any sequence of nested quantifiers $\cQ_1 u_1 \ldots \cQ_k u_k$ occurring in $\psi$ (from left to right) is a subseqence of $\exists \vy_1 \ldots \exists \vy_i \forall \vx_{i+1} \ldots \forall \vx_n$ for some $i$.
	\end{enumerate}	
	Then, $\varphi$ can be transformed into an equivalent formula~$\varphi'$ such that Conditions~\ref{enum:SBSRtranslation:I} and~\ref{enum:SBSRtranslation:II} still apply to $\varphi'$ and Conditions~\ref{enum:SBSRtranslation:III} to~\ref{enum:SBSRtranslation:V} apply to $\varphi'$ instead of $\psi$ only.
\end{lemma}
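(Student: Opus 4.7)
The proof proceeds by structural induction on $\psi$. In the base case, $\psi$ is a literal, so $\varphi = \cQ v.\,\psi$ is already close to the target form; setting $\varphi' := \varphi$, Conditions~\ref{enum:SBSRtranslation:III}--\ref{enum:SBSRtranslation:V} can be verified directly. The unique atom in $\psi$ has its variables confined to $\vy_1 \cup \ldots \cup \vy_{i_A} \cup X_{i_A+1}$ for some $i_A$. A case split on whether $\cQ v = \forall x$ (which forces $x \in X_{i_A+1}$ and gives $j_* := i_A+1$ for Condition~\ref{enum:SBSRtranslation:III}) or $\cQ v = \exists y$ with $y \in \vy_i$ for some $i \leq i_A$ (whence the atom avoids $\vx_1 \cup \ldots \cup \vx_i$) supplies exactly what is needed.

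For the inductive step I distinguish two cases by the top-level structure of $\psi$. If $\psi = \psi_1 \circ \psi_2$ with $\circ \in \{\wedge, \vee\}$, the plan is to push $\cQ v$ past $\circ$ using Proposition~\ref{lemma:Miniscoping}: distribute $\exists$ over $\vee$ or $\forall$ over $\wedge$ via rules (i)/(ii), or, when the bound variable is absent from one operand, shift past via rules (iii)/(iv); the induction hypothesis then applies to each resulting subformula. If instead $\psi = \cQ' u.\,\chi$, the key tool is Conditions~\ref{enum:SBSRtranslation:III} and~\ref{enum:SBSRtranslation:IV} combined with the tier ordering forced by~\ref{enum:SBSRtranslation:II}. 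For example, when $\cQ v = \forall x$ with $x \in \vx_i$ and $\cQ' u = \exists y$ with $y \in \vy_{i'}$, Condition~\ref{enum:SBSRtranslation:II} yields $i \leq i'$, and Condition~\ref{enum:SBSRtranslation:IV} on $\exists y.\,\chi$ says that $\chi$ contains no variable from $\vx_1 \cup \ldots \cup \vx_{i'}$; hence $x \notin \vars(\chi)$ and $\forall x$ is vacuous and may be dropped. Same-kind stacks either commute via rules (v)/(vi) followed by the induction hypothesis, or, when the tiers differ, the outer quantifier is vacuous (its variable fails to appear in $\vars(\psi)$ by a direct calculation using~\ref{enum:SBSRtranslation:III}); the symmetric case of $\exists y$ over $\forall x$ leaves the outer quantifier in place, with~\ref{enum:SBSRtranslation:IV} checked from the tier gap that~\ref{enum:SBSRtranslation:II} enforces.

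The main obstacle is the Boolean subcase with $\cQ v = \exists y$ over $\psi_1 \wedge \psi_2$ in which $y$ occurs freely in both conjuncts (and dually $\forall x$ over $\psi_1 \vee \psi_2$). None of the rules (i)--(iv) of Proposition~\ref{lemma:Miniscoping} applies directly. The plan is to apply Boolean distributivity of $\wedge$ over $\vee$ to restructure $\psi$ until either the outer $\exists y$ can be distributed over a disjunction or a conjunct becomes $y$-free and can be shifted past, and then re-enter the induction. The invariant that makes this terminate correctly is that every atom containing $y \in \vy_i$ has its variables confined to $\vy_1 \cup \ldots \cup \vy_{i'} \cup X_{i'+1}$ for some $i' \geq i$, so in particular avoids $\vx_1 \cup \ldots \cup \vx_i$; thus after enough rewriting, the residual subformula under $\exists y$ automatically satisfies Condition~\ref{enum:SBSRtranslation:IV}. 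This step is also the source of the unavoidable non-elementary blowup witnessed by Theorem~\ref{theorem:LengthSmallestBSRsentences}.
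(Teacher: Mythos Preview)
Your approach and the paper's share the same core mechanism---Boolean normal-form transformation combined with miniscoping---but are organized very differently. The paper does not proceed by structural induction on $\psi$. Instead it treats $\psi$ uniformly in one step: it rewrites $\psi$ into a conjunction of disjunctions (when $\cQ v$ is universal) or a disjunction of conjunctions (when $\cQ v$ is existential) of \emph{basic formulas}, meaning atoms and outermost quantified subformulas of $\psi$. It then groups the basic formulas in each clause according to which set $\vy_1 \cup \ldots \cup \vy_{j-1} \cup X_j$ their free variables fall into, and shifts $\cQ v$ inward in a single stroke to bind only the relevant group. Conditions~\ref{enum:SBSRtranslation:III}--\ref{enum:SBSRtranslation:V} are verified directly for the resulting formula. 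In particular the paper never needs your case $\psi = \cQ' u.\,\chi$ at all, because such a $\psi$ is itself a single basic formula and is handled by the grouping.

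Your structural induction runs cleanly in the easy Boolean cases, but the ``main obstacle'' paragraph is exactly where the induction ceases to be structural: after applying distributivity you obtain a formula that is not a proper subformula of $\psi$, so ``re-enter the induction'' is not well-founded on the structure of $\psi$, and you give no alternative measure that decreases. What you are really describing there---rewrite until $\exists y$ distributes over a disjunction---is the full DNF transformation the paper performs up front; once you unpack that step, your argument collapses into the paper's and the recursive framing becomes superfluous. Two smaller points: first, when you distribute $\exists y$ over $\vee$ (or $\forall x$ over $\wedge$) you create duplicate bound occurrences of the same variable and must rename to preserve Condition~\ref{enum:SBSRtranslation:I}; the paper notes this explicitly. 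Second, your invariant for the obstacle case speaks only of atoms containing $y$, but the basic formulas in the clause may also be quantified subformulas with $y$ free; the paper's grouping argument relies on Conditions~\ref{enum:SBSRtranslation:III} and~\ref{enum:SBSRtranslation:IV} to place those correctly as well.
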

\begin{proof}
	First of all, we remove from $\varphi$ (and any later formulas) any quantifiers that bind variables which do not occur in any atom.
	A \emph{basic formula} is any atom and any subformula $(\cQ u. \ldots)$ in $\psi$ that does not lie within the scope of any quantifier in $\psi$.
				
				Suppose $\cQ v$ is a universal quantifier.
				Then, there are indices $i_*, j_*$ such that $v \in \vx_{i_*} \cap X_{j_*}$ and $j_* \leq i_*$.
				We transform $\psi$ into an equivalent conjunction of disjunctions of negated or non-negated basic formulas.
				This is always possible.
				Due to our assumptions, the constituents of the $k$-th disjunction can be grouped into $i_*+1$ parts:
				$\chi_{k,1} \vee \ldots \vee \chi_{k,i_*} \vee \eta_{k}$ with $\vars(\chi_{k,\ell}) \subseteq \vy_1 \cup \ldots \cup \vy_\ell \cup X_{\ell+1}$ for $0 \leq \ell \leq i_*-1$ and $\vars(\eta_{k}) \subseteq \vy_1 \cup \ldots \cup \vy_n \cup X_{i_* + 1} \cup \ldots \cup X_n$.
				Due to Conditions~\ref{enum:SBSRtranslation:II} and~\ref{enum:SBSRtranslation:IV}, $\eta_k$ can be defined so that it contains all basic formulas from $\psi$ that have the form $\exists y.\, \eta'$ for any $y \subseteq \vy_{i_*} \cup \ldots \cup \vy_n$.
				Hence, $\varphi$ is equivalent to some formula of the form
					\begin{align*} 
						\forall v.\, \bigwedge_k \chi_{k,1}( X_1 ) \;\vee\; \chi_{k,2}( \vy_1, X_2 ) \;\vee \ldots \;&\vee\; \chi_{k,j_*}( \vy_1, \ldots, \vy_{j_*-1}, X_{j_*} ) \;\vee \ldots \\ 
							&\vee\; \chi_{k,i_*}( \vy_1, \ldots, \vy_{i_*-1}, X_{i_*} )  \;\vee\; \eta_{k} ( \vy_1, \ldots, \vy_{i_*-1} ) ~.
					\end{align*}	
				We shift the universal quantifier $\forall v$ inwards so that it only binds the (sub-)con\-junc\-tions $ \chi_{k,j_*}(\vy_1, \ldots,$ $\vy_{j_*-1}, \vx_{j_*})$.
				The resulting formula 
					\begin{align*} 
						\bigwedge_k \chi_{k,1}( X_1 ) \;\vee\; \chi_{k,2}( \vy_1, X_2 ) \;\vee \ldots \;&\vee\; \bigl( \forall v.\, \chi_{k,j_*}( \vy_1, \ldots, \vy_{j_*-1}, X_{j_*} ) \bigr) \;\vee \ldots \\[-2ex] 
							&\vee\; \chi_{k,i_*}( \vy_1, \ldots, \vy_{i_*-1}, X_{i_*} )  \;\vee\; \eta_{k} ( \vy_1, \ldots, \vy_{i_*-1} ) ~.
					\end{align*}	
				is the sought $\varphi'$.
				It is easy to see that $\varphi'$ (after renaming bound variables) still satisfies Conditions~\ref{enum:SBSRtranslation:I} and~\ref{enum:SBSRtranslation:II}.				
				Moreover, we now have established Conditions~\ref{enum:SBSRtranslation:III},~\ref{enum:SBSRtranslation:IV}, and~\ref{enum:SBSRtranslation:V} for the whole formula $\varphi'$ instead of only the subformula $\psi$.
				
				Now suppose $\cQ v$ is an existential quantifier.
				Then, there is an index $i_*$ such that $v \in \vy_{i_*}$.
				We transform $\psi$ into an equivalent disjunction of conjunctions of negated or non-negated basic formulas.
				Due to our assumptions, the constituents of the $k$-th conjunction can be grouped into two parts $\chi_{k}$ and $\eta_{k}$
				such that the variables occurring freely in $\eta_k$ are a subset of $\vy_1 \cup \ldots \cup \vy_{i_*}$ and 
				$\chi_{k}$ may contain free occurrences of variables from $\vx_1, \ldots, \vx_{i_*-1}$ but none from $\vy_{i_*}$.
				After shifting the quantifier $\exists v$ inwards, we obtain the following formula $\varphi'$ that is equivalent to the original $\varphi$:
					\begin{align*} 
						\bigvee_k \chi_{k}( \vx_1, \ldots, \vx_{i_*-1}, \vy_1, \ldots, \vy_{i_*-1}) \;\wedge\; \bigl( \exists v.\, \eta_{k} ( \vy_1, \ldots, \vy_{i_*-1}, \vy_{i_*} ) \bigr) ~.
					\end{align*}	
				It is easy to see that $\varphi'$ (after renaming bound variables) still satisfies Conditions~\ref{enum:SBSRtranslation:I} and~\ref{enum:SBSRtranslation:II}.				
				Moreover, we now have established the Conditions~\ref{enum:SBSRtranslation:III},~\ref{enum:SBSRtranslation:IV}, and~\ref{enum:SBSRtranslation:V} for the whole formula $\varphi'$ instead of only the subformula $\psi$.				
\end{proof}

Lemma~\ref{lemma:SBSRquantifierShifting} provides a tool to shift single quantifiers in SBSR sentences inwards in a fashion that, after applying the lemma iteratively, yields a sentence in which no existential quantifier lies inside the scope of any universal quantifier.
Afterwards quantifiers may be shifted outward again so that we in the end obtain a BSR sentence. 

%%%%%%%%%%%%%%%%%%%%%%%%%%%%%%%%%%%%%%%%%%%%%%%%%%%%%%%%%%%%%%%%
\begin{theorem}\label{theorem:TransformationSBSRintoBSR}
	Every SBSR sentence $\varphi := \forall \vx_1 \exists \vy_1 \ldots \forall \vx_n \exists \vy_n.\, \psi$ is equivalent to some BSR sentence whose length is at most $n$-fold exponential in the length of $\varphi$.
\end{theorem}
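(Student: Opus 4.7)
The plan is to apply Lemma~\ref{lemma:SBSRquantifierShifting} iteratively, absorbing $\varphi$'s quantifiers one at a time from innermost to outermost into a matrix that satisfies the lemma's Conditions~(i)--(v) globally. Once every quantifier has been processed, the whole formula satisfies Condition~(v), which forces existential quantifiers to sit syntactically above all universal ones along every branch of the syntax tree. A final round of outward quantifier shifting via Proposition~\ref{lemma:Miniscoping} then produces a prenex $\exists^*\forall^*$ sentence in BSR.

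Concretely, I would enumerate the quantifier occurrences of $\varphi$ from innermost to outermost as $\cQ_1 v_1, \ldots, \cQ_N v_N$, and set $\chi_0 := \psi$. Since $\psi$ is quantifier-free, the hypotheses of Lemma~\ref{lemma:SBSRquantifierShifting} hold for $\chi_0$ vacuously, with reference quantifier sequence $\forall\vx_1 \exists\vy_1 \ldots \forall\vx_n \exists\vy_n$ and the atom set of $\varphi$. Inductively, for $k = 1, \ldots, N$, apply the lemma to $\cQ_k v_k.\, \chi_{k-1}$ to obtain a formula $\chi_k$ that is equivalent to $\cQ_k v_k.\, \chi_{k-1}$ and itself satisfies Conditions~(i)--(v). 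Hence $\chi_N \semequiv \varphi$, and Conditions~(iii)--(v) hold globally in $\chi_N$.

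Condition~(v) applied to $\chi_N$ forces every root-to-leaf chain of nested quantifiers in $\chi_N$ to consist of a (possibly empty) prefix of existential quantifiers from the $\vy_i$'s followed by a suffix of universal quantifiers from the $\vx_j$'s. In particular, no $\exists$ sits inside the scope of any $\forall$. After renaming bound variables to be pairwise distinct, Proposition~\ref{lemma:Miniscoping} lets us first pull all existential quantifiers to the very front (via (i), (iii), and (v)) and then pull every remaining universal quantifier out to form a single $\forall^*$ block immediately after (via (ii), (iv), and (vi)). The outcome is the desired BSR sentence.

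The main obstacle is controlling the length blow-up. Each application of Lemma~\ref{lemma:SBSRquantifierShifting} first transforms the matrix into a CNF (when $\cQ_k v_k = \forall$) or DNF (when $\cQ_k v_k = \exists$) of basic subformulas before shifting the quantifier inside, and such a normal-form conversion can be exponential in the current matrix length. Within a maximal block of consecutive quantifiers of the same kind, however, the matrix is already in the required normal form and the next single-variable step only causes polynomial growth, since same-kind quantifiers commute via Proposition~\ref{lemma:Miniscoping}(v)--(vi). Because $\varphi$ has only $n$ alternating blocks $\forall\vx_i \exists\vy_i$, the exponential blow-ups telescope into a tower of height $n$, giving an $n$-fold exponential bound on $\len(\chi_N)$; the final outward-shifting step contributes only polynomial overhead due to renaming and is absorbed into this estimate.
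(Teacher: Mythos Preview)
Your proposal is correct and follows essentially the same route as the paper's proof sketch: iterate Lemma~\ref{lemma:SBSRquantifierShifting} from the innermost quantifier outward, obtain a formula in which every nested quantifier chain is $\exists^*\forall^*$ (Condition~(v) globally), and then shift quantifiers outward to produce a BSR sentence. Your size argument, tracking CNF/DNF conversions at block boundaries with only polynomial growth inside same-kind blocks, differs in style from the paper's counting of distinct subformulas over a fixed atom set, but both are sketches arriving at the same $n$-fold exponential bound.
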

\begin{proof}[Proof Sketch]
	Let $\cQ_1 v_1 \cQ_2 v_2$ be the two rightmost quantifiers in $\varphi$'s quantifier prefix.
	It is easy to check that $\cQ_2 v_2.\, \psi$ satisfies the prerequisites of Lemma~\ref{lemma:SBSRquantifierShifting} and, hence, can be transformed into an equivalent formula $\psi'$ in accordance with that same lemma.
	Then, it is again easy to verify that $\cQ_1 v_1.\, \psi'$ satisfies the lemma again.
	Proceeding further, quantifier by quantifier, we in the end obtain a formula in which all sequences of nested quantifiers have the form $\exists^* \forall^*$.
	Notice that the nesting depth of distinct quantifier blocks in the final result is at most $n$.
	Since the set of occurring atoms does not change (modulo variable renaming), the final formula cannot contain more than $n$-fold exponentially many distinct, non-redundant subformulas $\cQ \vv.\, \chi$ that do not occur within the scope of another quantifier.
	Consequently, shifting outwards all quantifier blocks in an existential-quantifiers-first fashion yields a BSR sentence equivalent to $\varphi$ whose length is at most $n$-fold exponential in the length of $\varphi$.
\end{proof}

On page~\pageref{margin:fineGrainedMeasure} we mentioned that an analysis of the complexity of the SF-BSR translation process can be based on a measure that is more fine-grained than the number of occurring quantifier alternations.
There is a similar measure for the setting of SBSR, which yields an improved upper bound.
Due to space limitations, we will not go any further into the details at this point.
The interested reader will find more material in \cite{Voigt2019PhDthesis}, Sections~3.2 and~3.5.

Theorem~\ref{theorem:TransformationSBSRintoBSR} also holds in the presence of constant symbols:
every SBSR sentence $\varphi$ with constant symbols is equivalent to some BSR sentence $\varphi'$ with the same constant symbols.
Due to Proposition~\ref{proposition:SmallModelsBSR}, Theorem~\ref{theorem:TransformationSBSRintoBSR} entails the following small model property for SBSR.

\begin{corollary}\label{corollary:SmallModelPropertyForSBSR}
	Every satisfiable SBSR sentence $\varphi$ with $n$ $\forall\exists$ quantifier alternations and with or without constant symbols has a model whose size is at most $n$-fold exponential in the length of $\varphi$.
	Hence, \emph{SBSR-Sat} is decidable nondeterministically in $n$-fold exponential time, even if we allow constant symbols to occur.
\end{corollary}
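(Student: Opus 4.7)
The plan is to combine Theorem~\ref{theorem:TransformationSBSRintoBSR} with Proposition~\ref{proposition:SmallModelsBSR} in the obvious way, using the note after Theorem~\ref{theorem:TransformationSBSRintoBSR} to handle constant symbols.

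First, given a satisfiable SBSR sentence $\varphi := \forall \vx_1 \exists \vy_1 \ldots \forall \vx_n \exists \vy_n.\, \psi$, possibly containing constant symbols $c_1, \ldots, c_k$, I would invoke Theorem~\ref{theorem:TransformationSBSRintoBSR} (together with its extension to constant symbols mentioned right before the corollary) to obtain an equivalent BSR sentence $\varphi' := \exists \vz \forall \vx.\, \psi'$ over the same constant symbols whose length is at most $n$-fold exponential in $\len(\varphi)$. In particular, the length of the leading existential quantifier block $\vz$ is bounded by $\twoup{n}{\len(\varphi)}$ up to a polynomial factor.

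Next, since $\varphi'$ is satisfiable (being equivalent to $\varphi$) and belongs to BSR with $k$ constant symbols, Proposition~\ref{proposition:SmallModelsBSR} yields a model $\cA$ of $\varphi'$ whose domain has size at most $\max(|\vz| + k, 1)$. Because $k \leq \len(\varphi)$ and $|\vz|$ is $n$-fold exponential in $\len(\varphi)$, the overall size of $\fUA$ is $n$-fold exponential in $\len(\varphi)$. Equivalence of $\varphi$ and $\varphi'$ then gives $\cA \models \varphi$, which establishes the small model property part of the claim.

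The decidability complexity bound follows by the standard guess-and-check approach. A nondeterministic algorithm first guesses a structure $\cA$ of domain size at most the stated $n$-fold exponential bound (over the vocabulary of $\varphi$), which requires writing down only $n$-fold exponentially many bits, and then verifies $\cA \models \varphi$ directly by evaluating $\varphi$ on $\cA$ in time polynomial in $|\fUA|$ and $\len(\varphi)$, hence in $n$-fold exponential time overall. I do not foresee any real obstacle here: Theorem~\ref{theorem:TransformationSBSRintoBSR} and Proposition~\ref{proposition:SmallModelsBSR} do all the work, and the only minor point to verify is that the handling of constant symbols in the SBSR-to-BSR translation does not inflate $k$ beyond $\len(\varphi)$, which is evident because the translation procedure outlined for Theorem~\ref{theorem:TransformationSBSRintoBSR} does not introduce new constants.
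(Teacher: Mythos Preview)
Your proposal is correct and follows exactly the approach the paper takes: the corollary is stated as an immediate consequence of Theorem~\ref{theorem:TransformationSBSRintoBSR} combined with Proposition~\ref{proposition:SmallModelsBSR}, together with the remark preceding the corollary that the SBSR-to-BSR translation preserves (and does not add) constant symbols. Your additional elaboration of the guess-and-check argument for the complexity bound is fine and merely spells out what the paper leaves implicit.
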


%%%%%%%%%%%%%%%%%%%%%%%%%%%%%%%%%%%%%%%%%%%%%%%%%%%%%%%%%%%%%%%%%%%
%%%%%%%%%%%%%%%%%%%%%%%%%%%%%%%%%%%%%%%%%%%%%%%%%%%%%%%%%%%%%%%%%%%
%%%%%%%%%%%%%%%%%%%%%%%%%%%%%%%%%%%%%%%%%%%%%%%%%%%%%%%%%%%%%%%%%%%

Regarding lower bounds, the result formulated in Theorem~\ref{theorem:LengthSmallestBSRsentences} immediately entails that there are SBSR sentences that inevitably lead to a non-elementary blowup when translating them into equivalent BSR sentences.
Moreover, Proposition~\ref{theorem:SFtoGaifmanLowerBound} is also relevant for SBSR.
It means that for every natural number $k$ there are SBSR sentences whose shortest equivalent in Gaifman normal form is $k$-fold exponentially longer than the original.

%%%%%%%%%%%%%%%%%%%%%%%%%%%%%%%%%%%%%%%%%%%%%%%%%%%%
%%%%%%%%%%%%%%%%%%%%%%%%%%%%%%%%%%%%%%%%%%%%%%%%%%%%
%%%%%%%%%%%%%%%%%%%%%%%%%%%%%%%%%%%%%%%%%%%%%%%%%%%%
%%%%%%%%%%%%%%%%%%%%%%%%%%%%%%%%%%%%%%%%%%%%%%%%%%%%

\subsection{Taking Boolean Structure into Account}
\label{section:GBSRandBooleanStructure}

It is possible to liberalize the definition of SBSR while retaining decidability of the satisfiability problem, if one takes Boolean structure into account.
For instance, since every $\wedge$-$\vee$-combination of BSR sentences is equivalent to some BSR sentence, we also observe that every $\wedge$-$\vee$-combination of SBSR sentences is equivalent to some SBSR sentence and, ultimately, also to some BSR sentence.
Beyond such trivial observations, one may use approximations of conjunctive and disjunctive normal forms in order to predict when quantifier shifting combined with CNF/DNF-like normal-form transformations of the respective quantifier scopes would result in $\wedge$-$\vee$-combinations of SBSR sentences.
\begin{example}
	Consider the sentence \\
		\centerline{$\varphi := \forall x_1 x_2 \exists y \forall z_1 z_2.\, \bigl( \bigl( P(x_1, z_1) \vee P(z_2, x_2) \bigr) \wedge P(y, z_1) \bigr) \vee \bigl( P(x_2, z_2) \wedge P(x_1, x_2) \bigr)$,}
	which does not satisfy the conditions of SBSR, as $x_1, z_1$ and $z_1, y$ co-occur in $P(x_1, z_1)$, $P(y, z_1)$, respectively.
	However, the sentence $\varphi$ can be transformed into the equivalent \\
	\centerline{$
		\begin{array}{l}
			\Bigl(\forall x_1 x_2 z_1 z_2.\, \bigl( P(x_1, z_1) \vee P(z_2, x_2) \vee P(x_2, z_2) \bigr) \,\wedge\, \bigl( P(x_1, z_1) \vee P(z_2, x_2) \vee P(x_1, x_2) \bigr) \Bigr) \\
			\wedge \Bigl( \forall x_1 x_2 \exists y \forall z_1 z_2.\, \bigl( P(y, z_1) \vee P(x_2, z_2) \bigr) \,\wedge\, \bigl( P(y, z_1) \vee P(x_1, x_2) \bigr) \Bigr) ~.
		\end{array}
	$}
	Evidently, each of the two constituents of the topmost conjunction is an SBSR sentence. 
\end{example}
More details regarding the liberalization of SBSR based on a suitable analysis of Boolean structure can be found in~\cite{Voigt2019PhDthesis}, Section~3.6.

%%%%%%%%%%%%%%%%%%%%%%%%%%%%%%%%%%%%%%%%%%%%%%%%%%%%
%%%%%%%%%%%%%%%%%%%%%%%%%%%%%%%%%%%%%%%%%%%%%%%%%%%%
%%%%%%%%%%%%%%%%%%%%%%%%%%%%%%%%%%%%%%%%%%%%%%%%%%%%
%%%%%%%%%%%%%%%%%%%%%%%%%%%%%%%%%%%%%%%%%%%%%%%%%%%%

\section{The Separated Ackermann Fragment (SAF)}
\label{section:GeneralizedAF}

The \emph{Ackermann fragment (AF)}\label{para:DefinitionAF}
comprises all relational first-order sentences in prenex normal form with an $\exists^*\forall\exists^*$ quantifier prefix and without equality.
Ackermann derived the finite model property for AF in~\cite{Ackermann1928}.
Ackermann's decidability proof in~\cite{Ackermann1954} proceeds via a reduction to MFO-Sat without any reference to AF's finite model property.
The satisfiability problem for AF is \EXPTIME-complete (see~\cite{Borger1997}, Section~6.3).
In~\cite{Dreben1979} the finite model property of AF with equality is derived.

Gurevich~\cite{Gurevich1973} and Maslov and Orevkov~\cite{Maslov1972} studied AF sentences with arbitrary function symbols, which yields the \emph{Gurevich--Maslov--Orevkov fragment}.
While Gurevich proved the finite model property for this fragment, Orevkov and Maslov took a proof-theoretic route based on Maslov's \emph{inverse method}.
Another extension of AF is the \emph{Shelah fragment}: $\exists^* \forall \exists^*$-sentences with equality and a single unary function symbol~\cite{Shelah1977}.
This class contains \emph{infinity axioms} and, hence, does not possess the finite model property.
A more detailed version of Shelah's proof can be found in Section~7.3 in~\cite{Borger1997}.

%%%%%%%%%%%%%%%%%%%%%%%%%%%%%%%%%%%%%%%%%%%%%%%%%%%%%%%%%%%%%%%%%%
In the present section, we generalize AF to the \emph{separated Ackermann fragment (SAF)}.
Moreover, we devise an effective translation procedure from SAF sentences into equivalent AF sentences.
It will turn out that this procedure is compatible with function symbols and equality.
That is, our results will show that 
	SAF with equality is equivalent to AF with equality,
	SAF with arbitrary function symbols is equivalent to the Gurevich--Maslov--Orevkov fragment, and
	SAF with equality and a single unary function symbol in equivalent to the Shelah fragment.
Hence, all these extensions of SAF are decidable.

Intuitively speaking, an SAF sentence is of the form $\varphi := \forall \vx_1 \exists \vy_1 \vu_1 \ldots \forall \vx_n \exists \vy_n \vu_n.\, \psi$ with quantifier-free $\psi$ and it satisfies the following properties.
Each atom in $\varphi$ contains only variables from some subsequence of $\varphi$'s quantifier prefix of the form $\exists^* \forall \exists^*$.
If two atoms share a universally quantified variable or some variable from the trailing $\exists^*$-block of their respective quantifier subsequence, then they have the same $\exists^* \forall \exists^*$-subsequence as source of all their variables.
\marginpar{\label{page:IndexOfVariables}}
For the formal definition, we define the \emph{index of a variable $v \in \vars(\varphi)$} to be $\idx_\varphi(v) := k$ if and only if $v \in \vx_k \cup \vy_k \cup \vu_k$.
We set $\idx_\varphi(v) := \infty$ for all variables not occurring in $\varphi$.
For convenience, we drop the reference to $\varphi$, if $\varphi$ is clear from the context.

%%%%%%%%%%%%%%%%%%%%%%%%%%%%%%%%%%%%%%%%%%%%%%%%%%%
\begin{definition}[Separated Ackermann fragment (SAF)]\label{definition:SAFaxiomatic:alternative}
	Let $Y, X, U_1, U_2, U_3, \ldots$ be pair\-wise-disjoint, countable sets of first-order variables.
	The \emph{separated Ackermann fragment (SAF)} comprises all relational first-order sentences without equality having the shape
		$\varphi := \forall \vx_1 \exists \vy_1 \vu_1 \ldots \forall \vx_n$ $\exists \vy_n \vu_n.\, \psi$ 
	with quantifier-free $\psi$ where
	(a) $\vx_i \subseteq X$, $\vy_i \subseteq Y$, and $\vu_i \subseteq U_1 \cup U_2 \cup U_3 \cup \ldots$ for every $i$,
	(b) $X = \{x_1, x_2, \ldots\}$,
	(c) for every $j$ and every $u \in U_j$ we have $\idx(u) \geq \idx(x_j)$, and
	(d) for every atom $A$ in $\varphi$ 
		either 
			(d.1) $\vars(A) \subseteq Y$
		or (d.2) there exists some $j$ such that
			$\vars(A) \subseteq \vy_1 \cup \ldots \cup \vy_{\idx(x_j)-1} \cup \{x_j\} \cup U_j$.

	(Notice that this entails separateness of the sets $\{x_1\} \cup U_1, \{x_2\} \cup U_2, \{x_3\} \cup U_3, \ldots$ in $\varphi$ and every atom $A$ in $\varphi$ either contains exclusively variables from $Y$ or there is some $j$ such that $\vars(A) \subseteq \vy_1 \cup \ldots \cup \vy_{\idx(x_j) - 1} \cup \{x_j\} \cup \vu_{\idx(x_j)} \cup \ldots \cup \vu_n$.)
\end{definition}
The tuples $\vx_i$ and $\vy_i, \vu_i$ in $\varphi$ may be empty, i.e.\ $\varphi$'s quantifier prefix does not have to start with a universal quantifier and it does not have to end with an existential quantifier. 
Moreover, variable $u \in \vu$ that occurs in $\varphi$ is associated with exactly one \emph{reference variable} $x \in \vx$, determined by the set $U_i$ to which $u$ belongs.
Intuitively speaking, using suitable equivalence-preserving transformations, any quantifier $\exists u$ with $u \in \vu$ can be shifted out of the scope of any universal quantifier but the one binding $u$'s reference variable.
This is the essence of the first step of the effective translation procedure from SAF into AF.

The following example gives a first impression of SAF sentences and how they can be translated into AF. 

%%%%%%%%%%%%%%%%%%%%%%%%%%%%%%%%%%%%%%%%%%%%%%%%%%%%%%%%%%%%%%%%
\begin{example}\label{example:FirstSAFsentence}
	Consider the first-order sentence 
	\begin{align*}
		\varphi :=\;\; &\exists y \forall x_1 \exists u_1 \forall x_2 \exists u_2 u_3 .\, 
				\bigl( \neg P(y, x_1) \vee \bigl(Q(x_1,u_1) \wedge R(y,x_2,u_2) \bigr) \bigr)\\
				 &\hspace{18ex} \wedge\; \bigl( P(y, x_1) \vee \bigl( \neg Q(x_1,u_1) \wedge \neg R(y,x_2,u_3) \bigr) \bigr) ~.
	\end{align*}		
	The partition of the variables in $\varphi$ into 
		$Y := \{ y \}$, $X := \{x_1, x_2\}$, and $U_1 := \{ u_1 \}$, $U_2 := \{u_2, u_3\}$
	is a witness for the belonging of $\varphi$ to SAF.
	Due to the Boolean structure of $\varphi$, the quantifiers $\exists u_3$, $\exists u_2$, and $\forall x_2$ can be shifted inwards immediately but $\exists u_1$ cannot.
	This yields the equivalent sentence
	\begin{align*}
			&\exists y \forall x_1 \exists u_1.\, 
				\bigl( \neg P(y, x_1) \vee \bigl(Q(x_1,u_1) \wedge \forall x_2 \exists u_2.\, R(y,x_2,u_2) \bigr) \bigr) \\
			 &\hspace{10ex} \wedge\; \bigl( P(y, x_1) \vee \bigl( \neg Q(x_1,u_1) \wedge \forall x_2 \exists u_3.\, \neg R(y,x_2,u_3) \bigr) \bigr) ~.
	\end{align*}		
	Because of the two universal quantifiers $\forall x_1$ and $\forall x_2$, which are even interspersed with an existential one, $\varphi$ does not belong to AF. 
	However, there exists an equivalent sentence $\varphi'$ in which every atom lies within the scope of at most one universal quantifier:
	\begin{align*}
			\exists &y.\, 
			\Bigl(\forall x_1.\, \bigl(\neg P(y, x_1) \vee \exists u_1.\, Q(x_1,u_1) \bigr)\Bigr) 
				\wedge \Bigl(\bigl(\forall x_1.\, \neg P(y, x_1)\bigr) \vee \forall x_2 \exists u_2.\, R(y,x_2,u_2) \Bigr)\\
			&\wedge \Bigl(\forall x_1.\, \bigl(\exists u_1.\, \neg Q(x_1,u_1)\bigr) \vee P(y,x_1) \Bigr) 
				\wedge \Bigl(\bigl(\forall x_1 \exists u_1.\, \neg Q(x_1,u_1)\bigr) \vee \forall x_2 \exists u_2.\, R(y,x_2,u_2) \Bigr)\\
			&\wedge \Bigl(\bigl(\forall x_2 \exists u_3.\, \neg R(y,x_2,u_3)\bigr) \vee \forall x_1.\, P(y,x_1) \Bigr)
				\wedge \Bigl(\bigl(\forall x_2 \exists u_3.\, \neg R(y,x_2,u_3)\bigr) \vee \forall x_1 \exists u_1.\, Q(x_1,u_1) \Bigr)\\
			&\wedge \Bigl(\bigl(\forall x_2 \exists u_3.\, \neg R(y,x_2,u_3)\bigr) \vee \forall x_2 \exists u_2.\, R(y,x_2,u_2) \Bigr)
	\end{align*}	
	Transforming $\varphi$ into $\varphi'$ requires only basic logical laws and is very similar to approaches we have seen before:  
	first, we shift the quantifiers $\exists u_3, \exists u_2, \forall x_2$ inwards as far as possible.
	Then, we construct a disjunction of conjunctions of certain subformulas using distributivity.
	This allows us to shift the quantifier $\exists u_1$ inwards. 
	Afterwards, we apply the laws of distributivity again to obtain a conjunction of disjunctions of certain subformulas.
	This step enables us to shift the universal quantifier $\forall x_1$ inwards.
	Although the resulting sentence is not in AF, it is reasonably close to AF.
	The only difference is that we get more than only one universally quantified variable in the sentence as a whole, but at most one per atom. 
	We will show later (Lemma~\ref{lemma:ForallBehindOr}) that each such sentence is indeed equivalent to some AF sentence.

	Another example of a simple SAF sentence is the sentence \\
		\centerline{$\psi := \exists u \forall x \exists y \forall z. \bigl(P(u,z) \wedge Q(u,x)\bigr) \vee \bigl(P(y,z) \wedge Q(u,y) \bigr)$.}
	It belongs to SBSR and SAF at the same time, while it does not belong to AF, SF, BSR, or MFO.
	Hence, even the intersection of SBSR and SAF contains sentences which do not fall into the syntactic categories offered by the standard fragments.
\end{example}

%%%%%%%%%%%%%%%%%%%%%%%%%%%%%%%%%%%%%%%%%%%%%%%%%%%%%%%%%%%%%%%%%%%%%
%%%%%%%%%%%%%%%%%%%%%%%%%%%%%%%%%%%%%%%%%%%%%%%%%%%%%%%%%%%%%%%%%%%%%
Suppose we intend to check membership in SAF given an arbitrary first-order sentence without knowing a-priori how the set of occurring variables is to be partitioned.
Like in the SBSR setting, there is a deterministic polynomial-time procedure for this task based on a graph algorithm (see~\cite{Voigt2019PhDthesis}, Theorem~3.7.3).
	
%%%%%%%%%%%%%%%%%%%%%%%%%%%%%%%%%%%%%%%%%%%%%%%%%%%
The next proposition confirms that SAF extends AF and MFO.
Since the sentence $\varphi$ from Example~\ref{example:FirstSAFsentence} belongs to SAF but lies in neither of the other two fragments, it is immediately clear that SAF constitutes a proper extension of both.
\begin{proposition}\label{proposition:InclusionSAF}
	SAF properly contains AF and MFO.
\end{proposition}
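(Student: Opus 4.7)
The proposition asserts two proper containments, AF $\subsetneq$ SAF and MFO $\subsetneq$ SAF. My plan is to handle each inclusion syntactically by exhibiting an explicit variable partition that witnesses membership in SAF, and then to establish properness by pointing to the sample sentences from Example~\ref{example:FirstSAFsentence}.

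For the inclusion AF $\subseteq$ SAF, consider an arbitrary AF sentence $\varphi := \exists y_1 \ldots y_m \forall x \exists u_1 \ldots u_k.\, \psi$ with quantifier-free $\psi$ and no equality. I would embed $\varphi$ into SAF by taking $n := 2$, $\vx_1 := \emptyset$, $\vy_1 := \{y_1, \ldots, y_m\}$, $\vu_1 := \emptyset$, $\vx_2 := \{x\}$, $\vy_2 := \emptyset$, $\vu_2 := \{u_1, \ldots, u_k\}$, and partitioning the variables as $Y := \{y_1, \ldots, y_m\}$, $X := \{x\}$, $U_1 := \{u_1, \ldots, u_k\}$. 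Conditions~(a) and~(b) of Definition~\ref{definition:SAFaxiomatic:alternative} hold immediately; condition~(c) reduces to $\idx(u_i) = 2 \geq 2 = \idx(x)$; and for condition~(d) every atom either uses only variables from $Y$ (case (d.1)) or also uses $x$ and/or some $u_i$, in which case $\vars(A) \subseteq \vy_1 \cup \{x\} \cup U_1$ is automatic with $j := 1$. Hence $\varphi \in \SAF$ (using $\SAF$ loosely).

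For the inclusion MFO $\subseteq$ SAF, let $\varphi$ be an MFO sentence, which by the usual prenexing can be assumed to have the form $\forall \vx_1 \exists \vy_1 \ldots \forall \vx_n \exists \vy_n.\, \psi$ with quantifier-free $\psi$ and only unary, equality-free atoms. I would set $\vu_i := \emptyset$ and $U_j := \emptyset$ for every $i,j$, put $Y := \vy_1 \cup \ldots \cup \vy_n$ and let $X$ comprise the universally quantified variables in order. Conditions~(a)--(c) are trivial. Condition~(d) also holds trivially: since every atom is unary, its single variable is either some $y \in Y$ (case (d.1)) or some $x_j \in X$, in which case $\vars(A) = \{x_j\} \subseteq \vy_1 \cup \ldots \cup \vy_{\idx(x_j)-1} \cup \{x_j\} \cup U_j$ with $U_j = \emptyset$ (case (d.2)). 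Hence $\varphi \in \SAF$.

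Properness of both inclusions is witnessed simultaneously by the sentence $\varphi$ of Example~\ref{example:FirstSAFsentence}, whose SAF membership was verified there: its prefix $\exists \forall \exists \forall \exists^2$ contains two universal quantifiers and therefore cannot be put into $\exists^* \forall \exists^*$ prenex form, so $\varphi \notin \AF$; and it uses the non-unary predicate symbols $P, Q, R$, so $\varphi \notin \MFO$. (Alternatively, the short sentence $\psi$ from the same example serves both purposes.) There is no real obstacle here beyond verifying the case-split in condition~(d) for AF and noting that prenexing of MFO sentences (permitted since no function symbols or equality appear in atoms) does not introduce non-unary atoms.
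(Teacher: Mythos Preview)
Your proof is correct and follows essentially the same approach as the paper: exhibit an explicit partition of the variables into $Y$, $X$, and the $U_j$ that witnesses the SAF conditions, and invoke the sentence from Example~\ref{example:FirstSAFsentence} for properness. The only cosmetic difference is that for AF the paper places the trailing existential block in $U_2$ rather than $U_1$; your choice $U_1 := \{u_1,\ldots,u_k\}$ with $\idx(x)=2$ is arguably cleaner with respect to condition~(c), and the verification of (d.2) via $j=1$ goes through exactly as you describe.
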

\begin{proof}
	Let $\varphi := \exists \vz\, \forall x \exists \vv.\, \psi$ be an AF sentence with quantifier-free $\psi$. 
	We simply set $Y := \vz$, $X := \{x\}$, $U_1 = \emptyset$, $U_2 := \vv$, and $U_i := \emptyset$ for every $i \geq 3$. 
	Then, $\varphi$ trivially satisfies the conditions for SAF.
	
	Let $\varphi' := \forall \vx_1 \exists \vy_1 \ldots \forall \vx_n \exists \vy_n.\, \psi'$ be an MFO sentence. 
	We set $Y := \vy_1 \cup \ldots \cup \vy_n$, $X := \vx_1 \cup \ldots \cup \vx_n$, and $U_i := \emptyset$ for every $i \geq 1$.
	Obviously, this partition of $\varphi'$'s atoms meets all the conditions for SAF sentences.
\end{proof}

%%%%%%%%%%%%%%%%%%%%%%%%%%%%%%%%%%%%%%%%%%%%%%%%%%%%
%%%%%%%%%%%%%%%%%%%%%%%%%%%%%%%%%%%%%%%%%%%%%%%%%%%%
%%%%%%%%%%%%%%%%%%%%%%%%%%%%%%%%%%%%%%%%%%%%%%%%%%%%
%%%%%%%%%%%%%%%%%%%%%%%%%%%%%%%%%%%%%%%%%%%%%%%%%%%%

\subsection{Translation of SAF into the Ackermann Fragment}
\label{section:TranslationGAFintoAF}

In this section we devise an effective equivalence-preserving translation from SAF into AF that proceeds in two stages.
The first stage unfolds nestings of quantifiers that bind separated sets of variables.
This results in a sentence in which every subformula lies within the scope of at most one universal quantifier.
Such sentences can easily be converted into a special syntactic form, called \emph{SAF special form}.
Then, in the second stage of the translation process, a sentence in SAF special form is transformed into an equivalent AF sentence.

The next lemma provides the central tool (including technical invariants) for the first stage of the translation process.
\begin{lemma}\label{lemma:SAFquantifierShifting}
	Let $Y, X, U_1, U_2, \ldots$ be pairwise-disjoint, countable sets of first-order variables.
	Fix a quantifier sequence $\sigma := \forall \vx_1 \exists \vy_1 \vu_1 \ldots \forall \vx_n \exists \vy_n \vu_n$ (without double occurrences of variables) such that
		(a) $\vx_i \subseteq X = \{x_1, x_2, \ldots\}$, $\vy_i \subseteq Y$, and $\vu_i \subseteq U_1 \cup U_2 \cup \ldots$ for every $i$ and
		(b) every $u \in U_j$ that occurs in $\sigma$ occurs somewhere right of $x_j$.
	Moreover, fix a set of atoms $\At$ over the variables in $\vx_1, \vy_1, \vu_1, \ldots, \vx_n, \vy_n, \vu_n$ such that	
		for every atom $A \in \At$ 
		either 
			(a) $\vars(A) \subseteq Y$
		or (b) there exists some $j$ such that
			$\vars(A) \subseteq \vy_1 \cup \ldots \cup \vy_{\idx(x_j)-1} \cup \{x_j\} \cup U_j$.
		
	Let $\varphi := \cQ v.\, \psi$ be any first-order formula over the atoms from $\At$ such that
	\begin{enumerate}[label=(\roman{*}), ref=(\roman{*})]
		\item \label{enum:SAFtranslation:I} 
			$\varphi$ is in negation normal form, no variable is bound by two distinct quantifier occurrences, no variable occurs freely and bound,
		\item \label{enum:SAFtranslation:II} 
			any sequence of nested quantifiers $\cQ_1 v_1 \ldots \cQ_k v_k$ occurring in $\varphi$ (from left to right) is a subseqence of $\forall \vx_1 \exists \vy_1 \vu_1 \ldots \forall \vx_n \exists \vy_n \vu_n$,
		\item \label{enum:SAFtranslation:III} 
			for every subformula $\forall x_j.\, \chi$ of $\psi$ we have $\vars(\chi) \subseteq \vy_1 \cup \ldots \cup \vy_{\idx(x_j)-1} \cup \{x_j\} \cup U_j$,
		\item \label{enum:SAFtranslation:IV} 
			for every subformula $\exists u.\, \chi$ of $\psi$ with $u \in U_j$ we have $\vars(\chi) \subseteq \vy_1 \cup \ldots \cup \vy_{\idx(x_j)-1} \cup \{x_j\} \cup U_j$,
		\item \label{enum:SAFtranslation:V} 
			for every subformula $\exists y.\, \chi$ of $\psi$ with $y \in \vy_i$ we have that all variables occurring freely in $\chi$ stem from $Y$, and
		\item \label{enum:SAFtranslation:VI} 
			any sequence of nested quantifiers $\cQ_1 v_1 \ldots \cQ_k v_k$ occurring in $\psi$ (from left to right) is either
			a subseqence of $\exists \vy_1 \ldots \exists \vy_n$ or
			a subseqence of $\exists \vy_1 \ldots \exists \vy_{\idx(x_j)-1} \forall x_j\, \exists \bigl(\vu_{\idx(x_j)} \cap U_j\bigr) \ldots \exists \bigl(\vu_n \cap U_j\bigr)$ for some $j$.
	\end{enumerate}	
	Then, $\varphi$ can be transformed into an equivalent formula~$\varphi'$ such that Conditions~\ref{enum:SBSRtranslation:I} and~\ref{enum:SBSRtranslation:II} still apply to $\varphi'$ and Conditions~\ref{enum:SAFtranslation:III} to~\ref{enum:SAFtranslation:VI} apply to $\varphi'$ instead of $\psi$ only.
\end{lemma}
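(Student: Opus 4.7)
The plan is to mirror the strategy of Lemma~\ref{lemma:SBSRquantifierShifting}, adapted to the finer three-way partition into $Y$, $X$, and the family $\{U_j\}_j$. As a preprocessing step I first discard any quantifiers in $\psi$ that bind variables not appearing in any atom, so every remaining quantifier is useful. I then case split on the shape of the outermost quantifier $\cQ v$: (a) $v = x_{j_*} \in X$, (b) $v \in \vy_{i_*} \subseteq Y$, or (c) $v \in U_{j_*} \cap \vu_{i_*}$. In each case, I use an appropriate normal form of $\psi$ over its basic formulas (atoms of $\psi$ and top-level quantified subformulas $\cQ u.\, \chi$) together with the shifting identities of Proposition~\ref{lemma:Miniscoping}.

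In case~(a), rewrite $\psi$ as a CNF over basic formulas. By the hypothesis on $\At$ together with conditions~(iii)--(vi), the free variables of every basic formula lie either entirely in $Y$ or entirely in $\vy_1 \cup \ldots \cup \vy_{\idx(x_\ell)-1} \cup \{x_\ell\} \cup U_\ell$ for a uniquely determined $\ell$. Each clause of the CNF therefore splits as a subclause $\chi_k$ collecting the disjuncts tied to $\ell = j_*$ and a subclause $\eta_k$ whose free variables exclude $x_{j_*}$. Applying $\forall x_{j_*}.(A \wedge B) \semequiv (\forall x_{j_*}. A) \wedge (\forall x_{j_*}. B)$ and $\forall x_{j_*}.(\chi \vee \eta) \semequiv (\forall x_{j_*}. \chi) \vee \eta$ whenever $x_{j_*} \notin \vars(\eta)$, I push $\forall x_{j_*}$ down to bind only $\chi_k$. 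This re-establishes condition~(iii) for $\varphi'$ and preserves~(iv)--(vi) since the existential substructure is unchanged.

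Cases~(b) and~(c) are dual: rewrite $\psi$ as a DNF over basic formulas, and in each conjunctive term group together those basic formulas relevant for $v$. In~(b), the relevant block $\eta_k$ collects every basic formula whose free variables lie in $Y \cup \vy_{i_*}$, leaving the remainder $\chi_k$ free of $v$; using $\exists v.(\chi \wedge \eta) \semequiv \chi \wedge \exists v.\eta$ together with distribution of $\exists$ over $\vee$ yields $\bigvee_k \chi_k \wedge (\exists v.\eta_k)$, re-establishing~(v). In case~(c), $\eta_k$ collects the basic formulas whose free variables lie in $\vy_1 \cup \ldots \cup \vy_{\idx(x_{j_*})-1} \cup \{x_{j_*}\} \cup U_{j_*}$, so after shifting, $\exists v$ ends up inside the scope occupied by the (possibly already present) $\forall x_{j_*}$ stipulated by~(vi), thus re-establishing~(iv).

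The main obstacle I expect is verifying condition~(vi) for $\varphi'$ in case~(c): after sinking $\exists v$ into a subformula that may already contain a nested $\forall x_{j_*}$, one must check that every nested quantifier sequence of $\varphi'$ still fits either the pure-$Y$ shape or the $\exists Y$-blocks followed by $\forall x_{j_*} \exists (U_{j_*}\text{-vars})$ shape prescribed in~(vi). The partition of basic formulas into the $Y$-class and the $\ell$-indexed classes, which is unique thanks to the separateness built into the SAF hypotheses, is precisely what guarantees that no basic formula associated with a different $U_\ell$ or $x_\ell$ gets swept under $\exists v$. Conditions~(i) and~(ii) are preserved throughout by construction, provided bound variables are renamed to avoid clashes caused by duplications in the CNF/DNF step; iteratively applying the lemma then yields the desired $\varphi'$.
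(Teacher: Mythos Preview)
Your proposal is correct and follows essentially the same route as the paper's proof, which is itself only a sketch that invokes the analogy with Lemma~\ref{lemma:SBSRquantifierShifting}, spells out the CNF-based universal case, and dismisses the two existential cases as ``handled in a similar fashion''. One small inaccuracy: in case~(c) you write that $\exists v$ ``ends up inside the scope occupied by the (possibly already present) $\forall x_{j_*}$'', but at this single step $\forall x_{j_*}$ is \emph{not} in $\psi$ at all (it lies to the left of $\exists v$ in $\sigma$ and is still an outer quantifier); the invariant~(iv) is re-established simply because every basic formula containing $v\in U_{j_*}$ already has its free variables confined to $\vy_1\cup\ldots\cup\vy_{\idx(x_{j_*})-1}\cup\{x_{j_*}\}\cup U_{j_*}$, so $\exists v.\eta_k$ satisfies~(iv) directly.
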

\begin{proof}[Proof Sketch]
	The proof works in analogy to the proof of Lemma~\ref{lemma:SBSRquantifierShifting}.
	At any step, we remove any quantifiers that bind variables which do not occur in their scope.
	A \emph{basic formula} is, again, any atom and any subformula $(\cQ v. \ldots)$ in $\psi$ that does not lie within the scope of any quantifier in $\psi$.
				
		Suppose $\cQ v$ is a universal quantifier and, hence, $v = x_{j_*} \in X$ for some $j_*$.
		We transform $\psi$ into an equivalent conjunction of disjunctions of negated or non-negated basic formulas.
		Due to Conditions~\ref{enum:SAFtranslation:III} and~\ref{enum:SAFtranslation:V}, the constituents of the $k$-th disjunction can be grouped into two parts:
		$\chi_{k} \vee \eta_{k}$ with $\vars(\chi_k) \subseteq \vy_1 \cup \ldots \cup \vy_{\idx(x_{j_*})-1} \cup \{x_{j_*}\} \cup U_{j_*}$ and $\vars(\eta_{k}) \cap \bigl( \{x_{j_*}\} \cup U_{j_*}\bigr)  = \emptyset$.
		Hence, we may shift the universal quantifier $\forall v$ inwards so that it only binds the (sub-)con\-junc\-tions $\chi_{k}$.
		The resulting formula 
			\begin{align*} 
				\bigwedge_k \bigl( \forall &x_{j_*}.\, \chi_{k}( \vy_1, \ldots, \vy_{\idx(x_{j_*})-1}, x_{j_*} ) \bigr) \\[-0.5ex]
					&\vee \eta_{k} \Bigl( \vx_1, \vy_1, \vu_1, \ldots, \vx_{\idx(x_{j_*})-1}, \vy_{\idx(x_{j_*})-1}, \vu_{\idx(x_{j_*})-1}, \bigl( \vx_{\idx(x_{j_*})} \setminus \{x_{j_*}\} \bigr) \Bigr) 
			\end{align*}	
		is the sought $\varphi'$.
		It is easy to see that $\varphi'$ still satisfies Conditions~\ref{enum:SBSRtranslation:I} and~\ref{enum:SBSRtranslation:II}.				
		Moreover, we now have established the Conditions~\ref{enum:SBSRtranslation:III},~\ref{enum:SBSRtranslation:IV}, and~\ref{enum:SBSRtranslation:V} for the whole formula $\varphi'$ instead of only the subformula $\psi$.
		
		The cases where $\cQ v$ is an existential quantifier and where either $v \in Y$ or $v \in U_j$ for some $j$ can be handled in a similar fashion.
\end{proof}

%%%%%%%%%%%%%%%%%%%%%%%%%%%%%%%%%%%%%%%%%%%%%%%%%%%%%%%%%
Lemma~\ref{lemma:SBSRquantifierShifting} provides a tool to shift single quantifiers in SAF sentences inwards in a fashion that, when applying the lemma iteratively, yields a sentence in which no universal quantifier lies inside the scope of another universal quantifier.
Any first-order sentence in this form can be further transformed into a particular shape to which we refer as \emph{SAF special form}: 
	\[ \exists \vz. \bigwedge_i \Bigl( \bigvee_j \forall x_{i, j} \exists \vy_{i,j}.\, \chi_{i, j}(\vz, x_{i,j}, \vy_{i,j}) \Bigr) \vee \eta_i(\vz) \]
	where the $\chi_{i, j}$ and the $\eta_i$ are quantifier free.
%
%%%%%%%%%%%%%%%%%%%%%%%%%%%%%%%%%%%%%%%%%%%%%%%%%%%%%%%%%
\begin{lemma}[SAF special form]\label{lemma:SAFPreSpecialForm}
	If $\varphi := \forall \vx_1 \exists \vy_1 \vu_1 \ldots \forall \vx_n \exists \vy_n \vu_n.\, \psi$ belongs to SAF, then we can effectively construct an equivalent sentence of the form \\
		\centerline{$\exists \vz. \bigwedge_i \bigl( \bigvee_j \forall x_{i, j} \exists \vy_{i,j}.\, \chi_{i, j}(\vz, x_{i,j}, \vy_{i,j}) \bigr) \vee \eta_{i}(\vz)$,}
	where the $\chi_{i, j}$ and the $\eta_i$ are quantifier free.
\end{lemma}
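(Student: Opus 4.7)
The plan is to apply Lemma~\ref{lemma:SAFquantifierShifting} iteratively in order to push every quantifier from $\varphi$'s prenex prefix inward, then to pull just the leading $Y$-existentials back out, and finally to normalise the intervening Boolean structure into a conjunction of disjunctions. First I would put $\psi$ in negation normal form and process $\varphi$'s prefix from the innermost quantifier outward. Initially the innermost quantifier $\cQ v$ sits directly above the quantifier-free $\psi$, so conditions~(iii)--(vi) of Lemma~\ref{lemma:SAFquantifierShifting} hold trivially. One application of the lemma replaces $\cQ v.\,\psi$ by an equivalent formula $\psi_1$ for which the invariants now hold globally; prepending the next quantifier of the prefix and applying the lemma again yields $\psi_2$, and so on. After one pass through the entire prefix one obtains an equivalent sentence $\varphi^*$ in which conditions~(iii)--(vi) hold from the outside.

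By invariant~(vi), every maximal chain of nested quantifiers in $\varphi^*$ is either a subsequence of $\exists \vy_1 \ldots \exists \vy_n$ or a subsequence of $\exists \vy_1 \ldots \exists \vy_{\idx(x_j)-1}\, \forall x_j\, \exists (\vu_{\idx(x_j)}\cap U_j) \ldots \exists (\vu_n \cap U_j)$ for some $j$. In particular every universal quantifier is preceded in its chain only by $Y$-existentials and followed only by $U_j$-existentials for the matching $j$; no universal quantifier lies inside the scope of another.

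Next I would shift every $Y$-existential outward to the top level. Since no such quantifier sits inside the scope of a universal quantifier, it suffices to slide $\exists y$ past $\wedge$ and $\vee$ using Proposition~\ref{lemma:Miniscoping} read from right to left, renaming bound variables as needed to avoid capture. Consolidating the result into a single leading block gives $\exists \vz.\, \psi'$, where $\psi'$ is built from atoms and from subformulas $\forall x.\, \exists \vu.\, \chi$ (with quantifier-free $\chi$) using only $\wedge$ and $\vee$. Treating each such $\forall$-subformula as an atomic unit and distributing $\wedge$ over $\vee$ converts $\psi'$ into a conjunction $\bigwedge_i D_i$ of disjunctions; sorting the literal disjuncts of each $D_i$ apart from the $\forall$-subformula disjuncts produces the required shape
\[ \exists \vz.\, \bigwedge_i \Bigl(\bigvee_j \forall x_{i,j} \exists \vy_{i,j}.\, \chi_{i,j}(\vz, x_{i,j}, \vy_{i,j})\Bigr) \vee \eta_i(\vz). \]

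The main obstacle lies in the iterative invocation of Lemma~\ref{lemma:SAFquantifierShifting}: one must check that every one of the five invariants is re-established after each single-quantifier push, across shifts of $\forall x_j$, $\exists u \in U_j$, and $\exists y \in Y$. The nontrivial cases are those of $\forall x_j$ and $\exists u$, where the Boolean normalisation internal to the lemma must cleanly partition the current matrix into a part that mentions only variables from $\vy_1\cup\ldots\cup\vy_{\idx(x_j)-1}\cup\{x_j\}\cup U_j$ and a part that avoids $\{x_j\}\cup U_j$ altogether. That this partition exists is exactly what is guaranteed by the atom condition~(d) of Definition~\ref{definition:SAFaxiomatic:alternative}. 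Once that bookkeeping is set up, the remaining steps are routine applications of quantifier shifting and Boolean distributivity.
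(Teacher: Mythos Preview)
Your proposal is correct and follows essentially the same route as the paper's proof: iterate Lemma~\ref{lemma:SAFquantifierShifting} from the innermost quantifier outward, then pull the leading $Y$-existentials to the front, treat the $\forall$-blocks as units, and normalise into a conjunction of disjunctions. One small slip: after extracting the $Y$-existentials, the remaining $\forall x_j$-subformulas need not yet have the shape $\forall x_j\,\exists\vu.\,\chi$ with quantifier-free $\chi$; the $U_j$-existentials may still be scattered inside the Boolean structure of the $\forall$-scope, and you must prenex them outward (the paper does this explicitly as its final step). This is routine, so your argument goes through unchanged once you insert that step.
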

\begin{proof}
	Let $\cQ_1 v_1 \cQ_2 v_2$ be the two rightmost quantifiers in $\varphi$'s quantifier prefix.
	It is easy to check that $\cQ_2 v_2.\, \psi$ satisfies the prerequisites of Lemma~\ref{lemma:SAFquantifierShifting} and, hence, can be transformed into an equivalent formula $\psi'$ in accordance with that same lemma.
	Then, it is again easy to verify that $\cQ_1 v_1.\, \psi'$ satisfies the lemma again.
	Proceeding further, quantifier by quantifier, we in the end obtain a formula in which no subformula lies within the scope of two distinct universal quantifiers.
	We may restore the property that no two quantifiers in $\varphi'$ bind the same variables by appropriately renaming bound variables in $\varphi'$.
	Hence, we can effectively construct a sentence $\varphi'$ that is equivalent to $\varphi$ and that does not contain any nested occurrences of universal quantifiers.
	
	We now transform $\varphi'$ into a sentence $\varphi''$ in SAF special form.
	First, we shift all existential quantifiers in $\varphi'$ to the front that do not lie within the scope of any universal quantifier.
	In the resulting sentence $\exists \vz.\, \psi'$ every existential quantifier in $\psi'$ lies within the scope of exactly one universal quantifier.
	We treat every subformula of the form $\forall x.\, \chi$ in $\psi'$ as indivisible unit while transforming $\psi'$ into an equivalent conjunction of disjunctions of literals and such indivisible units.
	The resulting formula can be brought into the desired shape by shifting existential quantifiers that lie in the scope of a universal quantifier outwards until they form an existential quantifier block directly right of the corresponding universal quantifier.
\end{proof}
It is interesting to note that a sentence in SAF special form is not merely a Boolean combination of AF sentences.
The difference is that distinct subformulas $\forall x \exists \vy.\, \chi$ and $\forall x' \exists \vy'.\, \chi'$ may share existentially quantified variables.
However, one can show that every such sentence is indeed equivalent to some AF sentence.
Therefore, every SAF sentence is equivalent to an AF sentence.
Before we make this claim precise (cf.\ Lemma~\ref{lemma:SAFTransformationIntoAckermann}), we develop an auxiliary result that we will reuse later.

%%%%%%%%%%%%%%%%%%%%%%%%%%%%%%%%%%%%%%%%%%%%%%%%%%%%%%%%%
\begin{lemma}\label{lemma:ForallBehindOr}
	Let $\psi(\vz)$ be a first-order formula of the form $\psi(\vz) := \bigvee_j \forall \vx\, \exists \vy.\, \chi_j(\vz, \vx, \vy)$ with quantifier-free $\chi_j(\vz, \vx, \vy)$.
	Then, $\psi(\vz)$ is equivalent to the formula $\psi'(\vz)$ of the form
			\[ \exists \vv_1 \ldots \vv_q \exists \vy_1 \ldots \vy_q.\, \Bigl( \bigvee_j \bigwedge_{k = 1}^q \chi_j(\vz, \vv_k, \vy_k) \Bigr) 
						\wedge \forall \vx \exists \vy.\, \bigvee_{k = 1}^q \bigwedge_{A \in \At} \bigl( A(\vz, \vx, \vy) \leftrightarrow A(\vz, \vv_k, \vy_k) \bigr) ~, \]	
			where $\At$ denotes the set of all atoms occurring in $\psi(\vz)$ and $q := 2^{|\At|}$.
			In addition, we have $|\vv_k| = |\vx|$ for every $k$ and $|\vy_\ell| = |\vy|$ for every $\ell$.
\end{lemma}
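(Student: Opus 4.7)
The plan is to interpret the tuples $\vv_1,\ldots,\vv_q,\vy_1,\ldots,\vy_q$ as representatives of the distinct \emph{atomic types} over $\At$ that can be realized by pairs $(\vx,\vy)$. Since $|\At|$ atoms yield at most $q=2^{|\At|}$ Boolean truth assignments, a list of length $q$ always suffices (with padding by duplicates if necessary). The equivalence is then a matter of spelling out two implications.

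For the direction $\psi(\vz) \models \psi'(\vz)$, I would fix a structure $\cA$ and an assignment of $\vz$ such that $\cA\models\psi(\vz)$, and pick some $j^*$ with $\cA\models\forall\vx\exists\vy.\,\chi_{j^*}(\vz,\vx,\vy)$. I then enumerate every atomic $\At$-type that is realized by at least one pair $(\va,\vb)$ with $\cA\models\chi_{j^*}(\vz,\va,\vb)$, and for each such type pick one witnessing pair as the representative $(\vv_k,\vy_k)$; this gives at most $q$ representatives, which I pad out to exactly $q$ by duplication. Taking $j=j^*$ in the outer disjunction of the first conjunct of $\psi'(\vz)$ makes $\bigwedge_{k}\chi_{j^*}(\vz,\vv_k,\vy_k)$ true by construction. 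For the second conjunct, given any $\vx$, the assumption yields a $\vy$ with $\chi_{j^*}(\vz,\vx,\vy)$, whose $\At$-type coincides with that of some representative $(\vv_k,\vy_k)$, so $\bigwedge_{A\in\At}\bigl(A(\vz,\vx,\vy)\leftrightarrow A(\vz,\vv_k,\vy_k)\bigr)$ holds.

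For the converse $\psi'(\vz) \models \psi(\vz)$, I would fix witnesses $\vv_1,\ldots,\vv_q,\vy_1,\ldots,\vy_q$ for the existential prefix of $\psi'$, pick the index $j^*$ supplied by the outer disjunction in the first conjunct, and aim to show $\forall\vx\exists\vy.\,\chi_{j^*}(\vz,\vx,\vy)$. Given an arbitrary $\vx$, the second conjunct of $\psi'$ provides some $\vy$ and some index $k$ such that $(\vx,\vy)$ and $(\vv_k,\vy_k)$ agree on every atom in $\At$. Since $\chi_{j^*}$ is a Boolean combination of atoms drawn from $\At$, this agreement transports the truth of $\chi_{j^*}(\vz,\vv_k,\vy_k)$ (from the first conjunct with this $j^*$) to $\chi_{j^*}(\vz,\vx,\vy)$, which is what is needed.

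The only subtlety, and the step that needs to be stated carefully rather than being technically hard, is the observation that $\At$ is taken to contain every atom occurring anywhere in $\psi(\vz)$, i.e.\ in any $\chi_j$ simultaneously; this is what makes the $\At$-type of a pair $(\vx,\vy)$ rich enough to determine the truth value of the single formula $\chi_{j^*}$ used in the converse direction. The rest is just bookkeeping on the quantifier structure, and no new ideas beyond the enumeration of Boolean types are required.
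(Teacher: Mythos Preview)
Your proposal is correct and follows essentially the same approach as the paper: enumerate the realized $\At$-types of pairs $(\vx,\vy)$ satisfying $\chi_{j^*}$, pick one representative per type, pad to length $q$, and for the converse use that $\chi_{j^*}$ is a Boolean combination of atoms from $\At$ so that agreement on the full $\At$-type transports its truth value. The paper organizes the forward direction via explicit sets $\fUD_S$ indexed by $S\subseteq\At$, but the content is identical to your enumeration of realized types.
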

\begin{proof}
		We first prove $\psi(\vz) \models \psi'(\vz)$.
		Let $\cA$ be any structure, $\ve$ be any tuple of elements form $\fUA$, and $j$ be any index such that
		$\cA \models \forall \vx\, \exists \vy.\, \chi_{j}(\ve, \vx, \vy)$.
		For every set $S \subseteq \At$ we define
			\begin{align*} 
				\fUD_S := \bigl\{ \<\va, \vc\,\> \bigm|
					\text{for every $A(\vz, \vx, \vy) \in \At$ we have $\cA \models A(\ve, \va, \vc)$ if and only if $A \in S$} \bigr\} ~.
			\end{align*}	
		We write $S \models \chi_j(\ve, \vx, \vy)$ if $\fUD_S$ is nonempty and if we have $\cA \models \chi_j(\ve, \va, \vc)$ for every tuple $\<\va,\vc\,\>$ in $\fUD_S$.
		Let $S_1, \ldots, S_r$ be an enumeration of all the sets $S_k$ with $S_k \models \chi_j(\ve, \vx, \vy)$.
		Notice that $1 \leq r \leq q$.
		Let $\<\vb_1, \vc_1\,\>, \ldots, \<\vb_r, \vc_r\,\>$ be some sequence with $\<\vb_k, \vc_k\,\> \in \fUD_{S_k}$ for every $k$.
		Then, for every $k$ the assumption $S_k \models \chi_j(\ve, \vx, \vy)$ entails $\cA \models \chi_j(\ve, \vb_k, \vc_k)$.
		Hence, 
			\begin{equation}
				\label{eqn:IfDirection:I}
				\cA \models \exists \vy_1 \ldots \exists \vy_q.\, \bigwedge_{k=1}^r \chi_j(\ve, \vb_k, \vy_k) \;\wedge\; \bigwedge_{k=r+1}^q \chi_j(\ve, \vb_1, \vy_k) ~.
			\end{equation}	
		Let $\va \in \fUA^{|\vx|}$ be any tuple of length $|\vx|$.
		Because of $\cA \models \forall \vx\, \exists \vy.\, \chi_j(\ve, \vx, \vy)$, there is some $S_k$, $1 \leq k \leq r$, and some $\vc$ such that $\<\va, \vc\,\> \in \fUD_{S_k}$ and $S_k \models \chi_j(\ve, \vx, \vy)$.
		Therefore, we get the following for $\<\vb_k, \vc_k\>$:
			\begin{equation}
				\label{eqn:IfDirection:II}
				\cA \models \exists \vy.\, \bigwedge_{A \in \At} \bigl( A(\ve, \va, \vy) \leftrightarrow A(\ve, \vb_k, \vc_k) \bigr) ~. 
			\end{equation}	
		
		Put together, (\ref{eqn:IfDirection:I}) and (\ref{eqn:IfDirection:II}) entail
			\begin{align*}
				\cA \models  
					\exists \vy_1 \ldots \exists \vy_q.\, &\Bigl( \bigvee_j \bigwedge_{k = 1}^r \chi_j(\ve, \vb_k, \vy_k) \;\wedge\; \bigwedge_{k = r+1}^q \chi_j(\ve, \vb_1, \vy_k) \Bigr) \\
					 &\wedge\; \forall \vx\, \exists \vy.\, \bigvee_{k = 1}^r \bigwedge_{A \in \At} \bigl( A(\ve, \vx, \vy) \leftrightarrow A(\ve, \vb_k, \vy_k) \bigl) ~. 
	 		\end{align*}
		This proves $\cA \models \psi'(\ve)$.
		Hence, we have shown that $\cA \models \psi(\ve)$ implies $\cA \models \psi'(\ve)$.
		
		\medskip 
		Next, we show $\psi(\vz)' \models \psi(\vz)$.
		Let $\cA$ be any structure and let $\ve, \vb_1, \ldots, \vb_q, \vc_1, \ldots, \vc_q$ be tuples for which 
			\begin{equation} 				
				\label{eqn:OnlyIfDirectionLeftHandSide}
				\cA \models 
					\Bigl( \bigvee_j \bigwedge_{k = 1}^q \chi_j(\ve, \vb_k, \vc_k) \Bigr) \;\wedge\; \forall \vx\, \exists \vy.\, \bigvee_{k = 1}^q \bigwedge_{A \in \At} \bigl( A(\ve, \vx, \vy) \leftrightarrow A(\ve, \vb_k, \vc_k) \bigr) ~.
			\end{equation}	
		Then, there is some $j$ such that
			$\cA \models
					\bigwedge_{k = 1}^q \chi_j(\ve, \vb_k, \vc_k)$.
		Let $\fUD_1, \ldots, \fUD_q$ be sets defined by 
			\begin{align*}
				\fUD_k := \bigl\{ \va \mid \;
					&\text{there is some $\vc$ such that for every $A(\vz, \vx, \vy) \in \At$} \\
					&\text{we have $\cA \models A(\ve, \va, \vc)$ if and only if $\cA \models A(\ve, \vb_k, \vc_k)$} \bigr\} ~.		
			\end{align*}
		Note that the sets $\fUD_k$ are all nonempty but not necessarily pairwise disjoint.
		Then, because of Assumption (\ref{eqn:OnlyIfDirectionLeftHandSide}), for every $\va \in \fUA^{|\vx|}$ there is some $k$, $1 \leq k \leq q$, such that $\va \in \fUD_k$.
		Because of $\cA \models \chi_j(\ve, \vb_k, \vc_k)$, we therefore have $\cA \models \chi_j(\ve, \va, \vc)$ for some $\vc$.
		In other words, we have $\cA \models \forall \vx\, \exists \vy.\, \chi_j(\ve, \vx, \vy)$ which entails $\cA \models \psi(\ve)$.
		Hence, $\cA \models \psi'(\ve)$ implies $\cA \models \psi(\ve)$.
\end{proof}
Lemma~\ref{lemma:ForallBehindOr} is essential for the second stage in the transformation process between SAF and AF.
With this tool at hand, the following lemma is easy to prove.

%%%%%%%%%%%%%%%%%%%%%%%%%%%%%%%%%%%%%%%%%%%%%%%%%%%%%%%%%
\begin{lemma}\label{lemma:SAFTransformationIntoAckermann}
	For every SAF sentence $\varphi$ we can effectively construct an equivalent sentence $\varphi'$ over the same vocabulary that has the shape $\exists \vv \forall x \exists \vw.\, \psi$ with quantifier-free $\psi$. 
\end{lemma}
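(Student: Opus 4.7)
The plan is to chain together the two preceding technical lemmas. First, I would apply Lemma~\ref{lemma:SAFPreSpecialForm} to rewrite $\varphi$ in SAF special form
\[
    \exists \vz.\, \bigwedge_i \Bigl( \bigvee_j \forall x_{i, j} \exists \vy_{i,j}.\, \chi_{i, j}(\vz, x_{i,j}, \vy_{i,j}) \Bigr) \vee \eta_{i}(\vz) ~.
\]
Since $\eta_i(\vz)$ contains none of the bound variables $x_{i,j}, \vy_{i,j}$, repeated use of Proposition~\ref{lemma:Miniscoping} lets me absorb $\eta_i$ into the body of every disjunct, producing the simpler shape $\exists \vz.\, \bigwedge_i \bigvee_j \forall x_{i,j} \exists \vy_{i,j}.\, \chi'_{i,j}(\vz, x_{i,j}, \vy_{i,j})$. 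After a routine renaming of bound variables (and padding each $\vy_{i,j}$ with dummy existentials so that, for fixed $i$, all tuples in the $j$-indexed disjunction share the same length), the $i$-th inner subformula has exactly the shape demanded by Lemma~\ref{lemma:ForallBehindOr}.

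The key step is to apply Lemma~\ref{lemma:ForallBehindOr} to each such subformula with $\vx$ taken to be the single variable $x_i$. This rewrites $\bigvee_j \forall x_i \exists \vy_i.\, \chi'_{i,j}$ as a formula of the form $\exists \vu_i.\, F_i(\vz, \vu_i) \wedge \forall x_i \exists \vy_i.\, G_i(\vz, \vu_i, x_i, \vy_i)$ with quantifier-free $F_i$ and $G_i$. A further use of Proposition~\ref{lemma:Miniscoping} pulls $F_i$ inside the scope of $\forall x_i \exists \vy_i$, so that each outer disjunction has been replaced by an AF-shaped formula $\exists \vu_i \forall x_i \exists \vy_i.\, F_i \wedge G_i$.

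At this point the whole sentence reads $\exists \vz.\, \bigwedge_i \exists \vu_i \forall x_i \exists \vy_i.\, \zeta_i$. Pulling all leading existentials through the conjunction via Proposition~\ref{lemma:Miniscoping} (bound variables being pairwise disjoint), then renaming every $x_i$ to a common fresh variable $x$ so that the conjoined universal quantifiers can be merged, and finally invoking Proposition~\ref{lemma:Miniscoping} once more to coalesce the blocks $\exists \vy_i$ into a single $\exists \vy_1 \ldots \vy_m$, yields the desired shape $\exists \vv \forall x \exists \vw.\, \psi$ with quantifier-free $\psi$. The whole construction is effective and equivalence-preserving step by step, and the only genuinely non-routine ingredient is the elimination of the outer disjunction over quantified formulas carried out in the second paragraph, which is precisely what Lemma~\ref{lemma:ForallBehindOr} is designed to do; everything else is bookkeeping with the standard quantifier-shifting rules.
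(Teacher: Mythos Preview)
Your proposal is correct and follows essentially the same approach as the paper: transform into SAF special form via Lemma~\ref{lemma:SAFPreSpecialForm}, apply Lemma~\ref{lemma:ForallBehindOr} to each inner disjunction, then prenex. The only cosmetic differences are that you absorb each $\eta_i(\vz)$ into the disjunction beforehand (the paper keeps it outside and handles it during the final prenexing) and that you spell out the last step---renaming all $x_i$ to a common $x$ and merging the universal quantifiers over the conjunction---whereas the paper simply says ``a prenex version yields the sought $\varphi'$, since the universal quantifiers distribute over the topmost conjunction.''
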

\begin{proof}[Proof sketch]
	By virtue of Lemma~\ref{lemma:SAFPreSpecialForm}, we can transform $\varphi$ into an equivalent sentence $\varphi''$ in SAF special form, i.e.\ 
		$\varphi'' = \exists \vz. \bigwedge_i \bigl( \bigvee_j \forall x_{i, j} \exists \vy_{i,j}.\, \chi_{i, j}(\vz, x_{i,j}, \vy_{i,j}) \bigr) \vee \eta_i(\vz)$,
	where the $\chi_{i, j}$ and the $\eta_i$ are quantifier free.	
	Consider any subformula of the form $\psi'(\vz) := \bigvee_j \forall x\, \exists \vy.\, \chi_j(\vz, x, \vy)$.
	By Lemma~\ref{lemma:ForallBehindOr}, $\psi'(\vz)$ is equivalent to some formula $\exists \vv' \vy'.\, \chi'(\vz, \vv', \vy') \wedge \forall x \exists \vy.\, \chi''(\vz, x, \vy, \vv', \vy')$ with quantifier-free $\chi', \chi''$.	
	Hence, $\varphi''$ is equivalent to some sentence that, after shifting quantifiers outwards, is of the form 
		$\exists \vz. \bigwedge_i \bigl( \exists \vu_i \forall x_i \exists \vw_i.\, \psi''_i(\vz, \vu_i, x_i, \vw_i) \bigr) \vee \eta_{i}(\vz)$, 
	where the $\psi''_i$ and the $\eta_{i}$ are quantifier free.
	A prenex version of this sentence yields the sought $\varphi'$, since the universal quantifiers distribute over the topmost conjunction.
\end{proof}

%%%%%%%%%%%%%%%%%%%%%%%%%%%%%%%%%%%%%%%%%%%%%%%%%%%%%%%%%
Notice that the proofs of Lemmas~\ref{lemma:SAFquantifierShifting} to~\ref{lemma:SAFTransformationIntoAckermann} still work in the presence of the equality predicate or function symbols.
Therefore, we obtain the following result.
\begin{theorem}\label{theorem:EquivalenceOfSAFwithExtendedAF}
	Every SAF sentence $\varphi$ is equivalent to some AF sentence $\psi$.
	Moreover, we get the following for relaxed restrictions on the syntax.
	\begin{enumerate}[label=(\alph{*}), ref=(\alph{*})]
		\item \label{enum:EquivalenceOfSAFwithExtendedAF:I}
			 Every SAF sentence with equality is equivalent to some AF sentence with equality.
		\item \label{enum:EquivalenceOfSAFwithExtendedAF:II}
			Every SAF sentence with arbitrary function symbols and without equality is equivalent to some Gurevich--Maslov--Orevkov sentence (cf.\ page~\pageref{para:DefinitionAF}).
		\item \label{enum:EquivalenceOfSAFwithExtendedAF:III}
			Every SAF sentence with equality and with a single unary function symbol is equivalent to some Shelah sentence (cf.\ page~\pageref{para:DefinitionAF}).
	\end{enumerate}
	In addition, constant symbols are admissible in all of the above cases.
\end{theorem}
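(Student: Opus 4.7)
The plan is to observe that the core assertion — every SAF sentence is equivalent to some AF sentence — is essentially already delivered by Lemma~\ref{lemma:SAFTransformationIntoAckermann}, which hands us an equivalent sentence of the desired $\exists^*\forall\exists^*$ prenex shape over the same vocabulary. Consequently, the real work in Theorem~\ref{theorem:EquivalenceOfSAFwithExtendedAF} is to audit the chain of auxiliary results Lemmas~\ref{lemma:SAFquantifierShifting}, \ref{lemma:SAFPreSpecialForm}, \ref{lemma:ForallBehindOr}, \ref{lemma:SAFTransformationIntoAckermann} and to confirm that none of its steps is obstructed by the presence of equality, of arbitrary function symbols, or of a single unary function symbol together with equality.

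First I would inspect Lemmas~\ref{lemma:SAFquantifierShifting} and~\ref{lemma:SAFPreSpecialForm}. The only operations invoked in their proofs are the passage to negation normal form, the distributive laws for $\wedge$ and $\vee$, and the quantifier-shifting equivalences of Proposition~\ref{lemma:Miniscoping}. None of these ever looks inside an atom: the separateness bookkeeping in the hypotheses depends solely on which variables occur in each atom, not on its predicate symbol or on the inner term structure. Hence equality atoms $s \approx t$ and atoms built from arbitrary function terms behave as opaque units, and both lemmas carry over verbatim in each of the three enriched settings.

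The only genuinely load-bearing step under the relaxations is Lemma~\ref{lemma:ForallBehindOr}, because it enumerates subsets $S \subseteq \At$ and argues semantically about the induced atomic types. I would verify that this construction still applies when $\At$ contains equality atoms or atoms with function terms: the sets $\fUD_S$ are defined purely by the truth values of atoms on given tuples, and both directions of the equivalence manipulate these types without ever rewriting a term. Crucially, the witnesses $\vv_k, \vy_k$ introduced by the construction are plain first-order variables living in the same structure $\cA$ over the same vocabulary; no fresh predicate or function symbols are invented. Feeding this back through Lemma~\ref{lemma:SAFTransformationIntoAckermann} then yields an equivalent sentence of shape $\exists \vv \forall x \exists \vw.\, \psi$ whose signature is identical to the input signature, which directly gives parts~\ref{enum:EquivalenceOfSAFwithExtendedAF:I}--\ref{enum:EquivalenceOfSAFwithExtendedAF:III}, since AF with equality, the Gurevich--Maslov--Orevkov fragment, and the Shelah fragment are delimited precisely by permitting, respectively, equality, arbitrary function symbols, and equality together with a single unary function symbol inside an $\exists^*\forall\exists^*$ prenex. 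Constant symbols are tacitly accommodated because they are nullary function symbols that the transformations neither introduce nor eliminate.

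The part I expect to require the most care is~\ref{enum:EquivalenceOfSAFwithExtendedAF:III}, since the Shelah fragment insists on a single unary function symbol. The key reassurance is that the whole transformation chain is purely symbol-preserving: in particular, it does not Skolemize and it does not introduce auxiliary predicates, so as long as the input meets the Shelah signature restriction, so does the output. The only subtle check inside Lemma~\ref{lemma:ForallBehindOr} is that its quantifier prefix contains only existential quantifiers over $\vv_k, \vy_k$ plus one $\forall \vx \exists \vy$ layer, all of which are subsequently merged into the target $\exists^*\forall\exists^*$ shape by the outermost quantifier shifting inside the proof of Lemma~\ref{lemma:SAFTransformationIntoAckermann}.
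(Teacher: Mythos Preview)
Your proposal is correct and matches the paper's approach exactly: the paper's entire justification is the one-sentence remark preceding the theorem, ``Notice that the proofs of Lemmas~\ref{lemma:SAFquantifierShifting} to~\ref{lemma:SAFTransformationIntoAckermann} still work in the presence of the equality predicate or function symbols.'' Your audit of why each lemma survives the syntactic relaxations (atoms treated as opaque units, no Skolemization, no fresh symbols, only Boolean laws and quantifier shifting) is in fact more detailed than what the paper itself provides.
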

Since AF possesses the finite model property, so does SAF, even in the first two syntactically extended cases mentioned in Theorem~\ref{theorem:EquivalenceOfSAFwithExtendedAF}.
On the other hand, it is known that the Shelah fragment contains infinity axioms.
One example is the sentence $\forall x \exists y.\, f(f(y)) \approx f(x) \;\wedge\; f(y) \not\approx x$ (\cite{Borger1997}, proof of Proposition~6.5.5).
Still, the satisfiability problem for the Shelah fragment is known to be decidable (cf.~\cite{Borger1997}, Section~7.3).
Therefore, we get the following positive results regarding the decidability of SAF-Sat.

%%%%%%%%%%%%%%%%%%%%%%%%%%%%%%%%%%%%%%%%%%%%%%%%%%%%%%%%%
\begin{corollary}\label{corollary:DecidabilityOfSAF}
	SAF-Sat is decidable, even in the syntactically more liberal cases given in Theorem~\ref{theorem:EquivalenceOfSAFwithExtendedAF}.
	The syntactic extensions of SAF described in items~\ref{enum:EquivalenceOfSAFwithExtendedAF:I} and~\ref{enum:EquivalenceOfSAFwithExtendedAF:II} of Theorem~\ref{theorem:EquivalenceOfSAFwithExtendedAF} enjoy the finite model property.
\end{corollary}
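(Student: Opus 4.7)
The plan is to derive the corollary as a direct consequence of Theorem~\ref{theorem:EquivalenceOfSAFwithExtendedAF} together with the known decidability results for the four classical target fragments: AF, AF with equality, the Gurevich--Maslov--Orevkov fragment, and the Shelah fragment. There is essentially no new content to prove; the task is only to invoke each known positive result in the correct case and to observe that both decidability and (where applicable) the finite model property are preserved under semantic equivalence.

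Concretely, I would organize the argument by cases matching the four cases of Theorem~\ref{theorem:EquivalenceOfSAFwithExtendedAF}. For a plain SAF sentence $\varphi$, apply the theorem to obtain an equivalent AF sentence $\psi$, and invoke the decidability of AF-Sat (recalled at the beginning of Section~\ref{section:GeneralizedAF}, in fact \EXPTIME-complete); since AF enjoys the finite model property, any model of $\psi$ (hence of $\varphi$) can be chosen finite. The case of SAF with equality is handled identically via item~\ref{enum:EquivalenceOfSAFwithExtendedAF:I}, using the decidability and finite model property of AF with equality established in~\cite{Dreben1979}. The case of SAF with arbitrary function symbols but without equality is handled via item~\ref{enum:EquivalenceOfSAFwithExtendedAF:II}, using Gurevich's decidability and finite model property results for the Gurevich--Maslov--Orevkov fragment cited in Section~\ref{section:GeneralizedAF}. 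Finally, the case of SAF with equality and a single unary function symbol is handled via item~\ref{enum:EquivalenceOfSAFwithExtendedAF:III}, using the decidability of the Shelah fragment (cf.~\cite{Borger1997}, Section~7.3); here I would explicitly refrain from claiming the finite model property, since the Shelah fragment is known to contain infinity axioms (as noted in the paragraph preceding the corollary).

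The only point requiring a short explicit remark is the preservation of the finite model property under the semantic equivalences supplied by Theorem~\ref{theorem:EquivalenceOfSAFwithExtendedAF}: if $\varphi \semequiv \psi$ and $\psi$ has some finite model $\cA$, then $\cA$ is a finite model of $\varphi$ as well, so the finite model property trivially transfers in the first three cases. I do not expect any real obstacle in this proof; the whole point of the technical work leading up to Theorem~\ref{theorem:EquivalenceOfSAFwithExtendedAF} (Lemmas~\ref{lemma:SAFquantifierShifting}, \ref{lemma:SAFPreSpecialForm}, \ref{lemma:ForallBehindOr}, and~\ref{lemma:SAFTransformationIntoAckermann}) is precisely to reduce the decision problem for SAF and its extensions to the corresponding classical ones, and the corollary simply records the resulting consequences.
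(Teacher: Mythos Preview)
Your proposal is correct and matches the paper's approach exactly: the paper does not give a separate proof environment for this corollary but justifies it in the paragraph immediately preceding it, by pointing out that the equivalence-preserving translations of Theorem~\ref{theorem:EquivalenceOfSAFwithExtendedAF} reduce SAF-Sat (and its extensions) to the known decidable target fragments, and that the finite model property of AF, AF with equality, and the Gurevich--Maslov--Orevkov fragment therefore carries over, while the Shelah case only yields decidability. Your case-by-case breakdown and your remark on the trivial preservation of the finite model property under $\semequiv$ are exactly the implicit reasoning the paper relies on.
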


\begin{remark}
	The L\"ob--Gurevich fragment (cf.\ page~\pageref{para:DefinitionMFO}) is subsumed by SAF when we in addition allow unary function symbols.
	By Lemma~\ref{lemma:SAFTransformationIntoAckermann}, every such sentence is equivalent to some $\exists^* \forall \exists^*$-sentence over the same vocabulary.
	The latter kind of sentences constitutes a subclass of the Gurevich--Maslov--Orevkov fragment.
	Hence, Lemmas~\ref{lemma:SAFquantifierShifting} to~\ref{lemma:SAFTransformationIntoAckermann} also establish a translation from the L\"ob--Gurevich fragment to the Gurevich--Maslov--Orevkov fragment.
\end{remark}
At this point we have settled the question concerning decidability of SAF-Sat, also under certain syntactic extensions.
In fact, decidability of SAF-Sat without any syntactic extensions is already a corollary of the decidability of the satisfiability problem for Maslov's fragment~K~\cite{Maslov1968, Hustadt1999}.
The reaon is that the latter syntactically subsumes SAF.

%%%%%%%%%%%%%%%%%%%%%%%%%%%%%%%%%%%%%%%%
\begin{definition}[Maslov's fragment K, adapted from~\cite{Hustadt1999}]\label{definition:MaslovsFragmentK}
	Let $\varphi$ be any relational sentence in negation normal form and let $\psi(u_1, \ldots, u_m)$ be any subformula of $\varphi$ in which $u_1, \ldots, u_m$ are exactly the freely occurring variables.
	The \emph{$\varphi$-prefix of $\psi$} is the sequence $\cQ_1 v_1 \ldots \cQ_m v_m$ of quantifiers in $\varphi$ (read from left to right) that bind the free variables of $\psi$, in particular, we have $\{v_1, \ldots, v_m\} = \{u_1, \ldots, u_m\}$.
	The \emph{terminal $\varphi$-prefix of $\psi$} is the longest contiguous suffix of $\cQ_1 v_1 \ldots \cQ_m v_m$ starting with a universal quantifier.
	Put differently, if $\cQ_1 v_1 \ldots \cQ_m v_m$ is of the form $\exists v_1 \ldots v_k \forall v_{k+1} \cQ_{k+2} v_{k+2} \ldots \cQ_m v_m$, then the terminal $\varphi$-prefix of $\psi$ is $\forall v_{k+1} \cQ_{k+2} v_{k+2} \ldots \cQ_m v_m$.
	Notice that the terminal prefix may be empty.
	The sentence $\varphi$ belongs to \emph{Maslov's fragment K} if there are $k \geq 0$ universal quantifiers $\forall x_1, \ldots, \forall x_k$ in $\varphi$ that are not interspersed with existential quantifiers such that for every atom $A$ in $\varphi$ the terminal $\varphi$-prefix of $A$ either
	(a) is at most of length one, or
	(b) ends with an existential quantifier, or
	(c) is of the form $\forall x_1 \ldots \forall x_k$.
\end{definition}

\begin{proposition}\label{proposition:MaslovsKcontainsSAF}
	SAF is contained in Maslov's fragment K.
\end{proposition}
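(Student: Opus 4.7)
The plan is to exhibit the constant $k = 0$ as a witness in the definition of Maslov's fragment K and to verify the case analysis on atoms directly from the structural conditions on SAF sentences. First I would bring the SAF sentence $\varphi := \forall \vx_1 \exists \vy_1 \vu_1 \ldots \forall \vx_n \exists \vy_n \vu_n.\, \psi$ into negation normal form by rewriting the quantifier-free matrix $\psi$ via de Morgan's laws; this preserves the quantifier prefix and leaves the set of atoms in $\varphi$ unchanged up to negation, so the prerequisite of Definition~\ref{definition:MaslovsFragmentK} is met while the SAF conditions from Definition~\ref{definition:SAFaxiomatic:alternative} still apply.

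Next I would choose $k = 0$ (the empty block $\forall x_1 \ldots \forall x_k$) and check that for every atom $A$ occurring in $\varphi$ the terminal $\varphi$-prefix of $A$ satisfies either condition (a) (length at most one) or condition (b) (ending with an existential quantifier). To this end I would consider the two possibilities from clause (d) of Definition~\ref{definition:SAFaxiomatic:alternative}. If $\vars(A) \subseteq Y$, then every free variable of $A$ is bound by an existential quantifier, so the $\varphi$-prefix of $A$ contains no universal quantifier and its terminal $\varphi$-prefix is empty; condition (a) applies. Otherwise there is some $j$ with $\vars(A) \subseteq \vy_1 \cup \ldots \cup \vy_{\idx(x_j)-1} \cup \{x_j\} \cup U_j$; here I would use clause~(c) of Definition~\ref{definition:SAFaxiomatic:alternative}, which guarantees $\idx(u) \geq \idx(x_j)$ for every $u \in U_j$, to conclude that in the $\varphi$-prefix of $A$ all quantifiers binding $\vy$-variables sit to the left of $\forall x_j$, whereas all quantifiers binding $U_j$-variables (which are existential) sit to the right of $\forall x_j$.

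The rest is a straightforward case split within this second possibility. If $x_j \notin \vars(A)$, the $\varphi$-prefix is purely existential and the terminal $\varphi$-prefix is empty, giving condition (a). If $x_j \in \vars(A)$ but no variable from $U_j$ occurs in $A$, then the terminal $\varphi$-prefix is just $\forall x_j$, which has length one, giving condition (a) again. Finally, if $x_j \in \vars(A)$ and at least one $U_j$-variable occurs in $A$, then the terminal $\varphi$-prefix is of the form $\forall x_j \exists u_{i_1} \ldots \exists u_{i_r}$ and thus ends with an existential quantifier, so condition (b) applies. In every case the requirements of Definition~\ref{definition:MaslovsFragmentK} are met with $k = 0$, so $\varphi$ lies in Maslov's fragment K.

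I do not expect any substantial obstacle: the main conceptual point is simply that the separateness condition of SAF, together with the index constraint (c), forces at most one universal quantifier (namely $\forall x_j$) to appear in the $\varphi$-prefix of any single atom, and any existentials to its right stem only from $U_j$. The only care needed is in parsing the definition of terminal $\varphi$-prefix correctly and in handling the corner cases where $x_j$ is absent from $\vars(A)$ or where $U_j$ contributes no variables to $A$.
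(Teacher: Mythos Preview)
Your proposal is correct and follows essentially the same approach as the paper's proof: both arguments observe that the terminal $\varphi$-prefix of any atom is either empty, or of the form $\forall x_j$ alone, or of the form $\forall x_j$ followed by existential quantifiers from $U_j$, and hence falls under condition (a) or (b) of Definition~\ref{definition:MaslovsFragmentK}. Your version spells out the case analysis and the explicit choice $k=0$ more carefully than the paper, which leaves these points implicit.
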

\begin{proof}
	Let $\varphi := \forall \vx_1 \exists \vy_1 \vu_1 \ldots \forall \vx_n \exists \vy_n  \vu_n.\, \psi$ be any SAF sentence with quantifier-free $\psi$.
	Recall that $\varphi$ is relational and does not contain equality.
	In the terminology of the definition of Maslov's fragment~K, the terminal $\varphi$-prefix of any atom $A$ in $\varphi$ is either empty or it is a subsequence of $\forall x_j \exists \bigl( \vu_{\idx(x)} \cap U_j \bigr) \ldots \bigl( \vu_n \cap U_j \bigr)$ for some $j$.
	Therefore, the terminal $\varphi$-prefix of $A$ either is either empty or it ends with an existential quantifier or it is of length one.
\end{proof}
Of course, Proposition~\ref{proposition:MaslovsKcontainsSAF} fails for any extensions of SAF with either equality or non-constant function symbols.
We will see in the next section, how SAF can be extended in such a way that we obtain a generalization of the G\"odel--Kalm\'ar--Sch\"utte fragment (GKS).
Although GKS is syntactically contained in Maslov's fragment~K as well, its extension SGKS is not contained (cf.\ Proposition~\ref{proposition:SGKSisIncomparableToMaslovsK}).

We have not yet given any lower bounds on the blowup that we incur when translating SAF sentences into AF.
However, the computational complexity of AF-Sat (in \EXPTIME, cf.~\cite{Borger1997}, Theorems~6.3.26 and~6.3.1) and MFO-Sat (\NEXPTIME-complete, cf.~\cite{Borger1997}, Theorem~6.2.13) provide some evidence that this blowup is at least exponential.
Since MFO is a subfragment of SAF, this entails the following conditional lower bound.
\begin{proposition}\label{proposition:SuccinctnessGapForSAFvsAF}
	In the worst case, there is at least a super-polynomial blowup in formula length when translating SAF sentences into equivalent AF sentences in a uniform algorithmic way, unless $\EXPTIME = \NEXPTIME$
\end{proposition}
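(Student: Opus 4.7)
The plan is to derive the lower bound by a standard complexity-theoretic reduction argument. The key observation is Proposition~\ref{proposition:InclusionSAF}, which tells us that every MFO sentence is (syntactically) already an SAF sentence, together with the known complexity bounds recalled in the paper: AF-Sat lies in $\EXPTIME$ while MFO-Sat is $\NEXPTIME$-complete.

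First, I would argue by contraposition. Assume for contradiction that there exists a uniform algorithmic translation $T$ from SAF into equivalent AF sentences whose output length is bounded by some polynomial $p(\len(\varphi))$, and which runs in time polynomial in $\len(\varphi)$. (Uniformity is what makes the assumed blowup meaningful; otherwise the statement has no computational content.) Given any MFO sentence $\varphi$, we first bring $\varphi$ into prenex normal form, which causes only a linear blowup and keeps it in MFO and hence in SAF by Proposition~\ref{proposition:InclusionSAF}. Then we compute $T(\varphi)$, an equivalent AF sentence of length at most $p(\len(\varphi))$, in polynomial time. Finally, we decide satisfiability of $T(\varphi)$ in time exponential in $p(\len(\varphi))$, hence exponential in $\len(\varphi)$, using the $\EXPTIME$ algorithm for AF-Sat.

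The composition of these three steps is a deterministic algorithm deciding MFO-Sat in time $2^{q(\len(\varphi))}$ for some polynomial $q$, i.e.\ an $\EXPTIME$ decision procedure for MFO-Sat. But MFO-Sat is $\NEXPTIME$-complete, so this would imply $\NEXPTIME \subseteq \EXPTIME$ and thus $\EXPTIME = \NEXPTIME$, contradicting the assumption. Consequently, no uniform translation from SAF to AF can have polynomial blowup unless $\EXPTIME = \NEXPTIME$, which is precisely a super-polynomial conditional lower bound.

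I do not anticipate a real obstacle here; the proof is essentially a three-line complexity argument. The only subtlety worth spelling out is the meaning of \emph{uniform algorithmic}: the statement must be formulated in terms of an algorithm producing the equivalent AF sentence, since otherwise nonconstructive or exponentially slow translations with polynomial output length would vacuously be allowed and the complexity-theoretic conclusion could not be drawn. Provided one reads the hypothesis this way, the result follows immediately by combining Proposition~\ref{proposition:InclusionSAF} with the complexity bounds for MFO-Sat and AF-Sat cited above.
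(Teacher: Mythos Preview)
Your argument is correct and follows essentially the same approach as the paper: the paper also derives the conditional lower bound directly from the inclusion $\text{MFO} \subseteq \text{SAF}$ (Proposition~\ref{proposition:InclusionSAF}) combined with the known complexity bounds AF-Sat~$\in \EXPTIME$ and MFO-Sat being $\NEXPTIME$-complete. Your discussion of the uniformity assumption makes explicit a point that the paper leaves implicit in the phrase ``uniform algorithmic way.''
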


%%%%%%%%%%%%%%%%%%%%%%%%%%%%%%%%%%%%%%%%%%%%%%%%%%%%
%%%%%%%%%%%%%%%%%%%%%%%%%%%%%%%%%%%%%%%%%%%%%%%%%%%%
%%%%%%%%%%%%%%%%%%%%%%%%%%%%%%%%%%%%%%%%%%%%%%%%%%%%
%%%%%%%%%%%%%%%%%%%%%%%%%%%%%%%%%%%%%%%%%%%%%%%%%%%%

\section{The Separated G\"odel--Kalm\'ar--Sch\"utte Fragment (SGKS)}
\label{section:generalizedGKS}

The \emph{G\"odel--Kalm\'ar--Sch\"utte fragment (GKS)} comprises all relational first-order sentences in prenex normal form with an $\exists^*\forall\forall\exists^*$ quantifier prefix and without equality.
G\"odel~\cite{Godel1932, Godel1933}, Kalm\'ar~\cite{Kalmar1933}, and Sch\"utte~\cite{Schutte1934a, Schutte1934b} independently showed that the satisfiability problem for GKS is decidable.
G\"odel and Kalm\'ar established the finite model property.
A probabilistic proof was later given by Gurevich and Shelah~\cite{Gurevich1983}, see also Section~6.2.3 in~\cite{Borger1997}.
Although G\"odel claimed that his proof methods could also be applied for GKS sentences with equality, Goldfarb refuted this claim~\cite{Goldfarb1984a}.
However, decidable subclasses are known, e.g.\ the syntactic subfragments described in~\cite{Goldfarb1984b} and in~\cite{Wirsing1976}, Section~12.
A decidable subclass described in semantic terms is mentioned in Section~6.2.3 in~\cite{Borger1997} --- AF with equality falls into this category, for instance.
Satisfiability for GKS is \NEXPTIME-complete (\cite{Borger1997}, Section~6.2).

%%%%%%%%%%%%%%%%%%%%%%%%%%%%%%%%%%%%%%%%%%%%%%%%%%%%%%%%%%%%%%
It is only a tiny step from the Ackermann fragment to the G\"odel--Kalm\'ar--Sch\"utte fragment: simply allow two consecutive universal quantifiers in the quantifier prefix instead of only one.
If one views the definition of SAF from the right angle, it is a similarly small step to go from SAF to a separated generalization of GKS, which we will call the \emph{separated G\"odel--Kalm\'ar--Sch\"utte fragment (SGKS)}.
Intuitively speaking, an SGKS sentence is of the form $\varphi := \forall \vx_1 \exists \vy_1 \vu_1 \ldots \forall \vx_n \exists \vy_n \vu_n.\, \psi$ with quantifier-free $\psi$ and satisfies the following properties.
Each atom in $\varphi$ contains only variables from some subsequence of $\varphi$'s quantifier prefix of the form $\exists^* \forall \forall \exists^*$.
We allow only fixed pairs of universally quantified variables to co-occur in atoms.
Any two atoms that are associated with the same pair have the same $\exists^* \forall \forall \exists^*$-subsequence as source of all their variables.
The same applies to any two atoms that share some variable from the trailing $\exists^*$-block of their respective quantifier subsequence.
We use the same notion of \emph{index of a variable} like in Definition~\ref{definition:SAFaxiomatic:alternative} (cf.\ page~\pageref{page:IndexOfVariables}).
\begin{definition}[Separated G\"odel--Kalm\'ar--Sch\"utte fragment (SGKS)]\label{definition:SGKSaxiomatic:alternative}
	Let $Y, X, U_1, U_2,$ $U_3, \ldots$ be pair\-wise-disjoint, countable sets of first-order variables.
	The \emph{separated G\"odel--Kalm\'ar--Sch\"utte fragment (SGKS)} comprises all relational first-order sentences without equality having the shape
		$\varphi := \forall \vx_1 \exists \vy_1 \vu_1 \ldots \forall \vx_n \exists \vy_n \vu_n.\, \psi$ 
	with quantifier-free $\psi$ where the following conditions are satisfied.
	(a) $\vx_i \subseteq X$, $\vy_i \subseteq Y$, and $\vu_i \subseteq U_1 \cup U_2 \cup U_3 \cup \ldots$ for every $i$,
	(b) $\{ x_1, x'_1 \}, \{x_2, x'_2\}, \ldots$ is a partition of $X$ into sets of one or two variables each ($x_j = x'_j$ is allowed) such that for every $j$ we have $\idx(x_j) \leq \idx(x'_j)$,
	(c) for every $j$ and every $u \in U_j$ we have $\idx(u) \geq \idx(x'_j)$, and
	(d) for every atom $A$ in $\varphi$ 
		either 
			(d.1) $\vars(A) \subseteq Y$
		or (d.2) there exists some $j$ such that
			$\vars(A) \subseteq \vy_1 \cup \ldots \cup \vy_{\idx(x_j)-1} \cup \{x_j, x'_j\} \cup U_j$.

	(Notice that this entails separateness of the sets $\{x_1, x'_1\} \cup U_1, \{x_2, x'_2\} \cup U_2, \{x_3, x'_3\} \cup U_3, \ldots$ in $\varphi$ and every atom $A$ in $\varphi$ either contains exclusively variables from $Y$ or there is some $j$ such that $\vars(A) \subseteq \vy_1 \cup \ldots \cup \vy_{\idx(x_j) - 1} \cup \{x_j, x'_j\} \cup \vu_{\idx(x'_j)} \cup \ldots \cup \vu_n$.)
\end{definition}

Like for the other fragments the quantifier prefix of SGKS sentences does not have to start with a universal quantifier and it does not have to end with an existential quantifier either. 
Another analogy to SAF sentences is that every variable $u \in \vu_1 \cup \ldots \cup \vu_n$ that occurs in $\varphi$ is associated with a set $\{x_i, x'_i\} \subseteq X$ containing at least one and at most two \emph{reference variables} determined by the set $U_i$ in which $u$ occurs.
Intuitively speaking, like in the case of SAF, any quantifier $\exists u$ with $u \in \vu_1 \cup \ldots \cup \vu_n$ can be shifted out of the scope of any universal quantifier that does not bind one of $u$'s reference variables.

%%%%%%%%%%%%%%%%%%%%%%%%%%%%%%%%%%%%%%%%%%%%%%%%%%%%%%%%%%%%%%%%
Like for SAF and SBSR, deciding membership in SGKS for a given first-order sentence can be done deterministically in polynomial time (see~\cite{Voigt2019PhDthesis}, Theorem~3.9.2 for details).

%%%%%%%%%%%%%%%%%%%%%%%%%%%%%%%%%%%%%%%%%%%%%%%%%%%%%%%%%%%%%%%%
SGKS obviously contains sentences that SAF does not, e.g.\ $\forall x_1 x_2.\, P(x_1, x_2)$.
It is also easy to see that SGKS is an extension of SAF: if we restrict the sets $\{x_i, x'_i\}$ to singleton sets, then we essentially obtain SAF.
Hence, also AF and MFO are subsets of SGKS.
Finally, consider any GKS sentence $\exists \vy \forall x x' \exists \vu.\, \psi$ with quantifier-free $\psi$.
We define $Y := \vy$, $X := \{x, x'\}$, $U_1 = \emptyset$, $U_2 := \vu$, and $U_i := \emptyset$ for every $i \geq 3$.
Then, the sets $Y, X, U_1, U_2, \ldots$ satisfy the conditions of Definition~\ref{definition:SGKSaxiomatic:alternative} and thus witness that $\varphi$ belongs to SGKS.
\begin{proposition}\label{proposition:InclusionSGKS}
	SGKS properly contains GKS, SAF, AF, and MFO.
\end{proposition}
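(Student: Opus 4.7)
The plan is to verify the four inclusions separately and then exhibit witnesses establishing properness of each containment. The inclusions have essentially been sketched in the paragraph preceding the statement, so I would only need to check the technical conditions of Definition~\ref{definition:SGKSaxiomatic:alternative} in detail.

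For SAF $\subseteq$ SGKS, given an SAF partition $Y, X, U_1, U_2, \ldots$ witnessing membership of $\varphi$ in SAF, I would promote it to an SGKS partition by declaring each element of $X = \{x_1, x_2, \ldots\}$ to form its own degenerate pair, that is, $x_j = x'_j$ for every $j$. Under this choice every clause of Definition~\ref{definition:SGKSaxiomatic:alternative} reduces literally to the corresponding clause of Definition~\ref{definition:SAFaxiomatic:alternative}: condition (b) is trivial because each pair is a singleton; condition (c) reads $\idx(u) \geq \idx(x'_j) = \idx(x_j)$, which is exactly the SAF condition; and condition (d.2) becomes $\vars(A) \subseteq \vy_1 \cup \ldots \cup \vy_{\idx(x_j)-1} \cup \{x_j\} \cup U_j$, again matching SAF. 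For GKS $\subseteq$ SGKS, the explicit partition mentioned in the paragraph preceding the proposition (single pair $\{x_1, x'_1\} := \{x, x'\}$ together with all of $\vu$ assigned to one $U_j$) works directly. Since Proposition~\ref{proposition:InclusionSAF} gives AF $\subseteq$ SAF and MFO $\subseteq$ SAF, transitivity dispatches these two remaining inclusions.

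For strictness, the sentence $\forall x_1 x_2.\, P(x_1, x_2)$ belongs to SGKS (take the single pair $\{x_1, x_2\}$) but lies outside SAF, AF, and MFO, since SAF and AF require each atom to contain at most one universally quantified variable while MFO restricts to unary predicates. To separate SGKS from GKS, a sentence such as $\forall x_1 x_2 \exists y \forall x_3 x_4.\, P(x_1, x_2) \wedge Q(y) \wedge R(x_3, x_4)$ suffices: the pairs $\{x_1, x_2\}$ and $\{x_3, x_4\}$ together with $Y = \{y\}$ and empty $U_j$ meet the SGKS conditions, but the quantifier prefix $\forall\forall\exists\forall\forall$ is not of the form $\exists^*\forall\forall\exists^*$.

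There is no real obstacle here; the entire argument amounts to unfolding the definitions. The only point deserving mild care is verifying condition (c) of Definition~\ref{definition:SGKSaxiomatic:alternative} in each embedding, but in both cases it is immediate because every existentially quantified variable of interest lies to the right of the corresponding universal block in the quantifier prefix.
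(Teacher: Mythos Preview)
Your proposal is correct and follows essentially the same approach as the paper, which gives the argument informally in the paragraph preceding the proposition rather than as a formal proof. Your explicit witness for the strictness of GKS $\subsetneq$ SGKS is a detail the paper leaves implicit; note that strictness also follows immediately from MFO $\subseteq$ SGKS together with MFO $\not\subseteq$ GKS (e.g.\ $\forall x \exists y \forall z.\, P(x) \wedge Q(y) \wedge R(z)$).
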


%%%%%%%%%%%%%%%%%%%%%%%%%%%%%%%%%%%%%%%%%%%%%%%%%%%%%%%%%%%%%%%%
In the previous section we have seen that SAF is contained in Maslov's fragment~K.
We will see now that we have left the realm of the latter class when going from SAF to SGKS.
\begin{proposition}\label{proposition:SGKSisIncomparableToMaslovsK}
	SGKS and Maslov's fragment~K are syntactically incomparable.
\end{proposition}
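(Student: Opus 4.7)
The plan is to establish incomparability by exhibiting one SGKS sentence that fails to lie in Maslov's K, and one K-sentence that fails to lie in SGKS.

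For the direction SGKS $\not\subseteq$ K, I would take the sentence
\[ \varphi_1 \;:=\; \forall v_1 v_2 v_3 v_4.\; P(v_1, v_2) \wedge Q(v_3, v_4) \;.\]
Membership in SGKS is witnessed by taking $n=1$, $\vx_1 = (v_1,v_2,v_3,v_4)$, $\vy_1 = \vu_1 = \emptyset$, and partitioning $X$ into the pairs $\{v_1,v_2\}$ and $\{v_3,v_4\}$; both atoms then clearly satisfy condition (d.2) of Definition~\ref{definition:SGKSaxiomatic:alternative}. To rule out membership in K, I would observe that the terminal $\varphi_1$-prefix of $P(v_1,v_2)$ is $\forall v_1 \forall v_2$ and that of $Q(v_3,v_4)$ is $\forall v_3 \forall v_4$. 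A short case split over the number $k$ of uninterspersed universal quantifiers witnessing Definition~\ref{definition:MaslovsFragmentK} then suffices: since $\varphi_1$ contains no existential quantifiers, condition (b) never applies; since each terminal prefix has length $2$, condition (a) fails for both atoms; and since the two terminal prefixes are distinct length-$2$ sequences over disjoint variables, no single choice of $\forall v_{i_1}\ldots\forall v_{i_k}$ can make condition (c) hold simultaneously for both.

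For the direction K $\not\subseteq$ SGKS, I would take the sentence
\[ \varphi_2 \;:=\; \forall v_1 v_2 v_3.\; P(v_1, v_2, v_3) \;.\]
Membership in K is immediate via condition (c) with $k = 3$ and the fixed block $\forall v_1 \forall v_2 \forall v_3$, which is the terminal $\varphi_2$-prefix of its single atom. Non-membership in SGKS follows directly from the partition constraint: the atom $P(v_1,v_2,v_3)$ has three pairwise distinct universally quantified variables, so condition (d) of Definition~\ref{definition:SGKSaxiomatic:alternative} would require $\{v_1,v_2,v_3\} \subseteq \{x_j, x'_j\} \cup U_j$ for some $j$; but the pair $\{x_j, x'_j\}$ contains at most two elements and $U_j$ is disjoint from $X$, a contradiction.

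There is no genuine obstacle here; the proof is essentially a bookkeeping exercise. The only mildly delicate point is the enumeration of values of $k$ in the case analysis for $\varphi_1 \notin K$, since one must be careful to note that Definition~\ref{definition:MaslovsFragmentK} fixes the witnessing quantifier block globally for the whole sentence and allows each atom to satisfy any one of (a), (b), (c) individually — but once this is stated, the impossibility of a global choice covering both terminal prefixes is immediate.
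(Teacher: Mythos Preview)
Your proof is correct and follows essentially the same approach as the paper: exhibit one explicit witness in each direction and verify membership/non-membership directly from the definitions. Your witnesses are nearly identical to the paper's for the direction SGKS $\not\subseteq$ K (the paper uses $\vee$ instead of $\wedge$, which is immaterial), and your witness $\forall v_1 v_2 v_3.\, P(v_1,v_2,v_3)$ for K $\not\subseteq$ SGKS is actually simpler than the paper's choice $\forall x_1 x_2 x_3 \exists y.\, P(x_1, x_2, y) \wedge Q(x_1, x_3, y) \wedge R(x_1, x_2, x_3)$, while the non-membership argument (three universally quantified variables cannot fit into a two-element pair $\{x_j, x'_j\}$) is the same in both cases.
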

\begin{proof}
	The following sentence witnesses that SGKS is not contained in Maslov's fragment~K: 
		$\forall x_1 x_1' x_2 x'_2.\, P(x_1, x'_1) \vee Q(x_2, x'_2)$ belongs to SGKS but not to Maslov's K.
	On the other hand, it is easy to find sentences that belong to K but not to SGKS, e.g.\
		$\forall x_1 x_2 x_3 \exists y.\, P(x_1, x_2, y) \wedge Q(x_1, x_3, y) \wedge R(x_1, x_2, x_3)$.
\end{proof}

%%%%%%%%%%%%%%%%%%%%%%%%%%%%%%%%%%%%%%%%%%%%%%%%%%%%
%%%%%%%%%%%%%%%%%%%%%%%%%%%%%%%%%%%%%%%%%%%%%%%%%%%%
%%%%%%%%%%%%%%%%%%%%%%%%%%%%%%%%%%%%%%%%%%%%%%%%%%%%
%%%%%%%%%%%%%%%%%%%%%%%%%%%%%%%%%%%%%%%%%%%%%%%%%%%%

Next, we sketch an equivalence-preserving translation from SGKS into GKS, similar to the one in Section~\ref{section:TranslationGAFintoAF}.
Again, we proceed in two stages, first transforming a given SGKS sentence into \emph{SGKS special form} and, afterwards, into an equivalent GKS sentence.
\begin{lemma}\label{lemma:SGKSTransformation}
	If $\varphi$ belongs to SGKS, we can effectively construct an equivalent sentence $\varphi'$ in which every subformula lies within the scope of at most two universal quantifiers, and the scope of every universal quantifier contains at most one more universal quantifier.
	Moreover, all literals in $\varphi'$ occur in $\varphi$ (modulo variable renaming).
\end{lemma}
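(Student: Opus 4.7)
The plan is to follow the same two-stage strategy used for SAF, adapting the invariants to account for the fact that SGKS allows two universally quantified variables, $x_j$ and $x'_j$, to co-occur in atoms (with $\idx(x_j) \le \idx(x'_j)$). I would first formulate and prove an SGKS analogue of Lemma~\ref{lemma:SAFquantifierShifting}, providing the technical tool to shift a single quantifier inwards while preserving a set of invariants that mirrors those of the SAF case but now tracks \emph{pairs} of universal variables. The crucial new invariants are: (i) for every subformula $\forall x_j.\,\chi$ where $x_j$ is the outer member of its pair we have $\vars(\chi) \subseteq \vy_1 \cup \ldots \cup \vy_{\idx(x_j)-1} \cup \{x_j, x'_j\} \cup U_j$ (so a nested $\forall x'_j$ is allowed); (ii) for every subformula $\forall x'_j.\,\chi$ where $x'_j$ is the inner member we have $\vars(\chi) \subseteq \vy_1 \cup \ldots \cup \vy_{\idx(x'_j)-1} \cup \{x_j, x'_j\} \cup U_j$, and no further universal quantifier appears in $\chi$; (iii) existential quantifiers for $u \in U_j$ appear only inside subformulas already scoped by the pair $\{x_j, x'_j\}$; (iv) existentials for $y \in Y$ never occur under a universal.

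The shift step itself proceeds exactly as in the SAF proof sketch. Assuming $\varphi = \cQ v.\,\psi$ and the invariants hold for $\psi$, I would convert $\psi$ into an appropriate CNF or DNF over the basic subformulas of $\psi$ (treating each subformula whose topmost connective is a quantifier as an indivisible unit), then use separateness to split every clause into a part that depends on $v$ and one that does not, and finally shift $\cQ v$ past the independent part using Proposition~\ref{lemma:Miniscoping}. When $v$ is an outer universal $x_j$, the part depending on $v$ may still contain a nested $\forall x'_j$, which is acceptable; when $v$ is an inner universal $x'_j$, the part depending on $v$ contains no further universal. When $v \in U_j$, separateness with respect to the other pairs and to $Y$ ensures the shift is legal.

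Iterating this lemma from the rightmost quantifier leftward (exactly as in the proof of Lemma~\ref{lemma:SAFPreSpecialForm}) yields an equivalent formula in which no subformula lies in the scope of three universal quantifiers, and every universal quantifier has at most one further universal quantifier in its scope — precisely the claim. Since each transformation step uses only quantifier shifting, Boolean distributivity, and bound-variable renaming, no new atoms are created, giving the additional claim that every literal of $\varphi'$ already appears in $\varphi$ up to renaming.

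The main obstacle is the careful bookkeeping associated with the pairing $\{x_j, x'_j\}$. In the SAF case, once we shift a universal quantifier inwards its scope becomes universal-quantifier-free, which keeps the invariants simple. For SGKS we must instead permit the inner universal $\forall x'_j$ to remain nested under $\forall x_j$, so the invariants have to distinguish "outer" from "inner" universals and certify that every clause containing $x_j$ can still carry its matching $\forall x'_j$. Setting up these strengthened invariants so that they are simultaneously (a) strong enough to support the next shift step and (b) preserved by the CNF/DNF restructuring is the only nontrivial bit; once formulated correctly, the induction on quantifier depth proceeds routinely as in the SAF case.
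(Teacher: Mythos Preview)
Your proposal is correct and follows essentially the same approach as the paper: the paper's proof consists of a single sentence stating that the transformation is a straightforward adaptation of Lemma~\ref{lemma:SAFquantifierShifting} in which the singleton $\{x_j\}$ is replaced by the pair $\{x_j, x'_j\}$, which is exactly what you spell out. Your more explicit formulation of the modified invariants (distinguishing the outer universal $\forall x_j$ from the inner $\forall x'_j$ and tracking the $U_j$-quantifiers accordingly) is precisely the bookkeeping the paper alludes to but does not write out.
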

The transformation mentioned in the lemma is based on a straightforward adaptation of Lemma~\ref{lemma:SAFquantifierShifting} (the singly occurring variables $x_j$ have to be replaced with $\{x_j, x'_j\}$ and the rest needs to be adapted accordingly).
The sentence $\varphi'$ can easily be further transformed into a particular shape to which we will refer as \emph{SGKS special form}:
	\[ \exists \vz. \bigwedge_i \bigl( \bigvee_j \forall x_{i, j} x'_{i, j} \exists \vy_{i,j}.\, \chi_{i, j}(\vz, x_{i,j}, x'_{i, j}, \vy_{i,j}) \bigr) \vee \eta_i(\vz) \]
	where the $\chi_{i, j}$ and the $\eta_i$ are quantifier free.
	
%%%%%%%%%%%%%%%%%%%%%%%%%%%%%%%%%%%%%%%%%%%%%%%%%%%%%%%%%
\begin{lemma}\label{lemma:SGKSTransformationIntoGKS}
	Every SGKS sentence $\varphi$ in SGKS special form can be effectively transformed into an equivalent sentence $\varphi'$ that has the shape $\exists \vz\, \forall x x' \exists \vy.\, \psi$ with quantifier-free $\psi$. 
\end{lemma}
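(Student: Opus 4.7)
The plan is to mirror the proof sketch of Lemma~\ref{lemma:SAFTransformationIntoAckermann} almost verbatim, exploiting that Lemma~\ref{lemma:ForallBehindOr} was stated for a generic universal tuple $\vx$ and hence applies just as well when $\vx$ consists of two variables $(x, x')$ as when it consists of one.

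Starting from the SGKS special form
$\varphi \;=\; \exists \vz.\, \bigwedge_i \Bigl( \bigvee_j \forall x_{i,j} x'_{i,j} \exists \vy_{i,j}.\, \chi_{i,j}(\vz, x_{i,j}, x'_{i,j}, \vy_{i,j}) \Bigr) \vee \eta_i(\vz)$,
I would first apply Lemma~\ref{lemma:ForallBehindOr} separately to every inner disjunction $\bigvee_j \forall x_{i,j} x'_{i,j} \exists \vy_{i,j}.\, \chi_{i,j}$, instantiating the tuple $\vx$ of that lemma with the pair $(x_{i,j}, x'_{i,j})$. This produces, for each $i$, an equivalent formula of the form
$\exists \vv'_i \vy'_i.\, \alpha_i(\vz, \vv'_i, \vy'_i) \,\wedge\, \forall x_i x'_i \exists \vy_i.\, \beta_i(\vz, x_i, x'_i, \vy_i, \vv'_i, \vy'_i)$
with quantifier-free $\alpha_i, \beta_i$.

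Substituting these equivalents into $\varphi$ yields $\exists \vz.\, \bigwedge_i \bigl[\, (\exists \vv'_i \vy'_i.\, \alpha_i \wedge \forall x_i x'_i \exists \vy_i.\, \beta_i ) \vee \eta_i(\vz)\,\bigr]$. Using Proposition~\ref{lemma:Miniscoping} I would now, within each conjunct, (a) absorb $\alpha_i$ under $\forall x_i x'_i \exists \vy_i$ (valid since none of $x_i, x'_i, \vy_i$ occur freely in $\alpha_i$) to obtain $\exists \vv'_i \vy'_i \forall x_i x'_i \exists \vy_i.\, (\alpha_i \wedge \beta_i)$, and (b) push this quantifier block across the disjunction with $\eta_i(\vz)$, since $\eta_i$ is free of all the bound variables. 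Each conjunct then has the shape $\exists \vv'_i \vy'_i \forall x_i x'_i \exists \vy_i.\, \gamma_i$ with quantifier-free $\gamma_i$. I would then $\alpha$-rename so that the $\vv'_i, \vy'_i, \vy_i$ are pairwise disjoint across $i$, and pull every leading existential block out across the outer conjunction and to the front.

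The final step — and the only one specific to SGKS rather than SAF — is collapsing the distinct universal pairs $\forall x_i x'_i$ into a single block $\forall x x'$. For this, I rename every $x_i$ to a fresh common $x$ and every $x'_i$ to a fresh common $x'$ and then use the distribution $\bigwedge_i \forall x x'.\, \delta_i(x, x') \semequiv \forall x x'.\, \bigwedge_i \delta_i(x, x')$ from Proposition~\ref{lemma:Miniscoping}(ii); afterwards the existential suffixes $\exists \vy_i$ can be pulled to the front of the conjunction (their variables being disjoint by renaming), giving the required form $\exists \vz \vv' \vy' \forall x x' \exists \vy_1 \ldots \vy_m.\, \bigwedge_i \gamma_i$. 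The only real obstacle is accounting, i.e.\ verifying at each shift that the required variable is not free in the sibling formula — a check that reduces to our standing convention that bound and free occurrences are kept disjoint and to the controlled freshness guaranteed by Lemma~\ref{lemma:ForallBehindOr}, where the new variables $\vv'_i, \vy'_i$ are introduced locally.
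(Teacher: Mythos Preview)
Your proposal is correct and follows essentially the same approach as the paper's proof: apply Lemma~\ref{lemma:ForallBehindOr} to each inner disjunction (with $\vx$ instantiated as the pair $(x,x')$), obtain conjuncts of the form $\exists \vu_i \forall x_i x'_i \exists \vw_i.\, \psi''_i$, and then prenex. The paper compresses your steps (5)--(10) into the single phrase ``A prenex version yields the sought $\varphi'$,'' whereas you spell out the quantifier-shifting bookkeeping and the collapse of the distinct $\forall x_i x'_i$ blocks into a single $\forall x x'$ via renaming and Proposition~\ref{lemma:Miniscoping}(ii); this added detail is sound and matches exactly what the paper intends.
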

\begin{proof}
	Since $\varphi$ is in SGKS special form, it has the shape \\
		\centerline{$\varphi'' :=\; \exists \vz. \bigwedge_i \bigl( \bigvee_j \forall x_{i, j} x'_{i, j} \exists \vy_{i,j}.\, \chi_{i, j}(\vz, x_{i,j}, x'_{i, j}, \vy_{i,j}) \bigr) \vee \eta_i(\vz)$,}
	where the $\chi_{i, j}$ and the $\eta_i$ are quantifier free.	
	Consider any subformula of the form $\psi' := \bigvee_j \forall x x'\, \exists \vy.\, \chi_j(\vz, x, x', \vy)$, possibly containing free variables from $\vz$.
	By virtue of Lemma~\ref{lemma:ForallBehindOr}, $\psi'$ is equivalent to some formula of the form $\exists \vv' \vy'.\, \chi'(\vz, \vv', \vy') \wedge \forall x x' \exists \vy.\, \chi''(\vz, x, x', \vy, \vv', \vy')$ with quantifier-free $\chi', \chi''$.	
	Hence, $\varphi''$ is equivalent to some sentence that, after shifting some quantifiers outwards, is of the form 
		$\exists \vz. \bigwedge_i \bigl( \exists \vu_i \forall x_i x'_i \exists \vw_i.\, \psi''_i(\vz, \vu_i, x_i, x'_i, \vw_i) \bigr) \vee \eta_{i}(\vz)$, 
	where the $\psi''_i$ and the $\eta_{i}$ are quantifier free.	
	A prenex version yields the sought $\varphi'$.
\end{proof}

%%%%%%%%%%%%%%%%%%%%%%%%%%%%%%%%%%%%%%%%%%%%%%%%%%%%%%%%%
\begin{theorem}
	Every SGKS sentence is equivalent to some GKS sentence.
\end{theorem}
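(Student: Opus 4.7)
The plan is to mirror the two-stage strategy that was used for SAF in Lemma~\ref{lemma:SAFTransformationIntoAckermann}, replacing each single universal quantifier with the pair $\forall x\, x'$ that characterizes GKS. Concretely, given an SGKS sentence $\varphi := \forall \vx_1 \exists \vy_1 \vu_1 \ldots \forall \vx_n \exists \vy_n \vu_n.\, \psi$, I would first invoke Lemma~\ref{lemma:SGKSTransformation} to obtain an equivalent $\varphi''$ in which every subformula sits inside the scope of at most two universal quantifiers and the scope of every universal quantifier contains at most one further universal quantifier; the set of occurring atoms is preserved up to renaming of bound variables.

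Next I would put $\varphi''$ into SGKS special form, namely
    \[ \exists \vz.\, \bigwedge_i \Bigl( \bigvee_j \forall x_{i,j}\, x'_{i,j} \exists \vy_{i,j}.\, \chi_{i,j}(\vz, x_{i,j}, x'_{i,j}, \vy_{i,j}) \Bigr) \vee \eta_i(\vz) \]
with quantifier-free $\chi_{i,j}$ and $\eta_i$. This step is routine: outermost existential quantifiers not trapped by any universal quantifier are shifted to the front, each of the remaining universal quantifiers binds (together with its companion) a self-contained block $\forall x\, x' \exists \vy.\, \chi$, and ordinary CNF-style rearrangement treating these blocks as indivisible units produces the conjunction-of-disjunctions shape. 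The separateness structure from Definition~\ref{definition:SGKSaxiomatic:alternative} is exactly what guarantees that no atom links different $\{x_j, x'_j\} \cup U_j$ groups, so the rearrangement really does isolate each $\forall\forall\exists^*$-unit.

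The main work is then done by Lemma~\ref{lemma:SGKSTransformationIntoGKS}, which collapses each disjunction $\bigvee_j \forall x\, x' \exists \vy.\, \chi_j(\vz, x, x', \vy)$ into a single $\forall\forall\exists^*$-block by applying Lemma~\ref{lemma:ForallBehindOr} with the pair $(x, x')$ playing the role of the universally quantified tuple. Each resulting constituent is equivalent to $\exists \vu_i\, \forall x_i\, x'_i\, \exists \vw_i.\, \psi''_i$, and since universal quantifiers distribute over the outer conjunction, shifting them outwards produces a prenex sentence of the form $\exists \vz\, \forall x\, x' \exists \vy.\, \psi$ with quantifier-free $\psi$, i.e.\ a GKS sentence.

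The step I expect to require the most care is verifying that Lemma~\ref{lemma:SGKSTransformation} really applies across the board; specifically, that after each inward-shifting step the separateness invariants analogous to Conditions~\ref{enum:SAFtranslation:III}--\ref{enum:SAFtranslation:VI} of Lemma~\ref{lemma:SAFquantifierShifting} still hold when singletons $\{x_j\}$ are replaced by pairs $\{x_j, x'_j\}$. This is a bookkeeping adaptation of the SAF argument, but the presence of two universal variables per block means that both must travel together when shifted inward, which is exactly licensed by clause~(d.2) of Definition~\ref{definition:SGKSaxiomatic:alternative}. Once this is checked, the chain Lemma~\ref{lemma:SGKSTransformation} $\to$ SGKS special form $\to$ Lemma~\ref{lemma:SGKSTransformationIntoGKS} yields the desired GKS sentence, completing the proof.
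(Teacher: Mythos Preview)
Your proposal is correct and follows essentially the same approach as the paper: the theorem is obtained by chaining Lemma~\ref{lemma:SGKSTransformation} (to restrict the universal-quantifier nesting), the transformation into SGKS special form, and Lemma~\ref{lemma:SGKSTransformationIntoGKS} (which applies Lemma~\ref{lemma:ForallBehindOr} with a pair of universal variables). The paper presents the theorem without its own proof, treating it as an immediate consequence of these preceding lemmas, and your outline spells out exactly that chain.
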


%%%%%%%%%%%%%%%%%%%%%%%%%%%%%%%%%%%%%%%%%%%%%%%%%%%%%%%%%
Since we know that GKS enjoys the finite model property, this result immediately entails decidability of the satisfiability problem form SGKS (\emph{SGKS-Sat}).
\begin{corollary}
	The satisfiability problem for SGKS is decidable, and SGKS enjoys the finite model property.
\end{corollary}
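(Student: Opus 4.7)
The plan is to derive both claims directly from the preceding theorem together with the classical facts about GKS recalled at the beginning of this section. First I would invoke the theorem stating that every SGKS sentence $\varphi$ is equivalent to some GKS sentence $\varphi^*$. Crucially, inspecting Lemmas~\ref{lemma:SGKSTransformation} and~\ref{lemma:SGKSTransformationIntoGKS} shows that this translation is effective: the first produces a sentence whose universal-quantifier nesting is at most two (and in particular brings it into SGKS special form), and the second is an algorithmic rewrite using Lemma~\ref{lemma:ForallBehindOr} followed by standard quantifier shifting and prenexing. So given $\varphi$ we can compute $\varphi^*$ explicitly.

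For decidability of SGKS-Sat, I would then chain the effective translation with the classical decidability of GKS-Sat (due to G\"odel~\cite{Godel1932, Godel1933}, Kalm\'ar~\cite{Kalmar1933}, and Sch\"utte~\cite{Schutte1934a, Schutte1934b}, and recalled at the beginning of Section~\ref{section:generalizedGKS}): to decide whether $\varphi$ is satisfiable, translate it to the equivalent $\varphi^*$ and run a decision procedure for GKS on $\varphi^*$. Since $\varphi \semequiv \varphi^*$, $\varphi$ is satisfiable if and only if $\varphi^*$ is, so correctness is immediate.

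For the finite model property, I would argue as follows. Let $\varphi \in$ SGKS be satisfiable and let $\varphi^*$ be the equivalent GKS sentence. By the finite model property of GKS, $\varphi^*$ has a finite model $\cA$; by the equivalence $\varphi \semequiv \varphi^*$ (over the common vocabulary, noting that the translation introduces no new predicate symbols, as can be checked by tracing through the construction), $\cA$ is also a finite model of $\varphi$. Hence every satisfiable SGKS sentence has a finite model.

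There is no real obstacle here; the only thing to verify is that the two lemmas yielding the translation preserve the underlying vocabulary, so that a model of the translated sentence is literally a model of the original, rather than a reduct of one. This is clear from the construction: Lemma~\ref{lemma:SGKSTransformation} only shifts and duplicates existing quantifier blocks via the equivalences in Proposition~\ref{lemma:Miniscoping}, and Lemma~\ref{lemma:SGKSTransformationIntoGKS} applies Lemma~\ref{lemma:ForallBehindOr}, whose statement introduces only fresh first-order variables, not new predicate or function symbols. Thus the decidability and finite model property of SGKS-Sat follow immediately.
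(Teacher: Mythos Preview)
Your proposal is correct and follows exactly the approach the paper intends: the corollary is stated without proof immediately after the theorem that every SGKS sentence is equivalent to some GKS sentence, with the preceding sentence noting that GKS's finite model property ``immediately entails'' decidability of SGKS-Sat. Your additional care in verifying that the translation preserves the vocabulary (so that models transfer literally, not merely up to reducts) is a welcome elaboration of a point the paper leaves implicit.
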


%%%%%%%%%%%%%%%%%%%%%%%%%%%%%%%%%%%%%%%%%%%%%%%%%%%%%%%%%
%%%%%%%%%%%%%%%%%%%%%%%%%%%%%%%%%%%%%%%%%%%%%%%%%%%%%%%%%

We finish the present section by proving that SGKS sentences can be substantially more succinct than equivalent GKS sentences.
The following theorem formulates a lower bound regarding the incurred blowup that comes along with any equivalence-preserving translation from SGKS to GKS.

%%%%%%%%%%%%%%%%%%%%%%%%%%%%%%%%%%%%%%%%%%%%%%%%%%%%%%%%%
\begin{theorem}\label{theorem:LengthSmallestGKSsentences}
	There is a class of SGKS sentences and some positive integer $n_0$ such that for every integer $n \geq n_0$ the class contains a sentence $\varphi$ with a length linear in $n$ for which any equivalent GKS sentence has a length that is at least exponential in $n$.
\end{theorem}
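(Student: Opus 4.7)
I propose to mirror the substructure-based lower bound in the proof of Theorem~\ref{theorem:LengthSmallestBSRsentences}, adapted to the SGKS-vs-GKS setting where the gap collapses from non-elementary to single-exponential because Skolemized GKS sentences use binary Skolem functions rather than only constants. As the witnessing SGKS sentence I propose
\[
\varphi_n \;:=\; \exists z_0 z_1.\, \bigwedge_{j=1}^n \bigl(P_j(z_1, z_1) \wedge \neg P_j(z_0, z_0)\bigr) \;\wedge\; \forall x_1 x_1' \ldots \forall x_n x_n'\, \exists y.\, \bigwedge_{j=1}^n \bigl(P_j(x_j, x_j') \leftrightarrow Q_j(y)\bigr) ~.
\]
With $Y := \{z_0, z_1, y\}$, $X := \bigcup_j \{x_j, x_j'\}$ partitioned into the pairs $\{x_j, x_j'\}$, and $U_i := \emptyset$, every atom either mentions only $Y$-variables or stays within a single pair $\{x_j, x_j'\}$, so $\varphi_n$ is in SGKS and has length $O(n)$.

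The first conjunct forces every model $\cA$ to contain distinct elements $z_0^\cA, z_1^\cA$ with $P_j(z_1^\cA, z_1^\cA)$ uniformly true and $P_j(z_0^\cA, z_0^\cA)$ uniformly false. For each $S \subseteq [n]$, instantiating $x_j := z_1^\cA$ when $j \in S$ and $x_j := z_0^\cA$ otherwise forces a witness $y_S$ with $Q_j(y_S) \Leftrightarrow j \in S$; hence every model of $\varphi_n$ has at least $2^n + 2$ elements. Take the canonical model $\cA$ with domain $\{a_0, a_1\} \cup \{b_S \mid S \subseteq [n]\}$, $z_0^\cA := a_0$, $z_1^\cA := a_1$, $P_j^\cA(u, u')$ iff $u = a_1$, and $Q_j^\cA(b_T)$ iff $j \in T$; then each $b_S$ is essential in the sense that no substructure of $\cA$ omitting any $b_S$ can satisfy $\varphi_n$.

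Now assume for contradiction that some GKS sentence $\varphi'_n := \exists \vz\, \forall x x'\, \exists \vw.\, \psi$ of length $L$ is equivalent to $\varphi_n$. Skolemize $\varphi'_n$ into the universal sentence $\forall x x'.\, \psi[\vw / \bar{f}(x, x')]$ with $m := |\vz|$ constants and $k := |\vw|$ binary Skolem functions. Interpret them in $\cA$ so as to satisfy the Skolemization, and form the substructure $\cA' \subseteq \cA$ generated by $\vz^\cA$ under iterated application of the Skolem functions. By Lemma~\ref{lemma:SubstructureLemma}, $\cA'$ satisfies the Skolemization, hence $\varphi'_n$, hence $\varphi_n$, and therefore (by essentiality) $\cA' = \cA$.

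The main obstacle is to translate the conclusion $\cA' = \cA$ into an exponential lower bound on $L$. Where the BSR argument directly forces $m \geq 2^n$ because Skolem terms are constants, the binary Skolem functions of GKS can in principle reach all of $\cA$ from very few constants by iteration. The envisioned resolution is to choose the Skolem interpretation adversarially so as to make $\cA'$ as small as possible while still witnessing $\psi$, and then to invoke a pigeonhole over the at most $2^{O(L)}$ atomic types realized in $\cA$ by tuples over the vocabulary of $\psi$; this should cap $|\cA'|$ at a polynomial in $L$, forcing $2^n \leq \mathrm{poly}(L)$ and hence $L = 2^{\Omega(n)}$. Making this counting argument precise is the delicate technical step, and may require refining $\varphi_n$ with auxiliary monadic ``selector'' predicates that pin down the essential witnesses so that the Skolem functions cannot freely avoid them.
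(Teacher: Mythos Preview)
Your proposal has a genuine gap at exactly the point you flag as ``delicate.'' The substructure argument from Theorem~\ref{theorem:LengthSmallestBSRsentences} works for BSR because Skolemization produces only constants, so the Skolem-closed substructure has size at most $|\vz|$. For a GKS sentence $\exists \vz\, \forall x x'\, \exists \vw.\,\psi$, Skolemization yields \emph{binary} functions $\bar f(x,x')$, and the substructure generated from the constants by iterating these functions can be all of $\cA$ even when $|\vz|$ and $|\vw|$ are small. Your proposed fix --- choose the Skolem interpretation adversarially and cap $|\cA'|$ by the number of atomic types --- does not deliver the polynomial bound you need. The number of types over the atoms of $\psi$ is at best $2^{O(L)}$, not $\mathrm{poly}(L)$; combined with $|\cA'| \geq 2^n$ this yields only $L = \Omega(n)$, not $L = 2^{\Omega(n)}$. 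More fundamentally, your model has only single-exponentially many essential witnesses $b_S$, and since GKS enjoys an \emph{exponential} small-model property, a lower bound of $2^n$ on model size can never force the GKS sentence to be longer than linear in $n$.

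The paper avoids the substructure route entirely. Its witnessing SGKS sentence uses \emph{two} $\forall\exists$ alternations, namely $\forall x_2 \exists y_2 \forall x_1 \exists y_1.\bigwedge_{i} P_i(x_1,x_2)\leftrightarrow Q_i(y_1,y_2)$, and the corresponding model has \emph{doubly} exponentially many essential elements $\fb^{(2)}_S$ (via a two-level nested power-set construction $\cS_1,\cS_2$). The lower bound then proceeds by a direct combinatorial argument on the \emph{DNF} of the quantifier-free matrix of a hypothetical short GKS equivalent: if its length is below $2^n$ there are fewer than $2^{2^n}$ disjuncts, each with fewer than $2^n$ literals, and a pigeonhole over the $\geq 2^{2^{2n}}$ essential elements forces one disjunct to be shared by too many of them; the model's structure then supplies an alternative witness, contradicting essentiality. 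If you want to salvage your approach, you would need to replace your single-level model by one with doubly-exponentially many essential elements and abandon the Skolem-closure argument in favour of an analysis of the matrix itself.
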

\begin{proof}[Proof sketch]
	Let $n \geq 1$ be some positive integer.
	Consider the following first-order sentence in which the sets $\{x_1, x_2\}$ and $\{y_1, y_2\}$ are separated:
		\[ \varphi := \forall x_2 \exists y_2 \forall x_1 \exists y_1. \bigwedge_{i=1}^{8n} \bigl( P_i(x_1, x_2) \leftrightarrow Q_i(y_1, y_2) \bigr) ~.\]

	In analogy to the proof of Theorem~\ref{theorem:LengthSmallestBSRsentences}, we construct the following model $\cA$ using the sets
		$\cS_1 := \bigl\{ S \subseteq [8n] \bigm| |S| = 2n \bigr\}$ 
	and 
		$\cS_2 := \bigl\{ S \subseteq \cS_1 \bigm| |S| = \tfrac{1}{2} |\cS_1| \bigr\}$.
	We observe that
		\begin{align*}
			|\cS_1| = {{8n} \choose {2n}} \geq \left( \frac{8n}{2n} \right)^{2n} = 2^{4n}
			\qquad \text{and} \qquad
			|\cS_2| = {{|\cS_1|} \choose {|\cS_1|/2}} \geq \left( \frac{|\cS_1|}{|\cS_1|/2} \right)^{|\cS_1|/2} \geq 2^{2^{4n-1}} .
		\end{align*}	 
	Having the sets $\cS_1, \cS_2$, we now define the structure $\cA$ as follows:
		\begin{description}
			\item $\fUA := \bigl\{ \fa^{(1)}_{S}, \fb^{(1)}_{S} \bigm| S \in \cS_1 \bigr\} \cup \bigl\{ \fa^{(2)}_{S}, \fb^{(2)}_{S} \bigm| S \in \cS_2 \bigr\}$, 
			\item $P_i^\cA := \bigl\{ \<\fa^{(1)}_{S_1}, \fa^{(2)}_{S_2}\> \in \fUA^2 \bigm| i \in S_1 \in S_2 \bigr\}$ for $i = 1, \ldots, 8n$, and
			\item $Q_i^\cA := \bigl\{ \<\fb^{(1)}_{S_1}, \fb^{(2)}_{S_2}\> \in \fUA^2 \bigm| i \in S_1 \in S_2 \bigr\}$ for $i = 1, \ldots, 8n$.
		\end{description}		
	Like in the proof of Theorem~\ref{theorem:LengthSmallestBSRsentences}, it is easy to show that $\cA$ is a model of $\varphi$.

	For every $S \in \cS_1 \cup \cS_2$ we define the structure $\cA_{-S}$ as the substructure of $\cA$ induced by the domain $\fUA_{-S} := \fUA \setminus \{\fb^{(k)}_S\}$, where $k = 1$ if $S \in \cS_1$ and $k = 2$ if $S \in \cS_2$.
	Like in the proof of Theorem~\ref{theorem:LengthSmallestBSRsentences} (Claim~II), we can prove the following claim.
	\begin{description}
		\item \underline{Claim I:} For every $S \in \cS_1 \cup \cS_2$ the substructure $\cA_{-S}$ of $\cA$ does not satisfy $\varphi$.
			\hfill$\Diamond$
	\end{description}	
		
	Let $\varphi_* := \exists \vz \forall x_1 x_2 \exists \vy.\, \psi_*$ with quantifier-free $\psi_*$ be a shortest GKS sentence equivalent to $\varphi$. 
	Suppose that the length of $\varphi_*$ is less than $2^n$.
	Let $\psi' := \bigvee_{i \in I} \chi_i(\vz, x_1, x_2, \vy)$ be a shortest disjunction of conjunctions $\chi_i$ of literals that is equivalent to $\psi_*$.
	We observe that the index set $I$ contains fewer than $2^{2^n}$ indices and that every conjunction $\chi_i$ contains fewer than $2^n$ literals, for otherwise we could find a shorter formula with the desired properties.
	
	Let $\vd$ be some tuple for which we have
		\[ \cA \models \forall x_1 x_2 \exists \vy.\, \bigvee_{i \in I} \chi_i(\vd, x_1, x_2, \vy) ~. \] 
	Let $\fUD := \{ \fb_S^{(2)} \mid \text{$S \in \cS_2$ and $\fb_S^{(2)} \not\in \vd$} \}$.
	Because of $|\vd| \leq |\vz| \leq \len(\varphi_*) \leq 2^n$ and $|\cS_2| \geq 2^{2^{4n-1}} \geq 2^{2^{3n}}$, we have $|\fUD| \geq 2^{2^{2n}}$ for sufficiently large $n$.
	By Claim~I, we observe 
		\[ \cA_{-S} \not\models \forall x_1 x_2 \exists \vy.\, \bigvee_{i \in I} \chi_i(\vd, x_1, x_2, \vy) \] 
	for every $S$ with $\fb_S^{(2)} \in \fUD$.
	Hence, for every $\fb_S^{(2)} \in \fUD$ there is some pair $\fc_1, \fc_2 \in \fUA \setminus \{ \fb_S^{(2)} \}$, some tuple $\vb$ containing $\fb_S^{(2)}$ and some index $i_S \in I$ such that
			$\cA \models \chi_{i_S}(\vd, \fc_1, \fc_2, \vb)$
			and	
			$\cA_{-S} \not\models \exists \vy.\, \chi_{i_S}(\vd, \fc_1, \fc_2, \vy)$.
	Because of $|I| < 2^{2^n}$ and $|\fUD| \geq 2^{2^{2n}}$, there must be some index $i_*$ that appears in the role of $i_S$ for at least
		\[ \frac{|\fUD|}{|I|} \geq \frac{2^{2^{2n}}}{2^{2^n}} = 2^{2^{2n} - 2^n} \geq 2^{(2^n)^2 / 2^n} = 2^{2^n} \]
	distinct $\fb_S^{(2)} \in \fUD$, if $n$ is sufficiently large.
	Let $\fUD_* \subseteq \fUD$ be the set that comprises exactly those elements.
	In other words, we have $|\fUD_*| \geq 2^{2^n}$ and for every $\fb_S^{(2)} \in \fUD_*$ there is some pair $\fc_1, \fc_2$ and some tuple $\vb$ containing $\fb_S^{(2)}$ such that
		\begin{align}
			\label{eqn:thmLengthSmallestGKSsentences}
			\cA \models \chi_{i_*}(\vd, \fc_1, \fc_2, \vb) 
			\quad \text{and} \quad
			\cA_{-S} \not\models \exists \vy.\, \chi_{i_*}(\vd, \fc_1, \fc_2, \vy) ~.
		\end{align}	
	Consider some $\fb_S^{(2)} \in \fUD_*$ with $S \in \cS_2$ and fix it.
	The only atoms in $\chi_{i_*}$ that could possibly contribute to the effect described in (\ref{eqn:thmLengthSmallestGKSsentences}) for $\fb_S^{(2)}$ have the form $Q_j(z, y')$, $Q_j(y,y')$, $Q_j(x_1, y')$, or $Q_j(x_2, y')$ for $z \in \vz$,  $y, y' \in \vy$ and, moreover, the variables $z, y, x_1, x_2$ need to be assigned values $\fb_{T}^{(1)}$ with $T \in \cS_1$.
	Let $\cS'_1$ be the set collecting all the $T$ from $\cS_1$ that are assigned to such variables occurring in atoms of the mentioned kind.
	As $\chi_{i_*}$ contains at most $2^n$ such variables, $|\cS'_1| \leq 2^n$.
	Recall that $S$ contains $\frac{1}{2}|\cS_1| \geq 2^{4n-1}$ sets of indices.
	By construction of $\cS_2$, there must be some $S' \in \cS_2$ such that for every $T \in \cS'_1$ we have $T \in S'$ if and only if $T \in S$.
	Let $\vb'$ be the tuple that results from $\vb$ by replacing every occurrence of $\fb_S^{(2)}$ in the tuple by $\fb_{S'}^{(2)}$.
	Then, we get 
		$\cA_{-S} \models \chi_{i_*}(\vd, \fc_1, \fc_2, \vb')$,
	which contradicts (\ref{eqn:thmLengthSmallestGKSsentences}).	
	Consequently, the length of the sentence $\varphi_*$ cannot be less than $2^n$.
\end{proof}

%%%%%%%%%%%%%%%%%%%%%%%%%%%%%%%%%%%%%%%%%%%%%%%%%%%%
%%%%%%%%%%%%%%%%%%%%%%%%%%%%%%%%%%%%%%%%%%%%%%%%%%%%
%%%%%%%%%%%%%%%%%%%%%%%%%%%%%%%%%%%%%%%%%%%%%%%%%%%%
%%%%%%%%%%%%%%%%%%%%%%%%%%%%%%%%%%%%%%%%%%%%%%%%%%%%

\section{Separateness and Guarded Quantification}
\label{section:SeparatenessAndGuardedFormulas}

The \emph{guarded fragment (GF)} comprises all relational first-order sentences with equality that satisfy the following properties.
An \emph{atomic guard} $\gamma(\vu, \vv)$ is an atom $A$ such that all $u \in \vu \cup \vv$ occur in $A$.
We define the \emph{guarded fragment (GF)} inductively:
(i) every relational atom is a GF formula (equality is allowed);
(ii) every Boolean combination of GF formulas is a GF formula;
(iii) for all tuples $\vu, \vv$, every atomic guard $\gamma(\vu, \vv)$, and every GF formula $\psi(\vu, \vv)$ the following formulas belong to GF: 
	$\forall \vu.\, \bigl( \gamma(\vu, \vv) \rightarrow \psi(\vu, \vv) \bigr)$, abbreviated by $\bigl( \forall \vu.\, \gamma(\vu, \vv) \bigr)\psi(\vu, \vv)$,
	and 
	$\exists \vu.\, \bigl( \gamma(\vu, \vv) \wedge \psi(\vu, \vv) \bigr)$, abbreviated by $\bigl( \exists \vu.\, \gamma(\vu, \vv) \bigr)\psi(\vu, \vv)$.
Notice that we assume in any GF formula $\bigl( \cQ \vu.\, \gamma(\vu, \vv) \bigr)\psi(\vu, \vv)$ that all variables that occur freely in $\psi$ also occur in $\gamma$.

The guarded fragment was introduced by Andr\'eka, N\'emeti, and van Benthem~\cite{Andreka1998} as one characterization of the fragment of first-order logic in which propositional modal logic can be embedded via the so-called \emph{standard translation} (cf.\ Section~2.4 in~\cite{Blackburn2002}).
Van Benthem~\cite{vanBenthem1997} also proposed a more liberal form of guards, \emph{loose guards}.
A \emph{loose guard} $\gamma(\vu,\vv)$ is a nonempty conjunction of atoms $\gamma(\vu,\vv) := A_1(\vu,\vv) \wedge \ldots \wedge A_k(\vu,\vv)$ such that all $u,v$ with $u \in \vu$ and $v \in \vu \cup \vv$ co-occur in at least one $A_j$.
The \emph{loosely guarded fragment (LGF)} is then defined by liberalizing (iii) such that loose guards are used instead of atomic guards.
In particular, we assume in any LGF formula $\bigl( \cQ \vu.\, \gamma(\vu, \vv) \bigr)\psi(\vu, \vv)$ that 
	(a) all variables that occur freely in $\psi$ also occur in $\gamma$ and
	(b) every variable that is bound by $\cQ \vu$ co-occurs with every free variable from $\psi$ in some atom in $\gamma$.
We will occasionally use sloppy language and speak of LGF \emph{formulas} when we mean loosely guarded formulas that are not necessarily closed.
Formally, LGF exclusively contains sentences.
The same applies to GF \emph{formulas}.

Gr\"adel~\cite{Gradel1999a} derived the tree-like model property for GF and LGF and the finite model property for GF.
Moreover, the computational complexity of the associated satisfiability problems is pinpointed in the same article: both are complete for deterministic doubly exponential time.
More variants of guards and guarded quantification have been proposed, which lead to the definition of the \emph{clique-guarded fragment}~\cite{Gradel1999b} and the \emph{packed guarded fragment}~\cite{Marx2001}, for instance.
Hodkinson~\cite{Hodkinson2002} showed that also the loosely guarded fragment, the clique-guarded fragment, and the packed guarded fragment enjoy the finite model property.

%%%%%%%%%%%%%%%%%%%%%%%%%%%%%%%%%%%%%%%%%%%%%%%%%%%%%%%%%%%%%%%%%%%%%%%%
At first glance it seems that guarded quantification and separateness of quantified variables are two opposite properties.
In particular, any guard $\gamma(\vu, \vv)$ in a formula $\bigl( \forall \vu.\, \gamma(\vu, \vv) \bigr) \varphi(\vu, \vv)$ has to ensure that every $u \in \vu$ co-occurs with each $v \in \vv$ in at least one atom in $\gamma$.
This destroys any separateness of variables from $\vu$ and $\vv$ which might be separated in $\varphi$.
However, it turns out that guardedness and separateness can indeed be combined in a beneficial way.

%%%%%%%%%%%%%%%%%%%%%%%%%%%%%%%%%%%%%%%%%%%%%%%%%%%%%%%%%%%%
\begin{definition}[Separated loosely guarded fragment (SLGF)]\label{definition:SGFandSLGF}
	Two tuples $\vx, \vy$ are \emph{guard-separated} in a formula $\psi$ with guarded quantification if $\vx$ and $\vy$ are separated in $\psi$ and, in addition, for every guard $\gamma$ in $\psi$ either $\vars(\gamma) \cap \vx = \emptyset$ or $\vars(\gamma) \cap \vy = \emptyset$.
	
	We define the set of \emph{SLGF formulas} inductively as follows.
	\begin{enumerate}[label=(\roman{*}), ref=(\roman{*})]
		\item Every relational atom is an SLGF formula, equality is admitted.
		\item Every Boolean combination of SLGF formulas is an SLGF formula.
		\item Let $\vu, \vv, \vz$ be tuples of variables and let $\gamma(\vu, \vv)$ be any loose guard.
			The following are SLGF formulas: 
			$\forall \vu.\, \bigl( \gamma(\vu, \vv) \rightarrow \psi(\vu, \vv, \vz) \bigr)$
			and 
			$\exists \vu.\, \bigl( \gamma(\vu, \vv) \wedge \psi(\vu, \vv, \vz) \bigr)$,
			where the sets $\vu$ and $\vz$ are guard-separated in $\psi$.
	\end{enumerate}		
		
	The \emph{separated loosely guarded fragment (SLGF)} is the class of all SLGF sentences.
	When we restrict guards to be atomic, we obtain the \emph{separated guarded fragment (SGF)}.
\end{definition}

\begin{remark}\label{remark:MoreLiberalSyntaxForSLGF}
	Notice that every formula of the form $\forall \vu \vx.\, \gamma(\vu, \vv) \wedge \delta(\vx, \vy) \rightarrow \psi(\vu, \vv, \vx, \vy, \vz)$ where $\vu$, $\vx$, and $\vz$ are pairwise distinct and guard-separated in $\psi$ is equivalent to the SLGF formula $\bigl( \forall \vu.\, \gamma(\vu, \vv) \bigr) \bigl( \bigl( \forall \vx.\, \delta(\vx, \vy) \bigr) \psi(\vu, \vv, \vx, \vy, \vz) \bigr)$. 
	A dual observation holds true for existential quantification.
	This means that, under certain restrictions, we may mix variables that are subject to distinct guards in a single quantifier block.
	One could incorporate this idea into the definition of SLGF and, hence, obtain a syntactically slightly extended version.
	However, for the sake of simplicity, we adhere to the simpler definition given above.
\end{remark}

%%%%%%%%%%%%%%%%%%%%%%%%%%%%%%%%%%%%%%%%%%%%%%%%%%%%%%%%%%%%
It is easy to see that SLGF is indeed a proper syntactic extension of LGF, and that the same applies to SGF and GF.
A simple sentence witnessing the strictness of these containment relations is the sentence 
	$\bigr( \forall x.\, x \approx x \bigl) \bigl( \exists y.\, y \approx y \bigr) \bigl( P(y) \leftrightarrow \neg P(x) \bigr)$.
It belongs neither to GF nor to LGF, but to both SGF and SLGF.
Moreover, the sentence is a witness of the following observation:
Every MFO sentence can be easily turned into an equivalent SGF sentence with a length linear in the original. 
We just need to add trivial equations $v \approx v$ as guards to subformulas $\cQ v.\, \chi$. 
The result of this transformation lies in the intersection of SGF and \MFOeq. 
\begin{proposition}\label{proposition:SLGFextendsBaseFragments}
	SGF properly contains GF and SLGF properly contains SGF, LGF, and GF.
	Moreover, every MFO sentence can be turned into an equivalent SGF sentence with a length linear in the original.
\end{proposition}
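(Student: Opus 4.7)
My plan for the containment claims is to read them off Definition~\ref{definition:SGFandSLGF} directly. For GF~$\subseteq$~SGF and LGF~$\subseteq$~SLGF, every guarded or loosely guarded quantifier $\bigl(\cQ\vu.\,\gamma(\vu,\vv)\bigr)\psi(\vu,\vv)$ fits clause~(iii) of the SGF/SLGF definition by taking the extra tuple $\vz$ to be empty, in which case the guard-separation requirement is vacuous. The inclusion SGF~$\subseteq$~SLGF holds because atomic guards are (degenerate) loose guards and the remaining syntactic conditions coincide, and GF~$\subseteq$~LGF is the same observation.

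For the strictness of the claimed inclusions I will use two witnesses. The sentence $\varphi_* := \bigl(\forall x.\, x\approx x\bigr)\bigl(\exists y.\, y\approx y\bigr)\bigl(P(y)\leftrightarrow\neg P(x)\bigr)$ mentioned just above the proposition separates both SGF and SLGF from GF and LGF. Membership of $\varphi_*$ in SGF I verify by recursion on its structure: the body $P(y)\leftrightarrow\neg P(x)$ is a Boolean combination of atoms; the inner quantification matches clause~(iii) with $\vu=\{y\}$, $\vv=\emptyset$, $\vz=\{x\}$, since $\{y\}$ and $\{x\}$ never co-occur in any atom of the body and the body has no internal guards; the outer quantification is then immediate with $\vz=\emptyset$. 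The sentence fails to lie in LGF because, in any LGF parsing, the body of $\exists y$ has the free variable $x$ which would have to co-occur with $y$ in the loose guard of that quantifier, but the only available guard $y\approx y$ does not mention $x$. To separate SLGF from SGF I will invoke a standard example with a genuinely non-atomic loose guard such as $\forall xy.\, \bigl(R(x,y)\wedge R(y,x)\bigr)\rightarrow P(x,y)$, which lies in LGF (hence in SLGF with $\vz=\emptyset$) but violates the atomic-guard requirement of SGF.

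For the MFO-to-SGF translation I define a length-preserving (up to a constant factor) recursive rewriting $T$ that is the identity on atoms, commutes with Boolean connectives, and replaces quantifiers by trivially guarded ones: $T(\exists v.\, \chi):=\exists v.\,\bigl(v\approx v\wedge T(\chi)\bigr)$ and $T(\forall v.\, \chi):=\forall v.\,\bigl(v\approx v\rightarrow T(\chi)\bigr)$. Semantic equivalence is immediate since $v\approx v$ is valid in every structure, and each quantifier adds only constant overhead, giving linear length. The membership of $T(\varphi)$ in SGF I prove by structural induction, and the decisive observation is that every atom occurring anywhere in $T(\varphi)$ contains at most one variable: either it is a unary MFO atom $P(w)$, or it is an equality $w\approx w$ introduced by $T$. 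Hence, in each recursive step $\bigl(\cQ v.\, v\approx v\bigr)T(\chi)$ with $\vu=\{v\}$, $\vv=\emptyset$, and $\vz$ the remaining free variables of $T(\chi)$, separateness of $\{v\}$ and $\vz$ in $T(\chi)$ holds trivially, and for every guard $w\approx w$ inside $T(\chi)$ either $w=v$ (so $\{w\}\cap\vz=\emptyset$) or $w\neq v$ (so $\{w\}\cap\{v\}=\emptyset$), establishing guard-separation. The only spot that requires real attention is the LGF-exclusion check for $\varphi_*$, since it must be argued against \emph{every} possible LGF parsing rather than a single chosen one; everything else is essentially a careful rereading of the definitions.
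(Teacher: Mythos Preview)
Your proposal is correct and follows essentially the same approach as the paper: the paper uses the very same witness $\varphi_*$ to separate SGF/SLGF from GF/LGF, and the same trivial-guard translation $v\approx v$ for the MFO-to-SGF direction. You add an explicit witness $\forall xy.\,(R(x,y)\wedge R(y,x))\rightarrow P(x,y)$ for SGF~$\subsetneq$~SLGF and a more careful inductive verification of guard-separation for $T(\varphi)$, both of which the paper leaves implicit.
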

For \MFOeq\ sentences the matter seems to be more complicated. 
The sentence $\forall x y.\, x \approx y$, for instance, is not an SLGF sentence and cannot be directly transformed into an equivalent SLGF sentence in the described way.

%%%%%%%%%%%%%%%%%%%%%%%%%%%%%%%%%%%%%%%%%%%%%%%%%%%%%%%%%%%%
Like for all the other novel first-order fragments we have defined, there exists an effective translation procedure that transforms SLGF sentences into equivalent LGF sentences.
\begin{lemma}\label{lemma:EquivalenceSLGFandLGF}
	Every SLGF formula is equivalent to some LGF formula.
\end{lemma}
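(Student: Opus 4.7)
The proof is by structural induction on the SLGF formula, following the spirit of the SBSR and SAF quantifier-shifting translations (Lemmas~\ref{lemma:SBSRquantifierShifting} and~\ref{lemma:SAFquantifierShifting}). Atoms are already LGF, and Boolean combinations of SLGF formulas translate componentwise via the induction hypothesis. The only nontrivial case is a guarded quantifier
\[
\varphi \;=\; \bigl(\forall \vu.\, \gamma(\vu,\vv)\bigr)\,\psi(\vu,\vv,\vz),
\]
where $\vu$ and $\vz$ are guard-separated in $\psi$; the existential case is dual. The challenge is that $\vz$ occurs freely in the scope of $\forall \vu$, which is not permitted in LGF.

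My plan is to first translate $\psi$ via the induction hypothesis to an equivalent LGF formula $\psi'$, and then to distribute the outer guard inwards. Throughout the construction, I maintain the invariant that the translation uses only Boolean distributivity and quantifier shifting (Proposition~\ref{lemma:Miniscoping}), so that no new atoms are produced and each loose guard of $\psi'$ is already a loose guard of $\psi$. Consequently, the guard-separateness of $\vu$ and $\vz$ carries over from $\psi$ to $\psi'$. The defining constraint of LGF now ensures that each maximal guarded subformula of $\psi'$, say $\bigl(\cQ \vw.\, \delta(\vw,\vw')\bigr) \rho(\vw,\vw')$, has all its free variables located inside the guard $\delta$, which by guard-separateness avoids either $\vu$ or $\vz$. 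The same holds trivially for literals of $\psi'$. I then group these units into $\vz$-free parts $\chi_i$ and $\vu$-free parts $\eta_i$ and write $\psi' \semequiv \bigwedge_i \bigl( \chi_i(\vu,\vv) \vee \eta_i(\vv,\vz) \bigr)$. The distribution
\[
\bigl(\forall \vu.\,\gamma(\vu,\vv)\bigr)\psi \;\semequiv\; \bigwedge_i\Bigl(\bigl(\forall\vu.\,\gamma(\vu,\vv)\bigr)\chi_i(\vu,\vv)\;\vee\;\eta_i(\vv,\vz)\Bigr),
\]
valid because each $\eta_i$ is free of $\vu$, then produces an LGF formula whose guarded parts are correctly guarded by $\gamma$ and whose disjuncts $\eta_i$ are LGF by construction.

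The main obstacle is ensuring that the maximal guarded subformulas of $\psi'$ really do inherit ``pure'' free-variable sets, i.e.\ that guard-separateness survives the recursive translation. This ultimately rests on the atom-preserving character of the manipulations used: since no step ever introduces a new co-occurrence of variables inside an atom or a loose guard, separateness is invariant under the transformations. A toy illustration is $\bigl(\forall w.\, Q(w,\vu)\bigr)\bigl(R(w,\vu)\vee S(\vz)\bigr)$, where the inner application of the same distribution step already produces the LGF equivalent $\bigl(\forall w.\, Q(w,\vu)\bigr)R(w,\vu) \vee S(\vz)$, whose remaining guarded unit is $\vz$-free. When all guards are atomic, the construction specialises to a translation from SGF into GF. A minor regrouping of quantifier blocks at the end, justified by Remark~\ref{remark:MoreLiberalSyntaxForSLGF}, presents the result in the standard LGF syntax.
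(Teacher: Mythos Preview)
Your proposal is correct and follows essentially the same approach as the paper's proof: structural induction, with the inductive step handling a guarded quantifier $\bigl(\forall\vu.\,\gamma(\vu,\vv)\bigr)\psi(\vu,\vv,\vz)$ by first obtaining an LGF equivalent of $\psi$, then exploiting that every maximal guarded subformula (and every literal) of an LGF formula has all its free variables inside its guard, so guard-separateness of $\vu$ and $\vz$ forces each such unit to be $\vu$-free or $\vz$-free; a CNF-like regrouping $\bigwedge_i\bigl(\chi_i(\vu,\vv)\vee\eta_i(\vv,\vz)\bigr)$ and distribution of $\forall\vu$ then yield the LGF result. The paper also records explicitly that the output preserves guard-separateness of \emph{any} two sets (not just $\vu,\vz$), which is the invariant needed for the induction at the next level; you state this implicitly via your atom-preserving observation, which suffices.
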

\begin{proof}
	We prove an auxiliary result from which the lemma follows:
	Consider any SLGF formula $\varphi := \bigl( \cQ \vu.\, \gamma(\vu, \vv) \bigr)\psi(\vu, \vv, \vz)$ where $\psi$ is an LGF formula, $\vu, \vv, \vz$ are pairwise disjoint, and $\varphi$'s free variables are exactly the ones in $\vv, \vz$.
	Then, $\varphi$ is equivalent to some LGF formula $\varphi'(\vv, \vz)$.
	Moreover, any two guard-separated sets of variables in $\varphi$ are also guard-separated in $\varphi'$.

	Suppose $\cQ$ is a universal quantifier (the case for existential quantification is dual).
	Recall that, by definition of SLGF, the tuples $\vu$ and $\vz$ need to be guard-separated in $\psi$.
	Since $\psi$ is an LGF formula and since we assume that no variable occurs freely and bound in $\varphi$ at the same time, we know that in every subformula $\chi := \bigl( \cQ \vx.\, \delta(\vx,\vy) \bigr) \eta(\vx, \vy)$ of $\psi$ we either have $\vars(\chi) \cap \vu = \emptyset$ or $\vars(\chi) \cap \vz = \emptyset$ (or both).
	Moreover, since $\varphi$ is an SLGF formula, we have $\vars(A) \cap \vu = \emptyset$ or $\vars(A) \cap \vz = \emptyset$ for every atom $A$ in $\psi$.
	Hence, $\varphi$ is equivalent to some formula of the form $\varphi'' := \forall \vu.\, \gamma(\vu, \vv) \rightarrow \bigwedge_i \bigl( \chi_i(\vu, \vv) \vee \eta_i(\vv, \vz) \bigr)$, where the $\chi_i$ and $\eta_i$ are disjunctions of literals or LGF formulas of the form $\bigl( \cQ \vx.\, \delta(\vx,\vy) \bigr) \eta(\vx, \vy)$.
	Applying distributivity and shifting the quantifier $\forall \vu$ in $\varphi''$, it is easy to show equivalence to $\varphi' := \bigwedge_i \bigl( \bigl( \forall \vu.\, \gamma(\vu, \vv) \rightarrow \chi_i(\vu, \vv) \bigr) \vee \eta_i(\vv, \vz) \bigr)$.
	This is the sought LGF formula.
\end{proof}
Notice that the proof works irrespectively of the structure of guards.
Hence, we also observe that every SGF formula is equivalent to some GF formula.

In connection with the fact that GF and LGF possess the finite model property~\cite{Gradel1999a, Hodkinson2002}, the obvious consequence of Lemma~\ref{lemma:EquivalenceSLGFandLGF} is that the satisfiability problems associated with SGF and SLGF (\emph{SGF-Sat} and \emph{SLGF-Sat}) are decidable.
%%%%%%%%%%%%%%%%%%%%%%%%%%%%%%%%%%%%%%%%%%%%%%%%%%%%%%%%%%%%
\begin{theorem}
	Both SGF and SLGF possess the finite model property.
	Moreover, SGF-Sat and SLGF-Sat are decidable.
\end{theorem}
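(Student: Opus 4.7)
The plan is to piece the theorem together from Lemma~\ref{lemma:EquivalenceSLGFandLGF} together with the known results on GF and LGF cited earlier in the section. First, I would invoke Lemma~\ref{lemma:EquivalenceSLGFandLGF} to obtain, for any given SLGF sentence $\varphi$, a semantically equivalent LGF sentence $\varphi'$ over the same vocabulary. By the remark right after the lemma, the same translation also maps SGF sentences to equivalent GF sentences, since the proof of the lemma does not inspect the internal structure of guards but only relies on the guard-separateness condition and on distributivity. Hence one equivalence-preserving translation handles both fragments simultaneously.

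Next I would transfer the finite model property. Since $\varphi \semequiv \varphi'$, every model of $\varphi$ is a model of $\varphi'$ and vice versa. By Gr\"adel's result for GF and Hodkinson's result for LGF, the sentence $\varphi'$ has a finite model whenever it is satisfiable. That finite model is then also a model of $\varphi$. Thus every satisfiable SGF (resp.\ SLGF) sentence has a finite model, establishing the finite model property for both fragments.

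For decidability, the cleanest route is to appeal to the general principle already used elsewhere in the paper: any first-order fragment with the finite model property has a decidable satisfiability problem (enumerate candidate finite structures and model-check in parallel with a refutation search). Alternatively, and giving a sharper complexity bound, I would reduce SGF-Sat to GF-Sat and SLGF-Sat to LGF-Sat directly through the translation of Lemma~\ref{lemma:EquivalenceSLGFandLGF}, and then invoke the decidability (in fact \textsc{2ExpTime}-completeness) of the latter two problems from Gr\"adel's work.

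There is essentially no obstacle: all the heavy lifting has been done in Lemma~\ref{lemma:EquivalenceSLGFandLGF}, and the remaining steps are purely a matter of importing the known results for GF and LGF. The only point worth stating carefully is that the translation from Lemma~\ref{lemma:EquivalenceSLGFandLGF} is equivalence-preserving (and not merely equisatisfiability-preserving), which is what allows the finite model property, and not just decidability, to transfer from the guarded originals to their separated extensions.
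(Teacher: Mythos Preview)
Your proposal is correct and matches the paper's own reasoning: the theorem is presented as an immediate consequence of Lemma~\ref{lemma:EquivalenceSLGFandLGF} together with the known finite model property of GF~\cite{Gradel1999a} and LGF~\cite{Hodkinson2002}. The paper gives no proof beyond the sentence preceding the theorem, so your write-up is already more detailed than what the paper provides.
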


%%%%%%%%%%%%%%%%%%%%%%%%%%%%%%%%%%%%%%%%%%%%%%%%%%%%%%%%%%%%%%%%%%%%%%%%
We conclude this section with an investigation of the succinctness gap between SLGF and LGF.
The following theorem entails that there is no elementary upper bound on the length of the LGF sentences that result from any equivalence-preserving transformation of SLGF sentences (and even SGF sentences) into LGF. 

\begin{theorem}\label{theorem:LengthSmallestLGFsentences}
	There is a class of SGF sentences such that for every integer $n \geq 3$ the class contains a sentence $\varphi$ with $n$ $\forall \exists$ alternations and with a length polynomial in $n$ for which any equivalent LGF sentence has at least $(n-1)$-fold exponential length in $n$.
\end{theorem}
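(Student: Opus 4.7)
The plan is to follow the template of the proof of Theorem~\ref{theorem:LengthSmallestBSRsentences}, exhibiting an SGF sentence of polynomial length whose minimum models are non-elementarily large and then converting that minimum-model size into a length lower bound for any LGF equivalent via the finite model property of LGF due to Hodkinson~\cite{Hodkinson2002}.

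The first task is to adapt the SF witness sentence
\[ \forall x_n \exists y_n \ldots \forall x_1 \exists y_1. \bigwedge_{i=1}^{4n} \bigl( P_i(x_1,\ldots,x_n) \leftrightarrow Q_i(y_1,\ldots,y_n) \bigr) \]
from Theorem~\ref{theorem:LengthSmallestBSRsentences} into an SGF sentence $\varphi_n$. I would interleave universal quantifiers over $x$-sorted variables, guarded by unary typing predicates $U_X(x_j)$ together with $x$-only chain guards $L_X(x_{j+1},x_j)$, and existential quantifiers over $y$-sorted variables, guarded analogously by $U_Y(y_j)$ and $L_Y(y_{j+1},y_j)$. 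SGF's per-quantifier separateness rule forbids the wide atoms $Q_i(y_1,\ldots,y_n)$, because they would force $y_j$ to co-occur with the still-free $y_{j+1},\ldots,y_n$ at the point of binding $\exists y_j$; I would therefore refactor them as cascades of $y$-only atoms tied together by the $L_Y$-chain, and analogously for the $P_i$. Re-using the iterated power-set model construction of Theorem~\ref{theorem:LengthSmallestBSRsentences} with sets $\cS_k$ satisfying $|\cS_k| \geq \twoup{k}{\Omega(n)}$ and distinguished witnesses $\fb^{(k)}_S$, a two-player game essentially identical to the one used for Claims~I and~II there would show that $\varphi_n$ is satisfied by the full model but falsified by every proper substructure obtained by deleting any single $\fb^{(k)}_S$.

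The resulting minimum-model lower bound $\twoup{n}{\Omega(n)}$ on every model of $\varphi_n$ can then be turned into a length lower bound for any LGF equivalent $\psi$ by invoking Hodkinson's finite model property: $\psi$ has a model of size at most doubly exponential in $\len(\psi)$, and that model must also satisfy $\varphi_n$. Taking iterated logarithms on both sides of the resulting inequality yields the claimed $(n-1)$-fold exponential lower bound on $\len(\psi)$.

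The hard part will be the refactoring step: after breaking the wide atoms $Q_i(\vy)$ into cascades tied by $L_Y$-chain guards, one has to argue that $\varphi_n$ still forces the full iterated power-set structure rather than merely a doubly-exponentially large model. Establishing this requires a careful inductive invariant over the $n$ alternations, analogous to Claim~I of Theorem~\ref{theorem:LengthSmallestBSRsentences}, ensuring that at each level of the witness game the responder is still forced to make a fresh power-set choice despite the local, chain-guarded nature of the body. A secondary technicality is to tune the base $n$ inside $\twoup{k}{\Omega(n)}$ so that the final inequality absorbs the one-level loss incurred by LGF's doubly-exponential small-model bound.
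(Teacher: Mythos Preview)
Your proposal has two problems, one a misreading of the SGF definition and one a genuine gap that breaks the argument.

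First, the refactoring you propose is unnecessary. In Definition~\ref{definition:SGFandSLGF}, the separateness condition on $\bigl(\exists \vu.\, \gamma(\vu,\vv)\bigr)\psi(\vu,\vv,\vz)$ is only between the \emph{bound} variables $\vu$ and the free variables $\vz$ \emph{not occurring in the guard}; the variables $\vv$ that do occur in the guard may freely co-occur with $\vu$ in atoms of $\psi$. Hence the wide atoms $Q_i(y_1,\ldots,y_n)$ are perfectly admissible provided one uses guards $T_j(y_j,\ldots,y_n)$ that cover all of $y_j,\ldots,y_n$; the only separateness required at the $\exists y_j$ step is between $\{y_j\}$ and the $x$-variables, which holds. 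The paper's witness sentence keeps the wide atoms and simply inserts such chain guards $R_j(x_j,\ldots,x_n)$ and $T_j(y_j,\ldots,y_n)$.

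Second, and fatally, your conversion from ``deleting any $\fb^{(k)}_S$ from $\cA$ breaks satisfaction'' to ``every model of $\varphi_n$ is non-elementarily large'' is invalid. The former says only that $\cA$ is a \emph{minimal} model among its own substructures; it says nothing about other models. And indeed any guarded sentence of this shape has trivial one-element models: interpret $R_n$ as empty and the outermost guarded universal is vacuously satisfied. So Hodkinson's small-model bound applied to an LGF equivalent $\psi$ yields no lower bound on $\len(\psi)$ whatsoever. (Contrast the BSR case of Theorem~\ref{theorem:LengthSmallestBSRsentences}, where the Substructure Lemma ties the number of leading existentials directly to the size of a specific model; there is no analogous mechanism for LGF.)

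The paper's proof therefore takes a completely different route after establishing Claim~I. It does \emph{not} bound minimum model size. Instead it normalises an arbitrary LGF equivalent $\varphi_\LGF$ into a Boolean combination $\psi_\LGF$ of $\Sigma'_{PR}$- and $\Sigma'_{QT}$-sentences over an extended vocabulary, with length linear in $\len(\varphi_\LGF)$, and then argues syntactically: if $\psi_\LGF$ has fewer than $\twoup{n-1}{n}$ subformulas, then a counting argument over the sets $S\in\cS_n$ shows that some $\cB_{-S}$ would still satisfy $\psi_\LGF$, contradicting Claim~I. The lower bound is on the \emph{number of subformulas} of the LGF sentence, extracted directly from how many distinct existential subformulas are needed to single out each $\fb^{(n)}_S$ in the fixed model $\cA$.
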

\begin{proof}[Proof sketch]
	Let $n \geq 3$.
	Consider the following SGF sentence in which the sets $\{x_1, \ldots, x_n\}$ and $\{y_1, \ldots, y_n\}$ are separated:
		\begin{align*}
			\varphi := \bigl(\forall x_n.\,& R_n(x_n)\bigr) \bigl(\exists y_n.\, T_n(y_n) \bigr) 
			 		\ldots \\
			 			&\bigl(\forall x_1.\, R_1(x_1, \ldots, x_n) \bigr) \bigl(\exists y_1.\, T_1(y_1, \ldots, y_n) \bigr). 
			 				\bigwedge_{i=1}^{4n} \bigl( P_i(x_1, \ldots, x_n) \leftrightarrow Q_i(y_1, \ldots, y_n) \bigr) .
		\end{align*}	
	
	In order to construct a particular model of $\varphi$, we reuse the sets $\cS_1, \ldots, \cS_n$ from the proof of Theorem~\ref{theorem:LengthSmallestBSRsentences} and define the structure $\cA$ as follows:
		\begin{description}
			\item $\fUA := \bigl\{ \fa^{(k)}_{S}, \fb^{(k)}_{S} \bigm| 1 \leq k \leq n \text{ and } S \in \cS_k \bigr\}$, 
			\item $P_i^\cA := \bigl\{ \<\fa^{(1)}_{S_1}, \ldots, \fa^{(n)}_{S_n}\> \in \fUA^n \bigm| i \in S_1 \in S_2 \in \ldots \in S_n \bigr\}$ for $i = 1, \ldots, 4n$,
			\item $Q_i^\cA := \bigl\{ \<\fb^{(1)}_{S_1}, \ldots, \fb^{(n)}_{S_n}\> \in \fUA^n \bigm| i \in S_1 \in S_2 \in \ldots \in S_n \bigr\}$ for $i = 1, \ldots, 4n$,
			\item $R_j^\cA := \bigl\{ \<\fa^{(j)}_{S_j}, \ldots, \fa^{(n)}_{S_n}\> \in \fUA^n \bigm| S_j \in \ldots \in S_n \bigr\}$ for $j = 1, \ldots, n$, and
			\item $T_j^\cA := \bigl\{ \<\fb^{(j)}_{S_j}, \ldots, \fb^{(n)}_{S_n}\> \in \fUA^n \bigm| S_j \in \ldots \in S_n \bigr\}$ for $j = 1, \ldots, n$.
		\end{description}		
	Once again, it is easy to show that $\cA$ is a model of $\varphi$.
	
	In analogy to the proof of Theorem~\ref{theorem:LengthSmallestBSRsentences}, one may use a game-based argument to show the following claim.	
	For every $S \in \cS_k$, $1 \leq k \leq n$, we define the structure $\cA_{-S}$ as the substructure of $\cA$ induced by the domain $\fUA_{-S} := \fUA \setminus \{\fb^{(k)}_S\}$.
	\begin{description}
		\item \underline{Claim I:} For every $S \in \cS_k$, $1 \leq k \leq n$, the substructure $\cA_{-S}$ of $\cA$ does not satisfy $\varphi$.
			\hfill$\Diamond$
	\end{description}
	A detailed proof can be found in~\cite{Voigt2019PhDthesis}, Section~3.10 (Theorem~3.10.8).

	We have already analyzed the size of the sets $\cS_k$ in  the proof of Theorem~\ref{theorem:LengthSmallestBSRsentences}. 
	Due to the observed lower bounds, we know that $\fUA$ contains at least
		$\sum_{k=1}^{n}\twoup{k}{n}$
	elements $\fb^{(k)}_{S}$.
	
	Let $\varphi_\LGF$ be a shortest LGF sentence that is semantically equivalent to $\varphi$.
	Next, we argue that $\len(\varphi_\LGF)$ is at least $(n-1)$-fold exponential in $n$. 
	We start by introducing some additional notation.
	We divide the domain $\fUA$ into two disjoint parts $\fUA_\fa := \bigl\{ \fa^{(k)}_S \mid 1 \leq k \leq n \text{ and } S \in \cS_k \bigr\}$ and $\fUA_\fb := \bigl\{ \fb^{(k)}_S \mid 1 \leq k \leq n \text{ and } S \in \cS_k \bigr\}$.
	Moreover, we subdivide $\fUA_\fb$ into parts $\fUA_{\fb,k} := \bigl\{ \fb^{(k)}_S \mid S \in \cS_k \bigr\}$ with $1 \leq k \leq n$.
	We define the following vocabularies
	\begin{align*}
		\Sigma &:= \{P_i, Q_i \mid 1 \leq i \leq 4n\} \cup \{R_j, T_j \mid 1 \leq j \leq n\} ~, \\
		\Sigma_{PR} &:= \{P_i \mid 1 \leq i \leq 4n\} \cup \{R_j \mid 1 \leq j \leq n\} ~, \text{ and} \\
		\Sigma_{QT} &:= \{Q_i \mid 1 \leq i \leq 4n\} \cup \{T_j \mid 1 \leq j \leq n\} ~.
	\end{align*}
	Moreover, let $\Sigma'_{PR}$ and $\Sigma'_{QT}$ be disjoint extensions of the vocabularies $\Sigma_{PR}$ and $\Sigma_{QT}$, respectively, each extended by a countably infinite number of nullary predicate symbols.
	
	An atom is called \emph{linear} if every variable in it occurs at most once.
	Any occurrence of a variable $v$ in a non-equational $\Sigma$-atom $A$ is called a \emph{column-$k$-occurrence}, if $v$ is the $(n-k+1)$-st argument from the right in $A$.
	For example, if we fix $n$ to be $6$, then $v$ has a column-$5$-occurrence in each of the atoms $Q_i(x_1, x_2, x_3, x_4, v, x_6)$, $T_3(x_3, x_4, v, x_6)$, $T_5(v, x_6)$, but $v$ has no column-$5$-occurrence in the atoms $T_6(x_6)$ or $Q_i(v, v, v, v, x_5, v)$.
	
	Starting from $\varphi_\LGF$ we construct the sentence $\psi_\LGF$ with the following properties.
	\begin{enumerate}[label=(\alph{*}), ref=(\alph{*})]
		\item \label{enum:theoremLengthSmallestLGFsentences:I} 
			The sentence $\psi_\LGF$ is a Boolean combination of loosely-guarded $\Sigma'_{PR}$-sentences and loosely-guarded $\Sigma'_{QT}$-sentences.
			Moreover, $\psi_\LGF$ is in negation normal form and none of the constituent sentences of $\psi_\LGF$ properly contains a non-atomic subsentence.
		\item \label{enum:theoremLengthSmallestLGFsentences:II} 
			The vocabulary of $\psi_\LGF$ is $\Sigma'_{PR} \cup \Sigma'_{QT}$, i.e.\  $\Sigma$ plus fresh nullary predicate symbols.
		\item \label{enum:theoremLengthSmallestLGFsentences:III} 
			The structure $\cA$ can be uniquely expanded to some model $\cB$ of $\psi_\LGF$ over the same domain and conserving the interpretations of all predicate symbols in $\Sigma$;
			 for every $\cB_{-S}$
			 --- which is defined to be the substructure of $\cB$ induced by the domain $\fUB \setminus \{\fb^{(k)}_S\}$ ---
			 we have $\cB_{-S} \not\models \psi_\LGF$.
		\item \label{enum:theoremLengthSmallestLGFsentences:IV} 
			$\len(\psi_\LGF) \in \cO\bigl( \len(\varphi_\LGF) \bigr)$.
		
		\item \label{enum:theoremLengthSmallestLGFsentences:V} 
			Equations in $\Sigma'_{QT}$-subsentences of $\psi_\LGF$ occur only in guards, and these consist of a single trivial equation, say $v = v$,  and no other atoms, and belong to top-most quantifiers in the subsentence. 
			
		\item \label{enum:theoremLengthSmallestLGFsentences:VI} 
			Every non-equational $\Sigma_{QT}$-atom is linear and for every variable $v$ occurring in it there is some $k$ such that all occurrences of $v$ in non-equational $\Sigma_{QT}$-atoms are column-$k$-occurrences.
		
		\item \label{enum:theoremLengthSmallestLGFsentences:VII} 
			For all distinct variables $v, v'$ that occur freely in a $\Sigma'_{QT}$-subformula $\chi$ and have column-$k$-occurrences and column-$k'$-occurrences, respectively, we know that $k \neq k'$.
	\end{enumerate}
	Notice that every $\Sigma'_{QT}$-subsentence that is part of $\psi_\LGF$ is actually a variable-renamed version of a loosely guarded $\text{FO}^n$ sentence (see Section~\ref{section:FiniteVariableLogicAndSeparateness} for a definition of $n$-variable logics).

	The details of the construction of $\psi_\LGF$ can be found in~\cite{Voigt2019PhDthesis}, Theorem~3.10.8.
	It roughly amounts to the following steps.
	Certain subformulas are simply replaced with Boolean constants \texttt{true} or \texttt{false}, other subformulas $\chi$ are replaced with fresh nullary predicate symbols, say $M$, for which  definitions of the form $M \leftrightarrow \chi$ are conjoined.
	The final result has a length that is linear in the length of the original.
	
	%%%%%%%%%%%%%%%%%%%%%%%%%%%%%%%%%%%%%%%%%%%%%%%%%
	%%%%%%%%%%%%%%%%%%%%%%%%%%%%%%%%%%%%%%%%%%%%%%%%%
	Now, suppose that $\psi_\LGF$ has fewer than $\twoup{n-1}{n}$ subformulas.
	We observed earlier that $\cB \models \psi_\LGF$ and $\cB_{-S} \not\models \psi_\LGF$ for every $S \in \cS_n$.
	Hence, for every $S \in \cS_n$ there is some $\Sigma'_{QT}$-subformula $\psi_S$ in $\psi_\LGF$ of the form
		$\bigl( \exists \vy.\, \gamma_S(\vy, \vz) \bigr) \chi_S(\vy, \vz)$
	and some variable assignment $\beta_S$ such that the following properties hold.
	We have $\beta_S(y_*) = \fb^{(n)}_S$ for exactly one $y_* \in \vy$ and for every $v \in \vy \cup \vz$ different from $y_*$ we have $\beta_S(v) \in \fUA_\fb \setminus \fUA_{\fb,n}$.
	Moreover, we have
	\begin{itemize}
		\item [($*$)] $\cB, \beta_S \models \gamma_S(\vy, \vz) \wedge \chi_S(\vy, \vz)$ and $\cB, \beta' \not\models \gamma_S(\vy, \vz) \wedge \chi_S(\vy, \vz)$ for every $\beta'$ that differs from $\beta_S$ only in the value assigned to $y_*$.
	\end{itemize}
	The tuple $\beta_S(\vz)$ represents a sequence $\vc_S$ of domain elements from $\fUA_\fb$ that can be completed to a chain $\fb^{(1)}_{T_1}, \ldots, \fb^{(n-1)}_{T_{n-1}}, \fb^{(n)}_S$ with $T_1 \in \ldots \in T_{n-1} \in S$.
		
	Fix any $S_* \in \cS_n$ and consider the formula $\psi_{S_*}(\vz)$.
	There is a nonempty set $\hS_*$ such that $\psi_{S_*}(\vz)$ coincides with every $\psi_S(\vz)$ with $S \in \hS_*$.
	For any distinct $S, S' \in \hS_*$ the sequences $\vc_S := \beta_S(\vz)$ and $\vc_{S'} := \beta_{S'}(\vz)$ must differ, for otherwise ($*$) would be violated.
	As there are at most $\prod_{k=1}^{n-1} \twoup{k}{n}$ distinct sequences $\vc_{S}$, $\hS_*$ can contain at most $\prod_{k=1}^{n-1} \twoup{k}{n} < \bigl( \twoup{n-1}{n} \bigr)^n$ sets.
	Recall that there are fewer than $\twoup{n-1}{n}$ subformulas in $\psi_\LGF$.
	We have just inferred that each of these can only serve as $\psi_S$ for at most $\bigl( \twoup{n-1}{n} \bigr)^n$ sets $S \in \cS_n$.
	Hence, only 
		\[ \bigl( \twoup{n-1}{n} \bigr)^n \cdot \twoup{n-1}{n} \;=\; 2^{(n+1) \cdot \twoup{n-2}{n}} \;<\; 2^{\twoup{n-1}{n}} \;=\; \twoup{n}{n} \]
	sets $S$ have a corresponding subformula $\psi_S$.
	But then $|\cS_n| \geq \twoup{n}{n+1}$ implies that there are $S \in \cS_n$ such that $\cB_{-S} \models \psi_\LGF$, which contradicts our assumptions.
	Consequently, $\psi_\LGF$ must have more than $\twoup{n-1}{n}$ subformulas.
\end{proof}

%%%%%%%%%%%%%%%%%%%%%%%%%%%%%%%%%%%%%%%%%%%%%%%%%%%%
%%%%%%%%%%%%%%%%%%%%%%%%%%%%%%%%%%%%%%%%%%%%%%%%%%%%
%%%%%%%%%%%%%%%%%%%%%%%%%%%%%%%%%%%%%%%%%%%%%%%%%%%%
%%%%%%%%%%%%%%%%%%%%%%%%%%%%%%%%%%%%%%%%%%%%%%%%%%%%

\section{Separateness and Guarded Negation}
\label{section:SeparatenessAndGuardedNegationFormulas}

B\'ar\'any, ten Cate, and Segoufin~\cite{Barany2011, Barany2015} have discovered that guards can be shifted from quantification to negation, see also~\cite{Segoufin2017}.
This leads to the \emph{guarded-negation first-order fragment (GNFO)}.
GNFO comprises all relational first-order formula with equality over the Boolean connectives $\neg, \wedge, \vee$ and existential quantification. 
Universal quantification has to be simulated using negation, based on the equivalence $\forall x.\, \psi  \semequiv  \neg \exists x.\, \neg \psi$.
Every occurrence of the negation sign is accompanied by a guard, i.e.\ negation may only occur in the form $\gamma(\vu, \vv) \wedge \neg \varphi(\vv)$, where $\gamma$ is an atomic guard and $\varphi$ is a GNFO formula.
In terms of expressiveness, GNFO strictly subsumes GF~\cite{Barany2015} (Proposition~2.2 and Example~2.3).
Moreover, in the same article it is shown that GNFO enjoys the tree-like model property and the finite model property.
The associated satisfiability problem is complete for deterministic doubly exponential time.
Clique-guarded variants of GNFO have also been studied~\cite{Barany2015}.

%%%%%%%%%%%%%%%%%%%%%%%%%%%%%%%%%%%%%%%%%%%%%%%%%%%%%%%%%%%%%%%
Similar to guarded quantification, guarded negation can be made compatible with separateness of variables in a way that allows us to syntactically extend GNFO while retaining its expressive power and the decidability of the associated satisfiability problem (\emph{GNFO-Sat}).
%%%%%%%%%%%%%%%%%%%%%%%%%%%%%%%%%%%%%%%%%%%%%%%%%%%%%%%%%%%%
\begin{definition}[Separated guarded-negation fragment (SGNFO)]
	Given any sequence $\vu_1, \ldots, \vu_n, \vv$ of pairwise-disjoint tuples of first-order variables, a \emph{separated negation guard} $\gamma(\vu_1, \ldots, \vu_n, \vv)$ is a conjunction of $n$ atoms $A_1(\vu_1, \vv) \wedge \ldots \wedge A_n(\vu_n, \vv)$ (possibly equations) such that for every $i$, $1 \leq i \leq n$, all variables from $\vu_i$ occur at least once in $A_i(\vu_i, \vv)$.

	We define the set of \emph{SGNFO formulas} inductively:
	\begin{enumerate}[label=(\roman{*}), ref=(\roman{*})]
		\item every relational atom is an SGNFO formula, equality is admitted;
		\item every $\wedge$-$\vee$-combination of SGNFO formulas is an SGNFO formula;
		\item for every tuple $\vy$ and every SGNFO formula $\psi(\vy)$ the formula $\exists \vy.\, \psi(\vy)$ is an SGNFO formula;
		\item for every separated negation guard $\gamma(\vu_1, \ldots, \vu_n, \vv)$, and every SGNFO formula $\psi(\vu_1,$ $\ldots, \vu_n)$ the formula $\gamma(\vu_1, \ldots, \vu_n, \vv) \wedge \neg \psi(\vu_1, \ldots, \vu_n)$ is an SGNFO formula if the following conditions are met.
			Let $Z$ be the set of variables that are quantified in $\psi(\vu_1, \ldots, \vu_n)$.
			We require that $Z$ can be divided into pairwise disjoint, possibly empty subsets $Z_1, \ldots, Z_n$ such that the sets $Z_1 \cup \vu_1, \ldots, Z_n \cup \vu_n$ are all pairwise separated in $\psi(\vu_1, \ldots, \vu_n)$.
	\end{enumerate}	
	The \emph{separated guarded-negation fragment (SGNFO)} is the set of all SGNFO sentences.
\end{definition}
It is obvious that GNFO is contained in SGNFO and that there are SGNFO sentences that do not belong to GNFO.
Moreover, every MFO sentence $\varphi$ can be easily turned into an equivalent SGNFO sentence with a length linear in the original.
We first transform $\varphi$ into negation normal form and add trivial equations $v \approx v$ as guards to negated atomic subformulas $\neg P(v)$. 
The result lies in the intersection of SGNFO and \MFOeq. 
For \MFOeq\ sentences the matter seems to be more complicated. 
The sentence $\exists x y.\, x \not\approx y$, for instance, is not an SGNFO sentence and does not seem to have an SGNFO equivalent.
%%%%%%%%%%%%%%%%%%%%%%%%%%%%%%%%%%%%%%%%%%%%%%%%%%%%%%%%%%%%
\begin{proposition}\label{proposition:SGNFOextendsBaseFragments}
	SGNFO properly contains GNFO.
	Moreover, every MFO sentence $\varphi$ can be turned into an equivalent SGNFO sentence of length $\cO\bigl(\len(\varphi)\bigr)$.
\end{proposition}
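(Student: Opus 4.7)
The plan proceeds in two parts, one for each clause of the proposition.

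For the proper inclusion I would first observe that every GNFO formula is already an SGNFO formula: a GNFO atomic guard $\gamma(\vu, \vv)$, i.e.\ a single atom containing all of $\vu \cup \vv$, can be viewed as a separated negation guard with $n := 1$, setting $\vu_1 := \vv$ (so the lone atom $A_1 := \gamma$ contains all of $\vu_1$) and the SGNFO $\vv$ absorbing the GNFO $\vu$; pairwise separateness is then vacuous, involving only one group. For strictness I would exhibit the sentence
\[
    \varphi := \exists x y.\, \bigl( P(x) \wedge Q(y) \bigr) \wedge \neg \bigl( R(x) \wedge S(y) \bigr) .
\]
This is an SGNFO sentence via $\vu_1 = \{x\}$, $\vu_2 = \{y\}$, $\vv = \emptyset$, $A_1 = P(x)$, $A_2 = Q(y)$, together with the observation that $\{x\}$ and $\{y\}$ are separated in the quantifier-free negated subformula $R(x) \wedge S(y)$ because $R$ and $S$ are unary. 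It is not a GNFO sentence because the negation is prefixed by a two-atom conjunction rather than by a single atom.

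For the MFO translation I would define two mutually recursive maps $T^+, T^-$ on MFO formulas that produce SGNFO translations of $\varphi$ and of $\neg \varphi$ respectively. The base case is $T^+\bigl( P(v) \bigr) := P(v)$ and $T^-\bigl( P(v) \bigr) := (v \approx v) \wedge \neg P(v)$, with the trivial equation acting as a unary negation guard. Boolean connectives recurse through $T^+$ and $T^-$ via De Morgan. For existential quantification, $T^+(\exists v.\, \varphi) := \exists v.\, T^+(\varphi)$ and $T^-(\exists v.\, \varphi) := \gamma_{\bar{w}} \wedge \neg \exists v.\, T^+(\varphi)$, where $\gamma_{\bar{w}}$ is the conjunction of trivial equations $w \approx w$, one for each free variable $w$ of $\exists v.\, \varphi$. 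Universal quantification is handled dually. The crucial observation securing SGNFO membership is that every atom occurring anywhere in the translation — whether an original MFO predicate atom or an added equation $v \approx v$ — contains exactly one variable. Hence any two distinct variables are separated in every translated subformula, and the pairwise-separateness condition for a separated negation guard is automatically satisfied for an arbitrary partition of the bound variables into groups $Z_1, \ldots, Z_n$.

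For the length estimate, the naïve inductive accounting charges each $\forall$-translation step an additive $\cO(k)$ for the guard, where $k$ is the number of free variables in scope, which gives a quadratic worst-case bound on alternating prefixes. To obtain the announced $\cO(\len(\varphi))$ bound I would first rewrite $\varphi$ in prenex normal form — an operation that incurs only linear blowup in MFO, since unary atoms preclude any forced variable renaming — and then translate each maximal consecutive block of $\forall$-quantifiers by a single guarded negation of the form $\gamma \wedge \neg \exists \bar{v}.\, \ldots$ instead of one guarded negation per variable, so that the cumulative guard length is controlled by the sum of block-prefix sizes rather than by the sum over individual quantifiers. The main obstacle is precisely this amortized length analysis; a minor side issue is handling an outermost universal block whose negation has no free variable in scope to supply an atom, which I would resolve by prepending a dummy existential $\exists u$ whose trivial equation $u \approx u$ then serves as the required guard atom.
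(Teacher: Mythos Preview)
Your treatment of the proper containment is correct and matches the paper's, which simply calls the matter obvious; your explicit witness is fine.

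For the MFO translation you take a different and more elaborate route than the paper. The paper's argument is essentially a one-liner: transform $\varphi$ into negation normal form, so that every negation sits directly in front of a unary atom, and guard each such $\neg P(v)$ by the trivial equation $v \approx v$. Each guard then has constant size, and the linear bound is immediate. Your $T^{+}/T^{-}$ translation instead eliminates universal quantifiers via nested $\gamma_{\bar w} \wedge \neg \exists v.\,(\ldots)$, which forces you to guard negated \emph{quantified} subformulas whose free-variable sets may be large; that is the source of the quadratic overhead you then try to repair.

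That repair has a genuine gap. Prenexing and collapsing each maximal $\forall$-block into a single guarded negation does not give $\cO\bigl(\len(\varphi)\bigr)$: on a prefix $\forall x_1 \exists y_1 \forall x_2 \exists y_2 \cdots \forall x_n \exists y_n$ you still produce $\Theta(n)$ guarded negations (roughly two per alternation, independent of block length), and the guard at alternation depth $i$ must list $\Theta(i)$ trivial equations, giving cumulative guard length $\Theta(n^{2})$. Blocking only helps when individual quantifier runs are long; it does nothing for the number of alternations, which can itself be linear in $\len(\varphi)$. The paper sidesteps this by guarding only negated \emph{atoms}: in MFO these always have exactly one free variable, so every guard is a single equation.
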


%%%%%%%%%%%%%%%%%%%%%%%%%%%%%%%%%%%%%%%%%%%%%%%%%%%%%%%%%%%%
After we have seen the results obtained for the other novel first-order fragments, it should not come as a surprise that there is an effective translation from SGNFO to GNFO.

\begin{definition}[Strict separateness]\label{definition:StrictSeparateness}
	Let $\varphi$ be any first-order formula and let $X, Y$ be two disjoint sets of first-order variables.
	We say that $X, Y$ are \emph{strictly separated in $\varphi$}
	if $X$ and $Y$ are separated in $\varphi$ and, in addition, for every subformula $\chi := (\cQ v.\, \ldots)$ of $\varphi$ we either have $\vars(\chi) \cap X = \emptyset$ or $\vars(\chi) \cap Y = \emptyset$.
\end{definition}

\begin{lemma}\label{lemma:EquivalenceSGNFOandGNFO}
	Every SGNFO formula is equivalent to some GNFO formula.
\end{lemma}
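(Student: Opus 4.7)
The plan is to proceed by structural induction on SGNFO formulas. The cases where the outermost operation is an atom, a $\wedge$-$\vee$-combination, or an existential quantifier are immediate by applying the inductive hypothesis to subformulas, since these operations are also available in GNFO. The essential case is the guarded-negation construction $\gamma(\vu_1,\ldots,\vu_n,\vv) \wedge \neg \psi(\vu_1,\ldots,\vu_n)$ with $\gamma = A_1(\vu_1,\vv) \wedge \ldots \wedge A_n(\vu_n,\vv)$, where the sets $Z_1\cup\vu_1,\ldots,Z_n\cup\vu_n$ are pairwise separated in $\psi$ (here $Z_i$ is the block of $\psi$'s bound variables associated with $\vu_i$).

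The key idea is a rewriting trick in the spirit of the SLGF case (Lemma~\ref{lemma:EquivalenceSLGFandLGF}). First I would establish an auxiliary claim: every SGNFO formula $\psi$ whose variables admit a pairwise separated partition into blocks $W_1,\ldots,W_n$ is equivalent to some GNFO formula in decomposed form $\bigvee_j \bigwedge_{i=1}^{n} \xi_{i,j}$, where each $\xi_{i,j}$ is a GNFO formula whose variables all come from $W_i$. Granted this, the main case of the induction is handled by
\begin{align*}
\gamma \wedge \neg \psi
 &\semequiv \gamma \wedge \bigwedge_j \bigvee_{i=1}^n \neg \xi_{i,j} \\
 &\semequiv \gamma \wedge \bigwedge_j \bigvee_{i=1}^n \bigl( A_i \wedge \neg \xi_{i,j} \bigr),
\end{align*}
where the last step exploits $\gamma \models A_i$ for every $i$. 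Since the free variables of $\xi_{i,j}$ all lie in $\vu_i$ and $A_i(\vu_i,\vv)$ contains each variable of $\vu_i$, the formula $A_i \wedge \neg \xi_{i,j}$ is a legitimate single-atom-guarded negation in GNFO. The enveloping $\wedge$-$\vee$-combination with the atoms $A_1,\ldots,A_n$ of $\gamma$ then stays in GNFO, as required.

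The auxiliary claim is proved by an inner induction on $\psi$. Atoms belong to a single block by separateness; $\wedge$- and $\vee$-nodes are handled by merging the inductively obtained DNFs and distributing; for $\exists y.\, \psi_1$ with $y \in Z_i \subseteq W_i$, one applies the decomposition to $\psi_1$ and shifts $\exists y$ past every $\xi_{k,j}$ with $k \neq i$ (legal because $y$ does not occur in these). The delicate case is when $\psi$ contains a nested guarded negation $\gamma' \wedge \neg \psi'$: one invokes the main lemma recursively on it to obtain a GNFO equivalent of the form $A'_1 \wedge \ldots \wedge A'_m \wedge \bigwedge_k \bigvee_l (A'_l \wedge \neg \xi'_{l,k})$, then observes that each atom $A'_l$ is placed inside a single outer block $W_{i(l)}$ by the ambient outer separateness, and finally applies the auxiliary claim a second time to each $\xi'_{l,k}$ (against the outer partition restricted to its variables) in order to align the resulting formula with the outer blocks.

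The main obstacle will be bookkeeping the interplay between the inner partition accompanying each nested guarded negation and the outer partition $W_1,\ldots,W_n$ used in the auxiliary claim. One must verify that the restriction of the outer partition to the variable set of any subformula is again pairwise separated, so that the auxiliary claim can be applied there, and that the overall recursion on formula complexity is well-founded. Once these invariants are in place, the distributive rewritings together with the single-atom guarding trick yield the full GNFO translation.
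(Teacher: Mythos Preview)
Your proposal is correct and rests on the same core rewriting as the paper: decompose $\psi$ into a Boolean combination of pieces each confined to one block, push the negation through, and attach the matching atomic guard $A_i$ to each negated piece. Your equivalence $\gamma \wedge \neg\psi \semequiv \gamma \wedge \bigwedge_j\bigvee_i (A_i \wedge \neg\xi_{i,j})$ is exactly the paper's Claim~I (up to a CNF/DNF duality).

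The organization differs. The paper introduces \emph{strict separateness} (Definition~\ref{definition:StrictSeparateness}): the blocks are separated not only at the level of atoms but in every quantified subformula. It then works bottom-up from smallest violating subformulas: Claim~II pushes each existential quantifier inward so that every top-level quantified subformula touches only one block, establishing strict separateness; after that, Claim~I applies directly because every ``basic formula'' (atom, $\exists$-subformula, or inner guarded negation) already lives in a single block, so one CNF transformation suffices with no recursion into nested guards. This sidesteps the mutual recursion between your main lemma and your auxiliary decomposition claim.

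Your route also works, but your auxiliary claim is stronger than it needs to be: you require \emph{all} variables of $\xi_{i,j}$ to lie in $W_i$, which is what forces the ``second application'' to each $\xi'_{l,k}$ and makes termination look delicate. If you weaken the claim so that only the \emph{free} variables of $\xi_{i,j}$ lie in $W_i$ --- which is all the guarding step needs, since $A_i$ must merely cover the free variables of $\xi_{i,j}$ --- then the nested case becomes immediate: after one recursive call on $\psi'$ with its inner partition, each $A'_l \wedge \neg\xi'_{l,k}$ is already a GNFO formula whose free variables lie in $\vars(A'_l)\subseteq W_{i(l)}$, so no second pass is required and the recursion terminates by subformula size alone. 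This simplification is essentially what the paper's strict-separateness invariant buys.
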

\begin{proof}
	We infer two auxiliary results from which the lemma follows:
	\begin{description}
		\item \underline{Claim~I:}
			Consider any SGNFO formula $\varphi(\vu_1, \ldots, \vu_n, \vv) := \gamma(\vu_1, \ldots, \vu_n, \vv) \wedge \neg \psi(\vu_1, \ldots, \vu_n)$ where $\psi(\vu_1, \ldots, \vu_n)$ is any GNFO formula, and the $\vu_1, \ldots, \vu_n$ are pairwise strictly separated in $\psi(\vu_1, \ldots, \vu_n)$.
			Then, $\varphi(\vu_1, \ldots, \vu_n, \vv)$ is equivalent to some GNFO formula $\varphi'(\vu_1, \ldots, \vu_n, \vv)$ in which $\vu_1, \ldots, \vu_n$ are pairwise strictly separated.

		\item \underline{Proof:}
			Let \emph{basic formulas} in $\psi$ be subformulas that 
			do not lie in the scope of any quantifier or negation sign in $\psi$ 
			and that are either 
				guarded negation formulas $\delta(\vx_1, \ldots, \vx_k, \vy) \wedge \neg \chi(\vx_1, \ldots, \vx_k)$, 
				quantified formulas $\exists \vy.\, \chi(\vy, \vx)$, or
				atoms.
			Transform $\psi$ into a conjunction $\psi' := \bigwedge_{i \in I} \eta_i(\vu_1, \ldots, \vu_n)$ of disjunctions $\eta_i$ of basic formulas.
			Since we assumed $\psi(\vu_1, \ldots, \vu_n)$ to be a GNFO formula in which the $\vu_1, \ldots, \vu_n$ are pairwise strictly separated, we conclude that every basic formula $\chi(\vx)$ in $\psi$ satisfies $\vx \cap \vu_\ell \neq \emptyset$ for at most one $\ell$, $1 \leq \ell \leq n$.
			Hence, the disjuncts $\eta_i$ in $\psi'$ can be regrouped such that $\psi'$ has the form
				\[ \bigwedge_{i \in I} \eta_{i, 1}(\vu_1) \vee \ldots \vee  \eta_{i, n}(\vu_n) ~. \]
			Therefore, $\gamma(\vu_1, \ldots, \vu_n, \vv) \wedge \neg \psi(\vu_1, \ldots, \vu_n)$ is equivalent to the following sentence, where $A_1(\vu_1, \vv), \ldots, A_n(\vu_n, \vv)$ is the list of atoms that $\gamma(\vu_1, \ldots, \vu_n, \vv)$ comprises:
				\begin{align*}
					&A_1(\vu_1, \vv) \wedge \ldots \wedge A_n(\vu_n, \vv) \wedge \neg \psi'(\vu_1, \ldots, \vu_n) \\
					&\semequiv\; A_1(\vu_1, \vv) \wedge \ldots \wedge A_n(\vu_n, \vv) \wedge \neg \bigwedge_{i \in I} \eta_{i, 1}(\vu_1) \vee \ldots \vee  \eta_{i, n}(\vu_n) \\
					&\semequiv\; A_1(\vu_1, \vv) \wedge \ldots \wedge A_n(\vu_n, \vv) \wedge \bigvee_{i \in I} \neg \eta_{i, 1}(\vu_1) \wedge \ldots \wedge \neg \eta_{i, n}(\vu_n) \\
					&\semequiv\; \bigvee_{i \in I} \bigl( A_1(\vu_1, \vv) \wedge \neg \eta_{i,1}(\vu_1) \bigr) \wedge \ldots \wedge \bigl( A_n(\vu_n, \vv) \wedge \neg \eta_{i,n}(\vu_n) \bigr)
				\end{align*}
			This is the sought GNFO formula.			
			\hfill$\Diamond$	

		\item \underline{Claim~II:}
			Consider any SGNFO formula $\varphi(\vx, \vv) := \exists \vy.\, \psi(\vy, \vx, \vv)$ where $\psi(\vy, \vx, \vv)$ is any GNFO formula in which the sets $\vy \cup \vx$ and $\vv$ are strictly separated.
			Then, $\varphi(\vx, \vv)$ is equivalent to some GNFO formula $\varphi(\vx, \vv)$ in which $\vy \cup \vx$ and $\vv$ are strictly separated.

		\item \underline{Proof:}
			Let \emph{basic formulas} in $\psi$ be defined like in the proof of Claim~I.
			Transform $\psi$ into a disjunction $\psi' := \bigvee_{i \in I} \eta_i(\vy, \vx, \vv)$ of conjunctions $\eta_i(\vy, \vx, \vv)$ of basic formulas.
			Since we assumed $\psi$ to be a GNFO formula in which the sets $\vy \cup \vx$ and $\vv$ are strictly separated, every basic formula $\chi(\vu)$ in $\psi$ satisfies $\vu \cap (\vy \cup \vx) = \emptyset$ or $\vu \cap \vv = \emptyset$.
			Hence, the conjuncts $\eta_i$ in $\psi'$ can be regrouped such that $\psi'$ has the form
				\[ \bigvee_{i \in I} \eta_{i,1}(\vy, \vx) \wedge \eta_{i,2}(\vv) ~. \]
			Therefore, $\exists \vy.\, \psi(\vy, \vx, \vv)$ is equivalent to the sentence
				\begin{align*}
					\exists \vy.\, \bigvee_{i \in I} \eta_{i,1}(\vy, \vx) \vee \eta_{i,2}(\vv) 
					\quad \semequiv \quad 
					\bigvee_{i \in I} \bigl( \exists \vy.\, \eta_{i,1}(\vy, \vx) \bigr) \wedge \eta_{i,2}(\vv) ~.
				\end{align*}
			This is the sought GNFO formula.			
			\hfill$\Diamond$	
	\end{description}		
	
	Now consider any SGNFO formula $\varphi$ that is not a GNFO formula.	
	Let $\chi(\vu_1, \ldots, \vu_n, \vv) := \gamma(\vu_1, \ldots, \vu_n, \vv) \wedge \neg \chi'(\vu_1, \ldots, \vu_n)$ be a smallest subformula of $\varphi$ that violates the conditions of guarded negation in GNFO.
	Hence, the set $Z$ of variables quantified in $\chi'(\vu_1, \ldots, \vu_n)$ can be divided into pairwise disjoint sets $Z_1, \ldots, Z_n$ such that $Z_1 \cup \vu_1, \ldots, Z_n \cup \vu_n$ are pairwise separated in $\chi'$.
	Further suppose that in $\chi'$ the sets $Z_1 \cup \vu_1, \ldots, Z_n \cup \vu_n$ are not strictly separated.
	Let $\eta(\vx) := \exists \vy.\, \eta'(\vy, \vx)$ be a smallest subformula of $\chi'$ that violates this strict-separateness condition.
	Then, we can subdivide $\vy$ into pairwise disjoint parts $\vy_1, \ldots, \vy_n$ such that $\vy_i \subseteq Z_i$ for every $i$.
	Moreover, we can subdivide $\vx$ into pairwise disjoint parts $\vx_1, \ldots, \vx_n$ such that $\vx_i \subseteq Z_i \cup \vu_i$ for every $i$.
	Then, $\eta(\vx)$ can be rewritten into $\exists \vy_1 \exists \vy_2 \ldots \exists \vy_n.\, \eta'(\vy_1, \vx_1, \ldots, \vy_n, \vx_n)$.
	Since we assume $\eta(\vx)$ to be minimal, the sets $\vy_1 \cup \vx_1, \ldots, \vy_n \cup \vx_n$ are pairwise strictly separated in $\eta'(\vy_1, \vx_1, \ldots, \vy_n, \vx_n)$.
	By Claim~II, $\eta(\vx)$ is equivalent to some $\eta''(\vx)$ in which the $Z_1 \cup \vu_1, \ldots, Z_n \cup \vu_n$ are pairwise strictly separated.
	Therefore, the formula $\chi'(\vu_1, \ldots, \vu_n)$ can be transformed into an equivalent formula $\chi''(\vu_1, \ldots, \vu_n)$ in which $Z_1 \cup \vu_1, \ldots, Z_n \cup \vu_n$ are pairwise strictly separated.
	By Claim~I, $\gamma(\vu_1, \ldots, \vu_n, \vv) \wedge \neg \chi''(\vu_1, \ldots, \vu_n)$ can be transformed into an equivalent formula that belongs to GNFO and in which the sets $\vu_1, \ldots, \vu_n$ are pairwise strictly separated.
	
	By iterative and exhaustive application of the outlined transformation, we can derive a GNFO formula that is equivalent to the SGNFO formula $\varphi$.
\end{proof}

%%%%%%%%%%%%%%%%%%%%%%%%%%%%%%%%%%%%%%%%%%%%%%%%%%%%%%%%%%%%
Since GNFO is known to possess the finite model property~\cite{Barany2015}, Lemma~\ref{lemma:EquivalenceSGNFOandGNFO} entails the same for SGNFO.
Of course, this also means that \emph{SGNFO-Sat} is decidable.
\begin{theorem}
	SGNFO possesses the finite model property and, hence, SGNFO-Sat is decidable.
\end{theorem}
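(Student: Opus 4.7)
My plan is essentially to invoke the two main ingredients already in hand and combine them. Given any satisfiable SGNFO sentence $\varphi$, I would first apply Lemma~\ref{lemma:EquivalenceSGNFOandGNFO} to obtain a semantically equivalent GNFO sentence $\varphi'$. Since GNFO is known to enjoy the finite model property (B\'ar\'any, ten Cate, and Segoufin~\cite{Barany2015}), the sentence $\varphi'$ admits some finite model $\cA$. By equivalence, $\cA$ is also a model of $\varphi$, which establishes the finite model property for SGNFO.

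For decidability of \emph{SGNFO-Sat}, I would invoke the standard fact (cited in this paper as, e.g., Proposition~6.0.4 of~\cite{Borger1997} and already used implicitly for SF and SGF) that any fragment of first-order logic enjoying the finite model property has a decidable satisfiability problem: one simply enumerates candidate finite structures in parallel with a proof search for unsatisfiability, and termination is guaranteed by the existence of a finite model in the satisfiable case.

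There is no real obstacle here, since both ingredients are already settled: the translation work has been done in Lemma~\ref{lemma:EquivalenceSGNFOandGNFO}, and the finite model property of GNFO is an imported result. The only thing worth emphasizing in the write-up is that the translation from SGNFO into GNFO is effective, so in fact one obtains a concrete decision procedure (not merely an abstract decidability statement) by first computing the GNFO equivalent and then running any known decision procedure for GNFO-Sat.
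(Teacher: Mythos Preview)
Your proposal is correct and matches the paper's approach exactly: the paper simply notes that Lemma~\ref{lemma:EquivalenceSGNFOandGNFO} together with the known finite model property of GNFO~\cite{Barany2015} immediately yields the result, and states the theorem without further proof. Your additional remark about effectiveness of the translation is a reasonable elaboration but not required.
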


%%%%%%%%%%%%%%%%%%%%%%%%%%%%%%%%%%%%%%%%%%%%%%%%%%%%%%%%%%%%
We conclude this section with an investigation of the succinctness gap between SGNFO and GNFO.
The following theorem entails that there is no elementary upper bound on the length of the GNFO sentences that result from any equivalence-preserving transformation of SGNFO sentences into GNFO. 

\begin{theorem}\label{theorem:LengthSmallestGNFOsentences}
	There is a class of SGNFO sentences such that for every integer $n \geq 3$ the class contains a sentence $\varphi$ with a length polynomial in $n$ for which any equivalent GNFO sentence has at least $(n-1)$-fold exponential length in $n$.
\end{theorem}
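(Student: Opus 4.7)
The plan is to adapt the proof of Theorem~\ref{theorem:LengthSmallestLGFsentences} to the guarded negation setting. The witness class will consist of SGNFO encodings of essentially the same chain-of-subsets sentences used there, and the main work will lie in a normal-form analysis of arbitrary GNFO equivalents.

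First, for each $n \geq 3$ I will construct an SGNFO sentence $\varphi$ of length polynomial in $n$ whose models encode the chain structure $S_1 \in S_2 \in \ldots \in S_n$ together with the separated biconditionals $\bigwedge_{i=1}^{4n} \bigl( P_i(x_1,\ldots,x_n) \leftrightarrow Q_i(y_1,\ldots,y_n) \bigr)$. Since the guarded universal quantifier $\bigl( \forall x_j.\, R_j(x_j,\ldots,x_n) \bigr) \psi$ is semantically equivalent to the guarded negation $\neg \exists x_j.\, \bigl( R_j(x_j,\ldots,x_n) \wedge \neg \psi \bigr)$, the SGF sentence from the proof of Theorem~\ref{theorem:LengthSmallestLGFsentences} translates in a length-linear way into an SGNFO sentence: the separateness of the tuples $\{x_1,\ldots,x_n\}$ and $\{y_1,\ldots,y_n\}$ in the biconditional yields exactly the separated-negation-guard condition required by SGNFO, with the guards $R_j$ and $T_j$ playing the role of the atomic guards on the $\neg$-signs.

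Second, I will reuse the model $\cA$ and the critical-element analysis from the proof of Theorem~\ref{theorem:LengthSmallestLGFsentences} verbatim: the same structure built from $\cS_1,\ldots,\cS_n$ satisfies $\varphi$, and for every $S \in \cS_k$ the induced substructure $\cA_{-S}$ obtained by removing $\fb^{(k)}_S$ fails to satisfy $\varphi$. The game-based justification of this claim is purely semantic and does not depend on whether the universal quantification is expressed directly or via guarded negation.

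Third, and this is where the main work lies, I will argue that any GNFO sentence $\varphi_\GNFO$ equivalent to $\varphi$ must contain more than $\twoup{n-1}{n}$ distinct subformulas. Following the strategy of the SLGF-to-LGF proof, I will first normalize $\varphi_\GNFO$ into a Boolean combination $\psi_\GNFO$ of vocabulary-pure guarded-negation formulas over disjoint vocabularies $\Sigma'_{PR}$ and $\Sigma'_{QT}$, introducing fresh nullary predicate symbols to name subformulas that would otherwise mix both sorts. The normalization must preserve length up to a linear factor and must respect the column structure of the $Q_j$-atoms, so that every variable occurring freely in a $\Sigma'_{QT}$-subformula always appears in the same argument position. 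Once this form is secured, the counting argument of Theorem~\ref{theorem:LengthSmallestLGFsentences} applies: every $\Sigma'_{QT}$-subformula $\psi_S$ can serve as the critical-element witness for at most $\prod_{k=1}^{n-1} \twoup{k}{n} < \bigl(\twoup{n-1}{n}\bigr)^n$ sets $S \in \cS_n$, while $|\cS_n| \geq \twoup{n}{n+1}$, forcing at least $\twoup{n-1}{n}$ distinct subformulas.

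The main obstacle is the normalization step. Unlike loosely guarded universal quantification, guarded negation has a much looser syntactic profile: $\neg$-signs sit at arbitrary depths inside $\wedge$-$\vee$-$\exists$-combinations, and the clean separation of quantifier scopes exploited in the LGF case is not directly available. The idea will be to run the transformation of Lemma~\ref{lemma:EquivalenceSGNFOandGNFO} in reverse, naming any subformula whose free variables straddle the two vocabularies by a fresh nullary predicate and conjoining a definition, then re-guarding the remaining negations so that the result stays in GNFO. Establishing that this procedure preserves both the linear length bound and the column-consistency of $Q$-atoms is the delicate part; once it is in place, the combinatorial tower bound follows exactly as in Theorem~\ref{theorem:LengthSmallestLGFsentences}.
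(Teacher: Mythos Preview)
Your proposal is correct and follows essentially the same route as the paper: construct the SGNFO witness by rewriting the SGF sentence of Theorem~\ref{theorem:LengthSmallestLGFsentences} via $\bigl(\forall x.\,\gamma\bigr)\psi \semequiv \neg\exists x.\,\gamma\wedge\neg\psi$ (using the $R_j$ and $T_j$ atoms as separated negation guards), reuse the model $\cA$ and Claim~I verbatim, then normalize an arbitrary GNFO equivalent into a vocabulary-separated Boolean combination with fresh nullary symbols and apply the same subformula-counting argument. Two small points to tidy up: the paper additionally passes to negation normal form after the normalization (leaving GNFO but keeping the properties needed for the count), and your phrasing ``naming any subformula whose free variables straddle the two vocabularies by a fresh nullary predicate'' should be read as naming the relevant \emph{subsentences}---open subformulas cannot be abbreviated by nullary symbols.
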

\begin{proof}[Proof sketch]
	Let $n \geq 3$.	
	The following SGNFO sentence is equivalent to the sentence $\varphi$ given in the proof of Theorem~\ref{theorem:LengthSmallestLGFsentences}.
	In the sentence the sets $\{x_1, \ldots, x_n\}$ and $\{y_1, \ldots, y_n\}$ and $\{z\}$ are separated:
		\begin{align*}
				&\exists z.\, z = z  \\
					&\;\;\wedge\; \neg \exists x_n.\, R_n(x_n) \;\wedge\; \neg \exists y_n.\, R_n(x_n) \wedge T_n(y_n) \\
					&\quad\;\; \wedge\; \neg \exists x_{n-1}.\, R_{n-1}(x_{n-1}, x_n) \wedge T_n(y_n) \;\wedge\; \neg \exists y_{n-1}.\, R_{n-1}(x_{n-1}, x_n) \wedge T_{n-1}(y_{n-1}, y_n) \\
			 		&\qquad\;\; \ldots \\
			 			&\qquad\;\; \wedge\; \neg \exists x_1.\, R_1(x_1, \ldots, x_n) \wedge T_2(y_2, \ldots, y_n) \;\wedge\; \neg \exists y_1.\,  R_1(x_1, \ldots, x_n) \wedge T_1(y_1, \ldots, y_n) \\
			 				&\qquad\quad\;\; \wedge\; \bigwedge_{i=1}^{4n} \bigl( R_1(x_1, \ldots, x_n) \wedge T_1(x_1, \ldots, x_n) \;\wedge\; P_i(x_1, \ldots, x_n) \wedge Q_i(y_1, \ldots, y_n) \bigr) \\
			 				&\qquad\qquad\qquad\;\; \vee\; \bigl( R_1(x_1, \ldots, x_n) \wedge T_1(x_1, \ldots, x_n) \;\wedge\; \neg \bigl( P_i(x_1, \ldots, x_n) \vee Q_i(y_1, \ldots, y_n) \bigr) \bigr) .
		\end{align*}	
	The subformulas $z=z$, $R_n(x_n)$, and the more complex $R_i(x_i, \ldots, x_n) \wedge T_i(y_i, \ldots, y_n)$ and $R_i(x_i, \ldots,$ $x_n) \wedge T_{i+1}(y_{i+1}, \ldots, y_n)$ serve as separated negation guards for negated subformulas.
	We need to introduce a bit of redundancy in order to meet the syntactic requirements of SGNFO.
	Nevertheless, it is easy to see that the above sentence is equivalent to the sentence $\varphi$ used in the proof of Theorem~\ref{theorem:LengthSmallestLGFsentences}.
	The rest of the proof works in analogy to that proof.
	Details can be found in Appendix~\ref{appendix:LengthSmallestGNFOsentences}.
\end{proof}

%%%%%%%%%%%%%%%%%%%%%%%%%%%%%%%%%%%%%%%%%%%%%%%%%%%%
%%%%%%%%%%%%%%%%%%%%%%%%%%%%%%%%%%%%%%%%%%%%%%%%%%%%
%%%%%%%%%%%%%%%%%%%%%%%%%%%%%%%%%%%%%%%%%%%%%%%%%%%%
%%%%%%%%%%%%%%%%%%%%%%%%%%%%%%%%%%%%%%%%%%%%%%%%%%%%

\section{Separateness and Finite-Variable First-Order Logic}
\label{section:FiniteVariableLogicAndSeparateness}

Classes of first-order formulas over a fixed finite set of variables have been studied extensively (see, e.g., \cite{Otto1997, Grohe1998, Dawar1999, Gradel1999c, Kieronski2018} and also~\cite{Libkin2004}, Section~11, and~\cite{Gradel2007}, Sections~1.1.3, 2.7, and 2.8).
If the number of variables is restricted to two, we even obtain a decidable fragment.
The \emph{two-variable fragment (FO$^\text{2}$)} comprises all relational first-order sentences with equality that are build up using at most two variables.
It is important to understand that this restriction allows reusing variable names in quantifiers --- be they nested or not.
Therefore, in the formulas in the present section we explicitly allow variables to occur free and bound in a formula, and to reappear in distinct occurrences of quantifiers in the same formula.
For example, the sentence $\forall x \exists y.\, \bigl( E(x,y) \wedge \exists x.\, \bigl( E(y,x) \wedge \exists y.\, E(x,y) \bigr) \bigr)$ belongs to \FOtwo.

Scott gave a reduction of \emph{\FOtwo\-Sat} to GKS-Sat~\cite{Scott1962}.
This reduction works only for sentences without equality.
In 1975 Moritmer~\cite{Mortimer1975} proved that \FOtwo\ with equality possesses the finite model property.
The computational complexity of the satisfiability problem for \FOtwo\ has been determined by Gr\"adel, Kolaitis, and Vardi~\cite{Gradel1997}: it is \NEXPTIME-complete.
A recent survey of \FOtwo\ and various extensions is~\cite{Kieronski2018}.

We will see in the present section that also in the context of finite-variable logics separateness can give us more syntactic freedom and the ability to express certain properties in a substantially more succinct way.

\begin{definition}[Separated finite-variable formulas]\label{definition:SFOk}
	For every positive integer $k$ we define $\FOk$ to be the set of all relational first-order formulas with equality in which all variables are taken from a finite sequence $x_1,\ldots, x_k$ of pairwise distinct first-order variables.
	
	For every $k \geq 1$ we define the class $\SFOk$ of relational first-order formulas as follows.
	Let $V_1, V_2, V_3, \ldots$ be a sequence of pairwise disjoint sets $V_i$ of first-order variables, each containing exactly $k$ pairwise distinct variables.
	For every $m \geq 1$ we define the set $\SFOkm$ to be the set of all relational first-order formulas $\varphi$ with equality in which all variables are taken from $V_1 \cup \ldots \cup V_m$ and in which all sets $V_1, \ldots, V_m$ are pairwise separated.
	The class $\SFOk$ is the union $\bigcup_{m \geq 1} \SFOkm$.
\end{definition}

%%%%%%%%%%%%%%%%%%%%%%%%%%%%%%%%%%%%%%%%%%%%%%%%%%%%%%%%%%%%%%%%
It is easy to see that \FOk\ is a special case of \SFOk.
Moreover, MFO is a proper subset of $\text{SFO}^{k}$ for $k=1$.
In contrast, for every positive integer $k$ the \MFOeq\ sentence $\forall x_1 \ldots x_k \exists y.\, \bigwedge_k y \not\approx x_k$ does not belong to \SFOk.
\begin{proposition}
	For every positive $k$, \SFOk\ contains \FOk\ and MFO.
\end{proposition}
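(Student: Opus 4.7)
The plan is to verify each inclusion separately; both are essentially syntactic observations that follow directly from Definition~\ref{definition:SFOk}, without needing any non-trivial logical transformation.

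For $\FOk \subseteq \SFOk$: I would fix an $\FOk$-formula $\varphi$, whose variables all lie in some $k$-element set. Renaming (uniformly) these variables to the $k$ variables constituting $V_1$, $\varphi$ becomes a formula whose variables all lie in $V_1$, hence in $V_1 \cup \ldots \cup V_m$ with $m := 1$. The pairwise-separateness condition demanded by Definition~\ref{definition:SFOk} is vacuous when $m = 1$ (there are no two distinct sets among $V_1, \ldots, V_m$ to separate), so the renamed $\varphi$ lies in $\SFOkm$ for $m = 1$, and therefore in $\SFOk$.

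For $\MFO \subseteq \SFOk$ (for an arbitrary fixed $k \geq 1$): the key observation is that every atom in an MFO formula has the form $P(v)$ for a single variable $v$, since MFO contains only unary predicate symbols and no equality. Given an MFO sentence $\varphi$ with $n$ distinct free or bound variables, I choose $m := \lceil n/k \rceil$ and pick any injective renaming of these $n$ variables into $V_1 \cup \ldots \cup V_m$, which is possible because $|V_1 \cup \ldots \cup V_m| = k m \geq n$. In the renamed formula every atom $A$ satisfies $|\vars(A)| = 1$, so $\vars(A)$ intersects at most one of the sets $V_1, \ldots, V_m$; the sets $V_1, \ldots, V_m$ are therefore pairwise separated in the renamed formula, which consequently belongs to $\SFOkm \subseteq \SFOk$.

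The only conceptual step worth highlighting --- and hence the closest thing to an obstacle --- is recognising that in Definition~\ref{definition:SFOk} the parameter $m$ ranges over all positive integers (via the union $\bigcup_{m \geq 1} \SFOkm$), so it can always be chosen large enough to accommodate the variables present, while the separateness condition collapses to something trivial in both cases: it is vacuous when $m = 1$ (the \FOk\ case) and automatic whenever every atom is monadic (the MFO case).
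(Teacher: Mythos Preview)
Your proof is correct and is essentially the natural fleshing-out of what the paper leaves implicit: the paper merely remarks that ``\FOk\ is a special case of \SFOk'' and that MFO sits inside $\text{SFO}^k$, without spelling out the variable-renaming details. Your observation that the separateness condition is vacuous for $m=1$ (handling \FOk) and automatic for monadic atoms (handling MFO, for any $k$) is exactly the point.
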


%%%%%%%%%%%%%%%%%%%%%%%%%%%%%%%%%%%%%%%%%%%%%%%%%%%%%%%%%%%%%%%%
In the following lemma we establish the equivalence between \SFOk\ and \FOk\ for every positive $k$ by devising an equivalence-preserving translation between the two fragments.
\begin{lemma}\label{lemma:EquivalenceSFOkandFOk}
	Every \SFOk\ sentence is equivalent to some \FOk\ sentence.
\end{lemma}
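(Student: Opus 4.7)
The plan is to mirror the two-step strategy used for the other separated fragments in this paper. First, apply quantifier shifting (Proposition~\ref{lemma:Miniscoping}) together with the distributive laws of Boolean algebra to pull apart the variable-index structure of $\varphi$. Then, rename bound variables to collapse the result onto a single set of $k$ variables.

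For the first step, I claim that every $\SFOkm$ sentence $\varphi$ is equivalent to one in which every subformula $\cQ v.\, \chi$ satisfies $\vars(\chi) \subseteq V_{\idx(v)}$. To achieve this invariant, repeatedly pick any $\cQ v.\, \chi$ violating it. By the hypothesis that the sets $V_1, \ldots, V_m$ are separated in $\varphi$, no atom inside $\chi$ mixes $V_{\idx(v)}$-variables with $V_j$-variables for $j \neq \idx(v)$. Rewrite the top-level Boolean structure of $\chi$ into disjunctive normal form if $\cQ = \exists$, and into conjunctive normal form if $\cQ = \forall$, treating atoms and nested quantifier subformulas as indivisible units. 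In each of the resulting conjuncts (resp.\ disjuncts), the units that contain $v$ freely involve only $V_{\idx(v)}$-variables in their own atoms, whereas the remaining units do not involve $v$ at all and hence admit $\cQ v$ being shifted past them via Proposition~\ref{lemma:Miniscoping}. After recursing into nested quantifier subformulas and iterating the procedure, the invariant is achieved; termination is guaranteed by a measure such as the number of triples $(\cQ v.\, \chi', j)$ in the current formula for which $V_j \cap \vars(\chi') \neq \emptyset$ and $j \neq \idx(v)$.

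For the second step, observe that in the resulting formula every maximal quantifier subformula is necessarily closed, since otherwise its free variables could not be bound anywhere within the sentence $\varphi$. By the established invariant, each maximal quantifier subformula $\psi_l$ uses variables from a single set $V_{i_l}$, and the sentence as a whole is a Boolean combination of such maximal subformulas together with closed atoms. Within each $\psi_l$, replace the $r$-th variable of $V_{i_l}$ by the $r$-th variable of $V_1$ for every $r \in \{1, \ldots, k\}$. This uniform renaming of bound variables in a sentence preserves semantics. The outcome is a formula in which every variable occurrence stems from $V_1$, and which is therefore an \FOk{} sentence equivalent to $\varphi$.

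The main obstacle will be the bookkeeping of the first step, namely verifying that the iterative shifting really establishes the stated invariant throughout the formula. This task is in the same spirit as Lemma~\ref{lemma:SBSRquantifierShifting} and Lemma~\ref{lemma:SAFquantifierShifting}, which carry out analogous invariant-tracking transformations for SBSR and SAF; however, the separateness condition here is strictly cleaner than in those settings, since there are no designated benign co-occurrences of variables to accommodate, so the bookkeeping should be no harder than the techniques already developed in the paper.
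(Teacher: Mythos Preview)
Your proposal is correct and follows essentially the same approach as the paper: establish that each quantifier scope can be confined to a single $V_i$ via DNF/CNF normalization and quantifier shifting (the paper packages this as passing from separateness to \emph{strict} separateness, Definition~\ref{definition:StrictSeparateness}, via an inside-out induction in Claim~I), then rename bound variables so that everything lives in a single $k$-element variable set. One small point worth tightening: your claim that ``units that contain $v$ freely involve only $V_{\idx(v)}$-variables'' is only guaranteed for nested quantifier units once those inner units have already been processed, so the recursion must proceed innermost-first---the paper makes this explicit by taking strict separateness of $\chi$ as the inductive hypothesis in Claim~I.
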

\begin{proof}
	Let $m$ be any positive integer and consider any sentence $\varphi$ from $\SFOkm$.
	Then, $\vars(\varphi) \subseteq V_1 \cup \ldots \cup V_m$ and all $V_1, \ldots, V_m$ are pairwise separated in $\varphi$.
	Without loss of generality, we assume that $\varphi$ is in negation normal form.
	
	We prove an auxiliary result from which the lemma follows.
	\begin{description}
		\item \underline{Claim~I:}
				Consider any subformula $\psi = \cQ \vv.\, \chi$ of $\varphi$ with $\vv \subseteq V_i$ for some $i$.
				If the sets $V_1, \ldots, V_m$ are pairwise strictly separated in $\chi$ (cf.\ Definition~\ref{definition:StrictSeparateness}), then we can construct a formula $\psi'$ that is equivalent to $\psi$ and in which all sets $V_1, \ldots, V_m$ are pairwise strictly separated.

		\item \underline{Proof:}
				A \emph{basic formula} is any atom and any subformula $(\cQ' v. \ldots)$ in $\chi$ that does not lie within the scope of any quantifier in $\chi$.
				Suppose $\cQ$ is an existential quantifier. 
				(The case of $\cQ = \forall$ can be treated in an analogous way.)
				
				Let $\vz$ be the tuple collecting all variables that occur freely in $\psi$.
				We first transform $\chi$ into an equivalent disjunction of conjunctions of negated or non-negated basic formulas.
				This is always possible.
				Since the sets $V_1, \ldots, V_m$ are pairwise strictly separated in $\chi$, the constituents of the $j$-th conjunction can be grouped into $m$ parts:
				$\eta_{j,1}\bigl( V_1 \cap (\vv \cup \vz) \bigr), \ldots, \eta_{j,m} \bigl( V_m \cap (\vv \cup \vz) \bigr)$ with $\vars(\eta_{j,\ell}) \subseteq V_\ell$.
				This is possible because of our assumption that the sets $V_1, \ldots, V_m$ are all pairwise strictly separated in $\chi$.
				Hence, since $\vv \subseteq V_i$, $\psi$ is equivalent to a formula of the form
					\[ \exists \vv. \bigvee_j \eta_{j,1}\bigl( V_1 \cap \vz \bigr) \wedge \ldots \wedge \eta_{j,i} \bigl( V_i \cap (\vv \cup \vz) \bigr) \wedge \ldots \wedge \eta_{j,m}(V_m \cap \vz) ~. \]
				We shift the existential quantifier block $\exists \vv$ inwards so that it only binds the (sub-)con\-junc\-tions $\eta_{j,i} \bigl( V_i \cap (\vv \cup \vz) \bigr)$.
				The resulting formula 
					\[ \bigvee_j \eta_{j,1}\bigl( V_1 \cap \vz \bigr) \wedge \ldots \wedge \Bigl( \exists \vv.\, \eta_{j,i} \bigl( V_i \cap (\vv \cup \vz) \bigr) \Bigr) \wedge \ldots \wedge \eta_{j,m}(V_m \cap \vz) ~. \]
				is the sought $\psi'$ in which the sets $V_1, \ldots, V_m$ are all pairwise strictly separated.
				\hfill$\Diamond$
	\end{description}	
	The sets $V_1, \ldots, V_m$ are pairwise strictly separated in any quantifier-free subformula of $\varphi$.
	Hence, applying Claim~I iteratively, we can transform $\varphi$ into an equivalent sentence $\varphi'$ in which the sets $V_1, \ldots, V_m$ are pairwise strictly separated.
	Since $\varphi'$ is a sentence, the strict separateness condition leads to the observation that for every subformula $\cQ \vv.\, \chi$ in $\varphi'$ there is some $j$ such that $\vars(\cQ \vv.\, \chi) \subseteq V_j$.
	As each of the $V_i$ contains exactly $k$ variables, we can rename the bound variables in $\varphi'$ such that $\varphi'$ is an $\wedge$-$\vee$-combination of \FOk\ sentences.
	Since \FOk\ is closed under Boolean combinations, $\varphi'$ is an \FOk\ sentence.
\end{proof}

%%%%%%%%%%%%%%%%%%%%%%%%%%%%%%%%%%%%%%%%%%%%%%%%%%%%%%%%%%%%%%%%%%
Since \FOtwo\ possess the finite model property~\cite{Mortimer1975, Gradel1997}, Lemma~\ref{lemma:EquivalenceSFOkandFOk} entails the same for the class of \SFOtwo\ sentences and, hence, \emph{\SFOtwo-Sat} is decidable.
\begin{theorem}
	The class of \SFOtwo\ sentences possesses the finite model property and, hence, the satisfiability problem for \SFOtwo\ sentences is decidable.
\end{theorem}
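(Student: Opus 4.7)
My plan is straightforward: the theorem follows essentially as a corollary of Lemma~\ref{lemma:EquivalenceSFOkandFOk} combined with the known finite model property for \FOtwo. The only work is to observe that the translation preserves satisfiability on a domain of the same size (indeed, over the very same structure), so a finite model for the translated sentence yields a finite model of the original.

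Concretely, I would proceed as follows. Let $\varphi$ be any satisfiable \SFOtwo\ sentence. By Lemma~\ref{lemma:EquivalenceSFOkandFOk} (instantiated with $k = 2$), we can effectively construct an \FOtwo\ sentence $\varphi'$ that is semantically equivalent to $\varphi$. Since the transformation used in the proof of that lemma consists only of rewriting basic formulas via Boolean laws, shifting quantifiers in accordance with distributivity, and renaming bound variables (identifying the $k$ variables from each $V_i$ with the fixed pair $x_1, x_2$), the sentences $\varphi$ and $\varphi'$ hold in exactly the same structures. In particular, $\varphi'$ is satisfiable. By the finite model property of \FOtwo\ (Mortimer~\cite{Mortimer1975}, see also \cite{Gradel1997}), there exists a finite structure $\cA$ with $\cA \models \varphi'$. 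By the equivalence, $\cA \models \varphi$, witnessing that \SFOtwo\ has the finite model property. Decidability of \SFOtwo-Sat then follows by the standard observation that a computable finite model property implies decidability: one can enumerate candidate finite structures of increasing size in parallel with proof search for unsatisfiability.

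I do not foresee any real obstacle: the whole content of the theorem sits in the preceding lemma, and the finite model property of \FOtwo\ is a classical result that is cited explicitly in the paragraphs preceding Definition~\ref{definition:SFOk}. The only subtlety worth mentioning explicitly is that equivalence (not merely equisatisfiability) holds, so that the bound on the size of smallest \FOtwo\ models transfers directly to a bound on smallest \SFOtwo\ models; consequently, one in fact obtains the same complexity upper bounds for \SFOtwo-Sat as for \FOtwo-Sat, modulo the blow-up incurred in the translation of Lemma~\ref{lemma:EquivalenceSFOkandFOk}.
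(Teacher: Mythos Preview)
Your proposal is correct and mirrors the paper's own argument exactly: the paper treats this theorem as an immediate corollary of Lemma~\ref{lemma:EquivalenceSFOkandFOk} (instantiated at $k=2$) together with the cited finite model property of \FOtwo\ due to Mortimer and Gr\"adel--Kolaitis--Vardi. Your additional remark about inheriting complexity bounds modulo the translation blow-up is accurate but goes slightly beyond what the paper states at this point.
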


%%%%%%%%%%%%%%%%%%%%%%%%%%%%%%%%%%%%%%%%%%%%%%%%%%%%%%%%%%%%%%%%
Having established the equivalence between \FOk\ and \SFOk\ regarding expressiveness, it remains to investigate the succinctness gap between the two fragments.
We will do this for the special case of \SFOtwo\ sentences compared to the class of \FOtwo\ sentences.
\begin{theorem}\label{theorem:LengthSmallestSFOtwoSentences}
	There is a class of \SFOtwo\ sentences and some positive integer $n_0$ such that for every integer $n \geq n_0$ the class contains a sentence $\varphi$ with a length linear in $n$ for which any equivalent \FOtwo\ sentence has a length that is at least exponential in $n$.
\end{theorem}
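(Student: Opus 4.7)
The plan is to mimic the lower-bound strategy of Theorem~\ref{theorem:LengthSmallestGKSsentences}, replacing prenex GKS sentences by FO$^2$ sentences in Scott normal form. I would take as the witness class the sentences
	\[ \varphi_n \;:=\; \forall x_2 \exists y_2 \forall x_1 \exists y_1.\; \bigwedge_{i=1}^{8n}\bigl( P_i(x_1, x_2) \leftrightarrow Q_i(y_1, y_2) \bigr) ~, \]
which have length linear in $n$ and belong to SFO$^{2,2}$ via the partition $V_1 := \{x_1, x_2\}$, $V_2 := \{y_1, y_2\}$, since no $P$-atom and no $Q$-atom share variables. Next I would reuse the ``tower'' model $\cA$ from the proof of Theorem~\ref{theorem:LengthSmallestGKSsentences}: set $\cS_1 := \{S \subseteq [8n] : |S| = 2n\}$ and $\cS_2 := \{S \subseteq \cS_1 : |S| = \tfrac12|\cS_1|\}$, let the domain be $\{\fa^{(k)}_S, \fb^{(k)}_S \mid k \in \{1,2\}, S \in \cS_k\}$, and set $P_i^\cA := \{\<\fa^{(1)}_{S_1}, \fa^{(2)}_{S_2}\> \mid i \in S_1 \in S_2\}$ and symmetrically for $Q_i^\cA$. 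One checks $\cA \models \varphi_n$ exactly as there, and by the same game-style argument (Claim~I of that proof) for every $S \in \cS_1 \cup \cS_2$ the substructure $\cA_{-S}$ obtained by removing $\fb^{(k)}_S$ fails $\varphi_n$.

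Now I would assume for contradiction that $\varphi_\FOtwo$ is an FO$^2$ sentence equivalent to $\varphi_n$ whose length is less than $2^n$, and pass to \emph{Scott normal form}: there is an equisatisfiable sentence
	\[ \varphi_{\text{Sc}} \;=\; \forall x y.\, \psi_0(x,y) \;\wedge\; \bigwedge_{j=1}^m \forall x\, \exists y.\, \chi_j(x,y) \]
over the original vocabulary expanded by polynomially many fresh unary predicates, with $m$ and all $\len(\psi_0), \len(\chi_j)$ polynomial in $\len(\varphi_\FOtwo)$; in particular $m < 2^n$ for $n$ sufficiently large. Since $\cA \models \varphi_n$, the Scott-form sentence $\varphi_{\text{Sc}}$ has an expansion $\cB$ of $\cA$ (to the fresh unary predicates) satisfying it. Now I would argue, in close analogy with the proof of Theorem~\ref{theorem:LengthSmallestLGFsentences}: since $\cB \models \varphi_{\text{Sc}}$ while $\cB_{-S} \not\models \varphi_{\text{Sc}}$ for every $S \in \cS_2$, for each such $S$ there must be some element $a_S$ and some conjunct index $j_S \in [m]$ such that the only witness $b$ with $\cB \models \chi_{j_S}(a_S, b)$ is $\fb^{(2)}_S$ itself. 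By pigeonhole over the $|\cS_2| \geq 2^{2^{4n-1}}$ sets $S$ and the $m < 2^n$ conjuncts, a single index $j_*$ witnesses this role for at least $|\cS_2|/m \geq 2^{2^{3n}}$ distinct sets $S$. This would contradict the fact that $\chi_{j_*}(x,y)$ is a quantifier-free binary formula whose satisfaction in $\cB$ depends only on the $1$-types of $a_S$ and $\fb^{(2)}_S$ and their $2$-type --- of which there are at most $2^{O(\len(\varphi_{\text{Sc}}))} < 2^{2^{n}}$ many, far fewer than the $2^{2^{3n}}$ distinct $S$.

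I expect the main obstacle to be the last step: justifying carefully that a single Scott conjunct $\forall x\exists y.\, \chi_{j_*}$ can pin down at most exponentially (in $\len(\varphi_{\text{Sc}})$) many ``critical'' elements $\fb^{(2)}_S$ in the role of unique witness. Unlike in the GKS and LGF proofs, where existential witnesses are leading constants (Skolemizable with a fixed set of names), here the witnesses are produced anew for every $a$ by the Scott conjunct, and one needs to use the fact that two distinct $\fb^{(2)}_{S}, \fb^{(2)}_{S'}$ that realize the same $1$-type and share the same $2$-type with $a_S, a_{S'}$ respectively would be interchangeable as witnesses in $\chi_{j_*}$. Making this type-theoretic interchangeability rigorous, and controlling the number of realized $1$- and $2$-types over the expanded vocabulary, is where the real work of the proof lies; once that counting bound is established, the exponential gap follows from the size of $\cS_2$ as above.
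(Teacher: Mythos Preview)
Your overall strategy matches the paper's: the same witness sentences $\varphi_n$ (the paper uses $2(n{+}1)$ conjuncts rather than $8n$, which is inessential), the same two-level model $\cA$, the same ``remove $\fb^{(2)}_S$'' falsification, and the passage to Scott normal form. Where your plan diverges is the endgame, and there the type-counting argument you sketch does not go through.

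The specific gap is your interchangeability claim. Knowing that $(a_S,\fb^{(2)}_S)$ and $(a_{S'},\fb^{(2)}_{S'})$ realize the same $2$-type tells you $\chi_{j_*}(a_S,\fb^{(2)}_S)\leftrightarrow\chi_{j_*}(a_{S'},\fb^{(2)}_{S'})$, but to contradict uniqueness you need $\chi_{j_*}(a_S,\fb^{(2)}_{S'})$, which depends on the $2$-type of the \emph{cross pair} $(a_S,\fb^{(2)}_{S'})$ --- and that is not determined by the data you have pigeonholed on. In the concrete model the obstruction is real: if $a_S=\fb^{(1)}_T$ with $T\in S$ but $T\notin S'$, then the $Q_i$-atoms distinguish $(a_S,\fb^{(2)}_S)$ from $(a_S,\fb^{(2)}_{S'})$ regardless of $1$-types, so $\fb^{(2)}_{S'}$ need not be a witness for $a_S$. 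A global bound on the number of realized $2$-types therefore does not cap the number of elements that can occur as \emph{some} challenger's unique witness.

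The paper sidesteps this by not doing pigeonhole on the Scott conjuncts at all. Instead it partitions $\{\fb^{(2)}_S : S\in\cS_2\}$ only by their $1$-type over the $\kappa\in\cO(\len(\varphi_{\FOtwo}))$ fresh \emph{unary} predicates, takes the largest cell $\hfUD$ (of size $\geq 2^{2^n-\kappa}$), and proves a combinatorial lemma: if $\kappa$ is polynomial in $n$ then some $S_*$ with $\fb^{(2)}_{S_*}\in\hfUD$ has every $T\in S_*$ also lying in some other $S'\in\hcS$ with $\fb^{(2)}_{S'}\in\hfUD$. Now one fixes this single $S_*$ and, for \emph{each} Scott conjunct $\forall x\exists y.\,\eta_j$ that breaks in $\cB_{-S_*}$, does a case analysis on the challenger $\fc$: using the explicit shape of $P_i^\cA,Q_i^\cA$ one checks that any non-equational binary atom is false on $(\fc,\fb^{(2)}_{S_*})$ unless $\fc=\fb^{(1)}_T$ with $T\in S_*$, and in that case the choice of $S_*$ supplies an $S'\ni T$ with $\fb^{(2)}_{S'}\in\hfUD$ so that $(\fc,\fb^{(2)}_{S'})$ matches $(\fc,\fb^{(2)}_{S_*})$ atom-by-atom. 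This is exactly the model-specific control of cross-pair $2$-types that your type-counting argument was missing. If you want to repair your plan, the combinatorial ``every $T\in S_*$ is covered by another $S'$ in the same unary cell'' step is the key idea to add.
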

\begin{proof}
	Let $n \geq 1$ be some positive integer.
	Consider the following first-order sentence in which the sets $\{x_1, x_2\}$ and $\{y_1, y_2\}$ are separated:
		\[ \varphi := \forall x_2 \exists y_2 \forall x_1 \exists y_1. \bigwedge_{i=1}^{2(n+1)} \bigl( P_i(x_1, x_2) \leftrightarrow Q_i(y_1, y_2) \bigr) ~.\]
		
	In analogy to the proof of Theorem~\ref{theorem:LengthSmallestBSRsentences}, we construct a model $\cA$ using the sets
		$\cS_1 := \bigl\{ S \subseteq [2(n+1)] \bigm| |S| = n+1 \bigr\}$ 
	and 
		$\cS_2 := \bigl\{ S \subseteq \cS_1 \bigm| |S| = \tfrac{1}{2} |\cS_1| \bigr\}$.
	We observe 
		\begin{align*}
			|\cS_1| &= {{2(n+1)} \choose {n+1}} \geq \left( \frac{2(n+1)}{n+1} \right)^{n+1} \!= 2^{n+1}
			\; \text{and} \;
			|\cS_2| = {{|\cS_1|} \choose {|\cS_1|/2}} \geq \left( \frac{|\cS_1|}{|\cS_1|/2} \right)^{|\cS_1|/2} \geq 2^{2^n} .
		\end{align*}	 
	\begin{description}
		\item \underline{Claim~I:}
			Let $\hcS$ be any subset of $\cS_2$  such that for every $S \in \hcS$ there is some $T \in S \subseteq \cS_1$ which does not belong to any $S' \in \hcS \setminus \{S\}$.
			Then, $\hcS$ contains at most $|\cS_1| \leq 2^{2(n+1)}$ sets as elements.
		\item \underline{Proof:} Obvious.
			\hfill$\Diamond$
	\end{description}	
	Let $\cA$ be the structure with
		\begin{description}
			\item $\fUA := \bigl\{ \fa^{(1)}_{S}, \fb^{(1)}_{S} \bigm| S \in \cS_1 \bigr\} \cup \bigl\{ \fa^{(2)}_{S}, \fb^{(2)}_{S} \bigm| S \in \cS_2 \bigr\}$, 
			\item $P_i^\cA := \bigl\{ \<\fa^{(1)}_{S_1}, \fa^{(2)}_{S_2}\> \in \fUA^2 \bigm| i \in S_1 \in S_2 \bigr\}$ for $i = 1, \ldots, 2(n+1)$, and
			\item $Q_i^\cA := \bigl\{ \<\fb^{(1)}_{S_1}, \fb^{(2)}_{S_2}\> \in \fUA^2 \bigm| i \in S_1 \in S_2 \bigr\}$ for $i = 1, \ldots, 2(n+1)$.
		\end{description}		
	
	Like in the proof of Theorem~\ref{theorem:LengthSmallestBSRsentences}, it is easy to show that $\cA$ is a model of $\varphi$, and that the following claim holds.
	\begin{description}
		\item \underline{Claim II:}
				For every $S \in \cS_2$ the substructure $\cA_{-S}$ of $\cA$ induced by $\fUA_{-S} := \fUA \setminus \bigl\{ \fb^{(2)}_{S} \bigr\}$ does not satisfy $\varphi$.
				\hfill$\Diamond$
	\end{description}		
	
	%%%%%%%%%%
	Let $\varphi_\FOtwo$ be a shortest \FOtwo\ sentence that is semantically equivalent to $\varphi$.
	Next, we argue that $\len(\varphi_\FOtwo)$ is at least exponential in $n$. 
	In~\cite{Scott1962} a normal form for \FOtwo\ sentences was introduced, which is referred to as \emph{Scott normal form} in the literature.
	For instance, Lemma~8.1.2 in~\cite{Borger1997} states that there is some relational \FOtwo\ sentence $\psi_\FOtwo$ that has the following properties:
	\begin{enumerate}[label=(\alph{*}), ref=(\alph{*})]
		\item $\psi_\FOtwo$ is of the form $\bigl( \forall u v.\, \chi(u,v) \bigr) \wedge \bigwedge_{i=1}^m \forall x \exists y.\, \eta_i(x,y)$ with quantifier-free $\chi$ and $\eta_i$,
		\item the vocabulary underlying $\psi_\FOtwo$ is that of $\varphi_\FOtwo$ extended with fresh unary predicate symbols $R_1, \ldots, R_\kappa$ with $\kappa \in \cO\bigl( \len(\varphi_\FOtwo) \bigr)$,
		\item $\psi_\FOtwo \models \varphi_\FOtwo$,
		\item every model of $\varphi_\FOtwo$ can be uniquely expanded to a model of $\psi_\FOtwo$ over the same domain and conserving the interpretations of all predicate symbols in $\varphi_\FOtwo$, and
		\item $\len(\psi_\FOtwo) \in \cO\bigl( \len(\varphi_\FOtwo) \bigr)$.
	\end{enumerate}
	Let $\cB$ be the unique expansion of $\cA$ for which $\cB \models \psi_\FOtwo$ and $\fUB = \fUA$.
	Claim~II can be extended to $\cB$, because of $\psi_\FOtwo \models \varphi_\FOtwo$.
	The set $\{ \fb^{(2)}_{S} \mid S \in \cS_2\}$ can be partitioned into at most $2^\kappa$ parts, each containing elements that are indistinguishable by their belonging to the sets $R_k^\cB$.
	Let $\hfUD$ be the largest of these parts and let $\hcS := \{ S \mid \fb^{(2)}_{S} \in \hfUD \}$.
	Hence, for all $\fb, \fb' \in \hfUD$ and every $k$ with $1 \leq k \leq \kappa$ we have $\fb \in R_k^\cB$ if and only if $\fb' \in R_k^\cB$.
	\begin{description}
		\item \underline{Claim~III:}
				Let $n$ be sufficiently large.
				If $\kappa$ is polynomial in $n$, then there is some $S_* \in \hcS$ such that $\fb^{(2)}_{S_*} \in \hfUD$ and for every $T \in S_*$ there is some $S' \in \hcS \setminus \{ S_* \}$ that also contains $T$ and we have $\fb^{(2)}_{S'} \in \hfUD$.
				
		\item \underline{Proof:}
				$\hfUD$ contains at least $2^{2^n} / 2^\kappa = 2^{2^n - \kappa}$ domain elements.
				Hence, $|\hcS| \geq 2^{2^n - \kappa}$.
				Moreover, we observe $2^{2^n - \kappa} > 2^{2(n+1)}$ for sufficiently large $n$, if $\kappa$ is polynomial in $n$.
				By Claim~I, there is some $S_* \in \hcS$ such that for every $T \in S_*$ there is some $S' \in \hcS \setminus \{S_*\}$ with $T \in S'$.
				Claim~III follows by definition of $\hfUD$ and $\hcS$.
				\hfill$\Diamond$
	\end{description}
	We fix some $S_* \in \hcS$ as specified in Claim~III.
	Let $\cB_{-S_*}$ be the substructure of $\cB$ induced by the domain $\fUB_* := \fUB \setminus \bigl\{ \fb^{(2)}_{S_*} \bigr\}$.
	By Claim~II (extended to $\cB$), there is some maximal nonempty set $J \subseteq [m]$ such that for every $j \in J$ we have $\cB_{-S_*} \not\models \forall x \exists y.\, \eta_j(x,y)$.
	Consequently, for every $j \in J$ there is some domain element $\fc \in \fUB_*$ such that $\cB \models \eta_j\bigl(\fc, \fb^{(2)}_{S_*}\bigr)$ and $\cB \not\models \eta_j(\fc,\fd)$ for every $\fd \in \fUB_*$.
	Regarding the domain element $\fc$, we distinguish two cases.
	
	Consider any $j \in J$ and any $\fc \in \{ \fb^{(2)}_S \mid S \in \cS_2, S \neq S_* \} \cup \{ \fa^{(1)}_{S} \mid S \in \cS_1\} \cup \{ \fa^{(2)}_{S} \mid S \in \cS_2\}$ for which we have $\cB \not\models \eta_j(\fc,\fd)$ for every $\fd \in \fUB_*$.
	Let $S'$ be some set from $\hcS$ that is different from $S_*$ and for which $\fc \neq \fb_{S'}^{(2)}$.
	Notice that $\eta_j$ is quantifier free and, hence, exclusively contains atoms over two variables $x,y$.
	Moreover, for every binary atom $A(x,y)$ of the form $P_i(x, y)$, $P_i(y, x)$, $Q_i(x, y)$, or $Q_i(y, x)$ we have $\cB \not\models A(\fc, \fd)$ for every $\fd \in \{\fb^{(2)}_S \mid S \in \cS_2\}$, including $\fb^{(2)}_{S_*}$ and $\fb^{(2)}_{S'}$.
	Since all other non-equational atoms occurring in $\eta_j$ are monadic and because of $\fb^{(2)}_{S_*}, \fb^{(2)}_{S'} \in \hfUD$, we conclude the following.
	For every non-equational atom $A$ occurring in $\eta_j$ we have $\cB \models A\bigl(\fc,\fb^{(2)}_{S_*}\bigr)$ if and only if $\cB \models A\bigl(\fc,\fb^{(2)}_{S'}\bigr)$ 
	(more precisely: $\cB \not\models A(\fc, \fb^{(2)}_{S_*})$ and $\cB \not\models A(\fc, \fb^{(2)}_{S'})$).
	Consider any equation $x \approx y$.
	Because of $\fc \in \fUB_{-S_*}$, we have $\cB \not\models \fc \approx \fb_{S_*}^{(2)}$.
	On the other hand, we also have $\cB_{-S_*} \not\models \fc \approx \fb_{S'}^{(2)}$.
	But then, we all in all get $\cB_{-S_*} \models \eta_j\bigl(\fc,\fb^{(2)}_{S'}\bigr)$, which entails $\cB_{-S_*} \models \exists y.\, \eta_j(\fc,y)$.
	This leads to a contradiction and, hence, there cannot be any pair $j, \fc$ as described.
	
	Consider any $j \in J$ and any $\fc = \fb^{(1)}_T$ with $T \in \cS_1$ for which $\cB \not\models \eta_j(\fc,\fd)$ for every $\fd \in \fUB_*$.
	Suppose $T \not\in S_*$.
	Then, for every binary atom $A(x,y)$ of the form $P_i(x, y)$, $P_i(y, x)$, $Q_i(x, y)$, $Q_i(y, x)$, or $x \approx y$ we have $\cB \not\models A(\fc, \fb^{(2)}_{S_*})$ but also $\cB \not\models A(\fc, \fb^{(2)}_{S})$ for every $S \in \cS_1 \setminus \{T\}$.
	Like in the above case we conclude $\cB_{-S_*} \models \exists y.\, \eta_j(\fc,y)$, yielding a contradiction.
	Suppose $T \in S_*$.
	By Claim~III, there is some $S' \in \hcS \setminus \{ S_* \}$ such that $T \in S'$ and $\fb^{(2)}_{S'} \in \hfUD \setminus \{\fb^{(2)}_{S_*}\} \subseteq \fUB_*$.
	Then, we have \\
		\centerline{$\cB \models Q_i\bigl(\fb^{(1)}_{T}, \fb^{(2)}_{S_*}\bigr) \;\text{ if and only if }\; i \in T$}
	and \\
		\centerline{$\cB \models Q_i\bigl(\fb^{(1)}_{T}, \fb^{(2)}_{S'}\bigr) \;\text{ if and only if }\; i \in T$.}
	For every other binary atom $A(x,y)$ of the form $Q_i(y, x)$, $P_i(x, y)$, $P_i(y, x)$, or $x \approx y$ we have 
		$\cB \not\models A\bigl(\fb^{(1)}_{T}, \fb^{(2)}_{S_*}\bigr)$ 
	and 
		$\cB \not\models A\bigl(\fb^{(1)}_{T}, \fb^{(2)}_{S'}\bigr)$.
	For every monadic atom $A(x,y)$ occurring in $\eta_j$ we have
		$\cB \models A\bigl(\fb^{(1)}_{T}, \fb^{(2)}_{S_*}\bigr) \quad\text{ if and only if }\quad \cB \models A\bigl(\fb^{(1)}_{T}, \fb^{(2)}_{S'}\bigr)$.
	All in all, this leads to \\
		\centerline{$\cB \models \eta_j\bigl(\fb^{(1)}_{T}, \fb^{(2)}_{S_*}\bigr) \quad\text{ if and only if }\quad \cB \models \eta_j\bigl(\fb^{(1)}_{T}, \fb^{(2)}_{S'}\bigr)$.}
	Therefore, we get $\cB_{-S_*} \models \exists y.\, \eta_j\bigl(\fb^{(1)}_{T}, y\bigr)$, which constitutes a contradiction.
	
	This means, the number $\kappa$ of unary predicate symbols occurring in $\psi_\FOtwo$ cannot be polynomial in $n$, for otherwise we get $\cB_{-S_*} \models \psi_\FOtwo$ and $\cA_{-S_*} \models \varphi_\FOtwo$.
	Since $\kappa \in \cO \bigl( \len(\varphi_\FOtwo) \bigr)$, it follows that $\len(\varphi_\FOtwo)$ cannot be polynomial in $n$ but must be at least exponential, in order to satisfy $2^{2^n} \leq 2^{2(n+1) + \kappa}$ for growing $n$.
\end{proof}

%%%%%%%%%%%%%%%%%%%%%%%%%%%%%%%%%%%%%%%%%%%%%%%%%%%%
%%%%%%%%%%%%%%%%%%%%%%%%%%%%%%%%%%%%%%%%%%%%%%%%%%%%
%%%%%%%%%%%%%%%%%%%%%%%%%%%%%%%%%%%%%%%%%%%%%%%%%%%%
%%%%%%%%%%%%%%%%%%%%%%%%%%%%%%%%%%%%%%%%%%%%%%%%%%%%

\section{Separateness and Fluted Formulas}
\label{section:SeparatenessAndFlutedFormulas}

The \emph{fluted fragment (FL)} comprises all relational first-order sentences without equality that satisfy the following properties.
Let $x_1, x_2, x_3, \ldots$ be a fixed ordered sequence of pairwise distinct variables.
For every nonnegative integer $k$ we define the set $\FL{k}$ inductively as follows.
Any atom $P(x_\ell, \ldots, x_k)$ with $\ell \geq 1$ belongs to $\FL{k}$ --- notice that $x_\ell, \ldots, x_k$ is asserted to be a  gap-free subsequence of $x_1, x_2, x_3, \ldots$.
The set $\FL{k}$ is closed under Boolean combinations, i.e.\ if $\varphi$ and $\psi$ belong to $\FL{k}$, then so do $\neg \varphi$, $\varphi \wedge \psi$, $\varphi \vee \psi$, $\varphi \rightarrow \psi$, $\varphi \leftrightarrow \psi$.
Given any $\FL{k+1}$ formula $\varphi(x_1, \ldots, x_{k+1})$, then $\forall x_{k+1}.\, \varphi$ and $\exists x_{k+1}.\, \varphi$ belong to $\FL{k}$.
The \emph{fluted fragment (FL)} is the set $\FL{0}$, which contains exclusively sentences.
Notice that every sentence $\varphi$ from $\FL{k}$ can be turned into an equivalent $\FL{0}$ sentence $\forall x_1 \ldots x_{k}.\, \varphi$ by adding vacuous quantifiers $\forall x_1 \ldots x_{k}$.

The fluted fragment was introduced by Quine in two steps~\cite{Quine1969, Quine1976}.
In an attempt to extrapolate an extension of MFO, Quine~\cite{Quine1969} considered so-called \emph{homogeneous $k$-adic} sentences, i.e.\ FL sentences in which all predicate symbols have arity $k$.
Decidability of \emph{FL-Sat} was shown via an extension of Herbrand's decidability proof for MFO~\cite{Herbrand1930}.
Later on, namely at the very end of~\cite{Quine1976}, Quine claimed that the same proof would work for full FL.
This claim turned out to be wrong~\cite{Noah1980}.
Only recently, a proof including a correct complexity analysis was published~\cite{PrattHartmann2016, PrattHartmann2019}, showing that FL-Sat is non-elementary.

Herzig~\cite{Herzig1990} considered a class of relational first-order sentences that is very similar to the fluted fragment.
\emph{Herzig's ordered fragment} consists of all relational first-order sentences without equality in which every atom $P(v_1, \ldots, v_m)$ satisfies the following property.
For every $i$ the (unique) quantifier $\cQ v_i$ binding $v_i$ lies within the scope of any quantifier $\cQ' u$ if and only if $\cQ' u$ binds one of the $v_j$ with $j < i$, i.e.\ $u \in \{v_1, \ldots, v_{i-1}\}$.
Notice that the definition implies that the $v_1, \ldots, v_m$ are pairwise distinct.
While atoms in fluted formulas $\varphi \in \FL{k}$ need to contain a contiguous suffix of the variable sequence $x_1, \ldots, x_k$, any atom $A$ in Herzig's ordered formulas must contain a contiguous prefix of the variables bound by the quantifier sequence governing $A$.

%%%%%%%%%%%%%%%%%%%%%%%%%%%%%%%%%%%%%%%%%%%%%%%%%%%%%%%%%%%%%%%%%%
As the two fragments seem to be so similar, one could ask whether they are equivalent in expressiveness. 
Indeed, using the concept of separateness of variables, we can reconcile the two fragments while, at the same time, extending both of them to a common superclass, called the \emph{separated fluted fragment (SFL)}.

In the first-order formulas in this section we allow bound variables to reappear in distinct occurrences of quantifiers in the same formula.
Before we formulate the definition of SFL, we adapt the following notation from the definition of Maslov's fragment K (Definition~\ref{definition:MaslovsFragmentK}).
Let $\psi(u_1, \ldots, u_m)$ be any subformula of a first-order sentence $\varphi$.
We assume that $u_1, \ldots, u_m$ are exactly the variables occurring freely in $\psi$ and that they are pairwise distinct.
The \emph{$\varphi$-prefix of $\psi$} is the sequence $\cQ_1 v_1 \ldots \cQ_m v_m$ of quantifiers in $\varphi$ (read from left to right) that bind the free variables of $\psi$, in particular, we have $\{v_1, \ldots, v_m\} = \{u_1, \ldots, u_m\}$.

%%%%%%%%%%%%%%%%%%%%%%%%%%%%%%%%%%%%%%%%%%%%%%%%%%%%%%%%%%%%%%%%%%
\begin{definition}[Separated fluted fragment (SFL)]\label{definition:SFL}
	Let $\cV_1, \cV_2, \cV_3, \ldots$ be disjoint ordered sequences of pairwise distinct variables $\cV_i = x^i_1, x^i_2, x^i_3, \ldots$.
	In what follows, we occasionally treat the sequences $\cV_i$ as sets.
	
	The \emph{separated fluted fragment (SFL)} comprises all relational first-order sentences $\varphi$ without equality in which every atom $A$ satisfies the following properties.
	\begin{enumerate}[label=(\alph{*}), ref=(\alph{*})]
		\item $A$ is of the form $P(x^i_\ell, \ldots, x^i_k)$ for some predicate symbol $P$ and certain integers $i, k, \ell$ with $i \geq 1$, $k \geq 0$, and $1 \leq \ell \leq k$. 
		\item The $\varphi$-prefix of $A$ is of the form $\cQ_\ell x^i_{\ell}, \ldots, \cQ_k x^i_{k}$ with $\cQ_j \in \{ \exists, \forall \}$.
	\end{enumerate}
\end{definition}

Although separateness is not explicitly mentioned in the definition of SFL, it implicitly plays an important role.
For every atom $A$ in any SFL sentence $\varphi$, we find one sequence $\cV_i$ from which all variables in $A$ stem, i.e.\ $\vars(A) \subseteq \cV_i$.
Since the $\cV_1, \cV_2, \cV_3, \ldots$ are pairwise disjoint, they are, hence, also pairwise separated in $\varphi$.

%%%%%%%%%%%%%%%%%%%%%%%%%%%%%%%%%%%%%%%%%%%%%%%%%%%%%%%%%%%%%%
\begin{example}
	A fluted sentence:
	\[ \forall x_1 \exists x_2. \bigl( \forall x_3.\, P(x_1, x_2, x_3) \bigr) \wedge \bigl( \exists x_3 \forall x_4.\, Q(x_2, x_3, x_4 \bigr) ~.\]

	An SFL sentence that is not fluted:
	\[ \forall x_1 \exists x_2. \bigl( \forall x_2 \forall x_3.\, P(x_1, x_2, x_3) \bigr) \wedge \bigl( \exists x_3 \forall x_4.\, Q(x_2, x_3, x_4 \bigr) ~.\]
\end{example}

It is not hard to see that every FL sentence also belongs to SFL.
The simple monadic sentence
	$\forall x^1_1 \exists x^2_1.\, P(x^1_1) \leftrightarrow Q(x^2_1)$
is neither fluted nor does it belong to Herzig's fluted fragment.
However, it belongs to SFL.
Indeed, every MFO sentence can be turned into an SFL sentence by renaming bound variables.
Consider any MFO sentence $\varphi$ and suppose that all quantifiers in $\varphi$ bind distinct variables.
Let $u_1, \ldots, u_k$ be an enumeration of all the first-order variables occurring in $\varphi$.
Let $\varphi'$ be the sentence that results from $\varphi$ by renaming every $u_i$ into $x^i_1$.
This sentence $\varphi'$ clearly belongs to SFL.

Finally, consider any sentence $\psi$ that belongs to Herzig's ordered fragment.
Let $P(u_1, \ldots,$ $u_m)$ and $Q(v_1, \ldots, v_{m'})$ be two atoms in $\psi$.
Let $j, j'$ be any two indices with $1 \leq j \leq m$ and $1 \leq j' \leq m'$ such that $u_j$ and $v_{j'}$ are bound by the same quantifier $\cQ_j u_j = \cQ'_{j'} v_{j'}$ in $\psi$.
By definition of Herzig's ordered fragment, these quantifiers $\cQ_j u_j$ and $\cQ'_{j'} v_{j'}$ are exactly in the scopes of $\cQ_1 u_1, \ldots, \cQ_{j-1} u_{j-1}$ and $\cQ'_1 v_1, \ldots, \cQ'_{j'-1} v_{j'-1}$, respectively, and no other quantifier scopes.
As the quantifiers $\cQ_j u_j$ and $\cQ'_{j'} v_{j'}$ coincide, the sets $\{u_1, \ldots, u_j\}$ and $\{v_1, \ldots, v_{j'}\}$ must be equal.
Applying this argument iteratively, we infer $j = j'$ and that the sequences $u_1, \ldots, u_j$ and $v_1, \ldots, v_{j'}$ coincide.
Suppose $j_* \geq 1$ is the maximal index such that $u_{j_*}$ and $v_{j_*}$ are bound by the same quantifier.
For any indices $\ell, \ell' > j_*$ we have that neither of the quantifiers $\cQ u_\ell$ and $\cQ' v_{\ell'}$ binding the variables $u_\ell$ and $v_{\ell'}$, respectively, lies in the scope of the other.
For otherwise, assume that $\cQ u_\ell$ were in the scope of $\cQ' v_{\ell'}$.
Hence, $\ell' < \ell$ and there is some $u_{\ell''}$ with $\ell'' < \ell$ such that $u_{\ell''}$ is also bound by the quantifier $\cQ' v_{\ell'}$.
By the above argument, we have that $\ell' = \ell''$ and that the sequences $u_1, \ldots, u_{\ell'}$ and $v_1, \ldots, v_{\ell'}$ must coincide.
But since $j_*$ is maximal and $j_* < \ell'$, we get a contradiction.  
Consequently, we can rename the bound variables in $\psi$ in such a way that every atom $A$ has the form $P(x^1_1, \ldots, x^1_k)$ for some $k$ and the $\psi$-prefix of $A$ is of the form $\cQ_1 x^1_1, \ldots, \cQ_k x^1_k$.

%%%%%%%%%%%%%%%%%%%%%%%%%%%%%%%%%%%%%%%%%%%%%%%%%%%%%%%%%%%%%%
\begin{proposition}\label{proposition:SFLContainsFLandMFOandHerzigsFragment}
	SFL properly contains (modulo renaming of bound variables) FL, MFO, and Herzig's ordered fragment.
\end{proposition}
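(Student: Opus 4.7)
The plan is to treat each of the three containments separately, exhibiting in every case a concrete renaming of bound variables that converts sentences of the subfragment into SFL sentences, and then giving small witnesses for strictness. For FL, I would rename the fluted variable $x_i$ to $x^1_i$ throughout. A routine induction on the construction of $\FL{k}$ confirms that every atom $P(x_\ell, \ldots, x_k)$ becomes $P(x^1_\ell, \ldots, x^1_k)$, matching the SFL atom template with $i=1$, and that its $\varphi$-prefix in the renamed sentence is $\cQ_\ell x^1_\ell, \ldots, \cQ_k x^1_k$, inherited directly from FL's nested construction. For MFO, I would first rename bound variables to ensure distinctness, obtaining variables $u_1, \ldots, u_m$, and then apply $u_j \mapsto x^j_1$. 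Each monadic atom $P(u_j)$ becomes $P(x^j_1)$ with $i=j$ and $k=\ell=1$; since every atom contains only one variable, pairwise separateness of the sequences $\cV_1, \cV_2, \ldots$ is automatic.

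For Herzig's ordered fragment, the key combinatorial step is already carried out in the paragraph preceding the proposition: two atoms that share a bound variable at position $j_*$ of their argument lists must share their entire governing prefix up to level $j_*$, and all subsequent quantifiers governing either atom lie in disjoint branches of the scope tree. This lets me consistently rename so that the $j$-th governing quantifier of every atom binds $x^1_j$; every atom then takes the shape $P(x^1_1, \ldots, x^1_k)$ with matching $\varphi$-prefix, and separateness of the sequences $\cV_i$ is trivial because only $\cV_1$ is used. I expect this step to be the main obstacle: it requires giving a precise definition of the ``$j$-th governing quantifier of an atom'' and verifying that the chosen renaming is consistent across all atoms of the sentence simultaneously, which is purely syntactic but a little delicate.

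For strictness I would exhibit SFL sentences that no renaming of bound variables can bring into each subfragment. The simple sentence $\forall x^1_1 \exists x^2_1.\, P(x^1_1) \leftrightarrow Q(x^2_1)$ serves as a witness for both FL and Herzig's fragment. It is not in FL under any renaming because a biconditional $P(x_i) \leftrightarrow Q(x_j)$ with $i \neq j$ cannot lie in any single $\FL{k}$ (one has $P(x_i) \in \FL{i}$ and $Q(x_j) \in \FL{j}$ at incompatible levels), while $i = j$ is precluded by the two quantifiers binding distinct variables. It is not in Herzig's fragment under any renaming because the atom $Q(x^2_1)$ has a single argument and its governing quantifier $\exists x^2_1$ must, by Herzig's condition, not lie in the scope of any other quantifier---contradicted by the outer $\forall x^1_1$. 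For MFO, the sentence $\forall x^1_1 \exists x^1_2.\, P(x^1_1, x^1_2)$, which uses a binary predicate, is excluded from MFO under any renaming since monadicity is preserved by renaming of bound variables.
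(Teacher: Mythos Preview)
Your proposal is correct and follows essentially the same route as the paper: the containment arguments for FL, MFO, and Herzig's fragment are identical to the ones the paper sketches in the paragraphs preceding the proposition (including the renaming $u_j \mapsto x^j_1$ for MFO and the prefix-sharing analysis for Herzig), and your witness $\forall x^1_1 \exists x^2_1.\, P(x^1_1) \leftrightarrow Q(x^2_1)$ for strictness over FL and Herzig is exactly the paper's. You are slightly more thorough than the paper in that you supply an explicit witness for strictness over MFO (the binary-predicate sentence), which the paper leaves implicit.
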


%%%%%%%%%%%%%%%%%%%%%%%%%%%%%%%%%%%%%%%%%%%%%%%%%%%%%%%%%%%%%%%%%%
The following lemma stipulates that every SFL sentence has an equivalent in FL.
As usual, this result is established via an effective equivalence-preserving translation from SFL to FL.
\begin{lemma}\label{lemma:EquivalenceSFLandFL}
	Every SFL sentence is equivalent to some FL sentence.
\end{lemma}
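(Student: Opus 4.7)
The plan is to adapt the strategy used in Lemma~\ref{lemma:EquivalenceSFOkandFOk}: I would apply the quantifier-shifting laws of Proposition~\ref{lemma:Miniscoping} together with the distributivity axioms of Boolean algebra to iteratively rearrange a given SFL sentence $\varphi$ into an equivalent sentence $\varphi'$ in which every maximal subformula governed by an outermost quantifier uses variables from only one family $\cV_i$. Once $\varphi'$ has this shape, one can rename the variables of each family-pure subformula from $x^i_1, x^i_2, \ldots$ to the canonical sequence $x_1, x_2, \ldots$, producing a Boolean combination of FL sentences, and this combination is itself an FL sentence (since every $\FL{k}$ sentence becomes an $\FL{0}$ sentence by padding with vacuous quantifiers, as noted at the beginning of Section~\ref{section:SeparatenessAndFlutedFormulas}, and FL is closed under Boolean combinations).

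The core step would be an inductive claim modeled on Claim~I in the proof of Lemma~\ref{lemma:EquivalenceSFOkandFOk}. Consider any subformula $\psi := \cQ v.\, \chi$ of $\varphi$ with $v \in \cV_i$, and assume inductively that in $\chi$ every \emph{basic} subformula (every atom and every subformula $(\cQ' v'.\,\ldots)$ not nested inside another quantifier of $\chi$) uses variables from only one family. If $\cQ = \forall$, I would convert $\chi$ into an equivalent conjunction of disjunctions of (possibly negated) basic subformulas; by the inductive hypothesis, the disjuncts in the $j$-th conjunct split into at most one part $\eta_{j,i}$ whose variables lie in $\cV_i$, together with parts $\eta_{j,k}$ for $k \neq i$ that contain no variable from $\cV_i$ and, in particular, not $v$. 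Clauses~(ii) and~(iv) of Proposition~\ref{lemma:Miniscoping} then let me push $\forall v$ past the outer conjunction and past every $\eta_{j,k}$ with $k \neq i$, so that it ends up governing only $\eta_{j,i}$. The case $\cQ = \exists$ is dual via clauses~(i) and~(iii). The result still satisfies the inductive hypothesis.

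Applying this claim bottom-up exhausts every quantifier of $\varphi$ and yields the desired $\varphi'$. It remains to verify that after renaming each family-pure subformula to the canonical variable sequence the \emph{fluted} shape is restored: for every atom $A = P(x^i_\ell, \ldots, x^i_k)$ occurring in the original $\varphi$, the $\varphi$-prefix of $A$ equals $\cQ_\ell x^i_\ell \ldots \cQ_k x^i_k$ by Definition~\ref{definition:SFL}, so the quantifiers binding the variables of $A$ appear in strictly increasing index order and form a contiguous suffix of that prefix. Since distributivity does not touch atoms and the shifting moves described above never reverse the nesting relationship between two quantifiers whose bound variables both occur in a single atom, this ordering is preserved throughout the transformation. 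Consequently, after the family-wise renaming $x^i_j \mapsto x_j$ each atom becomes $P(x_\ell, \ldots, x_k)$ governed by a prefix of the form $\cQ_\ell x_\ell \ldots \cQ_k x_k$, exactly as demanded by the definition of FL.

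The hard part, I expect, is the verification in the last paragraph. The key will be to maintain, throughout the shifting process, the invariant that the sequence of quantifiers governing any given atom remains a suffix of the original $\varphi$-prefix of that atom (with quantifiers binding variables of other families possibly deleted on the way), so that the correct relative order of quantifiers binding variables from the same family is never disturbed. Once this invariant is in place, the fluted form of the renamed sentence is immediate, and the lemma follows.
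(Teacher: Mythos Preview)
Your approach has a genuine gap: grouping basic subformulas by family alone is too coarse to guarantee that the result is fluted. Consider the SFL sentence
\[ \varphi := \forall x^1_1\, \forall x^1_2.\; P(x^1_1, x^1_2) \vee R(x^1_1), \]
which uses only the single family $\cV_1$ (check: the $\varphi$-prefix of $R(x^1_1)$ is just $\forall x^1_1$, as required). Your inductive hypothesis ``every basic subformula is family-pure'' is satisfied trivially, so your claim does nothing here, and the renamed sentence $\forall x_1 \forall x_2.\, P(x_1, x_2) \vee R(x_1)$ is \emph{not} in FL: the atom $R(x_1)$ lies in $\FL{1}$ but not in $\FL{2}$, so the disjunction $P(x_1,x_2) \vee R(x_1)$ is not in $\FL{2}$ and cannot legally be quantified by $\forall x_2$. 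Your proposed invariant --- that the sequence of quantifiers governing any atom remain a suffix of its original $\varphi$-prefix --- is violated as well, since $R(x^1_1)$ sits in the scope of $\forall x^1_2$, a quantifier that was never in its $\varphi$-prefix. The same problem appears in multi-family examples whenever two atoms from the same family $\cV_i$ live at different levels $k$.

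The paper's proof avoids this by using a finer invariant: every basic subformula belongs to some specific $\FL{k'}(\cV_{i'})$. When shifting a quantifier $\cQ x^i_k$, it is pushed only onto those basic formulas in which $x^i_k$ \emph{actually occurs freely}; by the SFL conditions these are exactly the $\FL{k}(\cV_i)$ formulas in scope, so the quantified result lands in $\FL{k-1}(\cV_i)$. Basic formulas from the same family at other levels --- such as $R(x^1_1) \in \FL{1}(\cV_1)$ above --- are left outside, just like formulas from other families. After this finer shifting one obtains a Boolean combination of sentences each in some $\FL{0}(\cV_i)$, and renaming finishes. Your overall plan is right; what needs changing is the splitting criterion, from ``belongs to family $i$'' to ``contains the quantified variable $x^i_k$ freely''.
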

\begin{proof}
	For every nonnegative integer $k$ and every positive integer $i$ we define the set $\FL{k}(\cV_i)$ inductively as follows.
	Any atom $P(x^i_\ell, \ldots, x^i_k)$ with $1 \leq \ell \leq k$ belongs to $\FL{k}(\cV_i)$.
	The set $\FL{k}(\cV_i)$ is closed under Boolean combinations.
	Given any $\FL{k+1}(\cV_i)$ formula $\varphi(x^i_\ell, \ldots, x^i_{k+1})$, then $\forall x_{k+1}.\, \varphi$ and $\exists x_{k+1}.\, \varphi$ belong to $\FL{k}(\cV_i)$.

	Consider any SFL sentence $\varphi$ and let $m$ be the smallest integer such that $\vars(\varphi) \subseteq \cV_1 \cup \ldots \cup \cV_m$.
	Then, all $\cV_1, \ldots, \cV_m$ are pairwise separated in $\varphi$.
	Without loss of generality, we assume that $\varphi$ is in negation normal form.
	\begin{description}
		\item \underline{Claim~I:}
				Consider any subformula $\psi = \cQ x^i_k.\, \chi$ of $\varphi$ with $\cQ \in \{\forall, \exists\}$ that satisfies the following properties:
				\begin{enumerate}[label=(\alph{*}), ref=(\alph{*})]
				 	\item $\chi$ is a Boolean combination of formulas from $\bigcup_{k', i'} \FL{k'}(\cV_{i'})$ --- which we will call \emph{basic formulas} in what follows;
					\item each of these basic formulas that contains $x^i_k$ is an $\FL{k}(\cV_i)$ formula;
					\item every subformula of $\chi$ that is of the form $\cQ'' x^i_k.\, \chi''$ is an $\FL{k-1}(\cV_i)$ formula.
				\end{enumerate}
				\noindent	
				Then, we can construct a formula $\psi'$ such that
				\begin{enumerate}[label=(\arabic{*}), ref=(\arabic{*})]
					\item $\psi'$ is equivalent to $\psi$, 
					\item $\psi'$ is a Boolean combination of formulas from $\bigcup_{k', i'} \FL{k'}(\cV_{i'})$, and
					\item every subformula $\cQ x^i_k.\, \chi'$ occurring in $\psi'$ belongs to $\FL{k-1}(\cV_i)$.
				\end{enumerate}	

		\item \underline{Proof:}
				We treat the case where $\cQ$ is an existential quantifier; the case of $\cQ = \forall$ can be treated dually.
				
				First, we transform $\chi$ into an equivalent disjunction of conjunctions of basic formulas that is of the form
					\[ \bigvee_j \eta_{j,i,k}\bigl( x^i_1, \ldots, x^i_k \bigr) \wedge \bigwedge_{i' \neq i} \bigwedge_{k'} \eta'_{j,i',k'}(x^{i'}_1, \ldots, x^{i'}_{k'}) ~, \]
				where we group the basic formulas in accordance with their belonging to the sets $\FL{k'}(\cV_{i'})$.
				More precisely, the conjunctions $\eta_{j,i,k}$ contain exactly those basic formulas from the $j$-th disjunct in which the variable bound by $\cQ x^i_k$ occurs freely.
				Moreover, any basic formula from $\FL{k'}(\cV_{i'})$ that occurs in the $j$-th disjunct and does not contain $x^i_k$ as free variable is a conjunct of $\eta'_{j,i',k'}$.
				By assumption, each $\eta_{j,i,k}$ belongs to $\FL{k}(\cV_i)$, as we assumed that every basic formula in which $x^i_k$ occurs is an $\FL{k}(\cV_i)$ formula.
				
				Hence, $\psi$ is equivalent to a formula of the form
					\[ \exists x^i_k. \bigvee_j \eta_{j,i,k}\bigl( x^i_1, \ldots, x^i_k \bigr) \wedge \bigwedge_{i' \neq i} \bigwedge_{k'} \eta_{j,i',k'}(x^{i'}_1, \ldots, x^{i'}_{k'}) ~. \]
				We shift the existential quantifier $\exists x^i_k$ inwards so that it only binds the (sub-)conjunctions $\eta_{j,i,k}$.
				The emerging subformula $\exists x^i_k.\, \eta_{j,i,k}$ belongs to $\FL{k-1}(\cV_i)$.
				The result
					\[ \bigvee_j \Bigl( \exists x^i_k.\, \eta_{j,i,k}\bigl( x^i_1, \ldots, x^i_k \bigr) \Bigr) \wedge \bigwedge_{i' \neq i} \bigwedge_{k'} \eta_{j,i',k'}(x^{i'}_1, \ldots, x^{i'}_{k'}) \]
				is the sought $\psi'$ that is a Boolean combination of formulas from $\bigcup_{k', i'} \FL{k'}(\cV_{i'})$.
				\hfill$\Diamond$
	\end{description}	
	By Definition~\ref{definition:SFL}, every atom in $\varphi$ is an $\FL{k'}(\cV_{i'})$ formula for certain $k', i'$.
	Hence, every subformula $\cQ x^i_k.\, \chi$ of $\varphi$ with quantifier-free $\chi$ satisfies the conditions of Claim~I.
	Consider any subformula $\psi := \cQ x^i_k.\, \chi$ of $\varphi$ such that $\chi$ is a Boolean combination of atoms and of formulas $\psi' := \cQ' x^{i'}_{k'}.\, \chi'$ that satisfy the preconditions of Claim~I.
	By Claim~I, we can transform all these $\psi'$ into equivalent formulas $\psi''$ in such a way that $\psi$, after all these transformations, satisfies the preconditions of Claim~I.
	Due to this observation, we can iteratively apply Claim~I to transform the sentence $\varphi$ into an equivalent sentence $\varphi'$ that is a Boolean combination of sentences from $\bigcup_{k', i'} \FL{k'}(\cV_{i'})$.
	Since every sentence $\chi \in \FL{k'}(\cV_{i'})$ is equivalent to the sentence $\forall x^{i'}_1 \ldots x^{i'}_{k'}.\, \chi$, we can transform $\varphi'$ into an equivalent sentence $\varphi''$ that is a Boolean combination of sentences from $\bigcup_{i'} \FL{0}(\cV_{i'})$.
	In $\varphi''$ the sets $\cV_1, \ldots, \cV_m$ are pairwise strictly separated.
	Hence, we can rename bound variables in $\varphi''$ in such a way that the result $\varphi'''$ is a Boolean combination of sentences from $\FL{0}(\cV_1)$.
	This sentence $\varphi'''$ belongs to the fluted fragment.
\end{proof}

%%%%%%%%%%%%%%%%%%%%%%%%%%%%%%%%%%%%%%%%%%%%%%%%%%%%%%%%%%%%%%%%%%
Since FL enjoys the finite model property~\cite{PrattHartmann2016}, Lemma~\ref{lemma:EquivalenceSFLandFL} implies that the same holds true for SFL.
Hence, \emph{SFL-Sat} is decidable.
\begin{theorem}
	SFL possesses the finite model property and, hence, SFL-Sat is decidable.
\end{theorem}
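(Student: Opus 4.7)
The plan is to obtain this as a direct corollary of Lemma~\ref{lemma:EquivalenceSFLandFL} combined with the known finite model property of the fluted fragment FL (established by Pratt-Hartmann, Suda, and Kojemiatchev in~\cite{PrattHartmann2016}). No new machinery should be required; the proof will fit in a few lines.

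Concretely, I would proceed as follows. Let $\varphi$ be an arbitrary satisfiable SFL sentence. By Lemma~\ref{lemma:EquivalenceSFLandFL}, we can effectively construct an FL sentence $\varphi'$ with $\varphi \semequiv \varphi'$. Satisfiability of $\varphi$ then transfers to $\varphi'$, and since FL has the finite model property, $\varphi'$ admits a model with a finite domain. By the equivalence, this very structure is also a model of $\varphi$. Hence every satisfiable SFL sentence has a finite model, i.e.\ SFL possesses the finite model property.

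For the decidability part, I would invoke the standard fact (see, e.g., Proposition~6.0.4 in~\cite{Borger1997}) that any recursively enumerable class of first-order sentences enjoying the finite model property has a decidable satisfiability problem: one enumerates finite structures and checks them in parallel with a search for a proof of unsatisfiability. Since SFL is clearly a decidable syntactic class, SFL-Sat is decidable.

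The only conceptual point to note is that Lemma~\ref{lemma:EquivalenceSFLandFL} is formulated purely semantically (same models, same vocabulary), so the translation preserves the domain: any finite model of $\varphi'$ over some domain $\fUA$ is literally a model of $\varphi$ over the same $\fUA$. There is no obstacle here, as the lemma's translation only manipulates Boolean structure, shifts quantifiers and renames bound variables. Consequently, there is no subtlety about, say, introducing fresh symbols whose interpretations would have to be forgotten. This is the main, and essentially only, thing to verify while writing the proof down.
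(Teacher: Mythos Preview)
Your proposal is correct and follows exactly the paper's own approach: the theorem is stated as an immediate consequence of Lemma~\ref{lemma:EquivalenceSFLandFL} together with the finite model property of FL from~\cite{PrattHartmann2016}, with no further argument given. One small factual slip: the authors of~\cite{PrattHartmann2016} are Pratt-Hartmann, Szwast, and Tendera, not the names you listed.
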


%%%%%%%%%%%%%%%%%%%%%%%%%%%%%%%%%%%%%%%%%%%%%%%%%%%%
%%%%%%%%%%%%%%%%%%%%%%%%%%%%%%%%%%%%%%%%%%%%%%%%%%%%
%%%%%%%%%%%%%%%%%%%%%%%%%%%%%%%%%%%%%%%%%%%%%%%%%%%%
%%%%%%%%%%%%%%%%%%%%%%%%%%%%%%%%%%%%%%%%%%%%%%%%%%%%

\section{Conclusion}
\label{section:Conclusion}

In the present paper we have treated separateness of first-order variables in the context of the classical decision problem.
Separateness turned out to be an enabler for the definition of significant syntactic extensions of nine of the best-known decidable first-order fragments ---
see Figure~\ref{figure:KnownAndNovelFOLfragments} in Section~\ref{section:Introduction}.
Hence, separateness opens a new perspective on the landscape that research activity around the classical decision problem has revealed over the course of about a hundred years.
It seems likely that separateness facilitates more extensions of decidable first-order fragments.
For instance, Maslov's fragment K may be an interesting candidate for being extended, as may be the more recent unary-negation fragment~\cite{Segoufin2013} and the uniform one-dimensional fragment~\cite{Kieronski2014}.

Interestingly, each and every of the novel fragments discussed in the present paper subsumes MFO.
The inclusion of MFO yields a litmus test for the generality of definitions based on separateness: if MFO is not covered, then the definition is not yet liberal enough.
Another peculiarity is that each of our extended fragments exhibits the same expressiveness as the underlying original fragment does.
This is witnessed by the equivalence-preserving translation procedures that we have devised for each extended fragment, say $F$, into the respective original fragment, say $G$.
From this perspective, the syntax of $G$ could be conceived as a normal form of $F$-sentences: there is a procedure bringing any $F$-sentence into $G$-normal form, so to speak.
Furthermore, we have seen that this translation in some cases inevitably leads to a super-polynomial or even non-elementary blowup of the formula length in the worst case --- take a look at Table~\ref{table:SuccinctnessGapsSummary} again for a summary.
This shows that separateness provides the ability to express certain logical properties in significantly more succinct ways. 

In the present paper the main method for proving decidability of the newly introduced  fragments is based on syntactic transformations.
In~\cite{Voigt2019PhDthesis}, Chapter~4, this syntactic point of view is complemented with a semantic perspective, based on an investigation of dependences between existentially and universally quantified variables in sentences.
In general, nested first-order quantification introduces such dependences.
For example, consider the first-order sentence $\varphi := \forall x \exists y.\, P(x) \leftrightarrow Q(y)$.
Skolemization of the quantifier $\exists y$ removes the quantifier and replaces every occurrence of $y$ with the \emph{Skolem term} $f(x)$.
The result is the equivalent second-order sentence 
	$\varphi_\Sk := \exists f.\, \forall x.\, P(x) \leftrightarrow Q(f(x))$,
where the dependence of the argument of $Q$ on $x$ is explicit.
We distinguish two kinds of dependences that occur between existentially quantified variables and universally quantified variables.
Roughly speaking, such a dependence in a sentence $\varphi$ is \emph{strong}, if there are models (with an infinite domain) where the range of the Skolem function $f$ has to be infinite.
On the other hand, we speak of \emph{weak} dependence of existentially quantified $y$ on universally quantified  $x$ in the following case.
If the values of all other variables are fixed and only $x$ does not have a fixed value, then the range of $y$ (and the Skolem function $f$) can always, i.e.\ in all models, be restricted to a finite set of values.
The following definition makes this intuition more precise.

%%%%%%%%%%%%%%%%%%%%%%%%%%%%%%%%%%%%%%%%%%%
\begin{definition}[Weak dependence]\label{definition:WeakDependenceGeneral}
	Consider any satisfiable relational first-order sentence $\psi$ that contains some subformula $\chi := \exists y.\, \chi'(\vu, \vv, \vx, y)$
	such that 
		the variables from $\vu$ and $\vx$ are universally quantified in $\psi$, and
		the variables from $\vv$ are existentially quantified in $\psi$.
	Let $\psi_\Sk$ be the result of replacing every occurrence of $y$ in $\psi$ with the Skolem term $f(\vu, \vx)$ for some fresh Skolem function~$f$.
	Then, $y$ \emph{depends weakly} on the variables in $\vx$, if every model $\cA \models \psi_\Sk$ can be turned into a model $\cB \models \psi_\Sk$ by replacing $f^\cA$ with some mapping $f^\cB$ satisfying the following property.
	There exists a \emph{finite} family of mappings $\bigl(g_i : \fUA^{|\vu|} \to \fUA \bigr)_{i \in I}$ and some mapping $h : \fUA^{|\vx|} \to I$ such that $f^\cB(\va, \vb) := g_{h(\vb)}(\va)$ for all $\va \in \fUA^{|\vu|}$ and $\vb \in \fUA^{|\vx|}$.
\end{definition}
Regarding weak dependences, BSR, SF, and SBSR are special fragments, since all dependences in such sentences are weak.
In addition, and less obvious, every sentence in which all dependences are weak is equivalent to some BSR sentence~\cite{Voigt2019PhDthesis} (Theorem~4.2.1 and Remark~4.2.12).
Hence, it is fair to say that the property of containing exclusively weak dependences between existential and universal variables is a semantic characterization of BSR (and also of its syntactic extensions SF and SBSR).

Integrating the analysis of weak dependences into Skolemization strategies could make them sensitive to weak dependences.
This might offer interesting and valuable directions for research in the fields of automated reasoning and proof complexity, where improvements to Skolemization strategies can have tremendous effects on the length of proofs~\cite{Baaz1994a, Egly1994}.

Possible connections of weakness of dependences in the sense of Definition~\ref{definition:WeakDependenceGeneral} to the field of dependence logic (broadly construed)~\cite{Vaananen2007, Abramsky2016} remain to be investigated.

Separateness may turn out to be even more versatile in future investigations.
Some hints are given in~\cite{Voigt2019PhDthesis} (e.g.\ in Chapter~7).
We will conclude with a sketch of one more idea.
We have emphasized time and again that, compared to BSR sentences, SF sentences can express certain logical properties much more succinctly.
This holds true in particular for properties that exhibit a high degree of \emph{structural regularity}.
An example for such a property is the one described by the family of SF sentences $\bigl(\varphi_n\bigr)_{n\geq 1}$ with \\
	\centerline{$\varphi_n := \forall x_n \exists y_n \ldots \forall x_1 \exists y_1.\, \bigwedge_{i=1}^{n} \bigl( P_i(x_1, \ldots, x_n) \leftrightarrow Q_i(y_1, \ldots, y_n) \bigr)$.}
We have already encountered variants of this class of sentences in all the proofs of succinctness gaps (Theorems~\ref{theorem:LengthSmallestBSRsentences}, \ref{theorem:LengthSmallestGKSsentences}, \ref{theorem:LengthSmallestLGFsentences}, \ref{theorem:LengthSmallestGNFOsentences}, \ref{theorem:LengthSmallestSFOtwoSentences}).
Although the domain size of the following family of models $\bigl(\cA_n\bigr)_{n\geq 1}$ with $\cA_n \models \varphi_n$ grows massively with increasing $n$, its interpretation of the predicate symbols $P_i$ and $Q_i$ is given by a rather simple pattern and, hence, each $\cA_n$ is very regular in an intuitive sense --- the latter is witnessed by the shortness of the following definition of $\cA_n$:\\
\strut\quad $\fUA_n := \bigcup_{k = 1}^{n} \bigl\{ \fa^{(k)}_{S}, \fb^{(k)}_{S} \bigm| S \in \cP^k([n]) \bigr\}$, where $\cP^k$ is the $k$-fold power set operator,\\
\strut\quad $P_i^{\cA_n} := \bigl\{ \<\fa^{(1)}_{S_1}, \ldots, \fa^{(n)}_{S_n}\> \in \fUA^n \bigm| i \in S_1 \in S_2 \in \ldots \in S_n \bigr\}$ for $i = 1, \ldots, n$, and\\
\strut\quad $Q_i^{\cA_n} := \bigl\{ \<\fb^{(1)}_{S_1}, \ldots, \fb^{(n)}_{S_n}\> \in \fUA^n \bigm| i \in S_1 \in S_2 \in \ldots \in S_n \bigr\}$ for $i = 1, \ldots, n$.\\
Any of the structures $\cA_n$ neatly captures the essence of the logical property described by $\varphi_n$, as every domain element $\fa^{(k)}_{S}$ has a corresponding twin element $\fb^{(k)}_{S}$ that mirrors in the predicates $Q_i^{\cA_n}$ exactly the role that $\fa^{(k)}_{S}$ plays in the predicates $P_i^{\cA_n}$.

More generally, consider any logical property $\pi_n$ that is parameterized by some positive integer $n$ and that can be expressed by a (uniform) family of BSR sentences.
Let $f(n)$ be the function representing the length of a shortest BSR sentence $\psi$ that describes $\pi_n$. 
Let $g(n)$ be the function that denotes the length of a shortest SF sentence describing $\pi_n$.
By Theorem~\ref{theorem:LengthSmallestBSRsentences}, we know that there are properties $\pi_n$ such that $g(n)$ can be bounded from above by some polynomial but we cannot find any integer $k$ such that $f(n)$ is bounded from above by any $k$-fold exponential function.
In such a case we would say that $\pi_n$ is structurally fairly regular, as we can describe it with an SF sentence of polynomial length.
Now imagine a property $\pi'_n$ accompanied with corresponding functions $f'(n)$ and $g'(n)$ for which we have $g'(n) \in \Omega\bigl(f'(n)\bigr)$, i.e.\ the length of shortest SF sentences describing $\pi'_n$ is asymptotically of the same order as the length of shortest BSR sentences describing $\pi'_n$.
On an intuitive level, this means that the relaxed syntactic conditions of SF do not provide a significant edge over BSR when $\pi'_n$ is to be described.
It seems that $\pi'_n$ requires a more sophisticated and lengthy description than, for instance, $\pi_n$ does, or, viewed from the opposite angle, $\pi'_n$ exhibits a lower degree of structural regularity than $\pi_n$.
A possible measure for this lack of regularity might be provided by the gap between $f'(n)$ and $g'(n)$: the smaller the gap, the higher the \emph{structural irregularity} of $\pi'_n$.

Instead of the comparison SF versus BSR, one could also use the comparison between SF sentences and equivalent Gaifman-local sentences.
Of course, the above said is also relevant to other fragments and not exclusively to SF ---
for instance, to SLGF versus LGF, to \SFOtwo\ versus \FOtwo, or to the full class of relational first-order sentences versus relational Gaifman-local sentences.

The general idea of measuring structural regularity by means of the asymptotic length of shortest logical descriptions appears to have some similarity to concepts investigated in the field of algorithmic information theory and Kolmogorov complexity (see~\cite{Downey2010, Calude2002} for introductory material).
Potential connections and interrelations remain to be studied.

%%%%%%%%%%%%%%%%%%%%%%%%%%%%%%%%%%%%%%%%%%%%%%%%%%%%
%%%%%%%%%%%%%%%%%%%%%%%%%%%%%%%%%%%%%%%%%%%%%%%%%%%%
%%%%%%%%%%%%%%%%%%%%%%%%%%%%%%%%%%%%%%%%%%%%%%%%%%%%
%%%%%%%%%%%%%%%%%%%%%%%%%%%%%%%%%%%%%%%%%%%%%%%%%%%%

%%%%%%%%%%%%%%%%%%%%%%%%%%%%%%%%%%%%%%%%%%%%%%%%%%%%
%%%%%%%%%%%%%%%%%%%%%%%%%%%%%%%%%%%%%%%%%%%%%%%%%%%%
%%%%%%%%%%%%%%%%%%%%%%%%%%%%%%%%%%%%%%%%%%%%%%%%%%%%
%%%%%%%%%%%%%%%%%%%%%%%%%%%%%%%%%%%%%%%%%%%%%%%%%%%%

\appendix
\section{Proof of Theorem~\ref{theorem:LengthSmallestGNFOsentences}}\label{appendix:LengthSmallestGNFOsentences}

\begin{proof}[Proof sketch]
	Let $n \geq 3$.	
	The following SGNFO sentence is equivalent to the sentence $\varphi$ given in the proof of Theorem~\ref{theorem:LengthSmallestLGFsentences}.
	In the sentence the sets $\{x_1, \ldots, x_n\}$ and $\{y_1, \ldots, y_n\}$ and $\{z\}$ are separated:
		\begin{align*}
				&\exists z.\, z = z  \\
					&\;\;\wedge\; \neg \exists x_n.\, R_n(x_n) \;\wedge\; \neg \exists y_n.\, R_n(x_n) \wedge T_n(y_n) \\
					&\quad\;\; \wedge\; \neg \exists x_{n-1}.\, R_{n-1}(x_{n-1}, x_n) \wedge T_n(y_n) \;\wedge\; \neg \exists y_{n-1}.\, R_{n-1}(x_{n-1}, x_n) \wedge T_{n-1}(y_{n-1}, y_n) \\
			 		&\qquad\;\; \ldots \\
			 			&\qquad\;\; \wedge\; \neg \exists x_1.\, R_1(x_1, \ldots, x_n) \wedge T_2(y_2, \ldots, y_n) \;\wedge\; \neg \exists y_1.\,  R_1(x_1, \ldots, x_n) \wedge T_1(y_1, \ldots, y_n) \\
			 				&\qquad\quad\;\; \wedge\; \bigwedge_{i=1}^{4n} \bigl( R_1(x_1, \ldots, x_n) \wedge T_1(x_1, \ldots, x_n) \;\wedge\; P_i(x_1, \ldots, x_n) \wedge Q_i(y_1, \ldots, y_n) \bigr) \\
			 				&\qquad\qquad\qquad\;\; \vee\; \bigl( R_1(x_1, \ldots, x_n) \wedge T_1(x_1, \ldots, x_n) \;\wedge\; \neg \bigl( P_i(x_1, \ldots, x_n) \vee Q_i(y_1, \ldots, y_n) \bigr) \bigr) .
		\end{align*}	
	The subformulas $z=z$, $R_n(x_n)$, and the more complex $R_i(x_i, \ldots, x_n) \wedge T_i(y_i, \ldots, y_n)$ and $R_i(x_i, \ldots,$ $x_n) \wedge T_{i+1}(y_{i+1}, \ldots, y_n)$ serve as separated negation guards for negated subformulas.
	We need to introduce a bit of redundancy in order to meet the syntactic requirements of SGNFO.
	Nevertheless, it is easy to see that the above sentence is equivalent to the sentence $\varphi$ used in the proof of Theorem~\ref{theorem:LengthSmallestLGFsentences}.
	We will refer to it as $\varphi$ as well.

	The rest of the proof works in analogy to the proof of Theorem~\ref{theorem:LengthSmallestLGFsentences}.
	We therefore take over definitions and notation from that proof, in particular the sets $\cS_k$, the model $\cA$ with the domain elements $\fa^{(j)}_{S}, \fb^{(j)}_{S}$ and subdomains $\fUA_\fa, \fUA_\fb, \fUA_{\fb, k}$, the notation $\cA_{-S}$, the vocabularies $\Sigma, \Sigma_{PR}, \Sigma_{QT}$ and their extensions $\Sigma'_{PR}, \Sigma'_{QT}$ with nullary predicate symbols, and the notion of column-$k$-occurrence.
	
	Let $\varphi_\GNFO$ be a shortest GNFO sentence that is semantically equivalent to $\varphi$.
	In analogy to the proof of Theorem~\ref{theorem:LengthSmallestLGFsentences}, our goal is to show that $\len(\varphi_\GNFO)$ is at least $(n-1)$-fold exponential in $n$. 
	Indeed, we can transform $\varphi_\GNFO$ into an equivalent GNFO sentence $\psi_\GNFO$ with the following properties.
	
	\begin{enumerate}[label=(\alph{*}), ref=(\alph{*})]
		\item \label{enum:theoremLengthSmallestGNFOsentences:I} 
			The sentence $\psi_\GNFO$ is a Boolean combination of GNFO $\Sigma'_{PR}$-sentences and GNFO $\Sigma'_{QT}$-sentences.
			Moreover, none of the constituent sentences of $\psi_\GNFO$ properly contains a non-atomic subsentence.
			
		\item \label{enum:theoremLengthSmallestGNFOsentences:II} 
			The vocabulary of $\psi_\GNFO$ is $\Sigma'_{PR} \cup \Sigma'_{QT}$, i.e.\  $\Sigma$ plus fresh nullary predicate symbols.
			
		\item \label{enum:theoremLengthSmallestGNFOsentences:III} 
			The structure $\cA$ can be uniquely expanded to some model $\cB$ of $\psi_\GNFO$ over the same domain and conserving the interpretations of all predicate symbols in $\Sigma$;
			 for every $\cB_{-S}$
			 --- the substructure of $\cB$ induced by the domain $\fUB \setminus \{\fb^{(k)}_S\}$ ---
			 we have $\cB_{-S} \not\models \psi_\GNFO$.
			 
		\item \label{enum:theoremLengthSmallestGNFOsentences:IV} 
			$\len(\psi_\GNFO) \in \cO\bigl( \len(\varphi_\GNFO) \bigr)$.
		
		\item \label{enum:theoremLengthSmallestGNFOsentences:V} 
			Equations in $\Sigma'_{QT}$-subsentences of $\psi_\GNFO$ occur only in guards, and these consist of a single trivial equation, say $v = v$,  and no other atoms. 
			
		\item \label{enum:theoremLengthSmallestGNFOsentences:VI} 
			Every non-equational $\Sigma_{QT}$-atom is linear and for every variable $v$ occurring in it there is some $k$ such that all occurrences of $v$ in non-equational $\Sigma_{QT}$-atoms are column-$k$-occurrences.
		
		\item \label{enum:theoremLengthSmallestGNFOsentences:VII} 
			For all distinct variables $v, v'$ that occur freely in a $\Sigma'_{QT}$-subformula $\chi$ and have column-$k$-occurrences and column-$k'$-occurrences, respectively, we know that $k \neq k'$.
	\end{enumerate}
	The transformation of $\varphi_\GNFO$ into $\psi_\GNFO$ proceeds in analogy to the corresponding transformation in the proof of Theorem~\ref{theorem:LengthSmallestLGFsentences}, except that the transformation into negation normal form is postponed.
	When we next transform $\psi_\GNFO$ into negation normal form, we obtain the equivalent sentence $\psi$, which does not necessarily belong GNFO anymore, but still shares the properties that are essential for the rest of the proof.
	
	%%%%%%%%%%%%%%%%%%%%%%%%%%%%%%%%%%%%%%%%%%%%%%%%%
	%%%%%%%%%%%%%%%%%%%%%%%%%%%%%%%%%%%%%%%%%%%%%%%%%
	Now, suppose that $\psi$ has fewer than $\twoup{n-1}{n}$ subformulas.
	By~\ref{enum:theoremLengthSmallestGNFOsentences:III}, we have $\cB \models \psi$ and $\cB_{-S} \not\models \psi$ for every $S \in \cS_n$.
	Hence, for every $S \in \cS_n$ there is some $\Sigma'_{QT}$-subformula $\psi_S$ in $\psi$ of the form
		$\exists \vy.\, \chi_S(\vy, \vz)$
	and some variable assignment $\beta_S$ such that the following properties hold.
	We have $\beta_S(y_*) = \fb^{(n)}_S$ for exactly one $y_* \in \vy$ and for every $v \in \vy \cup \vz$ different from $y_*$ we have $\beta_S(v) \in \fUA_\fb \setminus \fUA_{\fb,n}$.
	Moreover, we have \\
	\centerline{
		\begin{tabular}{c@{\hspace{1ex}}l}
			($*$) & $\cB, \beta_S \models \chi_S(\vy, \vz)$ and $\cB, \beta' \not\models \chi_S(\vy, \vz)$ for every $\beta'$ that differs from $\beta_S$ only in the \\
				& value assigned to $y_*$.
		\end{tabular}
	}
	The tuple $\beta_S(\vz)$ represents a sequence $\vc_S$ of domain elements from $\fUA_\fb$ that can be completed to a chain $\fb^{(1)}_{T_1}, \ldots, \fb^{(n-1)}_{T_{n-1}}, \fb^{(n)}_S$ with $T_1 \in \ldots \in T_{n-1} \in S$.
		
	Fix any $S_* \in \cS_n$ and consider the formula $\psi_{S_*}(\vz)$.
	There is a nonempty set $\hS_*$ such that $\psi_{S_*}(\vz)$ coincides with every $\psi_S(\vz)$ with $S \in \hS_*$.
	For any distinct $S, S' \in \hS_*$ the sequences $\vc_S := \beta_S(\vz)$ and $\vc_{S'} := \beta_{S'}(\vz)$ must differ, for otherwise ($*$) would be violated.
	As there are at most $\prod_{k=1}^{n-1} \twoup{k}{n}$ distinct sequences $\vc_{S}$, $\hS_*$ can contain at most $\prod_{k=1}^{n-1} \twoup{k}{n} < \bigl( \twoup{n-1}{n} \bigr)^n$ sets.
	Recall that there are fewer than $\twoup{n-1}{n}$ subformulas in $\psi$.
	We have just inferred that each of these can only serve as $\psi_S$ for at most $\bigl( \twoup{n-1}{n} \bigr)^n$ sets $S \in \cS_n$.
	Hence, only 
		\[ \bigl( \twoup{n-1}{n} \bigr)^n \cdot \twoup{n-1}{n} \;=\; 2^{(n+1) \cdot \twoup{n-2}{n}} \;<\; 2^{\twoup{n-1}{n}} \;=\; \twoup{n}{n} \]
	sets $S$ have a corresponding subformula $\psi_S$.
	But $|\cS_n| \geq \twoup{n}{n+1}$ implies that there are $S \in \cS_n$ such that $\cB_{-S} \models \psi_\LGF$, which contradicts our assumptions.
	Consequently, there must be more than $\twoup{n-1}{n}$ subformulas in $\psi$ and in $\psi_\GNFO$ and, by~\ref{enum:theoremLengthSmallestGNFOsentences:IV}, also in $\varphi_\GNFO$.	
\end{proof}

\end{document}